\newtheorem{theorem}{Theorem}[section]
\newtheorem{corollary}[theorem]{Corollary}
\newtheorem{lemma}[theorem]{Lemma}
\newtheorem{proposition}[theorem]{Proposition}
\newtheorem{claim}[theorem]{Claim}
\newtheorem{fact}[theorem]{Fact}
\theoremstyle{definition}
\newtheorem{definition}[theorem]{Definition}
\theoremstyle{remark}
\newtheorem*{remark}{Remark}
\numberwithin{equation}{section}
\DeclareMathOperator{\Img}{Im}
\title{Symmetric Self-Dual Quantum Codes on High Dimensional Expanders}
\author[1]{Kyle Gulshen}
\author[2]{Tali Kaufman}
\affil[1]{\emph{Institute for Quantum Information and Matter,
California Institute of Technology, Pasadena, CA, USA, } kgulshen@caltech.edu }
\affil[2]{\emph{Department of Computer Science, Bar-Ilan University, Ramat-Gan, Israel,} kaufmant@mit.edu }
 \date{}
\begin{document}

\maketitle

\begin{abstract}
    We construct a family of constant-rate highly-symmetric self-dual qLDPC codes on high dimensional expanders. This is the first self-dual code constructed on high dimensional expanders and also the first such code with a rich (e.g. transitive) symmetry group, whose order exceeds the number of qubits. From this symmetry, we identify an extensive set of logical generators that act as permutations or diagonal gates, as well as a handful of other interesting gates (including the logical swap-Hadamard from self-duality).

    These advantages over prior constructions are in large part due to the fact that our codes are the first to be explicitly defined on expanding (non-product) simplicial complexes. Indeed, our work develops a broader framework toward utilizing high dimensional expanders to construct highly performant quantum codes with fault tolerant gates. While asymptotically good qLDPC codes have been constructed on 2D HDX built from products of graphs, these product constructions have a number of limitations, such as a lack of structure useful for fault-tolerant logic. Our framework for (fold-)transversal logical gates naturally utilizes symmetric non-product simplicial high dimensional expanders, and we demonstrate concretely through our 2D code family how this framework gives a rich set of fault-tolerant logical generators.
\end{abstract}

\newpage
\tableofcontents

\section{Overview}

The construction of asymptotically good quantum low-density parity-check (qLDPC) codes---codes with constant rate, linear distance, and constant-weight stabilizers---has been a central goal in quantum information theory for decades. Recent breakthroughs have finally resolved this long-standing problem, demonstrating that such codes exist. However, for a quantum code to be useful in building a fault-tolerant quantum computer, it must not only protect against errors but also permit the efficient, fault-tolerant execution of logical quantum gates. 

In practice, this has led to two largely separate research thrusts. On one hand, the pursuit of good qLDPC codes has culminated in sophisticated constructions based on high-dimensional expanders and expanding sheaves built on those complexes. Meanwhile, the most developed approaches for fault tolerant logic on qLDPC codes, heavily motivated by physical implementation, utilize geometrically-local topological codes like the surface and color codes. A central and still-open challenge is to unify these properties: to construct asymptotically good qLDPC codes that also possess a rich set of fault-tolerant logical gates.

In this work, we study a new framework that directly addresses this challenge by unifying these two powerful paradigms. We develop the theory behind a broad class of codes, named \emph{Tanner color codes} in \autocite{PKSheaf}, that generalize the standard color code to the setting of geometrically-unconstrained simplicial sheaves. These codes are simultaneously a natural extension of the expanding sheaf code construction, which underpins the recent good qLDPC codes, and a generalization of the combinatorial coloring structure that endows color codes with their fault-tolerant gates. As we will show, this unification provides a direct pathway for importing the powerful fault-tolerance properties of color codes into the realm of high-performance qLDPC codes.

Our work is closely related to the development in \autocite{LinSheaf} of cup product gates on sheaf codes. We provide a bridge from sheaf codes on colorable complexes to their sibling Tanner color codes, which offer advantages over the original sheaf codes, like allowing for strictly-transversal single-qubit gates on a single block and accommodating self-duality.

We also provide explicit instantiations of these codes on expanding simplicial complexes. By operating natively on simplicial complexes, our framework moves beyond the product-based complexes used in prior qLDPC constructions, potentially circumventing the roadblock these constructions face on the path toward higher-dimensional good qLDPC codes with transversal non-Clifford gates.

More concretely, our constant-rate 2D code family already demonstrates that our framework can offer the benefits of self-duality and strong symmetry, which are features not known in---and perhaps even incompatible with---product complexes. The correspondingly rich set of (fold-)transversal automorphism gate generators gives hope that this code family could replicate the recent success seen in nearly-constant-rate (non-LDPC) quantum Reed-Muller codes that support the full logical Clifford group \autocite{FullCliffordGroup}.

\subsection{The High Dimensional Expansion Perspective}
A recent breakthrough in the area of high dimensional expansion has allowed for the construction of the first asymptotically good quantum low density parity check (qLDPC) codes \autocite{PKGoodCodes,QTanner, DHLV}. Each construction is essentially built from a product of symmetric sheaves---graphs with a group action (e.g. Cayley graphs or abelian lifts) paired with a local code at each vertex---followed by a quotient of the symmetry, i.e. the balanced product \autocite{BalancedProduct} (which generalizes the lifted product \autocite{LiftedProductIntro}). The current techniques used to prove good distance require the local codes and their duals to be chosen to be product expanding \autocite{Two-sidedRobustlyTestableCodes, ProductExpansion, MaximallyExtendableProductCodes}, which ensures that the local sheaf at each vertex of the global product sheaf (and its dual) is a good coboundary expander. Subsequently, the global expansion of the underlying graph is used in a local-to-global argument \autocite{LocalToGlobal1D,LocalToGlobal3D,ImprovedLocalToGlobal} to lift the local coboundary expansion at each vertex to global cosystolic expansion (and likewise for the dual). The result is qLDPC codes with constant rate and linear distance built on square expanding complexes. 

However, the revolution brought by high dimensional expansion appears to be incomplete. The current paradigm relies on symmetric products in order to achieve the `high' dimension, and the optimal constructions are so far limited to dimension $D=2$. This is because non-abelian symmetries with small generating sets are required for optimal parameters, but it is not known how to adapt the product and quotient operation to more than two such groups. The paradigm \emph{has} been successfully applied using abelian groups to get higher-dimensional constructions of quantum locally testable codes \autocite{QLTCNearlyGood}, but the distance and soundness parameters are degraded by polylog factors because of the reliance on these abelian products.

The current paradigm suffers from other drawbacks. The requirement of product expansion for the local codes and their duals is stringent, and the current techniques require brute-force searches through local codes of large (but constant) size. The codes therefore lack anything like algebraic structure that might be leveraged in applications. They also cannot be self-dual, since it is known that such codes are not product expanding \autocite{Two-sidedRobustlyTestableCodes}. The product inherent in current constructions also poses a barrier to self-duality, as it appears difficult to achieve positive dimension of the component classical codes when the local codes have rate at most $1/2$.

Meanwhile, there are known constructions of simplicial high dimensional expanders that do not rely on products \autocite{RamanujanComplexes, CosetComplex, ChevComplex, KMSComplexes} and have novel features like a free transitive group action on the top-dimensional faces (for the coset complexes). We might hope that sheaves can be constructed on these complexes to obtain constructions of quantum codes that overcome the limitations of product constructions. Indeed, \autocite{NewHDXCodes} explored exactly this idea for a slight modification of the two-dimensional coset complex of \autocite{CosetComplex} in an effort to obtain classical locally testable codes with the multiplication property. They showed that the choice of Reed-Solomon local codes with rate less than $1/4$ was sufficient to establish local coboundary expansion at a vertex, which can be used as above in the local-to-global argument to establish cosystolic distance (i.e. local testability of the classical code). Unfortunately, this parameter regime for the local code is not known to result in constant rate for the global classical code. One can form a quantum code from the same sheaf such that good cosystolic distance is equivalent to good $X$ distance, but this quantum code would also seem to have poor rate in the regime they prove good $X$ distance, and furthermore would have no known bound on $Z$ distance.

Notably, the strategy used in product complexes of brute-force searching for product-expanding local codes and equipping these at each edge fails for these simplicial complexes; in addition to being desired for applications, the extra structure of the local codes like Reed-Solomon codes actually appears necessary to make a nontrivial quantum code on this complex in the first place. This is because generic choices of local code at the top level of these simplicial complexes typically over-constrain the lower-level local codes so that they are empty, and the resulting quantum code has no $X$-stabilizers. It is unclear how to pick product-expanding local codes that avoid this problem. 

Our hope is that the failure of the standard proof technique used in \autocite{NewHDXCodes} to establish good distance for the quantum code in the appropriate parameter regime is not fundamental. Indeed, our self-dual two-dimensional quantum Tanner color code in Theorem \ref{CodeThm} is constructed similarly to \autocite{NewHDXCodes} but with a different choice of (binary-alphabet) local code to yield a quantum code on qubits. We conjecture that both our codes and the Reed-Solomon variant have good distance for local code rates around $1/2$ (which result in good global rate), not merely one-sided distance for local rate less than $1/4$ as shown in \autocite{NewHDXCodes}. 

Our code construction also raises interesting questions about building sheaves on high dimensional complexes. Similarly to \autocite{NewHDXCodes}, we find that our choice of local code results in the induced vertex code having dimension much larger than would be expected from naive constraint counting. However, the exact argument they used to determine the dimension does not quite seem to work for our local code, and in fact we find that the vertex code dimension in our case is slightly larger. We give evidence in \ref{VertexCodeDimLB} that the underlying mechanism for this phenomenon lies in the compatibility of the local code symmetry with the symmetry of the link of a vertex.

\subsection{qLDPC Codes with Fault Tolerant Logical Gates}
Codes that possess logical gates that can be implemented transversally---or more generally with constant-depth circuits---are crucial for applications in fault tolerance. Unfortunately, the Eastin-Knill theorem \autocite{EastinKnillTheorem} tells us that we cannot hope to implement a universal logical gate set transversally. So, a typical strategy is to independently seek codes that support the fault-tolerant implementation of the full Clifford group and find different codes that have at least one transversal non-Clifford gate. Since the addition of any non-Clifford to the Clifford gates results in a universal gate set, these codes can be combined by various strategies to perform fault-tolerant quantum computation.

The color code \autocite{BombinColor, BombinGauge} is a geometrically-local topological CSS code that permits transversal gates at any desired level of the Clifford hierarchy; indeed, among codes defined on Euclidean lattices it is optimal, in the sense that it saturates the Bravyi-König bound \autocite{BravyiKoenigBound} with a transversal $D$-level-Clifford gate when constructed on a $D$-dimensional manifold. Early efforts focused on color codes with a single logical qubit, such that in two dimensions where the code is self-dual and supports a transversal $S$ gate, the full Clifford group can be implemented transversally. One can then `code-switch' or `gauge-fix' \autocite{BombinGauge, DimensionalJump} between a color code with the full Clifford group and a three-dimensional color code with a transversal $T$ gate to perform universal computation. 

More recently, transversal gates have been investigated in three-dimensional color codes with more logical qubits \autocite{Parallelizable3DColor}. In this setting of larger rate, additional considerations become relevant. In order to efficiently generate the Clifford group on the full logical subspace, a counting argument shows that an exponential number of generators are required. Thus, we seek codes with many \emph{addressable} and \emph{parallelizable} fault-tolerant logical Cliffords, so that (linearly-)many logical gates can be enacted between selected pairs of logical qubits in the same code block with a constant-depth physical circuit. We care not only that the logical action of a gate is non-Clifford, but also how many logical non-Clifford gates are enacted in a single use of the (constant-depth) physical gate\footnote{A further potential desiredatum is in the more nuanced structure of how these logical gates are arranged, e.g. whether or not they are applied in parallel on disjoint sets of logical qubits. See Remark 1.2 of \autocite{GolowichLin}. }. In \autocite{Parallelizable3DColor}, it was found that a manifold constructed from the product of a hyperbolic surface and a circle gives rise to a nearly-constant-rate `quasi-hyperbolic' 3D color code supporting a class of such parallelizable fault-tolerant gates in addition to the transversal T gate familiar from the single-logical-qubit instances; however, the distance of these codes grows only as $O\left(\log n \right)$. 

There have also been efforts to generalize the color code beyond triangulations of manifolds; pin codes \autocite{PinCodes} were introduced as precisely such a relaxation where the underlying simplicial complex only needs to satisfy a mild condition. Subsequently, rainbow codes \autocite{Rainbow} were developed as generalizations of pin codes that allow for a slightly more flexible choice of stabilizer generators on the same simplicial complexes. This flexibility solves a limitation of pin codes that often resulted in constant distance. The work \autocite{Rainbow} then leverages the rainbow code framework to improve upon the quasi-hyperbolic codes from \autocite{Parallelizable3DColor} by considering the product of a bipartite expander graph (instead of the circle) with a hyperbolic surface to obtain codes with transversal T gate and (truly) constant rate, but with distance still bounded as before by $O\left(\log n \right)$. Within the sheaf framework that we employ, we show that the Tanner color codes we describe are a natural generalization of both pin and generic rainbow codes, which correspond roughly to trivial sheaves. The additional flexibility of Tanner color codes beyond these previous attempts at generalization should help realize the full potential of expanding complexes. 

Quite recently, \autocite{FullCliffordGroup} established the forefront of single-block addressability by showing that non-LDPC self-dual quantum Reed-Muller codes with rate $\Theta\left(n/\sqrt{\log n}\right)$ and $\sqrt{n}$ distance have sufficient (fold-)transversal generators to generate the entire logical Clifford group. Similarly, asymptotically good non-LDPC codes with transversal non-Clifford gates have only recently been constructed \autocite{GolowichGuruswami, WillsMSD,  NguyenTransversal, AddressableTransversal} using algebraic methods. 

Meanwhile, there are no known asymptotically good qLDPC codes with transversal non-Clifford gates (nor (fold-)transversal generators of the full logical Clifford group, since even the best known non-LDPC construction is not strictly constant rate). Recent partial progress \autocite{GolowichLin, ZhuBeyond1/3} has been enabled by the development of general homological tools \autocite{LinSheaf, CupsAndGates, Parallelizable3DColor, ClassifyingGates}. In particular, the cup product on sheaves was defined in \autocite{LinSheaf} to construct a transversal $C^{D-1}Z$ gate for $D$-dimensional sheaf codes whose local codes satisfy a multiplication property. This framework was instantiated in \autocite{GolowichLin} to get nearly-constant-rate ($n^{1-\epsilon}$) and polynomial distance ($n^{1/D}/\text{poly}(\log{n)}$) codes with poly$\log$-weight stabilizers and transversal $C^{D-1} Z$ gates using the standard tensor product of sheaves. 

Further progress on code parameters was made by \autocite{ZhuBeyond1/3}, who achieved a truly qLDPC code that supports logical CCZ gates with constant stabilizer weight, constant rate, and distance $\Omega\left(\sqrt{n}\right)$. Their strategy was to use the previously developed cup product for homological codes on cellulations of manifolds \autocite{Parallelizable3DColor, ClassifyingGates} in conjunction with a generalization of the code-to-manifold mapping \autocite{CodeToManifold} that can convert a general CSS code to a homological code on a manifold with similar parameters. Specifically, the code with square-root distance is constructed from the triangulation of a 15D manifold that is the product of a 4D manifold associated with a good classical LDPC code and an 11D manifold arising from a good qLDPC code. Subsequently, the same author used similar techniques in \autocite{ZhuBeyond1/2} to achieve parameters $[[n, \Theta\left(n^{2/3}\right), \Omega\left(n^{2/3}\right) ]]$ from a product of three good qLDPC codes. Note that in both constructions \autocite{ZhuBeyond1/3, ZhuBeyond1/2}, while the total number of logical CCZ gates applied is linear in the number of logical qubits, the complex overlapping structure of these gates limits the claimed number of injected magic states per round to be $\Theta\left(n^{1/2}\right)$ and $\Theta\left(n^{1/3}\right)$, respectively; in contrast, the construction of \autocite{GolowichLin} yields a more straightforward parallel/disjoint gate structure so that this number is linear in the number of logical qubits.

We note that the barrier preventing these strategies from achieving non-Clifford gates in codes with optimal parameters appears to be the same issue with relying on products described in the last section; it is unknown how to extend the balanced product with non-abelian symmetries---which seems necessary for good code parameters---to a product with more than two factors. Similarly, it seems difficult to achieve sufficiently many Clifford generators to generate the whole logical Clifford group in product constructions without something like the symmetric structure readily available in the simplicial coset complexes.

\subsection{Our Work}

In the conclusion of \autocite{PKSheaf}, Panteleev and Kalachev suggest that the quantum Tanner code \autocite{QTanner} companion to sheaf codes on colorable simplicial complexes be named \emph{Tanner color codes}. They show that for a two-dimensional sheaf, associating stabilizers to vertex codewords and qubits to triangles produces a well-defined CSS code. We provide the details for this construction in any dimension in Section \ref{sec:QTCC} and show that the construction is indeed aptly named; Tanner color codes encompass not only traditional color codes but also pin \autocite{PinCodes} and generic rainbow \autocite{Rainbow} codes, which constitute trivial examples in our framework. We also show that the `unfolding' idea \autocite{BombinColor,Unfolding,PinCodes} from the color code generalizes nicely in our framework, where the quantum Tanner color code with $X$ stabilizers on $x$-dimensional faces corresponds to $\binom{D}{x+1}$ copies of the companion sheaf code. This is formalized with Theorem \ref{BodyTannerShrunkIso} and the extensive proofs in the appendices. Furthermore, we are able to use this unfolding to gain a partial understanding of a basis of the quantum Tanner color code logicals, in which color plays a central role (see Corollary \ref{Structure}).

Of course, a well-known feature of the traditional color code is its transversal gates, and we have discussed above the intense interest in achieving such transversal non-Clifford gates for qLDPC codes with good parameters. We show in Section \ref{TannerCodeTransversal} that a mechanism closely tied to the cup product on $D$-dimensional simplicial sheaves with local $D$-orthogonal codes (i.e. multiplication property) developed in \autocite{LinSheaf} can be leveraged to great advantage in our framework by using the structure of logical operators revealed by the color code unfolding idea. In particular, we prove our first main theorem \ref{thm:InformalTransversal}, which establishes a framework for achieving strictly transversal gates in codes on simplicial complexes, stated informally as 
\begin{theorem}[Informal]
    Let $R_\ell$ denote the diagonal $\frac{2 \pi}{2^\ell}$-phase gate. When a $D$-dimensional sheaf $\mathcal{F}(\Delta)$ on a $(D+1)$-colorable simplicial complex $\Delta$ has defining $\left(D-1\right)$-level codes that are $D$-even (see \ref{MultiEvenSpace}), the quantum Tanner color code $\mathcal{C}_\mathcal{F}\left(0,D-2\right)$ satisfies
\begin{enumerate}
    \item transversal $C^{D-1}Z$ applied across $D$ code blocks enacts logical $C^{D-1}Z$ on all logical qubits whose logical $X$ operators have odd overlap. This still holds when the $\left(D-1\right)$-level codes are merely $D$-orthogonal (see \ref{MultiOrthogonalSpace}).
    \item transversal $R_D$ applied to every qubit in a single code block enacts logical $C^{D-1}Z$ across the $D$ registers of logical qubits associated with the $D$ different colors $\mathds{Z}_{D+1} \setminus\{0\}$ whenever the corresponding logical $X$ operators have odd overlap.
    \item more generally, for any $0\leq \ell < D$, transversal $R_{D-\ell}$ applied to an appropriate subset of qubits specified by an $\ell$-tuple of logical $X$ operators of distinct colors $T_1,\dots,T_\ell$ applies an addressable and parallelizable logical $C^{D-\ell-1}Z$ gate to subsets of logical qubits across the $D-\ell$ registers associated with the complement colors $\mathds{Z}_{D+1} \setminus \cup_j T_j$ whenever the corresponding logical $X$ operators have odd overlap.
\end{enumerate}
\end{theorem}
The proof of this theorem relies heavily on the partial understanding of the colored logical basis described in Corollary \ref{Structure} from the unfolding idea. Roughly, with the slightly stronger local code property of $D$-evenness (instead of $D$-orthogonality used in \autocite{LinSheaf}), the single-qubit transversal $R_D$ phase gate enacts the $C^{D-1}Z$ gate of \autocite{LinSheaf} on the $D$ separate `unfolded' copies of the `internal' companion sheaf codes. The generalization to transversal $R_\ell$ gates allows for many parallelizable lower-level Clifford gates implemented by transversal single-qubit gates. Our framework offers an advantage over the sheaf codes studied in \autocite{LinSheaf}, whose fault-tolerant cup product gate is generally some complicated constant-depth circuit of overlapping physical $C^{D-1}Z$ gates across $D$ different code blocks. We stress that the gates in our framework are strictly transversal (and some are single-qubit), unlike the more complicated fault-tolerant gates of \autocite{LinSheaf,GolowichLin}, and circumvent the no-go theorem of \autocite{HanNoGo} precisely because we are avoiding product constructions.

The gates of Theorem \ref{thm:InformalTransversal} mirror what was found in \autocite{Parallelizable3DColor} for traditional 3D color codes and in \autocite{ZhuBeyond1/3, ClassifyingGates} for general triangulations of manifolds, although our approach more naturally applies to expanding complexes because of our native use of simplicial sheaves; we do not have to make use of the sophisticated code-to-manifold mapping \autocite{CodeToManifold}. 

With an eye toward achieving such gates (among others) in codes with optimal parameters, we instantiate our paradigm in Section \ref{DdimQTCC} with explicit Tanner color codes constructed on $D$-dimensional simplicial expanding coset complexes \autocite{CosetComplex} equipped with Reed-Muller local codes. These are the first qubit codes explicitly defined on expanding coset complexes, and our choice of local code gives the transversal gates described above. We suggest that these codes should be viewed as the simplicial siblings of the asymptotically good qLDPC codes based on products of expanding sheaves. While we do not presently establish optimal code parameters, we believe that the Tanner color codes that combine the colorable simplicial structure of color codes with the expanding sheaf structure of the known good qLDPC codes are ideal candidates for achieving optimal codes with fault tolerant non-Clifford gates. 

In Section \ref{2DSelfDual} we focus on the $2D$ self-dual construction. In particular, we show the following informal statement of our second main theorem \ref{CodeThm}
\begin{theorem}[Informal]
    There exists an infinite family of self-dual CSS quantum Tanner color codes constructed on constant-degree expanding $2D$ complexes that has rate $\geq 7/64$ and several fault-tolerant gates, which include transversal $H^{\otimes n}$, $S^{\otimes n}$, and $\text{CZ}^{\otimes n}$, along with depth $\leq 3$ qubit-permutations by the action of a group $G$ of order $|G|=3n$, a family of `$g$-orbit gates' for each element $g \in G$ that generalize fold-transversal gates, and a non-diagonal gate related to permuting colors.
\end{theorem}
We furthermore conjecture that this code family has linear distance. The self-duality of this code is remarkable when compared to the known asymptotically good codes \autocite{PKGoodCodes,QTanner, DHLV} defined on product (square) complexes, because when the local code rate is $1/2$ their quantum code rate lower bound vanishes. Our code also notably has a free transitive group action permuting the qubits, which is a richness of symmetry that appears difficult to achieve in product constructions. This symmetry is precisely what leads to a large collection of additional fault tolerant gates that include and generalize the fold-transversal gates of \autocite{FoldTransversal}. We hope that these generators (and perhaps others) close on the full logical Clifford group, such that our code family may be viewed as a sparse cousin to the nearly-constant-rate non-LDPC quantum Reed-Muller codes recently found to achieve such a result in \autocite{FullCliffordGroup}.

Finally, we describe in Section \ref{sec:Floquet} a Floquet implementation of the 2D code that significantly reduces the check weight to $4$. We show how the symmetry of the complex can be used in a scheme where the measurements of each round are fixed and geometrically local while the data qubits are permuted after each round, such that each qubit is moved in parallel along a cyclic $3$-site orbit of generically geometrically-distant positions. 

These results suggest that there is still much to be gained from continuing to build our understanding of how high dimensional expanders can be utilized to construct quantum codes. We believe that such codes not only show promise for achieving theoretically optimal constructions, but that insights gained in their pursuit may lead to practical benefits as well.

\section{Background}

\subsection{Chain Complexes}
A chain complex, denoted $(C_\bullet, \partial_\bullet)$, is a sequence of vector spaces (or more generally abelian groups) $C_j$ connected by linear maps $\partial_j : C_j \to C_{j-1}$ called boundary operators
\[
\cdots \xleftarrow{\partial_{j-1}} C_{j-1} \xleftarrow{\partial_j} C_j \xleftarrow{\partial_{j+1}} C_{j+1} \xleftarrow{\partial_{j+2}} \cdots
\]
with the defining feature that the composition of any two consecutive boundary maps is identically zero
\[
\partial_j \circ \partial_{j+1} = 0 \quad \text{for all } j
\]

This structure allows for the definition of two important subspaces within each $C_j$
\begin{enumerate}
    \item The \textbf{space of cycles} $Z_j:= \ker(\partial_j) \subset C_j$
    \item The \textbf{space of boundaries} $B_j := \text{im}(\partial_{j+1}) \subset C_j$ 
\end{enumerate}

The condition $\partial_j \circ \partial_{j+1} =0$ ensures that every boundary is a cycle $B_j \subset Z_j$, leading to the definition of the \textbf{$j$-th homology group}, $H_j = Z_j / B_j$. 

We will be working with finite dimensional vector spaces $C_j$---typically over the finite field $\mathds{F}_2$---and complexes $(C_\bullet, \partial_\bullet)$ with a finite number of terms, so we will feel free to use these assumptions when they simplify the discussion. 

For our purposes, the chain complex itself will arise from some simplicial complex and will not be the main object of focus; instead, we will most often be working more directly with \emph{co}chain complexes, denoted $(C^\bullet, \delta^\bullet)$, which are dual to chain complexes. For any chain complex $(C_\bullet, \partial_\bullet)$ we denote the dual vector spaces by $C^j := C_j \rightarrow \mathds{F}_2$ and define \emph{co}boundary operators $\delta^{j}:= \partial_{j+1}^T: C^j \to C^{j+1}$ with arrows that point in the opposite direction (increase the index)
\[
\cdots \xrightarrow{\delta^{j-2}} C^{j-1} \xrightarrow{\delta^{j-1}} C^j \xrightarrow{\delta^j} C^{j+1} \xrightarrow{\delta^{j+1}} \cdots
\]
which automatically satisfy $\delta^{j+1} \circ \delta^j = 0$. In our context, we can always choose a basis for each term $C_j$ of our chain complex, which yields an isomorphism $C_j \cong C^j$, and we will feel free to apply a cochain map $\delta^j$ to a chain space $C_j$ by implicitly relying on this isomorphism. 

The important subspaces of a cochain complex are correspondingly defined with an upper index and prefix `co'-
\begin{enumerate}
    \item The \textbf{space of cocycles} $Z^j:= \ker(\delta^j) \subset C^j$
    \item The \textbf{space of coboundaries} $B^j := \text{im}(\delta^{j-1}) \subset C^j$ 
    \item The \textbf{$j$-th cohomology group} $H^j :=Z^j/B^j$
\end{enumerate}

An element of $C_j$ or $C^j$ we may call a $j$-chain or $j$-cochain respectively. Because $H^j$ is a quotient of the ambient space $Z^j$, we will denote an element of $H^j$ by $[z]\in H^j$ for some $z \in Z^j$ where $[z]$ denotes the equivalence class in $H^j$ that contains $z$.  

A useful quantity defined for a (co)chain complex with $D+1$ nontrivial terms $0\to C^0\to\dots \to C^D \to 0$ is its Euler characteristic
\begin{align} \label{DimensionRelation}
\chi:=  \sum_{j=0}^{D} \left(-1\right)^j \dim C^j=\sum_{j=0}^{D} \left(-1\right)^j \dim H^j 
\end{align}
The equality following the definition can be derived by recursively applying the following elementary consequence of the rank-nullity theorem
\begin{align}
    \dim Z^j = \dim B^j + \dim H^j = \dim C^{j-1} - \dim Z^{j-1} + \dim H^j
\end{align}
and using the convention that chain complexes start and end with the zero vector space so that 
\begin{align}
H^0 \cong Z^0 \quad \text{and } \quad C^D = Z^D    
\end{align}

\subsection{CSS Codes} \label{CSS}
An $n$-qubit CSS code is a quantum stabilizer code that can be specified by two classical $n$-bit codes $C_X$ and $C_Z$ such that $C_X^\perp \subset C_Z$ (which is equivalent to $C_Z^\perp \subset C_X$). The $X$ stabilizers are given by dual codewords $C_X^\perp$ and the $Z$ stabilizers are given by $C_Z^\perp$, so that the condition $C_X^\perp \subset C_Z$ implies that stabilizers commute. Let $H_X$ be a $r$ by $n$ parity check matrix of the $X$-code $C_X = \ker H_X$; then the row-span of $H_X$ is equivalent to $C_X^\perp = \Img H_X^\top$. Do the same for $Z$. Then the condition $C_X^\perp \subset C_Z$ tells us that 
\begin{align}
    H_Z  \cdot H_X^\top = 0
\end{align}

This is exactly the condition we saw in the last section that is essential for defining a chain complex, so that we can recast the CSS code as being characterized by the three-term chain complex
\begin{align}
    C^{0} \xrightarrow{H_X^\top} C^1 \xrightarrow{H_Z} C^{2}
\end{align}
The standard basis vectors of the space $C^1:= \mathds{F}_2^n$ label physical qubits of the code. The space $C^0 \cong \mathds{F}_2^r$ is the space of $X$ checks, such that for any element $f \in C^0$ the nonzero support (with respect to the standard basis) of the vector $H_X^\top f \in C^1 $ specifies the qubit support of an $X$ stabilizer of the quantum code. Similarly, for any $f \in C^2$ the vector $H_Z^\top f \in C^1$ specifies the qubit support of a $Z$ stabilizer. 

In these terms, a nontrivial $X$ logical operator is an element of $C_Z \setminus C_X^\perp$. The set of $X$ logicals with nonequivalent logical actions are in one-to-one correspondence with elements of the cohomology group $H^1$ of the chain complex; concretely, a representative $f \in [f] \in H^1$ of a cohomology element specifies the support of an $X$ logical operator, and any other homologous representative $g \in [f]$ specifies the same $X$ logical operator up to the application of some $X$ stabilizer. Similarly, $Z$ logical operators are given by the set $C_X \setminus C_Z^\perp$ with $H_1$ indexing the set of nonequivalent logical actions. 

We can specify a basis of the logical code space with basis vectors labeled by some coset $[z] \in C_Z/C_X^\perp$ given by
\begin{align}
    \ket{[z]} = |C_X^\perp|^{-1/2} \sum_{b \in C_X^\perp} \ket{z+b}
\end{align}

If the parameters of $C_X$ are $[n,k_X,\nu_X]$ and the parameters of $C_Z$ are $[n,k_Z,\nu_Z]$, then the parameters of the quantum code $\text{CSS}\left(C_X, C_Z\right)$ are $[n,k_X+k_Z-n, d]$ where 
\begin{align}
    d_X &:= \min\left\{|c| \middle\bracevert c \in C_X \setminus C_Z^\perp \right\} \geq \nu_X\\
    d_Z &:= \min\left\{|c| \middle\bracevert c \in C_Z \setminus C_X^\perp \right\} \geq \nu_Z\\
    d &:= \min\{d_X, d_Z\}
\end{align}

The computation for the rate can be recast in homological terms as 
\begin{align}
    \dim H^1 = \dim Z^1 - \dim B^1 = \dim C^1 - \dim B_1 - \dim B^1
\end{align}
where we use the fact that $\left(\ker \delta^1\right)^\perp = \text{Im}\left( {\delta^1}^\top \right)$. Substituting $\dim B^1 = \dim C_X^\perp = n-k_X$, $\dim B_1 = \dim C_Z^\perp = n-k_Z$, and $\dim C^1 = n$ recovers the expression $k_X+k_Z-n$. 

In this paper, we will always define a CSS code implicitly in homological terms by identifying three consecutive terms of some cochain complex $C^{j-1} \xrightarrow{\delta^{j-1}} C^{j} \xrightarrow{\delta^{j}} C^{j+1}$ and identifying a basis for $C^{j}$ with the set of physical qubits as described above.

\subsection{Simplicial Complexes}

An essential ingredient to the codes we discuss in this paper will be the structure of an abstract simplicial complex. An abstract simplicial complex is a simple undirected downward-closed hypergraph, where any subset of a hyperedge is itself required to be a hyperedge. We will typically denote such a complex as $\Delta$ and the set of hyperedges in $\Delta$ containing exactly $\ell$ vertices as $\Delta\left(\ell-1\right)$. Occasionally we will use an inequality in the parentheses, such as $\Delta\left(\leq\ell -1\right)$, to denote the set of hyperedges with $\ell$ or fewer vertices. 

More typically, we will refer to a hyperedge $\sigma \in \Delta\left(\ell\right)$ of $\ell+1$ vertices as an $\ell$-simplex or a level $\ell$ face; the indexing by one fewer than the number of vertices corresponds to thinking of the face as being a component of a geometrical simplicial polytope constructed out the set of vertices $\Delta\left(0\right)$ with $\ell$ being the dimension of the face. For example, we have points $\Delta\left(0\right)$ of dimension $0$, edges $\Delta\left(1\right)$ of dimension $1$, triangles $\Delta\left(2\right)$ of dimension $2$, etc. Sometimes we will treat a simplex $\sigma \in \Delta\left(\ell\right)$ as a simplicial complex in its own right, with $\sigma\left(j\right)$ denoting the $j$-dimensional faces that are contained in $\sigma$ for any $0\leq j \leq \ell$. 

The dimension of the complex $\Delta$ is defined as the dimension of its largest-dimensional face, one less than the size of the maximal hyperedge. Typically we will have in mind \emph{pure} (also called \emph{homogeneous}) $D$-complexes $\Delta$ where any simplex $\sigma \in \Delta\left(<D\right)$ is a face of at least one $D$-simplex. A triangulation of a $D$-dimensional manifold is an example of such a pure $D$-complex, with the further condition that the set of simplices can be thought of as being embedded in a $D$-dimensional space such that the interiors of any two distinct simplices are disjoint. 

We can form a cochain complex $C\left(\Delta, \mathds{F}_2\right)$ over any simplicial complex $\Delta$ where the $\ell^{th}$ vector space $C^\ell\left(\Delta,\mathds{F}_2\right)$ consists of functions from $\ell$-simplices $\Delta\left(\ell\right)$ to $\mathds{F}_2$. For a function $f \in C^\ell\left(\Delta,\mathds{F}_2\right)$ and any $\left(\ell+1\right)$-simplex $\sigma \in \Delta\left(\ell+1\right)$, we can define the coboundary $\delta^\ell: C^\ell\left(\Delta,\mathds{F}_2\right) \rightarrow C^{\ell+1}\left(\Delta,\mathds{F}_2\right)$ via
\begin{align}
    \left(\delta^\ell f\right) \left(\sigma\right) = \sum_{\tau \in\sigma\left(\ell\right)} f\left(\tau\right)
\end{align}
The sheaves we discuss in \ref{Sheaves} will generalize this notion beyond $\mathds{F}_2$-valued functions to vector-valued functions (satisfying certain conditions).

A special property of a complex that is crucial for the color code framework is, of course, that the complex is colorable. A $\left(D+1\right)$-colored complex is any complex $\Delta$ along with a partition of the set of vertices $\Delta\left(0\right)$ into $D+1$ disjoint `colored' vertex sets $\Delta\left(0\right) = \bigsqcup_{j \in \mathds{Z}_{D+1}} \Delta_j\left(0\right)$ such that any face $\sigma \in \Delta\left(\ell\right)$ has at most one vertex from a given color set $\Delta_j\left(0\right)$. For any $\sigma \in \Delta\left(\ell\right)$ we then define the \emph{type} of $\sigma$, $T\left(\sigma\right) \subset \mathds{Z}_{D+1}$ as the subset of $\ell+1$ colors of the vertices that comprise $\sigma$. Let $T^c\left(\sigma\right) := \mathds{Z}_{D+1} \setminus T\left(\sigma\right)$ denote the complement type. For any color type $T\subset \mathds{Z}_{D+1}$, let $\Delta_T$ denote the sub-complex of simplices $\left\{\sigma \in \Delta\left(\leq\ell\right)\middle\bracevert T\left(\sigma\right) \subset T\right\}$ that have colors contained in $T$. Then $\Delta_T\left(|T|-1\right)$ is exactly the set of simplices of type $T$. Meanwhile, for a simplex $\sigma \in \Delta\left(\geq |T|-1\right)$ of type $T\left(\sigma\right)\supset T$, let $\sigma_T$ denote the unique $\left(|T|-1\right)$-simplex of type $T$ contained in $\sigma$. For $\sigma \in \Delta\left(\leq |T| - 1\right)$ let $\sigma^T$ denote the subset of $\left(|T|-1\right)$-simplices $\tau \supset \sigma$ of type $T\left(\tau\right) = T$ that contain $\sigma$.

We need to fix two more pieces of notation. First, for any $\ell$-simplex $\sigma \in \Delta\left(\ell\right)$ we define the \emph{link} of $\sigma$, denoted $\Delta_\sigma$ as the $\left(D-\ell-1\right)$-dimensional complex of faces that include $\sigma$, with $\sigma$ itself removed from each face
\begin{align}
    \forall 0 \leq j \leq D-\ell-1,\quad\Delta_\sigma \left(j\right) := \left\{\tau \setminus \sigma \middle\bracevert \sigma \subset \tau \in \Delta\left(\ell+1+j\right) \right\}
\end{align}
For example, in a two-dimensional complex $\Delta$, the link $\Delta_v$ of a vertex $v \in \Delta\left(0\right)$ is a graph whose nodes $\Delta_v\left(0\right)$ correspond to all of the vertex neighbors of $v$ in $\Delta$, and whose edges connect any two nodes that appear in a triangle with $v$ in $\Delta$. For any face $\tau \in \Delta_\sigma(\ell)$, it is natural for this face to inherit its type from $\Delta$ as $T\left(\tau\right):= T\left(\tau \cup \sigma\right)\setminus T\left(\sigma\right)$; we use $\Delta_{\sigma, T}$ to denote the sub-complex of the link of $\sigma$ restricted to faces of type $T$, which should satisfy $T \subset \mathds{Z}_{D+1} \setminus T^c(\sigma)$. 

Lastly, we denote the set of $D$-dimensional faces that contain an $\ell$-face $\sigma \in \Delta\left(\ell\right)$ by 
\begin{align}
    \sigma^\uparrow := \left\{\tau \in \Delta\left(D\right) : \sigma \subset \tau\right\}
\end{align} For a consistency check, note that $\sigma^\uparrow \cong \Delta_\sigma\left(D-\ell-1\right)$.

When we turn to instantiating specific instances of codes, we will focus on simplicial complexes that have the additional property of being good expanders with bounded degree, since these ought to give rise to qLDPC codes with the best parameters. The bounded degree condition simply requires that when we are considering an infinite family of $D$-dimensional complexes of growing size, the number of $D$-dimensional faces that include any vertex is bounded above by a constant. This ensures that our codes are qLDPC. We note that complexes that are not bounded degree do not cause any problems in the Tanner color code framework; it is sufficiently general to include non-qLDPC codes, though we do not consider any such examples. 

Expansion is a property that we will not describe in detail here, because it is most useful in establishing large distance of codes, and we will not address the distance in this paper. In imprecise terms (indeed there are several nonequivalent notions of high dimensional expansion), expanding complexes are especially well-connected, with the property that any subset of faces have a proportionally large coboundary, i.e. taking the coboundary sufficiently `expands out' of the chosen subset. For high dimensional expanders, i.e. complexes with dimension $D>1$, we often concern ourselves with the `local' expansion of the link $\Delta_\sigma$ of each face $\sigma \in \Delta\left(< D-1\right)$. Good expansion is in tension with geometric locality such that, for example, triangulations of Euclidean manifolds are not good expanders. 

\subsubsection{Coset Complexes} \label{CosetComplex}

In this section we describe general coset complexes. These complexes provide a flexible source of colorable simplicial complexes with desirable symmetry. In the next section, we will describe particular instances of coset complexes based on matrix groups that are furthermore sparse, expanding, and can be easily adjusted to have $\left(D-1\right)$-level faces of appropriate degree. For any omitted proofs of claims or for further details of coset complexes we refer the reader to \autocite{CosetComplex, ChevComplex}. See also section 1.8 of \autocite{IncidenceGeometry} for even deeper discussion, albeit in the language of incidence geometry. 

Coset complexes, in general, are defined by a choice of group $G$ along with a choice of subgroups $\left(K_j\right)_{0\leq j \leq D}$ where $K_j \subset G$. These data specify a pure $\left(D+1\right)$-colorable $D$-dimensional \emph{coset complex} $\Delta\left(G; \left(K_j\right)_{0\leq j < D+1}\right)$, which is, in particular, a \emph{clique complex} (also called \emph{flag complex}). A clique complex is any abstract simplicial complex $\Delta$ that is determined fully by its set of vertices and edges $\Delta\left(\leq 1\right)$ by the requirement that any set of vertices $\{v_j\}_{0\leq j\leq \ell}$ where $\{v_j,v_{k}\} \in \Delta\left(1\right)$ (i.e. a clique in the one-skeleton of $\Delta$) necessarily constitutes an $\ell$-simplex $\{v_j\}_{0\leq j\leq \ell} \in \Delta\left(\ell\right)$. Subsequently, the coset complex is fully determined by its set of vertices
\begin{align}
    \Delta\left(0\right) &:= \bigsqcup_{j=0}^{D} \Delta_{\{j\}}\left(0\right) \\
    \Delta_{\{j\}}\left(0\right) &:= G/ K_j
\end{align}
which are naturally partitioned into the disjoint sets of all cosets of the subgroups $K_j$, along with the edges 
\begin{align}
    \Delta\left(1\right) &:= \bigsqcup_{0\leq i <j\leq D}^{} \Delta_{\{i,j\}}\left(1\right) \\
    \Delta_{\{i,j\}}\left(1\right) &:= \left\{ \{g_i K_i, g_j K_j \}\middle\bracevert g_i,g_j \in G, g_i K_i \cap g_j K_j \neq \emptyset \right\}
\end{align}
which include any pair of cosets of distinct subgroups $K_i \neq K_j$ that share at least one element of $G$. We immediately see from this definition that coset complexes with $D+1$ subgroups $K_j$ are naturally $\left(D+1\right)$-colorable and $D$-dimensional. The purity follows from the fact that any face $\sigma := \{g_j K_j\}_{j \in T \subset \mathds{Z}_{D+1}}$ of type $T\left(\sigma\right) = T$ necessarily belongs to the $D$-dimensional face $\sigma :=\{g K_j\}_{j \in \mathds{Z}_{D+1}}$ where $g \in \bigcap_{j \in T} g_j K_j$ is any element in the intersection of all of the cosets comprising $\sigma$. 

There is a natural group action of $G$ that induces type-preserving automorphisms of the complex; for any simplex $\{g_j K_j\}_{j \in T \subset \mathds{Z}_{D+1}}$ and group element $g \in G$ we define the group action by
\begin{align}
    g \rhd \{g_j K_j\}_{j \in T \subset \mathds{Z}_{D+1}} := \left\{ g g_j K_j\right\}_{j \in T \subset \mathds{Z}_{D+1}}
\end{align}
This action for each $g\in G$ is a simplicial automorphism that clearly preserves the type of each face. The action is also clearly transitive on each set of vertices of a single color $\Delta_{\{j\}}\left(0\right)$. 

There are three additional desirable conditions that we will impose on the subgroups $K_j$ that yield what \autocite{CosetComplex} call a \emph{subgroup geometry system} (see their Definition 2.2). To state these conditions, we first define for any type $T \subset \mathds{Z}_{D+1}$ the subgroup $K_T := \bigcap_{j \in T} K_j$, which is the intersection of all subgroups whose color is in $T$. We let $K_\emptyset := G$. It will turn out that after imposing our set of conditions, we will find that cosets of the groups $K_T$ naturally correspond to faces of type $T$ in $\Delta$. We report a simple restatement of the conditions in terms of the subgroups $K_{\left\{j \right\}^c}$ rather than in terms of $K_j$ as done in \autocite{CosetComplex}. 

The first condition 
\begin{align}
    \forall i\neq j \in \mathds{Z}_{D+1}, \, K_{\left\{i \right\}^c} \not\subseteq K_{\left\{j \right\}^c} 
\end{align}
precludes degenerate choices of the subgroups. 

The next condition 
\begin{align}
    \forall T \subset \mathds{Z}_{D+1}, \, K_T = \left\langle K_{\left\{j \right\}^c} \right\rangle_{j \in T^c} \label{Connectedness}
\end{align}
is equivalent to each link $\Delta_\sigma$ for $\sigma \in \Delta\left(\leq D-2\right)$ being connected; in particular, the link $\Delta_\emptyset = \Delta$ corresponding to the entire complex is connected iff $G = \left\langle K_j \right \rangle_{j \in \mathds{Z}_{D+1}}$, which is clearly implied by the conditions $K_{\emptyset} = G = \left\langle K_{\left\{j \right\}^c} \right\rangle_{j \in \mathds{Z}_{D+1}}$ and $K_{j}= \left\langle K_{\left\{i \right\}^c} \right\rangle_{i \in \{j\}^c}$ above. 

The last condition is equivalent to the transitivity of the action of $G$ defined above on the set of top-dimensional faces $\Delta\left(D\right)$. 
\begin{align}
    \forall T \subset \mathds{Z}_{D+1},  \forall j \in T^c \quad  K_T K_j = \bigcap_{i \in T}\left(K_iK_j\right) \label{TransitivityCondition}
\end{align}
In fact, any $\left(D+1\right)$ partite complex $\Delta$ with a group action by $G$ that is type-preserving and transitive on $\Delta\left(D\right)$ is necessarily a coset complex with subgroups satisfying condition \ref{TransitivityCondition} (see Proposition 5.5 of \autocite{StrongSymmetry}). In this sense, the complexes we consider are exactly those complexes which are colorable and strongly symmetric. 

Any coset complex that satisfies these three conditions admits a simplifying description of the simplices to which we alluded before; namely, there is an isomorphism between the $T$-type simplices $\sigma = \{g_j K_j \}_{j \in T} \in \Delta_T\left(\left|T\right|-1\right)$ and the cosets $\sigma \leftrightarrow gK_T \in G/K_T$ for any type $T \subset \mathds{Z}_{D+1}$. In particular, when $K_{\mathds{Z}_{D+1}} = \left\{\text{Id}\right\}$---as will be the case in our examples---the top-dimensional faces $\Delta\left(D\right)$ correspond precisely to the group elements $G$, and we see that the $G$-action defined above is free on these $D$-dimensional faces. In this case, for any $\sigma = gK_T \in \Delta_T\left(\left|T\right|-1\right)$, the top-dimensional faces containing $\sigma$ are given by the elements of the coset, $\sigma^\uparrow = \left\{g h \middle\bracevert h \in K_T\right\}$. 

Another nice property of subgroup geometry systems is that any link $\Delta_\sigma$ for $\sigma \in \Delta\left(<D\right)$ of type $T\left(\sigma\right) = T$ is itself isomorphic to the coset complex $\Delta\left(K_T; \left(K_{T \cup \{j\}} \right)_{j \in T^c}\right)$ so that all links of the same type of face are identical. 

We conclude this section by identifying further simplicial automorphisms of any subgroup geometry system $\Delta\left(G; \left(K_j\right)_{0\leq j < D+1}\right)$. Consider a subgroup of group automorphisms $\Gamma \subset \text{Aut}\left(G\right)$ with the stipulation that each automorphism $\gamma \in \Gamma$ permutes the list of subgroups $\left(K_j\right)_{0\leq j < D+1}$
\begin{align}
  \forall \gamma \in \Gamma, \exists \pi_\gamma \in \text{Sym}\left(D+1\right):\quad  \left(\gamma\left(K_j\right)\right)_{0\leq j < D+1} = \left(K_{\pi_\gamma\left(j\right)}\right)_{0\leq j < D+1}
\end{align}
Then the elements of the semi-direct product $G \rtimes \Gamma \hookrightarrow \text{Aut}\left(\Delta\right)$ act as simplicial automorphisms via the action 
\begin{align}
    \left(g,\gamma\right) \rhd h K_T := g \gamma\left(h\right) K_{\pi_\gamma\left(T\right)}
\end{align}
where $g,h \in G$ and we use our alternative labeling of cosets of $K_T$ for the faces in a subgroup geometry system.

\subsubsection{Expanding \texorpdfstring{$\text{SL}_{D+1}$}{TEXT} Coset Complexes} \label{SLCosetComplex}

We proceed to describe the particular expanding coset complexes that we will use in our code constructions. The group $G$ will be the matrix group $\text{SL}_{D+1}$ over a particular ring; such complexes were introduced in \autocite{CosetComplex}, but we will use the variant constructed in \autocite{NewHDXCodes} that employ a ring that yields slightly simplified links. Expanding coset complexes for other Lie-type finite groups (Chevalley groups) were constructed in \autocite{ChevComplex}, and we expect these to also underlie interesting Tanner color codes. 

We will relay the construction described in section 3.1 of \autocite{NewHDXCodes} with slightly different notation and for general dimension $D\geq 2$. Fix a field $\mathds{F}_q$ for some prime power $q = p^\eta$; we will typically consider $p=2$ for our code constructions. We then construct a field $R_m$ isomorphic to $\mathds{F}_{q^m}$ by picking a primitive polynomial $\varphi \in \mathds{F}_q[t]$ of degree $m$ and defining $R_m := \mathds{F}_q[t] / \left\langle\varphi \right\rangle \cong \mathds{F}_{q^m}$. It will be important that we fix $q$ and pick $m$ so that $q^m-1$ and $D+1$ are coprime. The group $G$ that we will use to construct the complex is $G:= \text{SL}_{D+1}\left(R_m\right)$.

\begin{remark}\label{CoprimeRemark}
In our typical case, $q^m-1 = 2^{\eta m} - 1$ will always be odd, so we only need to worry about the odd prime factors of $D+1$ which we denote $3\leq p_1 <\dots < p_j < \dots <p_b \leq D+1$. Then $\gcd\left(2^{\eta m} - 1, D+1\right) = 1$ if and only if $\forall j, \, \text{ord}_{p_j}\left(2\right) \nmid \eta m$. Let $r := \text{lcm}\left( \{\text{ord}_{p_j}\left(2\right)\}_j \right)$. Then, for example, for any fixed $D+1$ we can choose $\eta = c_0 r + 1 $ for any $c_0 \geq 0$ so that for this fixed choice of $q = 2^\eta$ there are infinitely many $m = c_1 r + 1$ for any $c_1 \geq 0$ satisfying the condition, since $\left(c_1 r + 1\right) \left(c_0 r + 1\right) \equiv 1 \mod  \text{ord}_{p_j}\left(2\right)$ for any $j$. When $D=2$ we have $3=p_1=D+1$ and $\text{ord}_{p_1}\left(2\right) = 2$ so that we can choose, for example $c_0=1 \implies\eta=3 \implies q=8$ and we get that $q^m-1$ and $D+1$ are coprime whenever $m = 2 c_1  + 1 $ is odd. The strategy outlined here does not exhaust all possibilities. 
\end{remark}

Now all that is left is to specify the subgroups $\{K_j\}_{0\leq j <D+1}$. We will start by defining the groups $K_{\{j\}^c}$ and then using the condition \ref{Connectedness} to construct the groups $K_j$ from these smaller groups. First, let $e_{i,j}\left(\alpha\right)$ denote the elementary matrix with 1's along the diagonal and entry $\alpha$ in position $\left(i,j\right)$. Then we define 
\begin{align}
    K_{\{0\}^c} := \left\{ e_{D+1,1} \left(\alpha t\right) \middle\bracevert \alpha \in \mathds{F}_q \right\}
\end{align}
with elements schematically depicted as
\begin{align}
   \left( \begin{matrix}
        1 & 0 & \dots & 0 & 0\\
        0 & 1 & \dots & 0 & 0\\
        \ & & \ddots & & \\
        0 & 0 & \dots & 1 & 0\\
        \alpha t & 0 & \dots & 0 & 1
    \end{matrix} \right) \in K_{\{0\}^c} \label{MatrixRep}
\end{align}

Now consider the permutation matrix $P$ with $1$ in position $\left(j, \left(j \mod D+1\right) +1\right)$ and $0$ elsewhere
\begin{align}
   P := \left( \begin{matrix}
        0 & 1 & 0 &\dots & 0 \\
        0 & 0 & 1 & \dots & 0 \\
        \ & & \ddots & & \\
        0 & 0 & 0 & \dots  & 1\\
        1 & 0 & 0 & \dots & 0
    \end{matrix} \right)
\end{align}

Then for $0 < j < D+1$ we define the group 
\begin{align}
    K_{\{j\}^c} &:= \left\{ \left(P^{-1}\right)^j  M P^j \middle\bracevert M \in K_{\{0\}^c} \right\} \\
     &= \left\{ e_{j,j+1} \left(\alpha t\right) \middle\bracevert \alpha \in \mathds{F}_q \right\}
\end{align}
For example, when $j=1$ we can depict the elements schematically as
\begin{align}
   \left( \begin{matrix}
        1 & \alpha t & \dots & 0 & 0\\
        0 & 1 & \dots & 0 & 0\\
        \ & & \ddots & & \\
        0 & 0 & \dots & 1 & 0\\
        0 & 0 & \dots & 0 & 1
    \end{matrix} \right) \in K_{\{1\}^c}
\end{align}
and in general the nontrivial entry of $K_{\{j\}^c}$ is found in position $\left(j, j+1\right)$.

In fact, letting $\pi_{T^+} \in \text{Aut}\left(G\right)$ denote conjugation by the matrix $P$
\begin{align}
    \pi_{T^+}\left(g\right) = P^{-1} g P
\end{align}
we see that this `type cycling permutation' will in fact constitute a simplicial automorphism for our complex, since it merely permutes the type of the subgroups
\begin{align}
   \pi_{T^+} \left(K_{\{j\}^c}\right) =  K_{\{j+1\}^c}
\end{align}

\begin{remark}
    Indeed, as made more explicit in \autocite{ChevComplex}, the groups $K_{\{j\}^c}$ correspond roughly to root subgroups of the Chevalley group $G$. For appropriate choice of roots we can read off such type-changing simplicial automorphisms by inspecting the extended Dynkin diagram of the group, such that the $\text{SL}_{D+1}$-type groups have type-permuting automorphisms from the dihedral group $\text{Dih}_{D+1}$. The automorphism $\pi_{T^+}$ is simply a generator for the cyclic subgroup of order $D+1$ inside $\text{Dih}_{D+1}$ (which is generated by the cyclic subgroup along with the reflection of the polygon). The diagram is also useful for easily reading off the structure of links of faces of type $T$. In particular, one can look at the subgraph of nodes in the Dynkin diagram corresponding to the roots with color in $T^c$ to determine the commutation relations between the corresponding root groups $K_{\{j\}^c}$. If the roots of type $T^c$ do not share any edges, then all of the root groups pairwise commute and the link $\Delta_\sigma$ of any face of type $T\left(\sigma\right) = T$ must be the complete $|T^c|$-partite complex with $q$ vertices in each part. See \autocite{ChevComplex, IncidenceGeometry} for more details. 
\end{remark}

Subsequently, to ensure condition \ref{Connectedness} we define $K_T := \left\langle K_{\{j\}^c} \right \rangle_{j \in T^c}$ for any $T \subset \mathds{Z}_{D+1}$. This requires us to show that it is, in fact, true that $G = K_\emptyset = \left\langle K_{\{j\}^c} \right \rangle_{j \in \mathds{Z}_{D+1}}$. First, we diagrammatically depict $K_0$
\begin{align} \label{K0Def}
   \left( \begin{matrix}
        1 &  \alpha_{1,2} t & \dots & \alpha_{1,D} t^{D-1} & \alpha_{1,D+1} t^{D}\\
        0 & 1 & \dots & \alpha_{2,D} t^{D-2} & \alpha_{2,D+1} t^{D-1}\\
        \ & & \ddots & & \\
        0 & 0 & \dots & 1 & \alpha_{D,D+1} t\\
       0 & 0 & \dots & 0 & 1
    \end{matrix} \right) \in K_{0}
\end{align}
with $\alpha_{j,k} \in \mathds{F}_q$ and where the power of $t$ increases linearly as we move away from the main diagonal. 

Now we prove the following claim with the same strategy as \autocite{NewHDXCodes} in the proof of their Claim 3.2 for the two-dimensional case:
\begin{claim}
    Whenever $q^m-1$ and $D+1$ are coprime, $G = \text{SL}_{D+1}\left(R_m\right)=  \left\langle K_{\{j\}^c} \right \rangle_{j \in \mathds{Z}_{D+1}}$
\end{claim}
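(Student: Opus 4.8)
Write $H := \left\langle K_{\{j\}^c}\right\rangle_{j\in\mathds{Z}_{D+1}}$ for the subgroup in question. Each $K_{\{j\}^c}$ equals $\left\{e_{j,j+1}\left(\alpha t\right):\alpha\in\mathds{F}_q\right\}$ with indices read cyclically mod $D+1$ (so $K_{\{0\}^c}=\left\{e_{D+1,1}\left(\alpha t\right):\alpha\in\mathds{F}_q\right\}$), hence $H$ is generated by $\left\{e_{i,i+1}\left(\alpha t\right):i\in\mathds{Z}_{D+1},\ \alpha\in\mathds{F}_q\right\}$. The plan is to show that $H$ contains the full elementary subgroup $E_{D+1}\left(R_m\right)$ generated by \emph{all} $e_{i,j}\left(r\right)$ with $i\neq j$ and $r\in R_m$; since $R_m$ is a field, Gaussian elimination gives $E_{D+1}\left(R_m\right)=\text{SL}_{D+1}\left(R_m\right)=G$, which finishes the proof. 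First I would reduce to a single root subgroup: conjugation $g\mapsto P^{-1}gP$ is exactly the automorphism $\pi_{T^+}$, which sends $K_{\{j\}^c}$ to $K_{\{j+1\}^c}$, so $P^{-1}HP=H$. Thus it suffices to prove $e_{1,2}\left(r\right)\in H$ for every $r\in R_m$: conjugating by powers of $P$ then gives $e_{i,i+1}\left(R_m\right)\subseteq H$ for all $i\in\mathds{Z}_{D+1}$ (including the wrap-around group $e_{D+1,1}\left(R_m\right)$), and iterated commutators along the directed cycle $1\to 2\to\cdots\to D+1\to 1$, using $[e_{i,j}(a),e_{j,k}(b)]=e_{i,k}(ab)$ for distinct $i,j,k$ (legitimate because $D+1\geq 3$) together with $1\in R_m$, produce every $e_{i,j}\left(R_m\right)$, i.e.\ all of $E_{D+1}\left(R_m\right)$.

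The heart of the argument is therefore to manufacture $e_{1,2}\left(r\right)$ for every $r\in R_m$ starting only from $e_{1,2}\left(\alpha t\right)$, $\alpha\in\mathds{F}_q$. The mechanism is that an iterated commutator along a vertex-simple directed path of length $\ell$ from $i$ to $j$ in the cycle yields $e_{i,j}\left(\gamma t^\ell\right)$, and by chaining a few such steps through \emph{distinct} index triples one can add a full loop's worth $D+1$ to the exponent of $t$ on $e_{1,2}$. Concretely, I would show by induction on $k$ that $e_{1,2}\left(\gamma t^{1+k(D+1)}\right)\in H$ for all $\gamma\in\mathds{F}_q$ and all $k\geq 0$: the base case is the generator itself, and for the inductive step one forms $e_{1,3}\left(\gamma t^2\right)=[e_{1,2}(\gamma t),e_{2,3}(t)]$ and $e_{3,1}\left(\gamma t^{D-1}\right)$ (an iterated commutator along $3\to 4\to\cdots\to D+1\to 1$), then computes $[e_{3,1}(\gamma t^{D-1}),e_{1,2}(t^{1+k(D+1)})]=e_{3,2}(\gamma t^{D+k(D+1)})$ and finally $[e_{1,3}(t^2),e_{3,2}(\gamma t^{D+k(D+1)})]=e_{1,2}(\gamma t^{1+(k+1)(D+1)})$. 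Every commutator here involves only the distinct triple $\{1,2,3\}$, so $D\geq 2$ suffices, and signs are immaterial since the parameters range over all of $\mathds{F}_q$.

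It remains to cash in the coprimality hypothesis. Because $\varphi$ is primitive, $t$ generates the cyclic group $R_m^\times\cong\mathds{Z}/(q^m-1)$, so $t^{1+k(D+1)}$ depends only on $1+k(D+1)$ mod $q^m-1$; as $k$ ranges over the nonnegative integers these residues fill the coset $1+(D+1)\mathds{Z}$ inside $\mathds{Z}/(q^m-1)$, which is all of $\mathds{Z}/(q^m-1)$ precisely when $\gcd\left(D+1,q^m-1\right)=1$. Hence $\left\{t^{1+k(D+1)}:k\geq 0\right\}=R_m^\times$, so $H$ contains $e_{1,2}\left(s\right)$ for every $s\in R_m^\times$; since the additive subgroup of $R_m$ generated by $R_m^\times$ is all of $R_m$, we get $e_{1,2}\left(R_m\right)\subseteq H$, and by the reduction above $H=\text{SL}_{D+1}\left(R_m\right)$.

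\textbf{Main obstacle.} The substantive part is the inductive bookkeeping of which elementary elements at which powers of $t$ lie in $H$: the generators each carry exactly one factor of $t$, forcing a $\mathds{Z}_{D+1}$-grading in which $e_{i,j}$ only ever appears with exponent $\equiv(j-i)\bmod(D+1)$, so $e_{1,2}$ lives exactly on exponents $\equiv 1$; one must then verify that \emph{every} such exponent is genuinely realizable while keeping all intermediate commutator index-triples distinct, which is the source of the $D\geq 2$ requirement. This is also where a more naive approach founders: trying to build a diagonal (toral) element directly inside the $\text{SL}_2$ block on rows $\{1,2\}$ does not work, because the only two root subgroups available there, $e_{1,2}\left(\mathds{F}_q t\right)$ and $e_{2,1}\left(\mathds{F}_q t^{D}\right)$, sit at incompatible powers of $t$ unless $\frac{q^m-1}{q-1}\mid D+1$, a condition unrelated to our hypothesis. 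Everything else---that $\text{SL}=E$ over a field, the commutator identities, and the final counting argument---is routine.
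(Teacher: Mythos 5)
Your proof is correct, and its core ingredients are the same as the paper's: the commutator identity $[e_{i,j}(\alpha),e_{j,k}(\beta)]=e_{i,k}(\alpha\beta)$, the observation that generators force exponents of $t$ congruent to $j-i$ modulo $D+1$, primitivity of $\varphi$ plus the coprimality hypothesis to sweep out all of $R_m^\times$, and finally the standard fact that elementary matrices generate $\text{SL}_{D+1}$ over a field. Where you differ is in the bookkeeping: the paper runs an induction over \emph{all} positions $(i,j)$ simultaneously, increasing the exponent by one each step, which forces a case analysis when the natural commutator would collide indices ($i=k$ when $B\equiv-1$, $k=j$ when $B\equiv0 \bmod D+1$), handled there by switching to a $t^2$-times-$t^{B-1}$ factorization. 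You instead exploit the cyclic symmetry $\pi_{T^+}$ (conjugation by $P$ preserves the generated subgroup since it permutes the $K_{\{j\}^c}$) to reduce to the single root position $e_{1,2}$, and then run an induction that jumps the exponent by a full $D+1$ at a time entirely within the fixed index triple $\{1,2,3\}$, so no collisions ever arise. Your route buys a cleaner induction with no special cases (at the mild cost of the extra symmetry reduction and the observation that the arithmetic progression $1+(D+1)\mathds{Z}$ covers $\mathds{Z}/(q^m-1)$ under the gcd hypothesis, which is exactly where the paper also cashes in coprimality); both arguments use only $D\geq2$ and are otherwise of comparable length.
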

\begin{proof}
Letting $[g, h] = ghg^{-1}h^{-1}$ denote the group commutator we have 
\begin{align}
    \left[ e_{i,j}\left(\alpha\right),  e_{j,k}\left(\beta\right) \right]= e_{i,k}\left(\alpha\beta\right) \label{Commutator}
\end{align}
for any $\alpha, \beta \in R_m$ whenever $i\neq j\neq k$. Hence we use the groups $K_{\{j\}^c}$ for $1\leq j\leq D$ to generate matrices with certain nonzero entries in the upper triangular region progressively further from the main diagonal (e.g. to get the elements of $K_0$ depicted above), and then we use these in conjunction with $K_{\{j\}^c}$ to fill in the lower triangular region in a similar manner. Finally we argue that such elementary matrices generate the whole group.

First, we prove by induction on $B \in \mathds{N}$ that we can generate $e_{i,j}\left(t^\beta\right)$ for any $i\neq j$, $\beta \equiv j-i \mod \left(D+1\right)$, and all $1 \leq \beta \leq B$. The base case $B=1$ follows from the definition of the $K_{\{j\}^c}$ which in particular contains the element $e_{j,j+1}\left(t\right)$ for $0<j\leq D$ (and $e_{D+1,1}\left(t\right) \in K_{\{0\}^c}$). Now we assume that we can generate $e_{i,j}\left(t^\beta\right)$ for any $i \neq j$, $\beta \equiv j-i \mod \left(D+1\right)$, and all $1 \leq\beta \leq B$ for some $B\geq 1$ and show we can also achieve this for $B+1$. By hypothesis we can generate $e_{i,j}\left(t\right)$ and $e_{j,k}\left(t^B\right)$ for any $j-i \equiv 1 \mod \left(D+1\right)$ and $k-j \equiv B \mod \left(D+1\right)$ so by the commutation relation \ref{Commutator} so long as $i \neq j \neq k$ we can generate $e_{i,k}\left(t^{B+1}\right)$ for any $k-i \equiv B+1 \mod \left(D+1\right)$. This is sufficient to prove the claim except in two cases: when $i=k$ because $k-j \equiv B \equiv -1 \mod \left(D+1\right)$ and when $k = j$ because $k-j \equiv B \equiv 0 \mod \left(D+1\right)$. In the first case, there is nothing to prove because the claim for $B+1 \equiv 0 \mod\left(D+1\right)$ is trivial (we do not care about matrices $e_{i,i}$). In the second case $B \equiv 0 \mod \left(D+1\right)$ we slightly change our strategy. Since $B \equiv 0 \mod \left(D+1\right)$ we know that $B \geq D+1 >2$ so we know that we can generate $e_{i,j}\left(t^2\right)$ and $e_{j,k}\left(t^{B-1}\right)$ for any $j-i \equiv 2 \mod \left(D+1\right)$ and $k-j \equiv B-1 \equiv -1 \mod \left(D+1\right)$. Now we have $j \equiv i+2 \mod\left(D+1\right)$ and $k \equiv i+1 \mod \left(D+1\right)$ so that $i\neq j \neq k$ and the commutator yields $e_{i,k}\left(t^{B+1}\right)$ for any $k-i \equiv B+1 \equiv 1 \mod \left(D+1\right)$ as desired. 

Since we chose the polynomial $\varphi$ to be primitive when constructing $R_m := \mathds{F}_q[t]/\varphi$ it follows that $t$ generates the multiplicative group $R_m^\times = R_m\setminus\{0\}$ and in particular $R_m^\times = \{t^j\}_{0\leq j < q^m-1}$. It follows that $t^{D+1}$ similarly generates $R_m^\times$ whenever our assumption that $q^m-1$ and $D+1$ are coprime holds. In that case we conclude that the elements $t^{\left(D+1\right)\beta + j}$ for any $0 \leq \beta <q^m-1$ and any fixed $j$ are distinct and span over the elements of $R_m^\times$. We conclude from the induction argument that we can generate any element $e_{i,j}\left(\gamma\right)$ for $i \neq j$ and $\gamma \in R_m$. 

The remainder of the proof follows from a standard argument that we can generate $\text{SL}_{D+1}\left(R_m\right)$ from the set of all elementary matrices; the idea is that multiplication by elementary matrices enables any elementary row and column operation on a matrix, and we use this to prove by induction that we can reduce any matrix of $\text{SL}_{D+1}\left(R_m\right)$ to the identity matrix by iteratively reducing the top row and left column to match the identity. 
\end{proof}

We conclude this section by illustrating in \ref{fig:2DExample} the vertex groups $K_j$ and edge groups $K_{\{j\}^c}$ for the $(D=2)$-dimensional coset complex, along with the graph that constitutes the link of a vertex $gK_0$ when $q=3$. 
\begin{figure}
    \centering
    \includegraphics[width=1\linewidth]{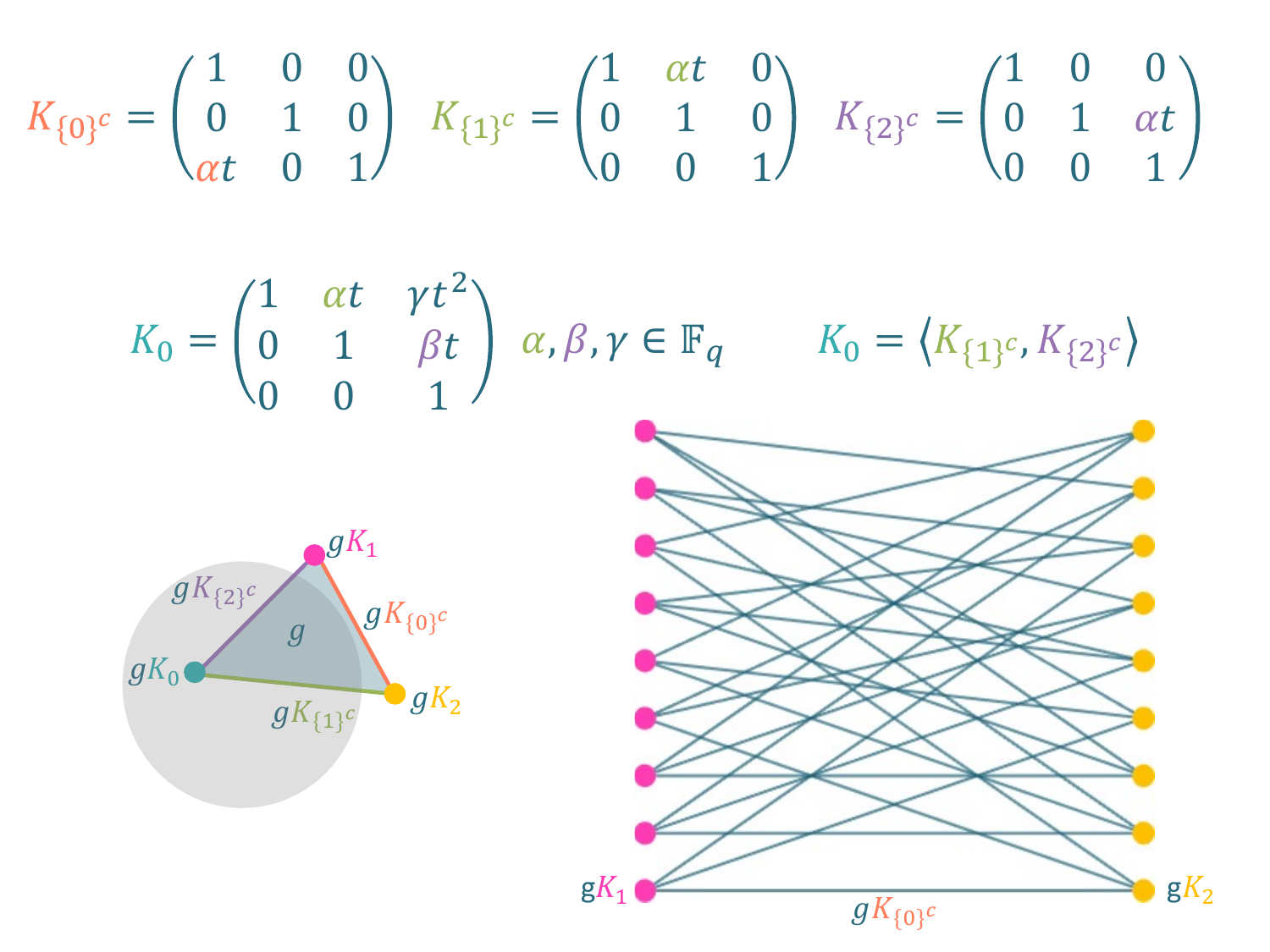}
    \caption{Top line is a schematic depiction of each edge group for the 2D complex. Middle line is the vertex group for the color $0$ (teal); the other vertex groups follow similarly by permuting entries. Bottom line on the left depicts a generic triangle of the complex with each component simplex labeled by the appropriate coset. A gray circle suggests how one can think of constructing the vertex link of $gK_0$ on the right (with $q=3$) by covering the faces that include the vertex: the chosen triangle is represented as the bottom edge on this graph connecting vertices $gK_1$ and $gK_2$. Note that we could have equivalently labeled each vertex in the link by the corresponding edge it shares with $gK_0$, e.g. replace $gK_1 \to gK_{\{2\}^c}$, and similarly each edge would be labeled by a triangle, e.g. $gK_{\{0\}^c} \to g$. This latter notation is more consistent with thinking of the link itself as being the coset complex $\Delta\left(K_0;\left(K_{\{1\}^c},K_{\{2\}^c}\right)\right)$. }
    \label{fig:2DExample}
\end{figure}

\subsection{Sheaves} \label{Sheaves}

Sheaves of codes over cell complexes are discussed in depth in \autocite{Sheaf, PKSheaf, LinSheaf}. In this paper, we will provide a minimal working definition rather than elaborate the mathematical details in full generality. 

We can fix any finite field $\mathds{F}$, but we will almost always consider $\mathds{F}_2$ which allows us to simplify some of the following discussion by ignoring orientation of simplices and associated signs. We will exclusively work with what \autocite{PKSheaf} call \emph{Tanner sheaf codes}, denoted $\mathcal{F}(\Delta)$, which are completely defined by a choice of $D$-dimensional complex $\Delta$ and a choice of \emph{local code} $\mathcal{F}_\sigma \subset \mathds{F}^{\sigma^\uparrow}$ for each face $\sigma \in \Delta(D-1)$. From this choice, we then define the local code for every face $\sigma \in \Delta$ of the complex. The local codes on $D$-dimensional faces are taken to be trivial $\forall\sigma \in \Delta(D),\, \mathcal{F}_\sigma := \mathds{F}$. For any other face $\sigma \in \Delta$ we define the corresponding local code as all assignments to the $D$-dimensional faces that include $\sigma$ and are simultaneously compatible with all of the local codes for $(D-1)$-faces that include $\sigma$
\begin{align}
    \mathcal{F}_\sigma := \left\{ c \in \mathds{F}^{\sigma^\uparrow} \middle \bracevert \forall \sigma \subset\tau \in \Delta\left(D-1\right), \, \left.c\right|_{\tau^\uparrow}  \in \mathcal{F}_\tau \right\}
\end{align}

Given a collection of $(D-1)$-level local codes $\{\mathcal{C}_\sigma\}_{\sigma \in \Delta\left(D-1\right)}$, we may also denote the associated Tanner sheaf $\mathcal{F}\left(\Delta, \{\mathcal{C}_\sigma\}_{\sigma \in \Delta\left(D-1\right)}\right)$. We can rephrase the trivial choice of sheaf---the \emph{constant} sheaf $\underline{\mathcal{F}}$ of $\mathds{F}$-valued functions---in this language as being isomorphic to the Tanner sheaf where every local code is chosen to be the repetition code.

From the sheaf $\mathcal{F}(\Delta)$ we can define an associated cochain complex denoted $C\left(\Delta, \mathcal{F}\right)$, which has vector spaces 
\begin{align}
    C^j := C^j\left(\Delta, \mathcal{F}\right) \cong \bigoplus_{\sigma \in \Delta(j)} \mathcal{F}_\sigma
\end{align}
given by the space of all functions that assign some local codeword to each $j$-face in the complex.

We define the coboundary operators $\delta^j: C^j\left(\Delta,\mathcal{F}\right) \to C^{j+1}\left(\Delta,\mathcal{F}\right)$ of the complex by their action on an arbitrary cochain $f \in C^j\left(\Delta, \mathcal{F}\right)$ for any face $\tau \in \Delta(j+1)$
\begin{align}
    \left(\delta^j f\right)(\tau) := \sum_{\tau \supset \sigma  \in \Delta(j)} \left.f(\sigma)\right|_{\tau^\uparrow}
\end{align}
In this definition, we have used simple function restriction, though customarily sheaves are defined by a sheaf restriction map $\mathcal{F}_{\sigma \rightarrow \tau}: \mathcal{F}_\sigma \to \mathcal{F}_\tau$ for any $\sigma \subset \tau$ that is used here instead; we occasionally refer to this sheaf restriction map, though for us it coincides with standard restriction of functions so that, for any $f\in C^j\left(\Delta, \mathcal{F}\right)$, $\sigma \in \Delta(j)$, and $\sigma \supseteq \tau \in \Delta(\geq j)$ we have
\begin{align}
    \mathcal{F}_{\sigma \rightarrow \tau}(f) := \left.f(\sigma)\right|_{\tau^\uparrow}
\end{align}

\begin{remark}
    For the reader who is already familiar with sheaves and is worried that the restriction map pointing from $\mathcal{F}_\sigma \to \mathcal{F}_\tau$ when $\sigma \subset \tau$ is backward, do not fret: the associated open sets $U_\tau$ and $U_\sigma$ in the relevant (Alexandrov) topology satisfy $U_\tau \subset U_\sigma$, so that $\mathcal{F}_{\sigma \rightarrow \tau}$ (or perhaps the more apt notation $\mathcal{F}_{U_\sigma \rightarrow U_\tau}$) is appropriately contravariant. 
\end{remark}

A technical assumption that we will make to greatly simplify the proof in \ref{ApdxUnitary} is that every restriction map $\mathcal{F}_{\sigma \rightarrow \tau}$ of our sheaf is surjective. We will call a sheaf that satisfies this requirement \emph{cell-wise flasque}\footnote{This name draws a comparison to the more standard terminology of \emph{flasque} or \emph{flabby}, which refers to a sheaf where \emph{every} restriction map between two open sets is surjective, not merely the subset of restriction maps between the basis open sets labeled by simplices/cells.}. An alternative characterization of this property is as a mild form of `extendability', which says that any local codeword $c \in \mathcal{F}_\tau$ can be extended to a codeword $\widetilde{c} \in \mathcal{F}_\sigma: \left.\widetilde{c}\right|_{\tau^\uparrow} = c$ whenever $\sigma \subset \tau$. The condition holds trivially for the constant sheaf, but we can also see from a simple counting argument and the colorability that it follows from the mild assumption that $\sigma \subset \tau \implies \dim \mathcal{F}_\sigma > \dim \mathcal{F}_\tau$.

Specifically, if we fix the variables of $\tau^\uparrow \subset \sigma^\uparrow$ in the code $\mathcal{F}_\sigma$ to match $c$, then we have removed $|\tau^\uparrow|$ variables from the code and $|\tau^\uparrow|-\dim\mathcal{F}_\tau$ constraints without violating any of the constraints of $\mathcal{F}_\sigma$; consequently, because $\dim \mathcal{F}_\sigma > \dim \mathcal{F}_\tau$ there must be at least one codeword in $\dim \mathcal{F}_\sigma$ that satisfies the remaining constraints. The fact that no constraints are initially violated is a consequence of the fact that for any face $\tau \in \Delta(x)$, and any face $\xi \in \Delta(D-1)$ of type $\{j\}^c$ for $j \in T\left(\tau\right)$, colorability of the complex ensures $\left|\tau^\uparrow \cap \xi^\uparrow\right| \leq 1$. 

The cochain complex associated to a sheaf comes with the standard special spaces of coboundaries, cocycles, and cohomology which we denote respectively by $B^j\left(\Delta, \mathcal{F}\right) \subset C^j\left(\Delta,\mathcal{F}\right)$,  $Z^j\left(\Delta, \mathcal{F}\right) \subset C^j\left(\Delta,\mathcal{F}\right)$, and $H^j\left(\Delta, \mathcal{F}\right) := Z^j\left(\Delta, \mathcal{F}\right)/ B^j\left(\Delta, \mathcal{F}\right)$ as expected.

We can assign a basis to the space $C^j\left(\Delta,\mathcal{F}\right) \cong \bigoplus_{\sigma \in \Delta(j)} \mathcal{F}_\sigma$ by picking a basis for each local code $\mathcal{F}_\sigma$. After choosing such a basis, we can define a quantum CSS code as outlined in \ref{CSS} from the three consecutive terms of the chain complex 
\begin{align}
    C^{j-1}\left(\Delta,\mathcal{F}\right) \xrightarrow{\delta^{j-1}}C^{j}\left(\Delta,\mathcal{F}\right) \xrightarrow{\delta^{j}} C^{j+1}\left(\Delta,\mathcal{F}\right)
\end{align}

If we introduce a basis $\left\langle \mathcal{B}_\sigma \right\rangle = \mathcal{F}_\sigma$ for every face, then we can view the cochain complex as an ordinary $\mathds{F}$-valued cochain complex using the isomorphism $C^{j}\left(\Delta,\mathcal{F}\right) 
 \cong \mathds{F}^{\bigsqcup_{\sigma \in \Delta(j) }\mathcal{B}_\sigma}$. 

A crucial object that we will make frequent use of is the dual sheaf, which we will always signify with an overline $\overline{\mathcal{F}}(\Delta)$. The dual sheaf to a Tanner sheaf given by $\mathcal{F}\left(\Delta, \{\mathcal{C}_\sigma\}_{\sigma \in \Delta\left(D-1\right)}\right)$ is itself a Tanner sheaf where each $(D-1)$-level local code is replaced by its dual
\begin{align}
    \overline{\mathcal{F}}\left(\Delta\right) := \mathcal{F}\left(\Delta, \{\mathcal{C}^\perp_\sigma\}_{\sigma \in \Delta\left(D-1\right)}\right)
\end{align}
Later, in \ref{GeneralizedEvenSupport}, we will see that $\overline{\mathcal{F}}_\sigma \subset \mathcal{F}_\sigma^\perp$ and that typically the inclusion is a strict subset whenever $\sigma \in \Delta\left(<D-1\right)$. 

Another useful notion is that of the sheaf on the link of any face $\sigma \in \Delta(\ell)$. Given a sheaf $\mathcal{F}(\Delta)$ with associated cochain complex $C\left(\Delta, \mathcal{F}\right)$, we denote by $C\left(\Delta_\sigma, \mathcal{F}\right)$ the cochain complex associated with the Tanner sheaf $\mathcal{F}(\Delta_\sigma)$ defined on the $(D-\ell-1)$-dimensional link complex $\Delta_\sigma$ that inherits the defining $(D-\ell-2)$-level local codes from the original sheaf
\begin{align}
    \mathcal{F}(\Delta_\sigma) := \mathcal{F}\left(\Delta_\sigma, \left\{\mathcal{F}_{\tau \cup \sigma} \right\}_{\tau \in \Delta_\sigma\left(D-\ell-2\right)}\right)
\end{align}
When working in this complex, we will subscript the corresponding coboundary operator with $\sigma$ as well $\delta^j_\sigma:C^j\left(\Delta_\sigma, \mathcal{F}\right) \to C^{j+1}\left(\Delta_\sigma, \mathcal{F}\right)$. Note that there is a natural inclusion 
\begin{align}
    C^j\left(\Delta_\sigma, \mathcal{F}\right) \hookrightarrow C^{j+\ell+1}\left(\Delta_, \mathcal{F}\right)
\end{align}

Occasionally---especially when working with the sheaf at a link---we will use the notation $C^{-1} \left(\Delta_\sigma, \mathcal{F}\right) := \mathcal{F}_\sigma$ to denote the local code at the face $\sigma$, though we will not consider this term to truly belong to the chain complex $C \left(\Delta_\sigma, \mathcal{F}\right)$, which starts as 
\begin{align}
    0 \to C^0 \left(\Delta_\sigma, \mathcal{F}\right) \to C^{1} \left(\Delta_\sigma, \mathcal{F}\right) \to \dots
\end{align}
(Especially in the context of local acyclicity, which we will discuss next, it is perfectly natural to define a different `extended' complex $\widetilde{C} \left(\Delta_\sigma, \mathcal{F}\right)$ which begins $0 \to C^{-1} \left(\Delta_\sigma, \mathcal{F}\right) \to C^{0} \left(\Delta_\sigma, \mathcal{F}\right) \to \dots$ with $ C^{-1} \left(\Delta_\sigma, \mathcal{F}\right)\cong \mathcal{F}_\sigma$, but we avoid doing so and use the term $C^{-1}$ only a few times as a piece of notational convenience.)

It will often be useful for us to assume that the sheaf at each link has vanishing cohomology:
\begin{definition}
    We say that a sheaf is \emph{locally acyclic} whenever for each $\ell$-face $\sigma \in \Delta(\ell)$ we have $H^j\left(\Delta_\sigma, \mathcal{F}\right)=0$ for all $0<j < D-\ell-1$.
\end{definition}
Note that $(D=2)$-dimensional sheaves are trivially locally acyclic. One critical implication of local acyclicity is an analogue of Poincar\'e duality between the sheaf and the dual sheaf, where for all $0\leq j \leq D$
\begin{align}
    H^j\left(\Delta,\mathcal{F} \right) \cong H_{D-j}\left(\Delta, \overline{\mathcal{F}} \right)
\end{align}
This is shown in \autocite{LinSheaf} using a spectral sequence. When the sheaf is not locally acyclic, one can use the same spectral sequence (which now has some nonzero terms that were zero in the locally acyclic case) to see that this duality does not generally hold. 

However, we can establish the special case $H^0\left(\Delta,\mathcal{F} \right) \cong H_{D}\left(\Delta, \overline{\mathcal{F}} \right)$ without relying on local acyclicity. Indeed, consider any cycle $f \in [f] \in H_{D}\left(\Delta, \overline{\mathcal{F}} \right) \cong Z_D\left(\Delta, \overline{\mathcal{F}} \right)$. We can use the natural basis of $C^D\left(\Delta,\mathcal{F}\right) \cong \mathds{F}^{\Delta(D)}$ to treat $f$ as an element of $\mathds{F}^{\Delta(D)}$ such that $f$ being a cycle means that for any $(D-1)$-face $\sigma \in \Delta(D-1)$ and any $c \in \overline{\mathcal{F}}_\sigma$ it must be the case that $f \cdot c = 0$. But for these faces we have $\overline{\mathcal{F}}_\sigma = \mathcal{F}^\perp_\sigma$ which means that our condition implies $\left.f\right|_{\sigma^\uparrow} \in \mathcal{F}_\sigma$. Since this is true for every $(D-1)$-face $\sigma$ we conclude that for any vertex $v \in \Delta_0$ it must be the case that $\left.f\right|_{v^\uparrow} \in \mathcal{F}_v$ so that we can construct our isomorphic element $\widetilde{f} \in [\widetilde{f} ] \in H^0\left(\Delta,\mathcal{F} \right) $ defined as expected with $\widetilde{f}(v) := \left.f\right|_{v^\uparrow}$.

\subsubsection{The Cup Product} \label{CupProduct}
We reproduce the definition of the cup product on a sheaf from \autocite{LinSheaf}. The product depends on a partial order of the vertices $\Delta\left(0\right)$, which induces a total order on the vertices $\tau\left(0\right)$ of any $D$-face $\tau \in \Delta\left(D\right)$. Because we are working with colorable complexes $\Delta$, we can choose a partial order that is induced by an ordering of the colors. For example, define the ordering where for any pair of neighboring vertices $v,\widetilde{v} \in \tau\left(0\right)$ we set $v < \widetilde{v} \iff T\left(v\right) < T\left(\widetilde{v}\right)$. 

With an ordering set, we can define the cup product on the constant sheaf $\underline{\mathcal{F}}$ between any pair of cochains $f_1 \in C^{\ell_1}\left(\Delta, \mathds{F}\right)$ and $f_2 \in C^{\ell_2}\left(\Delta, \mathds{F}\right)$
\begin{align}
    \left(f_1 \cup f_2\right)[v_0,\dots,v_{\ell_1 + \ell_2}] := f_1[v_0,\dots,v_{\ell_1}] f_2[v_{\ell_1},\dots,v_{\ell_1 + \ell_2}]
\end{align}
where $[v_0,\dots,v_{\ell_1 + \ell_2}]$ is the ordered list of vertices that make up some $\sigma \in \Delta\left(\ell_1+\ell_2\right)$.

We then extend this to a definition on general sheaves by breaking up a general sheaf cochain into a constant sheaf cochain for each $D$-face $\tau \in \Delta\left(D\right)$. For such a face $\tau$ and cochain $f \in C^{\ell}\left(\Delta, \mathcal{F}\right)$, let $f_{\leq \tau} : \tau\left(\ell\right) \rightarrow \mathds{F}$ denote a constant sheaf cochain on the complex defined by $\tau$, which is informally $f$'s `opinion' of $\tau$ according to the codeword $f\left(\sigma\right) \in \mathcal{F}_\sigma$:
\begin{align}
    \forall \sigma \in \tau\left(\ell\right), \quad f_{\leq \tau}\left(\sigma\right) := \left.f\left(\sigma\right)\right|_\tau
\end{align}

Then for general sheaf cochains $f_1 \in C^{\ell_1}\left(\Delta, \mathcal{F}_1\right)$ and $f_2 \in C^{\ell_2}\left(\Delta, \mathcal{F}_2\right)$ we can get the cup product $f_1 \cup f_2$ `opinion' of $\tau$ according to each of its codewords on $\sigma \in \tau\left(\ell_1+\ell_2\right)$ by using the traditional cup product
\begin{align}
    \left(f_1 \cup f_2\right)_{\leq \tau} := {f_1}_{\leq \tau} \cup {f_2}_{\leq \tau}
\end{align}
Since this specifies the value $\left(f_1 \cup f_2\right)\left(\sigma\right)$ takes for each $\tau$, it defines the general cup product
\begin{align}
     \forall \sigma \in \Delta\left(\ell_1 + \ell_2\right), \, \forall \tau \in \sigma^\uparrow, \quad \left.\left(f_1 \cup f_2\right)\left(\sigma\right)\right|_{\tau} := \left(f_1 \cup f_2\right)_{\leq \tau}\left(\sigma\right)
\end{align}

Note that the result of the product $\cup : C^{\ell_1}\left(\Delta, \mathcal{F}_1\right) \times C^{\ell_2}\left(\Delta, \mathcal{F}_2\right) \rightarrow C^{\ell_1+\ell_2}\left(\Delta, \mathcal{F}_1 * \mathcal{F}_2\right)$ lies in the sheaf $\mathcal{F}_1 * \mathcal{F}_2$ whose local codes are element-wise products 
\begin{align}
   \forall \sigma \in \Delta\left(D-1\right), \quad \left( \mathcal{F}_1 * \mathcal{F}_2\right)_\sigma = {\mathcal{F}_1}_\sigma * {\mathcal{F}_2}_\sigma
\end{align}

Finally, Lin shows in \autocite{LinSheaf} that this sheaf cup product also satisfies a Leibniz rule
\begin{align}
    \delta^{\ell_1 + \ell_2} \left(f_1 \cup f_2\right) = \left(\delta^{\ell_1} f_1 \right) \cup f_2 + f_1 \cup \left(\delta^{\ell_2}f_2\right)
\end{align}
which establishes that the cup product induces an operation on cohomology (which we also call the cup product) $\cup : H^{\ell_1}\left(\Delta, \mathcal{F}_1\right) \times H^{\ell_2}\left(\Delta, \mathcal{F}_2\right) \rightarrow H^{\ell_1+\ell_2}\left(\Delta, \mathcal{F}_1 * \mathcal{F}_2\right)$.

\subsection{Transversal Action of Diagonal \texorpdfstring{$\ell$}{TEXT}-level Clifford Hierarchy Gates}
We will primarily focus on the single-qubit gates 
\begin{align}
R_\ell := \left(\begin{matrix}
1 & 0\\
0 & \exp\left(2\pi i /2^\ell\right)
\end{matrix}\right)
\end{align}
and the multiply-controlled $Z$ gate $C^{\ell-1}Z$ defined by
\begin{align}
    C^{\ell-1}Z \ket{x_1,x_2,\dots,x_\ell} = \left(-1\right)^{\prod\limits_{j=1}^\ell x_j} \ket{x_1,x_2,\dots,x_\ell}
\end{align}
each of which belong in the $\ell^\text{th}$-level of the Clifford hierarchy $\mathcal{C}^{\left(\ell\right)}\setminus \mathcal{C}^{\left(\ell-1\right)}$ \autocite{CliffordHierarchy}.

We will study the action of these gates on codes when they are applied \emph{exactly transversally}---which is to say that we apply exactly the same gate on each qubit in a code block---and also when they are applied to a strict subset of the qubits (but always any non-identity gate is the same for all qubits). We will want to show that this action preserves the logical code space and then furthermore determine what the logical action is (it need not be the logical version of the gate we apply). 

For a CSS code on $n$ qubits with $X$ stabilizer group $S_X$, a logical basis state in a single block can be written 
\begin{align}
\ket{\psi_L} = \left|S_X\right|^{-1/2} \sum_{s \in S_X} \ket{L + s}
\end{align}
where $L \in \mathds{F}_2^n$ is some logical representative of the coset of $S_X$ and the addition in the ket is $\mod 2$.

\subsubsection{Action of \texorpdfstring{$R_\ell$}{TEXT} on All Qubits} \label{RAction}
Let us inspect the effect of applying $R_\ell$ exactly-transversally to such a codeword $\ket{\psi_L}$
\begin{align}
    R_\ell^{\otimes n} \ket{\psi_L} &= \left|S_X\right|^{-1/2} \sum_{s \in S_X} \exp\left(\frac{2 \pi i}{2^\ell}\left|L+s\right|  \right)\ket{L + s} \\
    &= \exp\left(\frac{2 \pi i}{2^\ell}\left|L\right| \right) \left|S_X\right|^{-1/2} \sum_{s \in S_X} \exp\left(\frac{2 \pi i}{2^\ell}\left(\left|s\right| - 2 \left|L * s\right| \right)  \right)\ket{L + s}
\end{align}
where $\left| \cdot \right|$ denotes Hamming weight, and $*$ denotes the element-wise product of vectors $\left(x * y\right)_j := x_j y_j$.

We see that $R_\ell$ preserves the logical code space whenever the phase $ \exp\left(\frac{2 \pi i}{2^\ell}\left(\left|s\right| - 2 \left|L * s\right| \right)  \right)$ is a function of $L$ independent of the stabilizer $s$. A sufficient condition to ensure this is 
\begin{align}
    \left|s \right| &\equiv 0 \mod 2^\ell \\
    \left|L * s\right|  &\equiv 0 \mod 2^{\ell-1}
\end{align}
for all combinations of $X$ logical operators $L$ and $X$ stabilizers $s \in S_X$.

Meanwhile, the logical action is to apply a phase to the logical state depending only on the weight $\left|L \right| \mod 2^\ell$.

\subsubsection{Action of \texorpdfstring{$R_\ell$}{TEXT} on Subset of Qubits} \label{RSubsetAction}
Now let us modify the above by applying $R_\ell$ only on a subset $\Upsilon \subset [n]$ of the qubits.
\begin{align}
    \otimes_{j \in \Upsilon} R^{\left(j\right)}_\ell \ket{\psi_L} &= \left|S_X\right|^{-1/2} \sum_{s \in S_X} \exp\left(\frac{2 \pi i}{2^\ell}\left|\Upsilon*\left(L+s\right)\right|  \right)\ket{L + s} \\
    &= \exp\left(\frac{2 \pi i}{2^\ell}\left|\Upsilon*L\right| \right) \left|S_X\right|^{-1/2} \sum_{s \in S_X} \exp\left(\frac{2 \pi i}{2^\ell}\left(\left|\Upsilon*s\right| - 2 \left|\Upsilon*L * s\right| \right)  \right)\ket{L + s}
\end{align}
Here we see a sufficient condition for a well-defined logical action is 
\begin{align}
    \left|\Upsilon*s \right| &\equiv 0 \mod 2^\ell \\
     \left|\Upsilon*L * s\right|  &\equiv 0 \mod 2^{\ell-1}
\end{align}
for all combinations of $X$ logical operators $L$ and $X$ stabilizers $s \in S_X$. That logical action is determined by the value
\begin{align}
     \left|\Upsilon*L\right|  \mod 2^{\ell}
\end{align}

\subsubsection{Action of \texorpdfstring{$C^{\ell-1}Z$}{TEXT}} \label{CZAction}
Now consider the transversal application of $C^{\ell-1}Z$ to $\ell$ different code states $\{\ket{\psi_{L_j}}\}_{1\leq j\leq \ell}$. 
\begin{align}
    C^{\ell-1}Z \left(\otimes_{j=1}^\ell \ket{\psi_{L_j}}\right) &= \left|S_X\right|^{-\ell/2} \sum_{s_1,\dots,s_\ell \in S_X} \left(-1\right)^{\left|\left(L_1+s_1\right) *\left(L_2+s_2\right)*\dots* \left(L_\ell+s_\ell\right)\right|}\left(\otimes_{j=1}^\ell \ket{L_j + s_j}\right)
\end{align}
Since the element-wise product is distributive over vector addition $a*\left(b+c\right) = a*b + a*c$, we can decompose the phase exponent
\begin{align}
    |\left(L_1+s_1\right) *\left(L_2+s_2\right)*&\dots* \left(L_\ell+s_\ell\right)| \nonumber\\
    =  &|\left(L_1 * L_2 *\dots *L_\ell\right) + \left(s_1 * L_2 *\dots *L_\ell\right) + \left(L_1 * s_2 *\dots *L_\ell\right) \nonumber \\ 
    &+ \left(s_1 * s_2 * L_3 * \dots *L_\ell\right) + \dots + \left(s_1 * s_2 *\dots *s_\ell\right)|
\end{align}
where the sum is over all $2^\ell$ combinations of the $L_j$ and $s_j$. Since this is an exponent for the phase $\left(-1\right)$, it only matters whether the overall weight is even or odd, which means that each weight individually also only matters up to parity; for $\mathds{F}_2$-vectors $\left|\sum_j x_j\right| \equiv \sum_j \left|x_j\right| \mod 2$. 

We conclude that a sufficient condition for $C^{\ell-1}Z $ to act transversally is that for any of the $2^\ell-1$ combinations of logical representatives $L_j$ and X stabilizers $s_j$ with at least one stabilizer, their element-wise product has even weight. Meanwhile, we apply a phase to the logical state whenever the element-wise product of all $\ell$ of the logical representatives has odd weight.

\subsubsection{Multi-even and Multi-orthogonal Spaces} \label{MultiEvenSection}

We will find it useful to restate, with minor modifications, some of the definitions and results found in Section III. A of \autocite{PinCodes}, with proofs found in their Appendix A. 

\begin{definition}[Multi-even Space]\label{MultiEvenSpace}
A vector space $\mathcal{C} \subset \mathds{F}^n$ is called \emph{$\ell$-even} or \emph{$2^\ell$-divisible} if every vector has Hamming weight divisible by $2^\ell$
\begin{align}
    \forall c \in \mathcal{C}, \quad |c|\equiv0 \mod 2^\ell
\end{align}
\end{definition}

In \autocite{PinCodes}, it is shown that this is equivalent to the following for all $ 1\leq s \leq \ell$
\begin{align}
   \forall c_1,\dots,c_s \in \mathcal{C}, \quad \left| c_1 *\dots*c_s \right| \equiv 0 \mod 2^{\ell-s+1} \label{MultiEvennessCondition}
\end{align}

There is also a weaker notion of a multi-orthogonal space. We present a slightly modified definition that applies to a set of $\ell$ vector spaces; the original definition can be recovered by requiring each of these spaces to be identical.

\begin{definition}[Multi-orthogonal Spaces]\label{MultiOrthogonalSpace}
A set of $\ell$ vector spaces $\{\mathcal{C}_j \subset \mathds{F}^n\}_{1\leq j \leq \ell}$ is called \emph{$\ell$-orthogonal} if the product of every $\ell$-tuple has even weight
\begin{align}
    \forall \left(c_1,\dots,c_\ell\right) \in \prod_{j=1}^\ell\mathcal{C}_j, \quad\left| c_1 *\dots*c_\ell \right| \equiv 0  \mod 2 \label{MultiOrthogonalityCondition}
\end{align}
\end{definition}

In Appendix A, \autocite{PinCodes} show that multi-evenness or multi-orthogonality holds if and only if the relevant condition \ref{MultiEvennessCondition} or \ref{MultiOrthogonalityCondition} holds for each tuple of basis elements. It is simple to note that this remains true for our slight modification of multi-orthogonality where we allow for a different choice of vector space in each factor of the Cartesian product. 

If we compare the definition of multi-orthogonal spaces to the discussion of transversality of $C^{\ell-1}Z$, we see that the sufficient condition we identified for $C^{\ell-1}Z$ to act transversally is precisely the statement that the collection of $\ell-1$ cocycle spaces $Z^{1}\cong B^1\oplus H^1$ paired with one coboundary space $B^1$ is $\ell$-orthogonal. 

\begin{proposition} \label{CZProposition}
    If the set of spaces $\{Z^{1}_{\left(j\right)}\}_{1\leq j < \ell} \cup\{B^1\}$ is $\ell$-orthogonal, then transversal $C^{\ell-1}Z$ preserves the logical code space.
\end{proposition}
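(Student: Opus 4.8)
The plan is to read off the claim from the phase expansion already carried out in Section~\ref{CZAction} together with the single structural fact that $B^1 \subseteq Z^1$. First I would recall the setup: a logical basis state in block $j$ has the form $\ket{\psi_{L_j}} = |S_X|^{-1/2}\sum_{s_j \in S_X}\ket{L_j + s_j}$, where the logical representative $L_j$ is a cocycle (so $L_j \in Z^1_{(j)}$) and each $X$ stabilizer $s_j$ is a coboundary ($s_j \in B^1$). Acting with $C^{\ell-1}Z$ transversally across the $\ell$ blocks produces, on each summand, the phase $(-1)^{|(L_1+s_1)*\cdots*(L_\ell+s_\ell)|}$; expanding the exponent by distributivity of $*$ over addition gives the $2^\ell$ element-wise products displayed there. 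As observed in that section, it is enough to show that every one of the $2^\ell-1$ ``mixed'' products---those in which at least one factor is some $s_j$ rather than the corresponding $L_j$---has even weight: if so, the exponent collapses mod $2$ to $|L_1 * \cdots * L_\ell|$, a scalar independent of the stabilizers, whence $C^{\ell-1}Z$ maps $\bigotimes_j \ket{\psi_{L_j}}$ to a multiple of itself. Since such product states span the relevant code space, the code space is then preserved.

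The key step is to match each mixed product against the $\ell$-orthogonality hypothesis. Fix a mixed product and pick an index $k$ whose factor is a stabilizer $s_k \in B^1$. Because $\delta^1\circ\delta^0 = 0$ we have $B^1 \subseteq Z^1$, so every remaining factor---be it a logical representative $L_j$ or another stabilizer---also lies in $Z^1$. Using commutativity of $*$ to move $s_k$ to the last slot, the product has the shape $c_1 * \cdots * c_{\ell-1} * b$ with each $c_i \in Z^1_{(i)}$ and $b \in B^1$; by Definition~\ref{MultiOrthogonalSpace} and the assumption that $\{Z^1_{(j)}\}_{1 \leq j < \ell}\cup\{B^1\}$ is $\ell$-orthogonal, this has even weight. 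As this applies to every mixed product, we are done.

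The computation is entirely routine; the only point needing a moment's care is the bookkeeping for mixed products with two or more stabilizer factors, where one uses $B^1 \subseteq Z^1$ to ``demote'' all but one chosen stabilizer to an ordinary cocycle so that the tuple fits the pattern $(\underbrace{Z^1,\dots,Z^1}_{\ell-1},\,B^1)$ required by the hypothesis. Although the statement as phrased asks only that the code space be preserved, I would also note in passing that the same $\ell$-orthogonality bound---applied now with $L_j$ replaced by $L_j + s$ for $s \in B^1$---shows the residual scalar $(-1)^{|L_1*\cdots*L_\ell|}$ depends only on the cohomology classes $[L_j] \in H^1$, so the transversal gate in fact realizes a well-defined diagonal logical operator: the one contributing a $-1$ phase exactly when the $\ell$ logical $X$-representatives have odd total intersection.
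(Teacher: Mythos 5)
Your proposal is correct and follows essentially the same route as the paper: the paper derives the sufficient condition in Section \ref{CZAction} (every mixed product involving at least one stabilizer must have even weight) and then notes, using $Z^1 \cong B^1 \oplus H^1$ (i.e.\ $B^1 \subseteq Z^1$), that this is exactly the $\ell$-orthogonality of $\{Z^1_{(j)}\}_{1\leq j<\ell}\cup\{B^1\}$, which is precisely your ``demotion'' bookkeeping. Your closing remark that the residual phase depends only on the cohomology classes also matches the paper's subsequent discussion of the logical action.
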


We see the desire to exclude the cohomology $H^1$ in the last factor, because if the full cycle space $Z^1$ by itself was $\ell$-orthogonal, then the logical action would be trivial; for nontrivial logical action we need the product of some set of $\ell$ logical $X$ operators in $H^1$ to have odd weight.

\subsection{Color Codes} \label{ColorCode} 

In this section, we present the foundation on which we will generalize. We define color codes and highlight the important features of the definition that underlie the construction (see also \autocite{KubicaThesis}). In the next section, we will show how our Tanner color code definition is crafted to systematically generalize each of these features to a setting naturally described by sheaves. 

A stabilizer color code is defined by a choice of $\left(D+1\right)$-colorable $D$-dimensional simplicial complex $\Delta$ that triangulates a manifold, along with a choice of two integers $x,z \geq 0$ satisfying $x + z \leq D-2$. We label the color code given by these data $\mathcal{C}_\Delta\left(x,z\right)$. The integers $x$ and $z$ correspond to the dimension of simplices that we associate with $X$ and $Z$ stabilizer generators, respectively. 

When $x + z < D-2$ we will see that there are obvious logical operators of low weight localized at a given simplex. The resulting small distance can be easily fixed by generalizing to a subsystem color code with the same stabilizer group as $\mathcal{C}_\Delta\left(x,z\right)$, where we pick $x' = D-2-z$ and $z' = D-2-x$; we label this subsystem code $\mathcal{C}_\Delta\left(x',z'\right)$. Now, $x'$ and $z'$ correspond to the dimension of simplices associated with generators for the $X$ and $Z$ gauge checks, respectively, which generically need not commute. We will see that the gauge checks commute with and generate the original stabilizer group for $\mathcal{C}_\Delta\left(x,z\right)$ while gauging low-weight logical operators so that they do not adversely impact the distance. As such, we will mostly be interested in color codes $\mathcal{C}_\Delta\left(x,z\right)$ such that $x+z \geq D-2$, with the understanding that $x+z=D-2$ correspond to stabilizer codes and $x+z>D-2$ yield subsystem codes.  

To finish the definition, we finally specify the stabilizer (gauge) checks. Given a simplex $\sigma \in \Delta$, recall that $\sigma^\uparrow$ denotes the set of $D$-dimensional simplices that contain it
\begin{align}
\sigma^\uparrow := \left\{\tau \in \Delta\left(D\right) \middle\bracevert \sigma \subset \tau\right\}  
\end{align} 
Then the stabilizer (or gauge check) generators for the color code $\mathcal{C}_\Delta\left(x,z\right)$ are simply 
\begin{align}
    S_X &:= \langle X_{\sigma^\uparrow} \rangle_{\sigma \in \Delta\left(x\right)} \\
    S_Z &:= \langle Z_{\sigma^\uparrow} \rangle_{\sigma \in \Delta\left(z\right)}
\end{align}
In the next section, we will discuss properties of the sets $\sigma^\uparrow$ that give the appropriate commutation relations for our checks.

\subsubsection{Useful Properties of the Complex} \label{Properties}
In this section, we list properties about the sets $\sigma^\uparrow$ for simplices $\sigma \in \Delta$ that we will use to show that the color codes we described are well defined, insofar as stabilizer (and gauge) generators satisfy the appropriate commutation relations. When we generalize the color code to the Tanner color code, our sheaf framework naturally makes the additional local code data compatible with these properties. 

The first useful property comes from the fact that we restricted to simplicial complexes that triangulate a manifold.
\begin{fact}\label{Fact2}
For any triangulation of a manifold $\Delta$, we have $\forall \sigma \in \Delta\left(D-1\right), \left|\sigma^\uparrow \right| = 2$
\end{fact}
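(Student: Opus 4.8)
The plan is to reduce the statement to the local structure of $\Delta$ near a codimension-one face and then use the manifold hypothesis. Recall from the excerpt the consistency check $\sigma^\uparrow \cong \Delta_\sigma\left(D-\ell-1\right)$; specializing to $\ell = D-1$, this identifies $\sigma^\uparrow$ with $\Delta_\sigma(0)$, the vertex set of the link of $\sigma$, which is a $0$-dimensional simplicial complex, i.e.\ a finite discrete set of $k := \left|\sigma^\uparrow\right|$ points. So the entire task is to show $k = 2$. First I would note $k \geq 1$: the complex triangulating a manifold is pure, so the $(D-1)$-face $\sigma$ lies in at least one $D$-face. The substance of the claim is the exact value $k=2$.

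To pin down $k$ I would compute local homology over $\mathds{F}_2$. Fix a point $x$ in the relative interior of $\sigma$. By excision, $H_\bullet\left(\left|\Delta\right|,\left|\Delta\right|\setminus x\right)$ is computed inside the open star of $\sigma$, which is homeomorphic to $\mathds{R}^{D-1}\times\mathrm{cone}\left(\Delta_\sigma\right)$ with $x$ corresponding to $(0,c_0)$ for $c_0$ the cone point; since $\Delta_\sigma$ is $k$ disjoint points, $\mathrm{cone}\left(\Delta_\sigma\right)$ is a ``star graph'' with $k$ prongs. Using the relative K\"unneth theorem (local homology of a product of pairs is the tensor product of the local homologies, over a field) together with the standard identification of the local homology of a cone with the reduced homology of its base, one obtains
\begin{align}
H_j\left(\left|\Delta\right|,\left|\Delta\right|\setminus x\right)\ \cong\ \widetilde{H}_{j-D}\left(\Delta_\sigma\right),
\end{align}
so the only group that can be nonzero is $H_D\left(\left|\Delta\right|,\left|\Delta\right|\setminus x\right)\cong\widetilde{H}_0\left(\Delta_\sigma\right)\cong\mathds{F}_2^{k-1}$. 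On the other hand, because $\left|\Delta\right|$ is a closed $D$-manifold, the left-hand side is $\mathds{F}_2$ for $j=D$ and $0$ otherwise. Comparing forces $k-1=1$, hence $\left|\sigma^\uparrow\right| = k = 2$. (Equivalently and more hands-on: $\mathds{R}^{D-1}\times\mathrm{cone}\left(\Delta_\sigma\right)$ is a $D$-manifold iff the $k$-pronged star graph is a $1$-manifold, which forces $k\leq 2$, and the absence of boundary rules out $k=1$.)

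The main obstacle is not difficulty but care about hypotheses and about one topological input: one must (i) justify that the open star of $\sigma$ is homeomorphic to $\mathds{R}^{D-1}\times\mathrm{cone}\left(\Delta_\sigma\right)$, which is where the simplicial (rather than merely topological) structure of the triangulation enters, and (ii) be precise that ``manifold'' here means a \emph{closed} manifold without boundary---otherwise a boundary $(D-1)$-face would have $\left|\sigma^\uparrow\right| = 1$, and the correct statement would be $\left|\sigma^\uparrow\right|\in\{1,2\}$ with value $1$ exactly on the boundary. If instead one adopts the PL notion of ``triangulation of a manifold,'' in which the link of every simplex is by definition a PL sphere, then the argument is immediate: the link of a codimension-one simplex is a PL $0$-sphere, which is two points, and there is nothing further to prove.
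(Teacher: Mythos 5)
The paper states this as a \emph{Fact} and gives no proof at all---it is treated as a standard property of manifold triangulations---so there is no argument of the paper's to compare against; your write-up supplies a correct justification. The local-homology computation is the right tool and is carried out correctly: for $x$ in the relative interior of a $(D-1)$-face $\sigma$, a neighborhood is $\mathds{R}^{D-1}\times\mathrm{cone}\left(\Delta_\sigma\right)$, excision and the cone identification give $H_D\left(\left|\Delta\right|,\left|\Delta\right|\setminus x\right)\cong\widetilde{H}_0\left(\Delta_\sigma\right)\cong\mathds{F}_2^{k-1}$, and the manifold condition forces $k=2$. Two small remarks. First, the parenthetical ``more hands-on'' version leans on the claim that $X\times\mathds{R}^{D-1}$ is a manifold iff $X$ is; that equivalence is false in general (double suspensions of homology spheres are the classic counterexample), so the honest argument really is the local-homology one you gave first---for the $1$-dimensional star graph the conclusion happens to be right, but it is right \emph{because} of the local-homology computation, not independently of it. Second, your caveat about boundary is well taken and worth keeping: the paper's notion of ``triangulation of a manifold'' is left informal and it later explicitly mentions the generalization to manifolds with boundary, in which case the correct statement is $\left|\sigma^\uparrow\right|\in\{1,2\}$ with value $1$ exactly on boundary $(D-1)$-faces; the Fact as stated implicitly assumes a closed manifold. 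The PL shortcut you mention at the end (link of a codimension-one simplex is a PL $S^0$) is indeed the fastest route if one adopts that definition of triangulation.
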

When we later generalize to Tanner codes, we will be able to drop the manifold requirement and allow $\sigma^\uparrow$ to have any cardinality while preserving an analogue of \ref{Fact2}.

The next two properties \ref{Intersection} and \ref{DisjointUnion} hold for any colorable simplicial complex.

\begin{lemma} \label{Intersection}
Let $\Delta$ be a $\left(D+1\right)$-colored $D$-dimensional simplicial complex. For any simplices $\sigma \in \Delta\left(\ell\right)$ and $\widetilde{\sigma} \in \Delta\left(\widetilde{\ell}\right)$ we have $\sigma^\uparrow \cap \widetilde{\sigma}^\uparrow = \emptyset$ or $\sigma^\uparrow \cap \widetilde{\sigma}^\uparrow = \tau^\uparrow$ for $\tau = \sigma \cup\widetilde{\sigma} \in \Delta\left(m\right)$ where $m \leq \ell + \widetilde{\ell}+1$ and $T\left(\tau\right) = T\left(\sigma\right) \cup T\left(\widetilde{\sigma}\right)$.
\end{lemma}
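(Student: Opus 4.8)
The plan is a short case analysis driven by whether the vertex-union $\sigma \cup \widetilde{\sigma}$ is itself a simplex of $\Delta$, combined with a double-containment argument identifying the intersection $\sigma^\uparrow \cap \widetilde{\sigma}^\uparrow$.

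The key observation I would start from is that, for any top-dimensional face $\rho \in \Delta(D)$, the membership $\rho \in \sigma^\uparrow \cap \widetilde{\sigma}^\uparrow$ is equivalent to $\sigma \subseteq \rho$ and $\widetilde{\sigma} \subseteq \rho$, i.e.\ to $\sigma \cup \widetilde{\sigma} \subseteq \rho$. If $\sigma \cup \widetilde{\sigma} \notin \Delta$, then downward-closure of $\Delta$ forbids any such $\rho$ (since $\rho \in \Delta$ would force $\sigma \cup \widetilde{\sigma} \in \Delta$), so $\sigma^\uparrow \cap \widetilde{\sigma}^\uparrow = \emptyset$; this is the first alternative. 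Otherwise I set $\tau := \sigma \cup \widetilde{\sigma} \in \Delta$, and then $\sigma \cup \widetilde{\sigma} \subseteq \rho \iff \tau \subseteq \rho \iff \rho \in \tau^\uparrow$, whence $\sigma^\uparrow \cap \widetilde{\sigma}^\uparrow = \tau^\uparrow$, the second alternative. Note the two cases overlap harmlessly: if $\tau$ happens to lie in no $D$-simplex (allowed, since we are not assuming purity) then $\tau^\uparrow = \emptyset$ and both descriptions of the intersection agree.

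It then remains only to record the stated properties of $\tau$ in the second case. Its vertex set is $\sigma(0) \cup \widetilde{\sigma}(0)$, so $m + 1 = |\tau(0)| \le |\sigma(0)| + |\widetilde{\sigma}(0)| = (\ell + 1) + (\widetilde{\ell} + 1)$, giving $m \le \ell + \widetilde{\ell} + 1$. Since the color of each vertex is fixed once and for all, the type of $\tau$ — the set of colors appearing among its vertices — is $T(\tau) = T(\sigma) \cup T(\widetilde{\sigma})$; colorability of $\Delta$ guarantees that no color is repeated among the vertices of $\tau$, so that $|T(\tau)| = m + 1$ is consistent.

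I do not expect a genuine obstacle here — the statement is elementary once unwound. The only point requiring a little care is that ``$\sigma \cup \widetilde{\sigma}$'' is a priori just a set of vertices and need not be a face of $\Delta$; the dichotomy in the statement is precisely designed so that we only name it $\tau$ once a common $D$-face is known to exist (which, by downward-closure, also guarantees $\tau$ itself is a face), and so that the empty-intersection case is cleanly folded in.
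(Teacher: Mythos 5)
Your proposal is correct and follows essentially the same route as the paper's proof: both identify $\tau = \sigma \cup \widetilde{\sigma}$ as a face via downward-closure once a common $D$-simplex exists, and establish $\sigma^\uparrow \cap \widetilde{\sigma}^\uparrow = \tau^\uparrow$ by the double containment, with the dimension bound and type statement following from counting vertices/colors. The only cosmetic difference is that you organize the dichotomy around whether $\sigma \cup \widetilde{\sigma}$ is a face while the paper cases on whether the intersection is nonempty; these are equivalent.
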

\begin{proof}
    If $\sigma^\uparrow \cap \widetilde{\sigma}^\uparrow \neq \emptyset$ then any $D$-dimensional simplex $\xi$ in the intersection must contain both $\sigma$ and $\widetilde{\sigma}$. From the simplicial structure, we know that the set of vertices $\tau = \sigma \cup \widetilde{\sigma} \subset \xi$ must also be a simplex contained in $\xi$, and $\tau$ clearly has type $T\left(\tau\right)=T\left(\sigma\right) \cup T\left(\widetilde{\sigma}\right)$. We see that $\tau$ is contained in each of the $D$-simplices in $\sigma^\uparrow \cap \widetilde{\sigma}^\uparrow$ so that $\sigma^\uparrow \cap \widetilde{\sigma}^\uparrow \subset \tau^\uparrow$ and that $\tau$ is of dimension at most $m = \left|T\left(\tau\right)\right|-1 \leq \left(\ell+1\right) + \left(\widetilde{\ell}+1\right) - 1 \leq \ell + \widetilde{\ell} + 1$. Finally, any $\eta \in \tau^\uparrow$ must also contain $\sigma$ and $\widetilde{\sigma}$ so we conclude $\tau^\uparrow \subset \sigma^\uparrow \cap \widetilde{\sigma}^\uparrow$, and indeed $\sigma^\uparrow \cap \widetilde{\sigma}^\uparrow = \tau^\uparrow$.
\end{proof}

\begin{lemma} \label{DisjointUnion}
Let $\Delta$ be a $\left(D+1\right)$-colored $D$-dimensional simplicial complex. For any simplex $\sigma \in \Delta\left(\ell\right)$ and any type $T \subset \mathds{Z}_{D+1}$ such that $T\left(\sigma\right) \subset T$, we can partition the set of $D$-dimensional simplices $\sigma^\uparrow$ into a disjoint union of the subsets $\tau^\uparrow \subset \sigma^\uparrow$ for $\tau \supset \sigma$ of type $T\left(\tau\right)=T$:
\begin{align}
    \sigma^\uparrow = \bigsqcup\limits_{\tau \in \sigma^T} \tau^\uparrow
\end{align}
\end{lemma}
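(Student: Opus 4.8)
The plan is to prove the two inclusions and disjointness separately, with colorability doing all the work. The one structural fact I will lean on repeatedly is that, because $\Delta$ is $(D+1)$-colored, every $D$-simplex $\xi$ has exactly one vertex of each color in $\mathds{Z}_{D+1}$; consequently, for any type $T \subseteq \mathds{Z}_{D+1}$ there is a \emph{unique} sub-face of $\xi$ of type $T$, namely $\xi_T$ (in the notation of the paragraph introducing $\sigma_T$), whose vertex set is precisely the set of vertices of $\xi$ whose colors lie in $T$.

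First I would check $\bigsqcup_{\tau \in \sigma^T} \tau^\uparrow \subseteq \sigma^\uparrow$. Each $\tau \in \sigma^T$ satisfies $\sigma \subseteq \tau$ by definition of $\sigma^T$, so every $D$-simplex containing $\tau$ also contains $\sigma$; that is, $\tau^\uparrow \subseteq \sigma^\uparrow$, and the union of these sets is contained in $\sigma^\uparrow$.

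Next, for the reverse inclusion, take any $\xi \in \sigma^\uparrow$ and set $\tau := \xi_T$, the unique $(|T|-1)$-face of $\xi$ of type $T$. Since $\sigma \subseteq \xi$ and $T(\sigma) \subseteq T$, every vertex of $\sigma$ is a vertex of $\xi$ whose color lies in $T$, and such vertices are exactly the vertices of $\xi_T$; hence $\sigma \subseteq \xi_T = \tau$, so $\tau \in \sigma^T$ and $\xi \in \tau^\uparrow$. This gives $\sigma^\uparrow \subseteq \bigcup_{\tau \in \sigma^T} \tau^\uparrow$. Finally, for disjointness, suppose $\xi \in \tau_1^\uparrow \cap \tau_2^\uparrow$ with $\tau_1, \tau_2 \in \sigma^T$. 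Then $\tau_1$ and $\tau_2$ are both sub-faces of the $D$-simplex $\xi$ of type $T$, so by uniqueness of the type-$T$ sub-face we get $\tau_1 = \xi_T = \tau_2$; thus distinct $\tau \in \sigma^T$ contribute disjoint sets and the union is a genuine disjoint union.

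There is no real obstacle here: the only point requiring any care is the legitimacy of the identification $\xi_T = \{\text{vertices of } \xi \text{ with color in } T\}$, which is immediate from the definition of a $(D+1)$-colored complex together with the fact that a $D$-simplex has one vertex of each color; all of the inclusions then reduce to this bijection between faces of $\xi$ and subsets of $\mathds{Z}_{D+1}$. (One could equivalently package the argument by invoking Lemma \ref{Intersection} to identify $\tau_1^\uparrow \cap \tau_2^\uparrow = (\tau_1 \cup \tau_2)^\uparrow$ and noting $\tau_1 \cup \tau_2$ has type $T$ with $|T|-1$ vertices, forcing $\tau_1 = \tau_2$, but the direct colorability argument is cleaner.)
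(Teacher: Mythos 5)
Your proposal is correct and follows essentially the same argument as the paper: both establish the containment $\tau^\uparrow \subseteq \sigma^\uparrow$ from $\sigma \subseteq \tau$, obtain the covering via the unique type-$T$ face $\xi_T$ of each $\xi \in \sigma^\uparrow$ (which contains $\sigma$ since $T(\sigma) \subseteq T$), and get disjointness from the fact that a $D$-simplex contains exactly one face of each type. No gaps.
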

\begin{proof}
    By the coloring of $\Delta$, any two simplices $\tau_1, \tau_2 \in \Delta_T\left(|T|-1 \right)$ of the same type $T$ cannot be contained in the same simplex, so we conclude $\tau_1^\uparrow \cap \tau_2^\uparrow = \emptyset$. Indeed, for any $\xi \in \sigma^\uparrow$ there must be some unique $\tau = \xi_T \in \Delta_T\left(|T|-1 \right)$ such that $\sigma \subset \tau \subset \xi$ since $T\left(\sigma\right)\subset T \subset T\left(\xi\right)=\mathds{Z}_{D+1}$. Meanwhile, for any $\tau$ containing $\sigma$, we see that $\tau^\uparrow \subset \sigma^\uparrow$, which establishes the equality.  
\end{proof}

Subsequently, we can combine \ref{DisjointUnion} with Fact \ref{Fact2} to obtain the following corollary.
\begin{corollary}\label{EvenSupport}
Let $\Delta$ be a $\left(D+1\right)$-colored $D$-dimensional simplicial complex that triangulates a manifold. For any $\ell<D$ and $\sigma \in \Delta\left(\ell\right)$, $\left|\sigma^\uparrow\right| \equiv 0 \mod 2$
\end{corollary}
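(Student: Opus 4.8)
The plan is to reduce to the two facts already available: Lemma~\ref{DisjointUnion}, which partitions $\sigma^\uparrow$ into blocks indexed by faces of a fixed type containing $\sigma$, and Fact~\ref{Fact2}, which says every $(D-1)$-face has exactly two cofacets in a triangulated manifold. The key observation is that since $\ell < D$ we have $|T(\sigma)| = \ell+1 \leq D$, so there is at least one color $j \in \mathds{Z}_{D+1}$ with $j \notin T(\sigma)$. Fix such a $j$ and set $T := \{j\}^c$, a type of size $D$ with $T(\sigma) \subset T$.

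With this choice, the faces $\tau \in \sigma^T$ are exactly the $(D-1)$-simplices of type $T$ that contain $\sigma$. Applying Lemma~\ref{DisjointUnion} with this $T$ gives the disjoint decomposition
\begin{align}
    \sigma^\uparrow = \bigsqcup_{\tau \in \sigma^T} \tau^\uparrow,
\end{align}
so that $\left|\sigma^\uparrow\right| = \sum_{\tau \in \sigma^T} \left|\tau^\uparrow\right|$. Since each $\tau$ appearing in the sum is a $(D-1)$-simplex and $\Delta$ triangulates a manifold, Fact~\ref{Fact2} gives $\left|\tau^\uparrow\right| = 2$ for every such $\tau$. Therefore $\left|\sigma^\uparrow\right| = 2\left|\sigma^T\right| \equiv 0 \mod 2$, as desired.

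The only point requiring any care is the very first step: one must check that $\ell < D$ genuinely leaves room to enlarge $T(\sigma)$ to a type of size exactly $D$, i.e.\ that the complement $T^c(\sigma)$ is nonempty, which is immediate from $|T(\sigma)| = \ell + 1 \leq D < D+1$. Everything else is a direct substitution, so I do not anticipate any real obstacle; the content is entirely in recognizing that a codimension-one slicing type $T$ exists and that Lemma~\ref{DisjointUnion} plus Fact~\ref{Fact2} then force evenness.
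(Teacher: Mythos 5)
Your proof is correct and follows exactly the paper's argument: choose a type $T$ of $D$ colors containing $T(\sigma)$, apply Lemma~\ref{DisjointUnion} to partition $\sigma^\uparrow$ into the blocks $\tau^\uparrow$ for $\tau \in \sigma^T$, and invoke Fact~\ref{Fact2} to see each block has size $2$. The extra remark that $\ell < D$ guarantees such a $T$ exists is a small point the paper leaves implicit, but the approach is the same.
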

\begin{proof}
Let $T$ be any set of $D$ colors such that $T\left(\sigma\right) \subset T$ and recall that $\sigma^T$ denotes the set of simplices of type $T$ that include $\sigma$. By \ref{DisjointUnion} $\sigma^\uparrow = \bigsqcup\limits_{\tau \in \sigma^T} \tau^\uparrow$. By \ref{Fact2}, each $\left|\tau^\uparrow\right|=2$ so we conclude that $\left|\sigma^\uparrow\right|$ is even. 
\end{proof}

Finally, this allows us to see that the $X$ and $Z$ stabilizer generators commute (or that stabilizers commute with gauge checks), as encapsulated in this final corollary.
\begin{corollary}\label{EvenOverlap}
Let $\Delta$ be a $\left(D+1\right)$-colored $D$-dimensional simplicial complex that triangulates a manifold. For any pair of integers $0 \leq x \leq D-2$ and $z \leq D-2-x$ and simplices $\sigma_x \in \Delta\left(x\right)$ and $\sigma_z \in \Delta\left(z\right)$ we get an even overlap
\begin{align}
\left|\sigma_x^\uparrow \cap \sigma_z^\uparrow \right|\equiv 0 \mod 2
\end{align}
\end{corollary}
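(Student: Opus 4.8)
The plan is to derive this directly by chaining the two structural results already established: Lemma~\ref{Intersection}, which describes the intersection of two ``upward'' sets, and Corollary~\ref{EvenSupport}, which controls the parity of $|\tau^\uparrow|$ for sub-maximal faces $\tau$.

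First I would invoke Lemma~\ref{Intersection} with the pair $\sigma_x \in \Delta(x)$ and $\sigma_z \in \Delta(z)$. This gives a dichotomy: either $\sigma_x^\uparrow \cap \sigma_z^\uparrow = \emptyset$, in which case the cardinality is $0$ and there is nothing to prove, or $\sigma_x^\uparrow \cap \sigma_z^\uparrow = \tau^\uparrow$ where $\tau = \sigma_x \cup \sigma_z \in \Delta(m)$ is the simplex spanned by the union of the two vertex sets, with dimension $m \le x + z + 1$ and type $T(\tau) = T(\sigma_x) \cup T(\sigma_z)$. So in the nontrivial case the problem reduces entirely to showing $|\tau^\uparrow| \equiv 0 \bmod 2$.

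The second step is to check that the hypothesis of Corollary~\ref{EvenSupport} is met, namely that $m < D$ so that $\tau$ is a strictly sub-maximal face. This follows from the standing assumptions $x \le D - 2$ and $z \le D - 2 - x$, which give $m \le x + z + 1 \le (D-2) + 1 = D - 1 < D$. (The dimension could be strictly smaller than $x+z+1$ if $\sigma_x$ and $\sigma_z$ share some vertices, but the bound $m \le D-1$ is all that is needed.) Applying Corollary~\ref{EvenSupport} to $\tau$ then yields $|\tau^\uparrow| \equiv 0 \bmod 2$, which combined with the first step completes the argument.

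There is essentially no serious obstacle here; the content has already been absorbed into Lemma~\ref{Intersection} and Corollary~\ref{EvenSupport}. The only point that requires a moment of care is the arithmetic verification that the combined face $\tau$ stays below the top dimension, since Corollary~\ref{EvenSupport} genuinely fails for $D$-dimensional faces (where $|\tau^\uparrow| = 1$); this is exactly why the bound $x + z \le D - 2$ appears in the statement, and tracking it is the one thing to get right.
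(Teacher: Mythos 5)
Your proposal is correct and follows essentially the same route as the paper: apply Lemma \ref{Intersection} to reduce to $\tau^\uparrow$ for the combined face $\tau$, check $\dim\tau \le x+z+1 \le D-1 < D$, and invoke Corollary \ref{EvenSupport}. Your remark that the dimension of $\tau$ may be strictly smaller than $x+z+1$ is a slightly more careful reading of Lemma \ref{Intersection} than the paper's phrasing, but the argument is the same.
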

\begin{proof}
    By \ref{Intersection}, either $\sigma_x^\uparrow \cap \sigma_z^\uparrow = \emptyset$ or else $\sigma_x^\uparrow \cap \sigma_z^\uparrow = \tau^\uparrow$ for some $\tau \in \Delta\left(x+z+1\right)$. Since $x+z+1 \leq D-1$, Corollary \ref{EvenSupport} tells us that $\left|\tau^\uparrow\right|\equiv 0 \mod 2$.
\end{proof}

As we see, the intersection of the support of overlapping stabilizers of different type in our color code is always even, which means that the stabilizers commute. Similarly, when $x + z < D-2$ we see that there will be simplices $\tau\in \Delta\left(D-2-x\right)$ such that $\tau^\uparrow$ constitutes the support of a nontrivial $Z$ logical operator we might wish to gauge in order to avoid small distance. Meanwhile, \ref{DisjointUnion} tells us that all of the $Z$ stabilizers in the corresponding subsystem code we defined can be generated by such $Z$ gauge checks.

\subsubsection{Transversal \texorpdfstring{$R_\ell$}{TEXT} Gates} \label{StrictTransversal}

In this section, we remark on a technical detail concerning transversal $R_\ell$ gates for traditional color codes. We will not prove that such gates preserve the code space because in our generalization we will require a condition that cannot be achieved for triangulations of manifolds. The treatment of the color code can be found in \autocite{KubicaThesis, BombinGauge}. 

The definition we have given so far is standard in the literature (it is often also generalized slightly to allow for triangulations of manifolds with a boundary), and we have shown that it results in a well-defined code with appropriately commuting checks. However, the resulting code generically supports the transversal application of $R_D$ only up to some lower-level Clifford correction, such as some power of $R_{\ell}$ for $\ell < D$ applied to a subset of the qubits following the application of transversal $R_D$. Instead, exact transversality can be achieved by requiring an additional constraint on the complex, namely that any $\ell$-simplex $\sigma \in \Delta\left(\ell\right)$ satisfies $\left| \sigma^\uparrow \right| \equiv 0 \mod 2^{D-\ell}$. We will call a complex satisfying this property a \emph{multi-even} complex
\begin{definition}\label{evenness}
A complex $\Delta$ is said to be \emph{multi-even} if for any $\ell$-simplex $\sigma \in \Delta\left(\ell\right)$,
\begin{align}
    \left| \sigma^\uparrow \right| \equiv 0 \mod 2^{D-\ell}
\end{align}
\end{definition}
This condition is equivalently that the vector space spanned by the set $\{\sigma^\uparrow\}_{\sigma \in \Delta\left(0\right)}$, i.e. the space of $X$-stabilizers of $\mathcal{C}_\Delta\left(0,D-2\right)$, is a $D$-even space per definition \ref{MultiEvenSpace}. Then all that is left to show that exact-transversal $R_D$ preserves the code space is that logical operators and stabilizers have weight of their intersection divisible by $2^{D-1}$, though we will not show this here.

In \autocite{BombinGauge}, this multi-even notion is specified on the dual of the complex, called the colex, where the Poincar\'e dual of a multi-even complex is named a \emph{perfect colex}. \autocite{BombinGauge} proceeds to show that any $D$-dimensional complex triangulating a $D$-sphere (with or without a puncture) can be converted into a multi-even complex triangulating the same space. It is shown that the conversion process relies on identifying a bipartition $U, V$ of the original qubits such that transversally applying $R^\ell_D$ on $U$ and $R^{-\ell}_D$ on $V$ for some integer $\ell$ will preserve the original code space for the pre-altered complex. This obviates any need to convert to the multi-even complex in the first place (so long as one does not insist on exact-transversality). 

We do not know whether similar ideas carry over for our definition of Tanner color codes. In fact, in our generalization we will achieve exact-transversality by requiring a property analogous to \ref{evenness}, but which cannot be achieved by any manifold simply because it requires $|\sigma^\uparrow| > 2$ for each $\left(D-1\right)$-face. We leave open whether there is a less demanding condition for the generalization that encompasses the multi-even complex color codes, or even the instances that give up strict-transversality. However, we note that the condition in our generalization is fairly straightforward to achieve once we move beyond manifolds and constant sheaves.

\section{Quantum Tanner Color Codes}\label{sec:QTCC}

In this section, we define \emph{Tanner color codes} which generalize the color codes $\mathcal{C}_\Delta\left(x,z\right)$ of Section \ref{ColorCode}. We start with the same set of defining data, but no longer require that $\Delta$ triangulates a manifold; any pure $\left(D+1\right)$-colorable simplicial complex is sufficient, where \emph{pure} or \emph{homogeneous} indicates that every simplex belongs to at least one $D$-simplex. Then, as expected from the designation \emph{Tanner}, we complete the definition by making an appropriate choice $\{\mathcal{F}_\sigma\}_{\sigma \in \Delta\left(D-1\right)}$ of \emph{local code} for each $\left(D-1\right)$-simplex in the complex. 

The choice of local codes produces a Tanner sheaf $\mathcal{F}\left(\Delta, \{\mathcal{F}_\sigma\}_{\sigma \in \Delta\left(D-1\right)}\right)$ over the complex $\Delta$. As described in \ref{Sheaves}, the local codes on $\left(D-1\right)$-simplices induce lower-level local codes $\mathcal{F}_\tau$ for any $\tau \in \Delta\left(\ell<D-1\right)$; we provide their definition here to keep this section self-contained. The defining local codes $\mathcal{F}_\sigma$ corresponding to $\sigma \in \Delta\left(D-1\right)$ are each chosen as any subspace $\mathcal{F}_\sigma \subset \mathds{F}^{ \sigma^\uparrow}$. For lower-levels $\sigma \in \Delta\left(\ell < D-1\right)$ the local codes are induced by the choice of $\left(D-1\right)$-level codes
\begin{align}
\mathcal{F}_\sigma  := \left\{c \in \mathds{F}^{\sigma^\uparrow} \middle\bracevert \forall \tau \supset \sigma \in \Delta\left(D-1\right), \left.c\right|_{\tau^\uparrow} \in \mathcal{F}_\tau \right\}
\end{align}

We also have local codes $\overline{\mathcal{F}_\sigma}$ of the dual sheaf. For level-$\left(D-1\right)$, these are defined as $\forall \sigma \in \Delta\left(D-1\right), \, \overline{\mathcal{F}_\sigma}:=\mathcal{F}_\sigma^\perp$. The lower level codes are induced from these, similarly to the primal local codes:
\begin{align}
\overline{\mathcal{F}_\sigma}  := \left\{c \in \mathds{F}^{\sigma^\uparrow} \middle\bracevert \forall \tau \supset \sigma \in \Delta\left(D-1\right), \left.c\right|_{\tau^\uparrow} \in \overline{\mathcal{F}_\tau} = \mathcal{F}_\tau^\perp \right\}
\end{align}
Note that for level $\tau \in \Delta\left(\ell<D-1\right)$ it is generically not the case that $\overline{\mathcal{F}_\tau}$ is identical to $\mathcal{F}_\tau^\perp$; we will see in \ref{GeneralizedEvenSupport} that $\overline{\mathcal{F}_\tau} \subset \mathcal{F}_\tau^\perp$ but the containment is typically strict (except for the $\left(D-1\right)$-level codes where equality always holds). Codewords of the local codes generalize the sets $\sigma^\uparrow$ from our definition of the color code \ref{ColorCode}, with $\text{supp}\left(c\right) \subset \sigma^\uparrow$ for any $c \in \mathcal{F}_\sigma$ (or $c \in \overline{\mathcal{F}_\sigma}$). 

To complete the definition, we describe how this data specifies the stabilizer (gauge) generators. For binary alphabet $\mathds{F} = \mathds{F}_2$ (corresponding to qubit codes), we define the stabilizer (or gauge check) generators for the Tanner color code $\mathcal{C}_{\mathcal{F}\left(\Delta, \{\mathcal{F}_\sigma\}_{\sigma \in \Delta\left(D-1\right)}\right)}\left(x,z\right)$ as 
\begin{align}
S_X &:= \langle X_{\text{supp}\left(c\right)} \rangle_{\sigma \in \Delta\left(x\right), c\in \mathcal{F}_\sigma} \\
S_Z &:= \langle Z_{\text{supp}\left(c\right)} \rangle_{\sigma \in \Delta\left(z\right), c\in \overline{\mathcal{F}_\sigma}}
\end{align}
Note that a basis of each code space is sufficient to generate the rest. In the next section, we will show how the structure of the sheaf and the simplicial structure generalize the properties \ref{Properties} to give the appropriate commutation relations for our checks.

Before we move on to the next section, we linger on some observations about our definition. First, note that we recover the color code $\mathcal{C}_{\Delta_\mathcal{M}}\left(x,z\right)$ of Section \ref{ColorCode} whenever $\Delta_\mathcal{M}$ triangulates a manifold $\mathcal{M}$ and each local code is chosen as the two-bit repetition code $\mathcal{F}_\sigma \cong \mathcal{C}_\text{rep} = \{00,11\}$, which is perhaps the only sensible choice
\begin{align}
    \mathcal{C}_{\mathcal{F}\left(\Delta_\mathcal{M}, \{\mathcal{F}_\sigma \cong \mathcal{C}_\text{rep}\}_{\sigma \in \Delta\left(D-1\right)}\right)}\left(x,z\right) = \mathcal{C}_{\Delta_\mathcal{M}}\left(x,z\right)
\end{align}
Fact \ref{Fact2} guarantees that for any such triangulation $\Delta_\mathcal{M}$ this is a valid choice, and since on two bits $\mathcal{C}_\text{rep} = \mathcal{C}_\text{rep}^\perp$ we find that both $X$ and $Z$ stabilizers correspond to the sets $\sigma^\uparrow$.

More generally, for any simplicial complex $\Delta$ we can always choose the local codes $\mathcal{F}_\sigma$ to be repetition codes on the set $\sigma^\uparrow$, which corresponds to choosing the \emph{constant sheaf} $\underline{\mathcal{F}}\left(\Delta,\mathds{F}\right)$ over the complex $\Delta$. Such a choice is perfectly valid, but is generically expected to result in poor rate and/or small $Z$-distance.

Meanwhile, when we make more general choices, we must keep in mind new considerations. First, note that the choice of the defining $\left(D-1\right)$-level local codes $\mathcal{F}_\sigma$ requires a choice of orientation on the simplices $\sigma^\uparrow$; merely picking some code $\mathcal{F} \subset \mathds{F}^{\left|\sigma^\uparrow\right| }$ with the appropriate number of symbols is not sufficient. Different orientations can significantly affect the resulting code. Indeed, this raises another consideration, which is that in order to avoid a trivial construction, the $\left(D-1\right)$-level codes clearly need to be chosen so that the lower-level codes defining the stabilizers are nonzero. This can be challenging, and the orientation of each local code plays an important role, alongside its rate and distance.

\subsection{Useful Properties of the Sheaf} \label{SheafProperties}
This section will parallel \ref{Properties} to illustrate that simplicial sheaves naturally generalize the important structure underlying traditional color codes. 

First, the analogue of fact \ref{Fact2} is given by the simple fact
\begin{fact} \label{GeneralizedFact2}
For any $\left(D-1\right)$-simplex $\sigma \in \Delta\left(D-1\right)$, any codeword $c \in \mathcal{F}_{\sigma}$ and dual codeword $\widetilde{c} \in \overline{\mathcal{F}_{\sigma}}$ are orthogonal, $c \cdot \widetilde{c} = 0$
\end{fact}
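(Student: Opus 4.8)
The plan is to observe that the statement is immediate from the definition of the dual sheaf at the top nontrivial level. Recall that for a $(D-1)$-simplex $\sigma \in \Delta(D-1)$ the dual local code is \emph{defined} to be $\overline{\mathcal{F}_\sigma} := \mathcal{F}_\sigma^\perp$, the orthogonal complement of $\mathcal{F}_\sigma$ inside $\mathds{F}^{\sigma^\uparrow}$ with respect to the standard bilinear form $u \cdot v = \sum_{\tau \in \sigma^\uparrow} u(\tau) v(\tau)$. Hence, given any $c \in \mathcal{F}_\sigma$ and any $\widetilde{c} \in \overline{\mathcal{F}_\sigma} = \mathcal{F}_\sigma^\perp$, the defining property of the orthogonal complement yields $c \cdot \widetilde{c} = 0$. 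That is the entire argument; no step presents any obstacle.

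The only thing worth spelling out is \emph{why} this deserves to be recorded as the analogue of Fact \ref{Fact2}. Fact \ref{Fact2} asserts $|\sigma^\uparrow| = 2$ for every $(D-1)$-face of a manifold triangulation; combined with the observation that the constant-sheaf local code $\mathcal{C}_\text{rep} = \{00, 11\}$ on two bits is its own dual, it says precisely that the unique supported primal and dual codewords (each equal to $\mathbf{1}_{\sigma^\uparrow}$) overlap in an even number of coordinates, namely $2$. Fact \ref{GeneralizedFact2} keeps exactly this consequence---even overlap of primal and dual local codewords at level $D-1$---while dropping the manifold hypothesis and the restriction to the repetition code, at the cost that it is now a definitional tautology rather than a combinatorial fact about $\sigma^\uparrow$.

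Finally, I would flag for the reader the contrast with lower levels: the clean identity $\overline{\mathcal{F}_\sigma} = \mathcal{F}_\sigma^\perp$ holds \emph{only} for $\sigma \in \Delta(D-1)$. For $\tau \in \Delta(\ell < D-1)$ one has in general only the inclusion $\overline{\mathcal{F}_\tau} \subseteq \mathcal{F}_\tau^\perp$, typically strict, and it is there---not here---that an actual proof (the generalization of Corollaries \ref{EvenSupport} and \ref{EvenOverlap} to the sheaf setting, needed for commutation of the Tanner color code checks) will have to be carried out.
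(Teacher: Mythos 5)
Your proof is correct and matches the paper exactly: the paper likewise observes that the statement is tautological from the definition $\overline{\mathcal{F}_\sigma} := \mathcal{F}_\sigma^\perp$ at level $D-1$, and makes the same remarks about the analogy with Fact \ref{Fact2} and the strict inclusion $\overline{\mathcal{F}_\tau} \subset \mathcal{F}_\tau^\perp$ at lower levels.
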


This fact is essentially tautological because of our definition $\overline{\mathcal{F}_{\sigma}} := \mathcal{F}_{\sigma}^\perp$ for the $\left(D-1\right)$-level codes. The analogy arises from our observation at the end of the last section, where we saw that \ref{Fact2} essentially forces us to choose a two-bit self-dual local repetition code. For our more general Tanner codes, we can allow $\sigma \in \Delta\left(D-1\right)$ with $\sigma^\uparrow$ of any cardinality because of our use of the dual code.

We proceed with the generalizations of properties \ref{Intersection} and \ref{DisjointUnion}, which automatically follow from the structure of our sheaf. 

\begin{lemma} \label{GeneralizedIntersection}
Let $\Delta$ be a $\left(D+1\right)$-colored $D$-dimensional simplicial complex and $\mathcal{F}\left(\Delta\right)$ a Tanner sheaf on this complex. For any simplex $\sigma \in \Delta\left(\ell\right)$ with codeword $c \in \mathcal{F}_\sigma$ and for any simplex $\widetilde{\sigma} \in \Delta\left(\widetilde{\ell}\right)$ we have $\sigma^\uparrow \cap \widetilde{\sigma}^\uparrow = \emptyset$ or $\sigma^\uparrow \cap \widetilde{\sigma}^\uparrow = \tau^\uparrow$ and $\left.c\right|_{\tau^\uparrow} \in \mathcal{F}_\tau$ for some $\tau \in \Delta\left(m\right)$ where $m \leq \ell + \widetilde{\ell}+1$. Furthermore, $T\left(\tau\right) = T\left(\sigma\right) \cup T\left(\widetilde{\sigma}\right)$.
\end{lemma}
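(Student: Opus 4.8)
The plan is to extract everything combinatorial from Lemma \ref{Intersection} and then supply the single new ingredient: compatibility of the induced local codes under restriction. First I would apply Lemma \ref{Intersection} verbatim. Since $\Delta$ is $\left(D+1\right)$-colored, either $\sigma^\uparrow \cap \widetilde{\sigma}^\uparrow = \emptyset$, in which case there is nothing left to prove, or $\sigma^\uparrow \cap \widetilde{\sigma}^\uparrow = \tau^\uparrow$ for $\tau = \sigma \cup \widetilde{\sigma} \in \Delta\left(m\right)$ with $m \leq \ell + \widetilde{\ell} + 1$ and $T\left(\tau\right) = T\left(\sigma\right) \cup T\left(\widetilde{\sigma}\right)$. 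This already yields every assertion except $\left.c\right|_{\tau^\uparrow} \in \mathcal{F}_\tau$. Note that in this branch $\sigma \subseteq \tau$, so $\tau^\uparrow \subseteq \sigma^\uparrow$ and the restriction $\left.c\right|_{\tau^\uparrow}$ is well-defined, and $\tau^\uparrow$ is nonempty.

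The remaining claim I would prove as a small standalone fact about Tanner sheaves: for any faces $\sigma \subseteq \tau$ of $\Delta$ and any $c \in \mathcal{F}_\sigma$, one has $\left.c\right|_{\tau^\uparrow} \in \mathcal{F}_\tau$ (this is just the well-definedness of the sheaf restriction map $\mathcal{F}_{\sigma \rightarrow \tau}$ alluded to earlier). To prove it, unwind the definition of the induced local code by cases on $\dim \tau$. If $\tau \in \Delta\left(D\right)$ then $\tau^\uparrow = \{\tau\}$ and $\mathcal{F}_\tau = \mathds{F}$, so there is nothing to check; if $\sigma = \tau$ the claim is immediate since $\left.c\right|_{\tau^\uparrow} = c$. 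Otherwise $\mathcal{F}_\tau = \left\{ d \in \mathds{F}^{\tau^\uparrow} \ \middle\bracevert\ \forall \rho \supseteq \tau,\ \rho \in \Delta\left(D-1\right),\ \left.d\right|_{\rho^\uparrow} \in \mathcal{F}_\rho \right\}$, where for $\tau \in \Delta\left(D-1\right)$ this list of constraints is empty (and $\mathcal{F}_\tau$ is the chosen defining code, which contains $\left.c\right|_{\tau^\uparrow}$ directly by the definition of $\mathcal{F}_\sigma$), while for $\dim \tau < D-1$ it is the genuine induced-code condition. Now every $\left(D-1\right)$-face $\rho$ with $\rho \supseteq \tau$ also satisfies $\rho \supseteq \sigma$, so by definition of $\mathcal{F}_\sigma$ we have $\left.c\right|_{\rho^\uparrow} \in \mathcal{F}_\rho$; and since $\sigma \subseteq \tau \subseteq \rho$ gives $\rho^\uparrow \subseteq \tau^\uparrow \subseteq \sigma^\uparrow$, we have $\left.c\right|_{\rho^\uparrow} = \left.\left(\left.c\right|_{\tau^\uparrow}\right)\right|_{\rho^\uparrow}$. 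Hence $\left.c\right|_{\tau^\uparrow}$ meets every defining constraint of $\mathcal{F}_\tau$, i.e. $\left.c\right|_{\tau^\uparrow} \in \mathcal{F}_\tau$. Applying this fact to the $\tau$ produced by Lemma \ref{Intersection} completes the argument.

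I do not anticipate a genuine obstacle; the content is bookkeeping. The one point requiring care is the case analysis on $\dim \tau$ — the defining code at level $D-1$, the induced code below level $D-1$, and the trivial code at level $D$ must all be covered — together with the observation that makes the restriction automatically consistent: the constraints cutting out $\mathcal{F}_\tau$ are indexed by the $\left(D-1\right)$-faces above $\tau$, which form a subset of those indexed by the $\left(D-1\right)$-faces above $\sigma$. Keeping the direction of the containments $\rho^\uparrow \subseteq \tau^\uparrow \subseteq \sigma^\uparrow$ straight (opposite to $\sigma \subseteq \tau \subseteq \rho$) is the only place one could slip.
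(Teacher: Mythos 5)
Your proof is correct and follows the same route as the paper's: invoke Lemma \ref{Intersection} for the combinatorial facts about $\tau = \sigma \cup \widetilde{\sigma}$, then observe that every $\left(D-1\right)$-face $\xi \supseteq \tau$ also contains $\sigma$, so the defining constraints of $\mathcal{F}_\tau$ are a subset of those already satisfied by $c \in \mathcal{F}_\sigma$. Your explicit case analysis on $\dim\tau$ is more detailed than the paper's one-line version but adds nothing substantively different.
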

\begin{proof}
    Lemma \ref{Intersection} proves the existence of the appropriate $\tau$, so all that is left to show is $\left.c\right|_{\tau^\uparrow} \in \mathcal{F}_\tau$. From the definition of the Tanner sheaf local codes, this amounts to showing that for all $\left(D-1\right)$-simplices $\xi \supset \tau$ that contain $\tau$, the restriction of $c$ is already a local codeword $\left.c\right|_{\xi^\uparrow} \in \mathcal{F}_\xi$. This follows immediately from the definition of $\mathcal{F}_\sigma$ and the fact that $\sigma \subset \tau \subset \xi$ for all such $\xi \supset \tau$.
\end{proof}

\begin{lemma} \label{GeneralizedDisjointUnion}
Let $\Delta$ be a $\left(D+1\right)$-colored $D$-dimensional simplicial complex and $\mathcal{F}\left(\Delta, \{\mathcal{F}_\sigma \}_{\sigma \in \Delta\left(D-1\right)}\right)$ a Tanner sheaf on this complex. For any simplex $\sigma \in \Delta\left(\ell\right)$ with codeword $c \in \mathcal{F}_\sigma$, and any type $T \subset \mathds{Z}_{D+1}$ such that $T\left(\sigma\right) \subset T$, the codeword $c$ decomposes into a concatenation of codewords of $\mathcal{F}_{\tau}$ for all the $\tau$ of type $T$ that contain $\sigma$
\begin{align}
    \sigma^\uparrow = \bigsqcup\limits_{\tau \in \sigma^T} \tau^\uparrow \quad \text{and} \quad \left.c \right|_{\tau^\uparrow} \in \mathcal{F}_{\tau}
\end{align}
\end{lemma}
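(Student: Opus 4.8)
The plan is to split the claim into its two assertions and dispatch each in turn. The set-theoretic decomposition $\sigma^\uparrow = \bigsqcup_{\tau \in \sigma^T} \tau^\uparrow$ is purely combinatorial and holds for any colorable complex, so I would simply invoke Lemma~\ref{DisjointUnion} verbatim; none of the sheaf data enters here, since that partition only uses the coloring of $\Delta$ and the fact that $T(\sigma) \subset T$. All the genuinely new content is in showing that the restriction $c|_{\tau^\uparrow}$ lies in the induced local code $\mathcal{F}_\tau$ for each $\tau \in \sigma^T$.

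For that second part I would fix one such $\tau$ and unwind the definition of the induced code, $\mathcal{F}_\tau = \{c' \in \mathds{F}^{\tau^\uparrow} : \forall\, \xi \supset \tau,\ \xi \in \Delta(D-1),\ c'|_{\xi^\uparrow} \in \mathcal{F}_\xi\}$. Thus it suffices to check, for every $(D-1)$-simplex $\xi$ with $\tau \subset \xi$, that $(c|_{\tau^\uparrow})|_{\xi^\uparrow} \in \mathcal{F}_\xi$. Two elementary observations finish this: first, $\tau \subset \xi$ reverses to $\xi^\uparrow \subseteq \tau^\uparrow$ (every $D$-face containing $\xi$ also contains $\tau$), so the double restriction is well-defined and equals $c|_{\xi^\uparrow}$ by transitivity of function restriction; second, since $\sigma \subset \tau \subset \xi$, the simplex $\xi$ is in particular a $(D-1)$-face containing $\sigma$, so the defining property of $c \in \mathcal{F}_\sigma$ gives exactly $c|_{\xi^\uparrow} \in \mathcal{F}_\xi$. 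This is the same mechanism already used in the proof of Lemma~\ref{GeneralizedIntersection}, now applied with $\tau$ any type-$T$ simplex above $\sigma$ rather than a union $\sigma \cup \widetilde{\sigma}$; alternatively one could appeal to Lemma~\ref{GeneralizedIntersection} directly by writing $\tau = \sigma \cup \tau$.

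I do not expect any genuine obstacle here: the statement is a routine consequence of how the induced local codes are defined, together with the combinatorial partition of Lemma~\ref{DisjointUnion}. The only point requiring the tiniest bit of care is bookkeeping the direction of inclusions---passing from $\tau \subset \xi$ among simplices to $\xi^\uparrow \subseteq \tau^\uparrow$ among their up-sets, and noting that restriction composes correctly. If anything, the "hard part" is purely expository: making clear that the sheaf condition defining $\mathcal{F}_\tau$ is checked against $(D-1)$-faces above $\tau$, all of which are automatically above $\sigma$, so no new constraint beyond membership in $\mathcal{F}_\sigma$ is ever imposed.
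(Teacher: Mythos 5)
Your proposal is correct and matches the paper's own argument: the set decomposition is taken verbatim from Lemma \ref{DisjointUnion}, and the membership $\left.c\right|_{\tau^\uparrow} \in \mathcal{F}_\tau$ is obtained exactly as the paper does, either by citing Lemma \ref{GeneralizedIntersection} (applied to $c$ and $\tau$) or by inlining its one-line content, namely that every $(D-1)$-face $\xi \supset \tau$ also contains $\sigma$, so the defining condition of $c \in \mathcal{F}_\sigma$ already supplies $\left.c\right|_{\xi^\uparrow} \in \mathcal{F}_\xi$.
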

\begin{proof}
Clearly, this is a slight enhancement of \ref{DisjointUnion}, and the decomposition of $\sigma^\uparrow$ into the $\tau^\uparrow$ is immediate. The last part $\left. c \right|_{\tau^\uparrow} \in \mathcal{F}_{\tau}$ follows from the application of \ref{GeneralizedIntersection} to $c$ and the simplex $\tau$.
\end{proof}

Subsequently, we can combine \ref{GeneralizedDisjointUnion} with Fact \ref{GeneralizedFact2} to obtain the following corollary.
\begin{corollary}\label{GeneralizedEvenSupport}
Let $\Delta$ be a $\left(D+1\right)$-colored $D$-dimensional simplicial complex and $\mathcal{F}\left(\Delta, \{\mathcal{F}_\sigma \}_{\sigma \in \Delta\left(D-1\right)}\right)$ a Tanner sheaf on this complex. For any $\ell<D$, $\sigma \in \Delta\left(\ell\right)$, $c \in \mathcal{F}_\sigma$, and $\widetilde{c} \in \overline{\mathcal{F}_\sigma}$, $c \cdot \widetilde{c} = 0$. Equivalently, $ \mathcal{F}_\sigma \subset \overline{\mathcal{F}_\sigma}^\perp$ and $\overline{\mathcal{F}_\sigma} \subset  \mathcal{F}_\sigma^\perp$
\end{corollary}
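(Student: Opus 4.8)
The plan is to combine the disjoint-union decomposition from Lemma~\ref{GeneralizedDisjointUnion} with the tautological orthogonality at level $D-1$ (Fact~\ref{GeneralizedFact2}), reducing the claimed orthogonality at a lower-level face $\sigma$ to a sum of orthogonality relations at the $\left(D-1\right)$-faces sitting above $\sigma$. The key observation is that the inner product $c \cdot \widetilde c$ over $\sigma^\uparrow$ splits along any partition of $\sigma^\uparrow$ into blocks; we just need each block to be a $\tau^\uparrow$ with $\tau \in \Delta\left(D-1\right)$, and we need $\left.c\right|_{\tau^\uparrow}$ and $\left.\widetilde c\right|_{\tau^\uparrow}$ to actually be local codewords there.

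First I would fix a set $T \subset \mathds{Z}_{D+1}$ of $D$ colors with $T\left(\sigma\right) \subset T$; since $\ell < D$ we have $\left|T\left(\sigma\right)\right| = \ell + 1 \leq D$, so such a $T$ exists, and the faces $\tau \in \sigma^T$ are precisely the $\left(D-1\right)$-simplices of type $T$ containing $\sigma$. Applying Lemma~\ref{GeneralizedDisjointUnion} to $c \in \mathcal{F}_\sigma$ gives $\sigma^\uparrow = \bigsqcup_{\tau \in \sigma^T} \tau^\uparrow$ with $\left.c\right|_{\tau^\uparrow} \in \mathcal{F}_\tau$ for each such $\tau$; applying it again to $\widetilde c \in \overline{\mathcal{F}_\sigma}$—noting that $\overline{\mathcal{F}}\left(\Delta\right)$ is itself a Tanner sheaf with defining $\left(D-1\right)$-level codes $\mathcal{F}_\sigma^\perp$—gives the same decomposition with $\left.\widetilde c\right|_{\tau^\uparrow} \in \overline{\mathcal{F}_\tau} = \mathcal{F}_\tau^\perp$. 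Then I would write
\begin{align}
    c \cdot \widetilde c = \sum_{\tau \in \sigma^T} \left.c\right|_{\tau^\uparrow} \cdot \left.\widetilde c\right|_{\tau^\uparrow},
\end{align}
and invoke Fact~\ref{GeneralizedFact2} for each term, since $\left.c\right|_{\tau^\uparrow} \in \mathcal{F}_\tau$ and $\left.\widetilde c\right|_{\tau^\uparrow} \in \mathcal{F}_\tau^\perp$ are orthogonal at the $\left(D-1\right)$-face $\tau$. Hence every term vanishes and $c \cdot \widetilde c = 0$. The equivalent reformulations $\mathcal{F}_\sigma \subset \overline{\mathcal{F}_\sigma}^\perp$ and $\overline{\mathcal{F}_\sigma} \subset \mathcal{F}_\sigma^\perp$ follow immediately by quantifying over all $c$ and $\widetilde c$.

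There is no serious obstacle here; the only points requiring a moment's care are (i) checking that $\overline{\mathcal{F}}\left(\Delta\right)$ is a genuine Tanner sheaf so that Lemma~\ref{GeneralizedDisjointUnion} applies verbatim to $\widetilde c$—this is exactly the definition $\overline{\mathcal{F}}\left(\Delta\right) := \mathcal{F}\left(\Delta, \{\mathcal{C}_\sigma^\perp\}_{\sigma \in \Delta\left(D-1\right)}\right)$ from Section~\ref{Sheaves}—and (ii) confirming that the partition of $\sigma^\uparrow$ used for $c$ and for $\widetilde c$ is the same, which it is, since Lemma~\ref{GeneralizedDisjointUnion}'s combinatorial decomposition $\sigma^\uparrow = \bigsqcup_{\tau \in \sigma^T} \tau^\uparrow$ depends only on $\sigma$ and $T$, not on the codeword. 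If one wanted to avoid even mentioning the dual sheaf's Tanner structure, an alternative is to note that Lemma~\ref{GeneralizedIntersection} applies to $c$ with the simplex $\widetilde\sigma := \tau$ directly (giving $\left.c\right|_{\tau^\uparrow} \in \mathcal{F}_\tau$), and a symmetric argument using the definition of $\overline{\mathcal{F}_\sigma}$ gives $\left.\widetilde c\right|_{\tau^\uparrow} \in \mathcal{F}_\tau^\perp$; this is essentially the content of the proof of Lemma~\ref{GeneralizedDisjointUnion} unpacked.
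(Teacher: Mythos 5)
Your proof is correct and follows essentially the same route as the paper's: partition $\sigma^\uparrow$ into the sets $\tau^\uparrow$ for $\tau \in \sigma^T$ with $|T|=D$ via Lemma~\ref{GeneralizedDisjointUnion}, split the inner product along this partition, and kill each term by the level-$(D-1)$ duality of Fact~\ref{GeneralizedFact2}. Your extra remark that the decomposition applies to $\widetilde c$ because the dual sheaf is itself a Tanner sheaf is a point the paper leaves implicit, but it is the same argument.
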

\begin{proof}
Let $T$ be any set of $D$ colors such that $T\left(\sigma\right) \subset T$ and recall that $\sigma^T$ denotes the set of simplices of type $T$ that include $\sigma$. By \ref{DisjointUnion} $\sigma^\uparrow = \bigsqcup\limits_{\tau \in \sigma^T} \tau^\uparrow$, and each of the codewords $c$ and $\widetilde{c}$ can be decomposed along the same set of coordinates into a concatenation of primal and dual $\left(D-1\right)$-level codewords respectively. Thus their inner product can be written as a sum 
\begin{align}
    c \cdot \widetilde{c} = \sum_{\tau \in \sigma^T}  \left.c \right|_{\tau^\uparrow} \cdot \left.\widetilde{c} \right|_{\tau^\uparrow} 
\end{align}
By definition, as noted in \ref{GeneralizedFact2}, these $\left(D-1\right)$-level codes satisfy $\mathcal{F}_{\tau}^\perp = \overline{\mathcal{F}_{\tau}}$ so each term in the sum is 0.
\end{proof}

Finally, this allows us to see that the $X$ and $Z$ stabilizer generators commute (or that stabilizers commute with gauge checks), as encapsulated in this final corollary.
\begin{corollary}\label{GeneralizedEvenOverlap}
Let $\Delta$ be a $\left(D+1\right)$-colored $D$-dimensional simplicial complex and $\mathcal{F}\left(\Delta, \{\mathcal{F}_\sigma \}_{\sigma \in \Delta\left(D-1\right)}\right)$ a Tanner sheaf on this complex. For any pair of integers $-1 \leq x \leq D-2$ and $-1 \leq z \leq D-2-x$ and simplices $\sigma_x \in \Delta\left(x\right)$ and $\sigma_z \in \Delta\left(z\right)$ with codewords $c_x \in \mathcal{F}_{\sigma_x}$ and $c_z \in \overline{\mathcal{F}_{\sigma_z}}$ we have
\begin{align}
c_x \cdot c_z = 0
\end{align}
where each codeword is interpreted as a function in $C^D\left(\Delta,\mathds{F}_2\right)$ with support contained inside $\sigma_x^\uparrow$ or $\sigma_z^\uparrow$ respectively. 

More generally, for any pair of integers $-1 \leq \ell_1,\ell_2 \leq D-1$ with corresponding faces $\sigma_j \in \Delta\left(\ell_j\right)$ with codewords $c_1 \in \mathcal{F}_{\sigma_1}$ and $c_2 \in \overline{\mathcal{F}_{\sigma_2}}$, if $\left|T\left(\sigma_1\right) \cup T\left(\sigma_2\right)\right| \leq D$ then 
\begin{align}
c_{1} \cdot c_2 = 0
\end{align}
\end{corollary}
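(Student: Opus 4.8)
The plan is to reduce the general ("more generally") statement to \autocite{GeneralizedEvenSupport}, since that corollary already handles precisely the case $\sigma_1 = \sigma_2 = \sigma$. The first observation is that the special case ($x,z$) claimed in the first displayed equation is just an instance of the general statement: if $\sigma_x \in \Delta(x)$ and $\sigma_z \in \Delta(z)$ with $x+z \leq D-2$, then $|T(\sigma_x) \cup T(\sigma_z)| \leq (x+1)+(z+1) = x+z+2 \leq D$, so it suffices to prove the general version with $\ell_1 = x$, $\ell_2 = z$. (The extreme values $\ell_j = -1$ correspond to $\sigma_j = \emptyset$, in which case $\mathcal{F}_\emptyset$ or $\overline{\mathcal{F}_\emptyset}$ is the relevant code on $\emptyset^\uparrow = \Delta(D)$, and the argument goes through unchanged.)

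For the general statement, first I would apply \autocite{GeneralizedIntersection}. If $\sigma_1^\uparrow \cap \sigma_2^\uparrow = \emptyset$ then the supports of $c_1$ and $c_2$ are disjoint and trivially $c_1 \cdot c_2 = 0$. Otherwise, \autocite{GeneralizedIntersection} gives a simplex $\tau = \sigma_1 \cup \sigma_2 \in \Delta(m)$ with $\sigma_1^\uparrow \cap \sigma_2^\uparrow = \tau^\uparrow$, with $T(\tau) = T(\sigma_1) \cup T(\sigma_2)$, and — crucially — with $\left.c_1\right|_{\tau^\uparrow} \in \mathcal{F}_\tau$. I would then need the dual-sheaf analogue: $\left.c_2\right|_{\tau^\uparrow} \in \overline{\mathcal{F}_\tau}$. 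This follows by exactly the same argument as in the proof of \autocite{GeneralizedIntersection}, since the definitions of $\mathcal{F}_\sigma$ and $\overline{\mathcal{F}_\sigma}$ are identical in form (replace every $\mathcal{F}_\xi$ with $\overline{\mathcal{F}_\xi}$): membership of $\left.c_2\right|_{\tau^\uparrow}$ in $\overline{\mathcal{F}_\tau}$ amounts to checking $\left.c_2\right|_{\xi^\uparrow} \in \overline{\mathcal{F}_\xi}$ for all $(D-1)$-simplices $\xi \supset \tau$, and this holds because $\sigma_2 \subset \tau \subset \xi$ and $c_2 \in \overline{\mathcal{F}_{\sigma_2}}$. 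I would state this dual version as a one-line remark or fold it into the proof.

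Now the inner product $c_1 \cdot c_2$ (computing in $C^D(\Delta, \mathds{F}_2)$) only picks up contributions on the common support $\tau^\uparrow$, so $c_1 \cdot c_2 = \left.c_1\right|_{\tau^\uparrow} \cdot \left.c_2\right|_{\tau^\uparrow}$. The hypothesis $|T(\sigma_1) \cup T(\sigma_2)| \leq D$ says exactly that $m = |T(\tau)| - 1 \leq D-1$, i.e. $\tau \in \Delta(\leq D-1)$, so $\tau$ is a face to which \autocite{GeneralizedEvenSupport} applies: with $\ell = m < D$, $c = \left.c_1\right|_{\tau^\uparrow} \in \mathcal{F}_\tau$, and $\widetilde{c} = \left.c_2\right|_{\tau^\uparrow} \in \overline{\mathcal{F}_\tau}$, that corollary gives $\left.c_1\right|_{\tau^\uparrow} \cdot \left.c_2\right|_{\tau^\uparrow} = 0$, hence $c_1 \cdot c_2 = 0$.

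I do not anticipate a genuine obstacle here — the result is essentially a bookkeeping consequence of \autocite{GeneralizedIntersection} plus \autocite{GeneralizedEvenSupport}. The one point that needs a little care is making the dual-sheaf version of \autocite{GeneralizedIntersection} explicit (the statement of \autocite{GeneralizedIntersection} is phrased only for the primal sheaf), and making sure the edge cases $\ell_j = -1$ (and the degenerate possibility $\tau = \sigma_1$ or $\tau = \sigma_2$, where one of the restrictions is trivial) are covered by the same reasoning; none of these causes any difficulty.
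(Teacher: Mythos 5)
Your proof is correct and follows essentially the same route as the paper's: apply the intersection lemma to reduce to the common face $\tau = \sigma_1 \cup \sigma_2$ with $|T(\tau)|-1 \leq D-1$, restrict both codewords there (using both the primal and the tacit dual-sheaf version of Lemma \ref{GeneralizedIntersection}), and invoke Corollary \ref{GeneralizedEvenSupport}. Your explicit handling of the dual-sheaf restriction and the $\ell_j = -1$ edge cases is slightly more careful than the paper's, but the argument is the same.
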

\begin{proof}
    By \ref{Intersection}, either $\sigma_x^\uparrow \cap \sigma_z^\uparrow = \emptyset$, and we are done, or else $\sigma_x^\uparrow \cap \sigma_z^\uparrow = \tau^\uparrow$ for some $\tau \in \Delta\left(x+z+1\right)$. Applying \ref{GeneralizedIntersection} once to the pair $\left(c_x, \tau\right)$ and once to the pair $\left(c_z, \tau\right)$ tells us $\left.c_x\right|_{\tau^\uparrow} \in \mathcal{F}_\tau$ and $\left.c_z\right|_{\tau^\uparrow} \in \overline{\mathcal{F}_\tau}$, respectively. Since $x+z+1 \leq D-1$, Corollary \ref{GeneralizedEvenSupport} tells us that $c_x \cdot c_z = \left.c_x\right|_{\tau^\uparrow} \cdot \left.c_z\right|_{\tau^\uparrow} = 0$.

    We can use the same argument in the more general case. Let $T := T\left(\sigma_1\right) \cup T\left(\sigma_2\right)$, and let the face $\tau := \sigma_1 \cup \sigma_2 \in \Delta_{T}\left(\left|T\right| - 1\right)$ be such that $\sigma_1^\uparrow \cap \sigma_2^\uparrow = \tau^\uparrow$ from \ref{Intersection}. We can similarly apply Corollary \ref{GeneralizedEvenSupport} since $\left|T\right| - 1 \leq D-1 $ by assumption.
\end{proof}

As we see, the intersection of the support of overlapping stabilizers of different $X/Z$ type in our color code is always even, which means the stabilizers commute. Similarly, when $x + z < D-2$ we see that there will be simplices $\tau\in \Delta\left(D-2-x\right)$ and dual codewords $c \in \overline{\mathcal{F}_\tau}$ such that $c$ constitutes the support of a nontrivial $Z$ logical operator we might wish to gauge in order to avoid small distance. Meanwhile, \ref{GeneralizedDisjointUnion} tells us that all of the $Z$ stabilizers can be generated by such $Z$ gauge checks. 

\subsection{Relationship to Pin and Rainbow Codes} \label{RainbowCodeSec}

Pin codes \autocite{PinCodes} are generalizations of traditional color codes that remove the requirement that the simplicial complex $\Delta$ is a triangulation of a manifold; instead, they require an analogue of \ref{Fact2}, which is that the $\left(D-1\right)$-dimensional faces $\sigma$ must have $\left|\sigma^\uparrow\right| = 0 \mod 2$. Using our language from above, they then choose the local codes $\mathcal{F}_\sigma$ to be repetition codes on $\sigma^\uparrow$ and associate both $X$ \emph{and} $Z$ stabilizers to the appropriate $x/z$-level repetition codewords. While our strategy in this case produces the same $X$ checks, we would instead associate $Z$ checks with \emph{all} of the dual-sheaf codewords $\overline{\mathcal{F}_\sigma}$, not just the single repetition codeword. Because of their requirement that $\left(D-1\right)$-dimensional faces $\sigma$ must have $\left|\sigma^\uparrow\right| = 0 \mod 2$, these dual codes contain the primal repetition code $\mathcal{F}_\sigma \subset \overline{\mathcal{F}_\sigma}$, with equality in the case that $\left|\sigma^\uparrow\right| = 2$. Subsequently, the pin code construction can be incorporated into our Tanner code definition if we allow the stabilizers to come from subsets of the corresponding primal and dual local codes that do not generate the full space (no change is needed in the uniform $\left|\sigma^\uparrow\right| = 2$ case). 

However, as noted using different language in \autocite{Rainbow}, \ref{GeneralizedEvenOverlap} tells us that leaving out some of the checks of the appropriate local codes generically results in low-weight logical codewords corresponding to the support of these omitted local codewords. Subsequently, \autocite{Rainbow} generalized pin codes to \emph{rainbow codes} by introducing `rainbow subgraphs' to the set of checks, which we note are identical to the support of dual-sheaf codewords of the appropriate local code $\overline{\mathcal{F}_\sigma}$ and appropriately fix this limitation of pin codes. 

To be more precise, the \emph{generic} rainbow codes have $X$-stabilizers corresponding to the constant sheaf codewords at each $x$-level face and $Z$-stabilizers corresponding to dual constant sheaf codewords at each $z$-level face. The \emph{anti-generic} rainbow codes swap the role of $X$ and $Z$. The more interesting case is the \emph{mixed} class of rainbow codes, which have $X$-stabilizers (and $Z$) corresponding to either $\mathcal{F}_\sigma$ or $\overline{\mathcal{F}_\sigma}$ depending on the simplex $\sigma$, where $\mathcal{F}$ is the Tanner sheaf with local repetition codes. In this case it is generally unclear whether the structure of the $X$ stabilizers can be recovered from the structure of a single sheaf $\widetilde{\mathcal{F}}$ that reproduces the appropriate local codes $\widetilde{\mathcal{F}}_\sigma = \mathcal{F}_\sigma \text{ or }\overline{\mathcal{F}_\sigma}$ simultaneously for all $\sigma \in X(x)$. Therefore, the general case of mixed class rainbow codes seems to represent a departure from our sheaf framework.

We expect our generalization to be able to achieve better parameters than the (anti-)generic rainbow codes by allowing for choices beyond local repetition (and parity) codes. This flexibility is in some ways similar to---but generally distinct from---the relaxation to mixed rainbow codes insofar as we can vary the choice of local code by location or color type. Our approach has the advantage that any choice of local codes yields $X$ and $Z$ stabilizers which necessarily commute, whereas it is more challenging to make appropriate choices for the mixed rainbow codes (not all choices of how the $\mathcal{F}_\sigma \text{ or }\overline{\mathcal{F}_\sigma}$ get mixed among the different $\sigma$ yield commuting checks).

\subsection{Structure of Logical Operators from Sheaf Cohomology and Unfolding} \label{UnfoldingSection}

In this section, we study the logical operators of the Tanner color code $\mathcal{C}_{\mathcal{F}\left(\Delta\right)}\left(x,z\right)$ on an arbitrary $(D+1)$-colorable $D$-dimensional simplicial complex, with a focus on the stabilizer codes with $z = D-2-x$. For simplicity, we will refer to the sheaf as merely $\mathcal{F}$. Our main result will be that the $X$ logical operators of the code can be obtained by restricting the cohomology of $\mathcal{F}$ to a certain color type and reinterpreting this restriction as an assignment to $D$-dimensional simplices. Similarly, the $Z$ logical operators arise from color-restricting the dual sheaf cohomology $\overline{\mathcal{F}}$. The resulting structure of the logical operators provides a partial understanding of a basis where color plays an important role, and this structure will be crucial to understanding the action of the transversal gates we study in the next section. 

Informally, the main idea is as follows: we have shown that any $x$-level codeword and any $\left(z=D-2-x\right)$-level dual codeword necessarily have even overlap, which is what allows us to use these as $X$ and $Z$ stabilizers, respectively. The $X$ logical operators consist of certain collections of $\left(x+1\right)$-level codewords that come from cocycles in the sheaf. While generic collections of such codewords do not have an even overlap with each $Z$ stabilizer, each individual $\left(x+1\right)$-level codeword at a face $\sigma_x$ commutes with all of the $z$-level codewords at a face $\sigma_z$, except possibly for the faces $\sigma_z$ with the unique type $T\left(\sigma_z\right) = T^c\left(\sigma_x\right)$. However, if a \emph{collection} of codewords of color type $T_x$ is identical to a sum of collections of codewords of the other types with cardinality $|T_x|$, then the faces of type $T_x^c$ will no longer cause problems. Such special collections are exactly the $\left(x+1\right)$-cocycles $Z^{x+1}\left(\Delta,\mathcal{F}\right)$, since by definition they are collections of $\left(x+1\right)$-level codes that agree with their neighbors. When we ignore the stabilizers, corresponding to coboundaries $B^{x+1}\left(\Delta,\mathcal{F}\right)$, we are left with the cohomology $H^{x+1}\left(\Delta,\mathcal{F}\right)$. In the Tanner code, there are $\binom{D} {x+1} = \binom{D}{z+1}$ independent ways to choose how to cast a cohomology element in $H^{x+1}\left(\Delta,\mathcal{F}\right)$ (or dual cohomology element in $H^{z+1}\left(\Delta,\overline{\mathcal{F}}\right)$) as a collection of codewords on faces with a type comprised of $x+2$ (or $z+2$) colors. Any cohomology element paired with one of these color choices constitutes a distinct logical operator in the Tanner code. 

To formalize this idea, we will first describe a chain map from three consecutive terms of the sheaf (or the dual sheaf) into the three terms of our CSS Tanner code chain complex. This chain complex illustrates how we can map a sheaf cohomology element $H^{x+1}\left(\Delta,\mathcal{F}\right)$ (or $H^{z+1}\left(\Delta,\overline{\mathcal{F}}\right)$) into $\binom{D }{ x+1}$ linearly independent $X$ (or $Z$) logical operators. We will rediscover a generalization of the shrunk lattices discussed in \autocite{BombinColor,Unfolding,PinCodes} as an intermediate step of this chain map. To properly establish the isomorphism between the sheaf cohomology and the code cohomology, we will generalize the proof of \autocite{Unfolding} and construct a constant-depth Clifford circuit that maps the stabilizers of a collection of `shrunk' codes to the stabilizers of our Tanner code. Since many of the proofs are long and technical, we have organized them separately into different sections of the appendix. In this section, we will provide the setup, connect the various lemmas together, and conclude by highlighting the structure that we will use in the next section to establish the existence of transversal gates (paired with the extra property of the local codes we discuss in that section \ref{TannerCodeTransversal}).

To start, we define the operation of projecting a cochain onto $D$-simplices. 
\begin{definition}
    For any cochain $f \in C^\ell\left(\Delta, \mathcal{F} \right)$ we define \emph{the projection of f}, denoted $f^\uparrow \in C^{D}\left(\Delta, \mathcal{F} \right)$, by its value on each top-dimensional face $\tau \in \Delta\left(D\right)$:
    \begin{align}
      \forall \tau \in \Delta\left(D\right), \quad f^\uparrow\left(\tau\right) := \sum_{\sigma \in \tau\left(\ell\right)}\left.f\left(\sigma\right)\right|_\tau
    \end{align}
    We let $\pi_\uparrow: C\left(\Delta_,\mathcal{F}\right) \rightarrow C^D\left(\Delta,\mathcal{F}\right)$ denote the linear map that sends a cochain to its projection 
    \begin{align}
    \pi_\uparrow\left(f\right):= f^\uparrow
    \end{align}
    Note that this mirrors the notation $\sigma^\uparrow$ we use for simplices; if we identify a simplex $\sigma$ with the `indicator' cochain $\mathds{1}_\sigma$ that is the all-ones codeword on the simplex $\sigma$ and zero elsewhere, then $\sigma^\uparrow = \text{supp}\left(\mathds{1}_\sigma^\uparrow\right)$.

    We let $\overline{\pi}_\uparrow: C\left(\Delta_,\overline{\mathcal{F}}\right) \rightarrow C^D\left(\Delta,\mathcal{F}\right)$ denote the similar map whose domain is the dual sheaf. 
\end{definition} 
Soon, and throughout the paper, we will not be careful when we want to refer to $\pi_\uparrow$ with domain restricted to a particular level $C^j\left(\Delta, \mathcal{F}\right)$ of the sheaf; for example when we write the transpose $\overline{\pi}_\uparrow^\top$ but want the image space of this map to lay only in the level $C^j\left(\Delta, \overline{\mathcal{F}}\right)$---hopefully this simplification of notation will be clear from context.

With this notation in hand, we can recast the stabilizer groups of the Tanner color code $\mathcal{C}_\mathcal{F}\left(x,z\right)$ as 
\begin{align}
    S_X &= \pi_\uparrow C^x\left(\Delta, \mathcal{F}\right) \\
    S_Z &= \pi_\uparrow C^z\left(\Delta, \overline{\mathcal{F}}\right) 
\end{align}
Since $C^D\left(\Delta, \mathcal{F}\right) = \mathds{F}^{\Delta\left(D\right)}$ naturally has a basis, we can use the associated isomorphism with the chain space $C^D\left( \Delta, \mathds{F} \right) \cong C_D\left(\Delta, \mathds{F}\right)$ to define the map $\overline{\pi}_\uparrow^\top : C^D\left(\Delta, \mathcal{F}\right) \rightarrow C_z\left(\Delta, \overline{\mathcal{F}}\right)$ (using notation that hides the isomorphism $C_D \cong C^D$ and the restriction of the $\pi_\uparrow$ domain to $C_z$). We can then define the cochain complex that describes the code $\mathcal{C}_\mathcal{F}$ as 
\begin{align}
C^x\left(\Delta, \mathcal{F}\right)  \xrightarrow{\pi_\uparrow}  C^D\left(\Delta, \mathcal{F}\right)\xrightarrow{\overline{\pi}_\uparrow^\top}C_z\left(\Delta, \overline{\mathcal{F}}\right)
\end{align}
We can enrich this complex with a choice of basis for the spaces $C^x\left(\Delta, \mathcal{F}\right)$ and $C_z\left(\Delta, \overline{\mathcal{F}}\right)$ (i.e. pick a basis for the local (dual) codes at levels $x$ and $z$) to freely switch between cohomology and homology, which we proceed to study. 

We will see that the cohomology of the Tanner code chain complex arises from restricting cohomology representatives of the sheaf $\mathcal{F}$ itself to different color types, and, similarly, homology of the Tanner code chain complex is paired with cohomology of the dual sheaf $\overline{\mathcal{F}}$. For this, we need the following definition
\begin{definition}
    For any cochain $f \in C^\ell\left(\Delta, \mathcal{F} \right)$ and color type $T \subset \mathds{Z}_{D+1}$ of $|T|\geq \ell+1$ colors, we define \emph{the restriction of f to type $T$}, denoted $\left.f\right|_T \in C^{\ell}\left(\Delta, \mathcal{F} \right)$, to be the $\ell$-cochain such that
    \begin{align}
      \forall \sigma \in \Delta\left(\ell\right), \quad  \left.f\right|_T\left(\sigma\right)=
      \begin{cases}
      f\left(\sigma\right) & T\left(\sigma\right) \subset T\\
      0         & \text{otherwise}
      \end{cases}
    \end{align}
    We let $\text{res}_T: C\left(\Delta_,\mathcal{F}\right) \rightarrow C\left(\Delta_T,\mathcal{F}\right)$ denote the linear map that sends a cochain to its $T$-color restriction $\text{res}_T\left(f\right):= \left.f\right|_T$. 
\end{definition}

For any color type $T$ we define the $T$-restricted cochain complex $C\left(\Delta_T, \mathcal{F}\right)$ by simply ignoring the spaces $\mathcal{F}_\sigma$ for $\sigma$ with type $T\left(\sigma\right) \not\subset T$. We define the corresponding coboundary operator $\delta_T$ by
\begin{align}
    \delta_T^{\ell} := \text{res}_T \circ \delta^\ell \circ \iota
\end{align}
where $\iota: C\left(\Delta_T,\mathcal{F}\right) \rightarrow C\left(\Delta, \mathcal{F}\right)$ is the obvious inclusion map.

This leads us to the following definition 
\begin{definition}
The \emph{$T^c$-shrunk cochain complex} for the code $\mathcal{C}_\mathcal{F}\left(x,z\right)$ and any color type $T \subset \mathds{Z}_{D+1}$ of $|T|=x+2$ colors is given by the following cochain complex
\begin{align}
    C^x\left(\Delta_T,\mathcal{F}\right) \xrightarrow{\delta^x_T} C^{x+1}\left(\Delta_T,\mathcal{F}\right)\xrightarrow{\text{res}_{T^c} \circ\overline{\pi}_\uparrow^\top \circ \pi_\uparrow \circ \iota} C_{D-2-x}
\left(\Delta_{T^c}, \overline{\mathcal{F}}\right)
\end{align}
\end{definition}
\begin{proof}
To show that this is a properly defined cochain complex, it suffices to establish that 
\begin{align}
   \pi_\uparrow\left(\iota \delta_T^x c\right)  \cdot \overline{\pi}_\uparrow \left(\iota\widetilde{c}\right) = 0
\end{align}
for any $c \in \mathcal{F}_\sigma, \sigma \in \Delta_T\left(x\right)$ and $\widetilde{c} \in \overline{\mathcal{F}}_{\widetilde{\sigma}}, \widetilde{\sigma} \in \Delta_{T^c}\left(D-2-x\right)$ (we use notation that ignores the distinction between cochains $f$ with a single element $\sigma$ in their support and the codeword $f\left(\sigma\right)$ they evaluate to). 

The idea is that for any type $T\left(\sigma\right)\subset T:  |T\left(\sigma\right)| = |T|-1$ lacking a single color from $T$, the simplex $\sigma$ and any simplex $\widetilde{\sigma}$ of type $T\left(\widetilde{\sigma}\right) = T^c$ necessarily appear together in a unique $\left(D-1\right)$-simplex $\tau \in \Delta_{T\left(\sigma\right) \cup T\left(\widetilde{\sigma}\right) }\left(D-1\right)$ (if they appear in the same simplex at all). By \ref{GeneralizedEvenOverlap}, a codeword on $\sigma$ and a dual codeword on $\widetilde{\sigma}$ intersect at $\tau^\uparrow$, where they are orthogonal $\left(\pi_\uparrow \iota c \right) \cdot \left( \overline{\pi}_\uparrow \iota \widetilde{c} \right) = 0$. 

We proceed to show that $\pi_\uparrow \iota c =\pi_\uparrow\left(\iota \delta_T^x c\right)$ to complete the proof. The $T$-restricted coboundary $\delta_T^x c$ is supported solely on faces of type $T$; adding any color $i \notin T\setminus T\left(\sigma\right)$ to $T\left(\sigma\right)$ other than the single color of $T\left(\sigma\right)$ missing from $T$ causes the type to fall outside $T$ and so get discarded by the map $\text{res}_T$ in the definition of $\delta_T$. That means we are left with the various restrictions $\mathcal{F}_{\sigma\rightarrow\xi}$ of $c$ to the set of faces $\xi$ of type $T\left(\xi\right) = T$, which by \ref{GeneralizedDisjointUnion} just decomposes $c$ into a concatenation of smaller codewords. Taking the projection $\pi_\uparrow$ erases the distinction between $c$ and this particular decomposition of $c$ as a sum of functions with disjoint support, so that indeed $\pi_\uparrow \iota c  =  \pi_\uparrow\left(\iota \delta_T^x c\right)$.
\end{proof}

This complex generalizes the \emph{shrunk lattices} \autocite{BombinColor,Unfolding,PinCodes} that are behind the idea of `unfolding' color codes. In the notation we use, the space $C_{D-2-x}\left(\Delta_{T^c}, \overline{\mathcal{F}}\right)$ is analogous to the $T^c$-type faces that become shrunk to points in the shrunk lattice. The other two terms $C^x\left(\Delta_T,\mathcal{F}\right)$ and $ C^{x+1}\left(\Delta_T,\mathcal{F}\right)$ generalize the triangles and edges, respectively, of the $2D$ shrunk lattice comprised of the color types in $T$ that remain after the shrinking (see the middle figures of \ref{fig:Unfolding} with colored lines representing the shrunk lattice). 

Finally, we will use the $T^c$-shrunk cochain complex as an intermediary to connect the sheaf cohomology with the cohomology of the Tanner code complex. In the appendix \ref{ApdxChainMap} we show the following
\begin{lemma}\label{BodyChainMap}
Pick any $0\leq x \leq D-2$ and set $z = D-2-x$. For any color type $T \subset \mathds{Z}_{D+1}$ of $|T|=x+2$ colors, the following diagram constitutes a chain map from the sheaf complex of $\mathcal{F}$ to the Tanner code complex of $\mathcal{C}_{\mathcal{F}}\left(x,z\right)$ via the $T^c$-shrunk complex
\[
\begin{tikzcd}[column sep=large, row sep = large] \label{BodyDiagram}
    C^x\left(\Delta, \mathcal{F}\right)  \arrow{r}{\pi_\uparrow} 
    & C^D\left(\Delta, \mathcal{F}\right) \arrow{r}{\overline{\pi}_\uparrow^\top} 
    & C_z\left(\Delta, \overline{\mathcal{F}}\right)\\
    C^x\left(\Delta_T,\mathcal{F}\right)  \arrow{r}{\delta^x_T} \arrow{u}{\iota} 
    & C^{x+1}\left(\Delta_T,\mathcal{F}\right) \arrow{r}{\text{res}_{T^c} \circ\overline{\pi}_\uparrow^\top \circ \pi_\uparrow \circ \iota} \arrow{u}{\pi_\uparrow \circ \iota} 
    & C_{z}\left(\Delta_{T^c}, \overline{\mathcal{F}}\right) \arrow[swap]{u}{\iota}  \\
    C^x\left(\Delta, \mathcal{F}\right)  \arrow{r}{\delta^x} \arrow{u}{\text{res}_T} 
    &  C^{x+1} \left(\Delta, \mathcal{F}\right) \arrow{r}{\delta^{x+1}} \arrow{u}{\text{res}_T} 
    & C^{x+2} \left(\Delta, \mathcal{F}\right)\arrow[swap]{u}{\zeta}
\end{tikzcd}
\]
where $\iota$ is the inclusion map and $\zeta$ will be defined in the proof as necessary.
\end{lemma}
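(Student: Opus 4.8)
The plan is to check that all four interior squares of the diagram commute; since the diagram is two stacked three-term complexes, this exhibits a chain map from the bottom (sheaf) complex into the middle ($T^c$-shrunk) complex and one from the middle into the top (Tanner code) complex, whose composite is the asserted chain map $\mathcal{F}\to\mathcal{C}_{\mathcal{F}}(x,z)$. The two squares on the left are, up to unwinding notation, already essentially verified in the paragraph establishing that the $T^c$-shrunk complex is well defined. The lower-left square, $\mathrm{res}_T\circ\delta^x=\delta^x_T\circ\mathrm{res}_T$, holds because $(\delta^x f)(\sigma)$ at a type-$T$ face $\sigma$ (the only $(x+1)$-faces surviving $\mathrm{res}_T$, as $|T|=x+2$) reads only the codewords of $f$ on the $x$-subfaces of $\sigma$, all of type contained in $T$, so the extra $\mathrm{res}_T$ on the input changes nothing. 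The upper-left square is precisely the identity $\pi_\uparrow(\iota c)=\pi_\uparrow(\iota\,\delta^x_T c)$, which follows from Lemma \ref{GeneralizedDisjointUnion}: $\delta^x_T$ merely redistributes a codeword on a type-$(\subset T)$ $x$-face $\sigma$ along the partition $\sigma^\uparrow=\bigsqcup_{\xi\in\sigma^T}\xi^\uparrow$ into pieces on the type-$T$ faces above it, a rewriting that $\pi_\uparrow$ (which sums over all subfaces anyway) does not see.

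For the upper-right square one must show that $\overline{\pi}_\uparrow^\top\circ\pi_\uparrow\circ\iota$, applied to a type-$T$-supported cochain $g\in C^{x+1}(\Delta_T,\mathcal{F})$, already lands in the type-$T^c$ part of $C_z(\Delta,\overline{\mathcal{F}})$, so the $\mathrm{res}_{T^c}$ in the shrunk differential is vacuous. Here the value of $\pi_\uparrow\iota g$ on a $D$-face $\tau$ is $\left.g(\tau_T)\right|_\tau$, and pairing $\overline{\pi}_\uparrow^\top(\pi_\uparrow\iota g)$ against a basis dual codeword $\widetilde c\in\overline{\mathcal{F}}_{\widetilde\sigma}$ on a $z$-simplex $\widetilde\sigma$ of type $\widetilde T$ gives $\sum_{\tau\in\widetilde\sigma^\uparrow}\left.g(\tau_T)\right|_\tau\left.\widetilde c\right|_\tau$. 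Regrouping by $\sigma=\tau_T$ and using Lemma \ref{GeneralizedIntersection}, each inner sum equals $\left.g(\sigma)\right|_{\xi^\uparrow}\cdot\left.\widetilde c\right|_{\xi^\uparrow}$ for $\xi=\sigma\cup\widetilde\sigma$ of type $T\cup\widetilde T$. If $\widetilde T\neq T^c$ then, since $|T|+|\widetilde T|=D+1$ and the two types are not complementary, they are not disjoint, so $|T\cup\widetilde T|\le D$ and $\xi\in\Delta(\le D-1)$; Corollary \ref{GeneralizedEvenOverlap} (equivalently \ref{GeneralizedEvenSupport}) then makes the term vanish. Hence the pairing is zero unless $\widetilde T=T^c$.

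The crux is the lower-right square, which is where $\zeta$ gets defined. The naive guess --- read off the type-$T$ codeword, as in the shrunk differential --- fails: a type-$T$ face together with a type-$T^c$ face covers all $D+1$ colors, the ``too large'' regime in which Corollary \ref{GeneralizedEvenOverlap} gives nothing. Instead, fix any $j_0\in T^c$ and set $\zeta:=\mathrm{res}_{T^c}\circ\overline{\pi}_\uparrow^\top\circ\pi_\uparrow\circ\iota\circ\mathrm{res}_{T\cup\{j_0\}}$, where $\mathrm{res}_{T\cup\{j_0\}}$ keeps only the components of an $(x+2)$-cochain on the $(x+2)$-simplices of type $T\cup\{j_0\}$. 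To verify $\zeta\circ\delta^{x+1}=\mathrm{res}_{T^c}\circ\overline{\pi}_\uparrow^\top\circ\pi_\uparrow\circ\iota\circ\mathrm{res}_T$, pair both sides against a basis dual codeword $\widetilde c$ on a type-$T^c$ simplex $\widetilde\sigma$. On the left, the value of $\pi_\uparrow\iota\,\mathrm{res}_{T\cup\{j_0\}}(\delta^{x+1}f)$ on $\tau$ is the $\tau$-coordinate of $(\delta^{x+1}f)(\tau_{T\cup\{j_0\}})$, which by the definition of $\delta^{x+1}$ equals $\left.f(\tau_T)\right|_\tau+\sum_{i\in T}\left.f\bigl(\tau_{(T\setminus\{i\})\cup\{j_0\}}\bigr)\right|_\tau$, because the $(x+1)$-faces of the $(x+2)$-simplex $\tau_{T\cup\{j_0\}}$ are $\tau_T$ together with the $\tau_{(T\setminus\{i\})\cup\{j_0\}}$ for $i\in T$. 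Summing the $f(\tau_T)$ contribution over $\tau\in\widetilde\sigma^\uparrow$ against $\widetilde c$ reproduces the right-hand side exactly. For each remaining term, regrouping the sum over $\tau$ by $\eta:=\tau_{(T\setminus\{i\})\cup\{j_0\}}$ and applying Lemma \ref{GeneralizedIntersection} leaves $\left.f(\eta)\right|_{\xi^\uparrow}\cdot\left.\widetilde c\right|_{\xi^\uparrow}$ with $\xi=\eta\cup\widetilde\sigma$ of type $\bigl((T\setminus\{i\})\cup\{j_0\}\bigr)\cup T^c=\mathds{Z}_{D+1}\setminus\{i\}$, of size exactly $D$; so $\xi\in\Delta(D-1)$ and Fact \ref{GeneralizedFact2} kills the term. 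The two sides therefore agree, and all four squares commute, which is the claim.

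I expect the lower-right square to be the real obstacle. The insight needed is that $\delta^{x+1}$ ``spreads'' a type-$T$ contribution onto faces of type $(T\setminus\{i\})\cup\{j_0\}$, whose union with $T^c$ drops from $D+1$ down to exactly $D$ colors and so re-enters the range where a primal and a dual local codeword are forced orthogonal; picking the right $\zeta$ (in particular, choosing the auxiliary color $j_0\in T^c$) is the part that requires foresight, after which the verification is careful type-bookkeeping. The left and upper-right squares, by contrast, are direct consequences of Lemmas \ref{GeneralizedIntersection}, \ref{GeneralizedDisjointUnion} and Corollary \ref{GeneralizedEvenOverlap}.
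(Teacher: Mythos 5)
Your proposal is correct, and three of the four squares (the two on the left and the upper-right) are handled exactly as in the paper: the lower-left square by noting that the coboundary at a type-$T$ face only reads subfaces of type contained in $T$, the upper-left by the decomposition of Lemma \ref{GeneralizedDisjointUnion}, and the upper-right by the observation that a non-complementary type must share a color with $T$ so that Corollary \ref{GeneralizedEvenOverlap} applies. The genuine divergence is in the lower-right square. The paper does not write down an explicit $\zeta$: it defines $\psi := \text{res}_{T^c} \circ \overline{\pi}_\uparrow^\top \circ \pi_\uparrow \circ \iota \circ \text{res}_T$, proves the kernel containment $\ker \delta^{x+1} \subset \ker \psi$ (by expanding $0 = \left.\left(\delta^{x+1}f\right)\right|_S^\uparrow = \sum_{\widetilde T \subset S} \left.f\right|_{\widetilde T}^\uparrow$ for $S = T \sqcup \{j\}$ and killing the non-$T$ terms with \ref{GeneralizedEvenOverlap}), and then declares $\zeta$ to be any linear map with $\zeta \circ \delta^{x+1} = \psi$ on $\Img\,\delta^{x+1}$ and zero on a complement. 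You instead exhibit the concrete map $\zeta = \text{res}_{T^c} \circ \overline{\pi}_\uparrow^\top \circ \pi_\uparrow \circ \iota \circ \text{res}_{T\cup\{j_0\}}$ and verify the square directly; the identity $\left.(\delta^{x+1}f)(\tau_{T\cup\{j_0\}})\right|_\tau = \left.f(\tau_T)\right|_\tau + \sum_{i\in T}\left.f(\tau_{(T\setminus\{i\})\cup\{j_0\}})\right|_\tau$ you use is the same combinatorial fact the paper exploits, and the vanishing of the cross terms against type-$T^c$ dual codewords (via $|((T\setminus\{i\})\cup\{j_0\})\cup T^c| = D$) is the same orthogonality mechanism. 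Your version buys an explicit, canonical-up-to-$j_0$ formula for $\zeta$ and subsumes the paper's kernel containment as a corollary (a cocycle has $\zeta(\delta^{x+1}f) = \zeta(0) = 0$, hence $\psi(f)=0$); the paper's version avoids any choice of auxiliary color at the cost of a non-constructive $\zeta$. Since the lemma only asserts the existence of some $\zeta$ making the diagram commute, both are complete proofs.
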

This chain map is illustrated for two different choices of color type in figure \ref{fig:Unfolding} for the standard color code on a triangulated 2D torus. 

\begin{figure}
    \centering
    \includegraphics[width=1\linewidth]{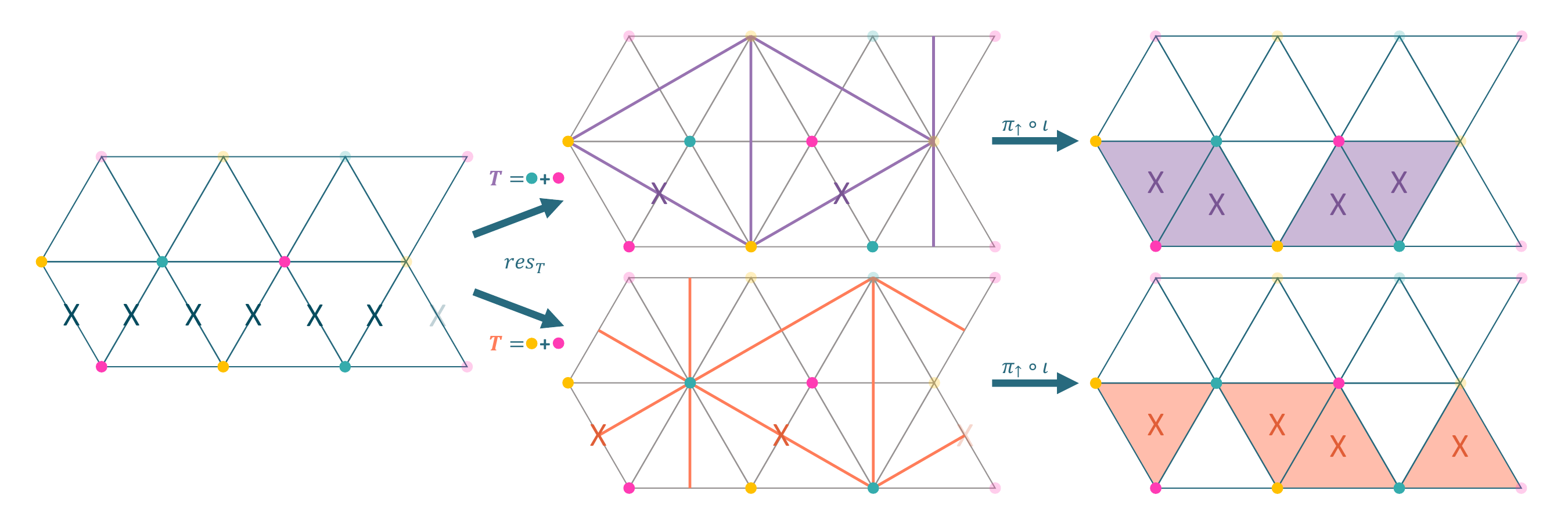}
    \caption{Illustration of the chain map \ref{BodyDiagram} for two different color choices (purple and orange) for the standard color code on a triangulated 2D torus. On the left we have decorated the support of an $X$ logical with the physical $X$ operators acting on the qubits on the given edges. We then choose one of two color types $T$ that we restrict to in order to get the $T^c$-shrunk lattices with colored lines in the middle; note that the physical qubits remain on edges of either the original or shrunk lattice, but each shrunk lattice contains only a subset of the qubits that the logical $X$ was originally supported on. Finally we apply the inclusion and projection $\pi_\uparrow \circ \iota$ to get a logical $X$ operator of the corresponding color $T$ on the Tanner code on the right, which has qubits on triangles and the support of the logical shaded in the appropriate color. }
    \label{fig:Unfolding}
\end{figure}

Subsequently, we show in appendix \ref{ApdxShrunkSheafIso} that the middle map $\text{res}_T$ between the bottom two rows of this chain map \ref{BodyDiagram} induces an isomorphism on the middle-column cohomology of these rows. More formally, for any color type $T$ of $|T|=x+2$ colors we get
\begin{align}
    H^1_\text{$T^c$-shrunk} \cong H^{x+1}\left(\Delta, \mathcal{F}\right)
\end{align}
whenever the sheaf is locally acyclic (see \ref{ShrunkSheafIso}).

Finally, we have to establish a similar connection between the shrunk complexes (middle row of \ref{BodyDiagram}) and the Tanner code complex (top row of \ref{BodyDiagram}). Rather than the single $T^c$-shrunk complex in \ref{BodyDiagram}, we require several copies of the shrunk complexes---one for each of the color types $T$ of $|T|=x+2$ colors that include the color $0 \in T$---to show 
\begin{theorem}\label{BodyTannerShrunkIso}
For any cell-wise flasque locally acyclic sheaf the following chain map induces an isomorphism 
\begin{align}    
\bigoplus_{\substack{T \subset \mathds{Z}_{D+1}\\ |T| = x+2 \\ 0 \in T}}H^1_{T^c-\text{shrunk}} \cong H^1\left(
\mathcal{C}_\mathcal{F}(x,z)\right
)\end{align}
    \[
\begin{tikzcd}
    C^x\left(\Delta, \mathcal{F}\right)  \arrow{r}{\pi_\uparrow} 
    & C^D\left(\Delta, \mathcal{F}\right) \arrow{r}{\overline{\pi}_\uparrow^\top} 
    & C_z\left(\Delta, \overline{\mathcal{F}}\right)\\
    \prod_{\substack{T \subset \mathds{Z}_{D+1}\\ |T| = x+2 \\ 0 \in T} } C^x\left(\Delta_T,\mathcal{F}\right)  \arrow{r}{\delta^x_T} \arrow{u}{\iota} 
    &  \prod_{\substack{T \subset \mathds{Z}_{D+1}\\ |T| = x+2 \\ 0 \in T} } C^{x+1}\left(\Delta_T,\mathcal{F}\right) \arrow{r}{\text{res}_{T^c} \circ\overline{\pi}_\uparrow^\top \circ \pi_\uparrow \circ \iota} \arrow{u}{\pi_\uparrow \circ \iota} 
    & \prod_{\substack{T \subset \mathds{Z}_{D+1}\\ |T| = x+2 \\ 0 \in T} }C_{z}\left(\Delta_{T^c}, \overline{\mathcal{F}}\right) \arrow[swap]{u}{\iota} 
\end{tikzcd}
\]
where each map below the first row is understood to include a product over all of the relevant types $T$. Furthermore, the induced isomorphism of cohomology (i.e. transformation between code spaces) can be realized by a constant-depth Clifford unitary (with the addition of necessary auxiliary qubits). 
\end{theorem}

To prove this lemma, we start in \ref{ApdxCounting} by generalizing a lengthy counting argument from \autocite{BombinColor} that establishes that the two cohomology groups have the same dimension. Then, by combining all of the results from the appendix we generalize the unfolding argument of \autocite{Unfolding} and adapt it to be compatible with our desired chain map to obtain the claimed constant-depth unitary. 

The consequence of these results is the following theorem that describes how to understand a logical basis of the Tanner code in terms of a logical basis of the sheaf code projected to an appropriate subset of color types.

\begin{corollary} \label{Structure}
Consider the code $\mathcal{C}_\mathcal{F}\left(x,z\right)$ for any $0\leq x \leq D-2$, $z=D-2-x$ built from a cell-wise flasque locally acyclic sheaf $\mathcal{F}(\Delta)$ on a $(D+1)$-colorable $D$-dimensional simplicial complex $\Delta$. 

For each cohomology equivalence class $[f] \in H^{x+1}\left(\Delta,\mathcal{F}\right)$ choose an arbitrary representative, and label the collection of these representatives $\{f_j\}_{1\leq j \leq \dim H^{x+1}}$. For any color type $T$ of $|T|=x+2$ colors, define the set 
\begin{align}
\mathcal{L}_T := \left\{ \pi_\uparrow \circ \iota \circ \text{res}_T\left(f_j\right) \middle \bracevert 1\leq j \leq \dim H^{x+1} \right\} \subset Z^1\left(\mathcal{C}_\mathcal{F}\left(x,z\right)\right)
\end{align}
Furthermore, let 
\begin{align}
\left[\mathcal{L}_T\right] := \left\{ L+B^1\left(\mathcal{C}_\mathcal{F}\left(x,z\right)\right) \middle \bracevert L \in \mathcal{L}_T \right\} \subset H^1\left(\mathcal{C}_\mathcal{F}\left(x,z\right)\right)
\end{align}
be shorthand for the set of equivalence classes of each element of $\mathcal{L}_T$. Then the following set 
\begin{align}
\mathcal{L} := \bigsqcup_{\substack{T \subset \mathds{Z}_{D+1}\\ |T| = x+2 \\ 0 \in T}} \left[\mathcal{L}_T \right]
\end{align}
is an independent basis for the $X$ logical operators of our code  $\mathcal{C}_\mathcal{F}\left(x,z\right)$
\begin{align}
\left\langle \mathcal{L} \right\rangle &= H^1\left(\mathcal{C}_\mathcal{F}\left(x,z\right)\right)
\end{align}

We can similarly produce an independent basis of the $Z$ logical operators corresponding to $H_1\left(\mathcal{C}_\mathcal{F}\left(x,z\right)\right)$ by choosing arbitrary representatives $\{\overline{f}_j\}_{1\leq j \leq \dim \overline{H}^{z+1}}$ of each dual sheaf cohomology class $[\overline{f}_j] \in H^{z+1}\left(\Delta,\overline{\mathcal{F}} \right)$ and, for any color type $T$ of $|T|=z+2$ colors, defining the sets
\begin{align}
\overline{\mathcal{L}}_T &:= \left\{ \pi_\uparrow \circ \iota \circ \text{res}_T\left(\overline{f}_j\right) \middle \bracevert 1\leq j \leq \dim \overline{H}^{z+1} \right\} \subset Z_1\left(\mathcal{C}_\mathcal{F}\left(x,z\right)\right)\\
\left[\overline{\mathcal{L}}_T\right] &:= \left\{ L+B_1\left(\mathcal{C}_\mathcal{F}\left(x,z\right)\right) \middle \bracevert L \in \overline{\mathcal{L}}_T \right\} \subset H_1\left(\mathcal{C}_\mathcal{F}\left(x,z\right)\right)
\end{align}
so that the following set $\overline{\mathcal{L}}$ is an independent basis of the $Z$ logical operators $H_1\left(\mathcal{C}_\mathcal{F}\left(x,z\right)\right)$
\begin{align}
\overline{\mathcal{L}} := \bigsqcup_{\substack{T \subset \mathds{Z}_{D+1}\\ |T| = z+2 \\ 0 \in T}} \left[\overline{\mathcal{L}}_T \right]
\end{align}
\end{corollary}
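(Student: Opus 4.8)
The plan is to derive the corollary by composing two isomorphisms already established in the appendix. Fix $0 \le x \le D-2$, set $z = D-2-x$, and abbreviate $\mathcal{C} := \mathcal{C}_\mathcal{F}(x,z)$. The first isomorphism is \ref{ShrunkSheafIso}, which (using that $\mathcal{F}$ is locally acyclic) identifies $H^{x+1}\left(\Delta, \mathcal{F}\right)$ with the middle cohomology $H^1_{T^c-\text{shrunk}}$ of the $T^c$-shrunk complex via the map $\text{res}_T$, for each color type $T$ with $|T| = x+2$. The second is Theorem \ref{BodyTannerShrunkIso}, which (using that $\mathcal{F}$ is flasque and locally acyclic) identifies $\bigoplus_T H^1_{T^c-\text{shrunk}}$, the sum over the types $T$ with $|T| = x+2$ and $0 \in T$, with $H^1(\mathcal{C})$ via the map $\pi_\uparrow \circ \iota$. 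The crucial bookkeeping observation is that the vertical composite $\pi_\uparrow \circ \iota \circ \text{res}_T$ appearing in the left two columns of the chain map of Lemma \ref{BodyChainMap} is exactly the cochain-level recipe used in the corollary to build $\mathcal{L}_T$ from a representative $f_j$. Because Lemma \ref{BodyChainMap} is a genuine chain map, it carries a cocycle of $C^{x+1}\left(\Delta, \mathcal{F}\right)$ to a cocycle of the Tanner code complex (so indeed $\mathcal{L}_T \subset Z^1(\mathcal{C})$), and it sends a coboundary $\delta^x g$ to $\pi_\uparrow\bigl(\iota\,\text{res}_T g\bigr) \in \pi_\uparrow C^x\left(\Delta, \mathcal{F}\right) = S_X = B^1(\mathcal{C})$ (using commutativity of the relevant squares). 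Hence $[f] \mapsto [\pi_\uparrow \circ \iota \circ \text{res}_T(f)]$ is a well-defined linear map $H^{x+1}\left(\Delta, \mathcal{F}\right) \to H^1(\mathcal{C})$, independent of the chosen representatives modulo $B^{x+1}\left(\Delta, \mathcal{F}\right)$, and under the identifications above it equals the restriction to the $T$-summand of the composite $H^{x+1}\left(\Delta, \mathcal{F}\right) \xrightarrow{\text{res}_T} H^1_{T^c-\text{shrunk}} \xrightarrow{\pi_\uparrow \circ \iota} H^1(\mathcal{C})$.

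Next, assemble over color types. Taking the direct sum, over all $T$ with $|T| = x+2$ and $0 \in T$, of the isomorphisms of \ref{ShrunkSheafIso} and then postcomposing with the isomorphism of Theorem \ref{BodyTannerShrunkIso} produces an isomorphism $\Phi$ from $\bigoplus_T H^{x+1}\left(\Delta, \mathcal{F}\right)$ onto $H^1(\mathcal{C})$ that acts on the $T$-summand by $[f] \mapsto [\pi_\uparrow \circ \iota \circ \text{res}_T(f)]$. Since $\{[f_j]\}_{1 \le j \le \dim H^{x+1}}$ is a basis of $H^{x+1}\left(\Delta, \mathcal{F}\right)$, its $\binom{D}{x+1}$ copies, one in each of the summands indexed by the admissible types $T$, form a basis of the domain of $\Phi$; applying the isomorphism, the image $\bigsqcup_T [\mathcal{L}_T] = \mathcal{L}$ is a basis of $H^1(\mathcal{C})$, which is the assertion. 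In particular the union is disjoint and the elements of $\mathcal{L}$ are $\mathds{F}$-linearly independent across distinct color types---this last point uses precisely that the decomposition in Theorem \ref{BodyTannerShrunkIso} is a \emph{direct} sum, so the images of distinct summands meet only in $0$.

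The claim for the $Z$ logical operators is the mirror image. The relation $z = D-2-x$ is symmetric in $x$ and $z$, and the $Z$ logicals $H_1\left(\mathcal{C}_\mathcal{F}(x,z)\right)$ coincide with the $X$ logicals of the code obtained by exchanging the $X$- and $Z$-stabilizer groups, i.e. the Tanner color code built from the dual sheaf $\overline{\mathcal{F}}$ with $x$ and $z$ swapped. Running the argument of the previous two paragraphs on this code---invoking the evident dual versions of Lemma \ref{BodyChainMap}, \ref{ShrunkSheafIso}, and Theorem \ref{BodyTannerShrunkIso}, which are phrased symmetrically in a sheaf and its dual---gives the stated basis $\overline{\mathcal{L}} = \bigsqcup_{|T| = z+2,\, 0 \in T} [\overline{\mathcal{L}}_T]$ of $H_1\left(\mathcal{C}_\mathcal{F}(x,z)\right)$. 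The only extra input needed is that $\overline{\mathcal{F}}$ is again flasque and locally acyclic, which follows from the extendability characterization of flasqueness and the duality between $\mathcal{F}$ and $\overline{\mathcal{F}}$ recorded in Section \ref{Sheaves}.

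Essentially all of the substance---the dimension count, the computation of $H^1_{T^c-\text{shrunk}}$, and the constant-depth-unitary form of the unfolding---lives in the appendix results being cited, so the corollary itself is an assembly step. The one point inside it that genuinely needs care is verifying that the concrete cochain-level recipe $\pi_\uparrow \circ \iota \circ \text{res}_T$ both descends to cohomology (independence of the chosen $f_j$) and coincides with the composite of the two abstract isomorphisms without correction terms; this is the short diagram chase against Lemma \ref{BodyChainMap} sketched above. A secondary point is that independence across color types must be extracted from the directness of the sum in Theorem \ref{BodyTannerShrunkIso}, and, for the $Z$ statement, that one must confirm the dual sheaf still satisfies the hypotheses of the cited lemmas.
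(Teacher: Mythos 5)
Your proposal is correct and follows essentially the same route as the paper, whose proof is exactly the chaining of \ref{ShrunkSheafIso} with Theorem \ref{BodyTannerShrunkIso} for the $X$ basis and the swap of $x \leftrightarrow z$ together with primal/dual sheaf exchange for the $Z$ basis. Your additional diagram-chase bookkeeping and the remark that the dual sheaf must still satisfy the flasque and locally acyclic hypotheses are points the paper leaves implicit, but they do not change the argument.
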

\begin{proof}
    The result for the $X$ logical basis follows immediately from chaining together \ref{ShrunkSheafIso} and \ref{BodyTannerShrunkIso}. The result for the $Z$ logical basis follows from swapping $x$ for $z$ and exchanging the roles of the primal and dual sheaf. 
\end{proof}

\subsection{Transversal Gates} \label{TannerCodeTransversal}

In this section, we will establish our main result that transversal application of certain diagonal $D$-level Clifford gates on Tanner color codes satisfying an appropriate local condition preserves the code space, and furthermore enacts logical gates at the same level of the Clifford hierarchy whenever there exists any set of $D$ logical $X$ operators with odd overlap. We will focus on the case $x=0, z = D-2$, where $X$ stabilizers are restricted to vertices; similar arguments can be made for other choices such as the subsystem codes, though for larger $x>0$ the accessible level of the Clifford hierarchy is correspondingly lower. Specifically, we will show 
\begin{theorem}[Informal]\label{thm:InformalTransversal}
    When a sheaf $\mathcal{F}$ has defining $\left(D-1\right)$-level codes that are $D$-even (see \ref{MultiEvenSpace} or below), the code $\mathcal{C}_\mathcal{F}\left(0,D-2\right)$ satisfies
\begin{enumerate}
    \item transversal $C^{D-1}Z$ applied across $D$ code blocks enacts logical $C^{D-1}Z$ on all logical qubits whose logical $X$ operators have odd overlap. This still holds when the $\left(D-1\right)$-level codes are merely $D$-orthogonal (see \ref{MultiOrthogonalSpace}).
    \item transversal $R_D$ applied to every qubit in a single code block enacts logical $C^{D-1}Z$ across the $D$ registers of logical qubits associated with the $D$ different colors $\mathds{Z}_{D+1} \setminus\{0\}$ whenever the corresponding logical $X$ operators have odd overlap.
    \item more generally, for any $0\leq \ell < D$, transversal $R_{D-\ell}$ applied to an appropriate subset of qubits specified by an $\ell$-tuple of logical $X$ operators of distinct colors $T_1,\dots,T_\ell$ applies an addressable and parallelizable logical $C^{D-\ell-1}Z$ gate to subsets of logical qubits across the $D-\ell$ registers associated with the complement colors $\mathds{Z}_{D+1} \setminus \cup_j T_j$ whenever the corresponding logical $X$ operators have odd overlap.
\end{enumerate}
\end{theorem}
\begin{proof}
    Each part is stated formally and proven separately in Theorems \ref{TransversalCZThm}, \ref{RDTheorem}, and \ref{RlTheorem}, respectively. 
\end{proof}

First, to establish some notation, let
\begin{align}
     \mathcal{C}^{*\ell}_\sigma &:= \left\langle c_1 *  \dots * c_\ell\right\rangle_{\left(c_1,\dots,c_\ell\right) \in \mathcal{C}_\sigma^\ell} \\
    \mathcal{F}^{*\ell} &:= \mathcal{F}^{\left(1\right)} * \dots * \mathcal{F}^{\left(\ell\right)} := \mathcal{F}\left(\Delta, \{\mathcal{C}^{*\ell}_\sigma\}_{\sigma \in \Delta\left(D-1\right)}\right)
\end{align}
denote the sheaf with each defining local code $\mathcal{C}_\sigma$ of the original sheaf $\mathcal{F}\left(\Delta, \{\mathcal{C}_\sigma\}_{\sigma \in \Delta\left(D-1\right)}\right)$ element-wise multiplied with itself $\ell$-times. Note that this generically affects the definition of all of the local codes at lower levels, though for lower levels the resulting code is not merely an element-wise product of the original; we immediately see that $\left(\mathcal{F}_\sigma \right)^{*\ell} \subset \left(\mathcal{F}^{*\ell} \right)_\sigma$, but this inclusion is generically strict when $|\sigma| < D$ (when $|\sigma|\geq D$ the inclusion is trivially an equality).

We recall the $D$-evenness (or $2^D$-divisibility) condition \ref{MultiEvenSpace}, which we will require for each defining local code $\mathcal{C}_\sigma$ for all $\sigma \in \Delta\left(D-1\right)$ in order to achieve the transversal $R_\ell$ gates:
\begin{align}
    \forall c \in \mathcal{C}_\sigma, \quad \left|c\right| \equiv 0 \mod 2^D
\end{align}
This is equivalent to any $\ell$-wise product having weight divisible by $2^{D-\ell+1}$
\begin{align}
       \forall c_1,\dots,c_\ell \in \mathcal{C}_\sigma, \quad \left| c_1 * \dots * c_\ell \right| \equiv 0 \mod 2^{D-\ell+1} \label{MultiEvennessCondition2}
\end{align}
This condition will guarantee that all of the transversal gates that we described above preserve the code space. 

We note that $D$-evenness is stronger than the condition identified in Theorem 6.8 of \autocite{LinSheaf} required to show that $C^{D-1}Z$ has a transversal action on $D$ copies of the sheaf code
\begin{align}
       \forall c_1,\dots,c_D \in \mathcal{C}_\sigma, \quad \left| c_1 *\dots*c_D \right| \equiv 0 \mod 2 \label{MultiOrthogonalCondition2}
\end{align}
This weaker condition is precisely $D$-orthogonality \ref{MultiOrthogonalSpace} of the defining $\left(D-1\right)$-level local codes, which we will show similarly suffices for a transversal $C^{D-1}Z$ gate on our Tanner codes (we do not expect it to be generically sufficient for the transversal $R_\ell$ gates).

We will find it convenient to define various `levels' of sets of basis elements for both $X$ logical operators and $X$ stabilizers (i.e. coboundaries), where the level $\ell$ simultaneously indexes both the number of products of the defining local codes $\mathcal{C}_\sigma^{*\ell}$ and also the dimension of the faces $\tau \in \Delta\left(\ell\right)$ whose sets $\sigma^\uparrow$ support local codewords $\mathcal{F}_\tau$. 

Specifically, for $1 \leq \ell < D$, $j \in \mathds{Z}_{D+1}$, and any color type $T \ni j$ of $|T|=\ell+1$ colors, let 
\begin{align}
\mathcal{L}^{\left(\ell\right)}_{j,T} &:= \left\{\pi_\uparrow \circ \iota \circ \text{res}_T\left(f\right) \middle \bracevert f \in [f] \in H^{\ell}\left(\Delta, \mathcal{F}^{*\ell} \right) \right\}\\
\mathcal{L}^{\left(\ell\right)}_j &:= \bigsqcup_{\substack{T \subset \mathds{Z}_{D+1} \\ |T| = \ell+1 \\ j \in T}} \mathcal{L}^{\left(\ell\right)}_{j,T}
\end{align}
denote an overcomplete basis of $X$ logical representatives for the code $\mathcal{C}_{\mathcal{F}^{*\ell}}\left(\ell-1, D-1-\ell\right)$ (see \ref{Structure}; here we allow for overcompleteness and fix each color type to include the color $j$, rather than the color $0$). Note that $\left \langle \left[\mathcal{L}^{\left(1\right)}_0\right] \right\rangle = \left\langle \mathcal{L} \right\rangle = H^1\left(\mathcal{C}_\mathcal{F}\left(0,D-2\right)\right)$ is an overcomplete version of the standard basis set given by \ref{Structure} for our code of interest $\mathcal{C}_{\mathcal{F}}\left(0, D-2\right)$. We stress that a logical representative from the $\ell$-level basis $\mathcal{L}^{\left(\ell\right)}_j$ corresponds to the color projection of an $\ell$-level cohomology element $H^\ell\left(\Delta,\mathcal{F}^{*\ell}\right)$. 

We can denote the set of generic $\ell$-level logical representatives as belonging to the span of any of these basis sets, independent of the choice of $j$: 
\begin{align}
    \mathcal{L}^{\left(\ell\right)} := \left\langle \mathcal{L}^{\left(\ell\right)}_j \right\rangle
\end{align}
For example, $\mathcal{L}^{\left(1\right)} =  Z^1\left(\mathcal{C}_\mathcal{F}\left(0,D-2\right)\right)$. We will see that element-wise products of the $X$ logical basis operators of the original code move us up the ladder into these higher-level (possibly trivial) logicals $\mathcal{L}^{\left(\ell\right)}$.

We will similarly make use of a basis of coboundaries. Let
\begin{align}
\mathcal{B}^{\left(\ell\right)} &:= \left\{f^\uparrow \middle\bracevert f \in C^{\ell-1}\left(\Delta,\mathcal{F}^{*\ell}\right): \exists \sigma \in \Delta\left(\ell-1\right), \, \text{supp}\left(f\right) = \sigma  \right\}\\
\implies &\left\langle \mathcal{B}^{\left(\ell\right)} \right \rangle = B^{1}\left(\mathcal{C}_{\mathcal{F}^{*\ell}}\left( \ell-1, D-1-\ell \right)\right)
\end{align} 
denote a (generically overcomplete) basis set of $\ell$-level coboundaries. We note that the level-$1$ basis spans the coboundary space of the original code $\left\langle \mathcal{B}^{\left(1\right)}\right\rangle = B^{1}\left(\mathcal{C}_{\mathcal{F}}\left( 0, D-2 \right)\right)$, i.e. $\mathcal{B}^{\left(1\right)}$ is a basis of the $X$ stabilizer group. We will also use a restriction of the coboundary basis to a subset of color types 
\begin{align}
    \mathcal{B}^{\left(\ell\right)}_T &:= \left\{f^\uparrow \middle\bracevert f \in C^{\ell-1}\left(\Delta_T,\mathcal{F}^{*\ell}\right): \exists \sigma \in \Delta_T\left(\ell-1\right), \, \text{supp}\left(f\right) = \sigma  \right\}
\end{align}

Finally, for convenience, let us define the type of a logical or coboundary basis element as follows:
\begin{align}
    \forall L \in \mathcal{L}^{\left(\ell\right)}_{j,T}, \quad T\left(L\right) &:= T \\
    \forall b \in \mathcal{B}^{\left(\ell\right)}, \quad T\left(b\right) &:= T\left(\sigma_b\right) \text{ where } b=f^\uparrow \text{ and } \text{supp}\left(f\right) = \sigma_b
\end{align}

The reason for defining these sets is that together with the product $*$ they mirror the graded algebra of the sheaf equipped with the cup product. We formalize this with the following lemma
\begin{lemma} \label{LogicalLogicalBasisProduct}
     Pick any pair $\ell_1, \ell_2>0$ satisfying $\ell_1+\ell_2 < D$. Choose any two types $T_1, T_2 \subset \mathds{Z}_{D+1}$ with $|T_i| = \ell_i+1$ that share at least one color $j \in T_1 \cap T_2$. For any pair of logical representatives $L_i \in \mathcal{L}^{\left(\ell_i\right)}_{j,T_i}$, if $T_1 \cap T_2 = \{j\}$ then
    \begin{align}
        L_1 * L_2 \in \mathcal{L}^{\left(\ell_1+\ell_2\right)}_{j, T_1 \cup T_2} 
    \end{align}
    Otherwise, if $\left|T_1 \cap T_2\right| >1$, then
    \begin{align}
        L_1 * L_2 \in \left\langle \mathcal{B}^{\left(\ell_1+\ell_2\right)} \right \rangle
    \end{align}
\end{lemma}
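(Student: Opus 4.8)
The plan is to realize $L_1 * L_2$ as $\pi_\uparrow$ of an explicit cochain supported on the single color type $T := T_1\cup T_2$, and then split into two cases according to whether faces of type $T$ have dimension $\ell_1+\ell_2$ or less. First I would write $L_i = \pi_\uparrow\circ\iota\circ\mathrm{res}_{T_i}(f_i)$ for a cocycle $f_i\in Z^{\ell_i}(\Delta,\mathcal F^{*\ell_i})$, as in the definition of $\mathcal L^{(\ell_i)}_{j,T_i}$. Since $|T_i| = \ell_i+1$, colorability gives every top face $\tau\in\Delta(D)$ a unique subface $\tau_{T_i}$ of type $T_i$, so $L_i(\tau) = \left.f_i(\tau_{T_i})\right|_\tau$ and hence $(L_1*L_2)(\tau) = \left.f_1(\tau_{T_1})\right|_\tau\cdot\left.f_2(\tau_{T_2})\right|_\tau$. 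Writing $k := |T_1\cap T_2|$, note $j\in T$ and $|T| = \ell_1+\ell_2+2-k\le D$. For each face $\rho$ of type $T$ set $e(\rho) := \left.f_1(\rho_{T_1})\right|_{\rho^\uparrow} * \left.f_2(\rho_{T_2})\right|_{\rho^\uparrow}$ (which makes sense because $\rho^\uparrow\subseteq\rho_{T_1}^\uparrow\cap\rho_{T_2}^\uparrow$ as $T_1,T_2\subseteq T$), and $e(\rho') := 0$ on all other faces. Restricting to any $(D-1)$-face $\xi\supseteq\rho$, the definition of the local codes of $\mathcal F^{*\ell_i}$ (equivalently, \ref{GeneralizedIntersection}) gives $\left.f_i(\rho_{T_i})\right|_{\xi^\uparrow}\in\mathcal C^{*\ell_i}_\xi$, and bilinearity of $*$ gives $\mathcal C^{*\ell_1}_\xi * \mathcal C^{*\ell_2}_\xi = \mathcal C^{*(\ell_1+\ell_2)}_\xi$; hence $e(\rho)\in(\mathcal F^{*(\ell_1+\ell_2)})_\rho$ and $e\in C^{\,\ell_1+\ell_2+1-k}(\Delta,\mathcal F^{*(\ell_1+\ell_2)})$. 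Evaluating $\pi_\uparrow(e)$ at $\tau$ again isolates the single type-$T$ subface $\tau_T$, giving $\pi_\uparrow(e)(\tau) = \left.e(\tau_T)\right|_\tau = \left.f_1(\tau_{T_1})\right|_\tau\cdot\left.f_2(\tau_{T_2})\right|_\tau$, so $L_1*L_2 = \pi_\uparrow(e)$.

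In the case $|T_1\cap T_2| > 1$, the cochain $e$ sits at level $\ell_1+\ell_2+1-k\le\ell_1+\ell_2-1$. Here I would apply \ref{GeneralizedDisjointUnion} to each $e(\rho)$ with a color type $T^*\supseteq T$ of size $\ell_1+\ell_2$, decomposing $e(\rho)$ into a concatenation of codewords of $(\mathcal F^{*(\ell_1+\ell_2)})_\mu$ over the $(\ell_1+\ell_2-1)$-faces $\mu\supseteq\rho$ of type $T^*$. Consequently $\pi_\uparrow(e)$ lies in $\pi_\uparrow\!\big(C^{\ell_1+\ell_2-1}(\Delta,\mathcal F^{*(\ell_1+\ell_2)})\big) = \langle\mathcal B^{(\ell_1+\ell_2)}\rangle$, the $X$-stabilizer span of $\mathcal C_{\mathcal F^{*(\ell_1+\ell_2)}}(\ell_1+\ell_2-1, D-1-\ell_1-\ell_2)$, which is exactly the claimed conclusion $L_1*L_2\in\langle\mathcal B^{(\ell_1+\ell_2)}\rangle$.

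In the case $T_1\cap T_2 = \{j\}$, the cochain $e$ sits at level exactly $\ell_1+\ell_2$ and is supported on type-$T$ faces, so $\iota\circ\mathrm{res}_T(e) = e$; by definition of $\mathcal L^{(\ell_1+\ell_2)}_{j,T}$ it therefore suffices to produce a cocycle $h\in Z^{\ell_1+\ell_2}(\Delta,\mathcal F^{*(\ell_1+\ell_2)})$ with $h(\rho) = e(\rho)$ on every type-$T$ face $\rho$, since then $L_1*L_2 = \pi_\uparrow(e) = \pi_\uparrow\circ\iota\circ\mathrm{res}_T(h)\in\mathcal L^{(\ell_1+\ell_2)}_{j,T}$. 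I would take $h := f_1\cup f_2$, the cup product of \ref{CupProduct} computed with respect to any color order in which the $\ell_1$ colors of $T_1\setminus\{j\}$ come first, then $j$, then the $\ell_2$ colors of $T_2\setminus\{j\}$, then all remaining colors. For such an order the front $\ell_1$-face of a type-$T$ simplex $\rho$ is $\rho_{T_1}$ and its back $\ell_2$-face is $\rho_{T_2}$, so $h(\rho) = e(\rho)$ by the definition of the cup product; and the Leibniz rule together with $\delta f_1 = \delta f_2 = 0$ shows $h$ is a cocycle of $\mathcal F^{*\ell_1}*\mathcal F^{*\ell_2} = \mathcal F^{*(\ell_1+\ell_2)}$ (the two sheaves coincide, being determined by the $(D-1)$-level codes $\mathcal C^{*\ell_1}_\xi * \mathcal C^{*\ell_2}_\xi = \mathcal C^{*(\ell_1+\ell_2)}_\xi$), finishing the case.

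The hard part will be the ordering-dependence in the case $k=1$: the identity expressing $L_1*L_2$ as $\pi_\uparrow$ of a restricted cup product holds only when the colors of $T_1$ and $T_2$ are laid out compatibly, so the argument must invoke the cup product — and its Leibniz rule, both valid for an arbitrary vertex partial order — with an order tailored to the given pair $(T_1,T_2)$ rather than the single order fixed elsewhere in the paper. A secondary point to handle carefully is that the explicit cochain $e$ is not itself a cocycle; it is only its identification with the type-$T$ component of the genuine cocycle $f_1\cup f_2$ that places $L_1*L_2$ inside $\mathcal L^{(\ell_1+\ell_2)}_{j,T}$, and one should verify the bookkeeping that $\iota\circ\mathrm{res}_T$ applied to a level-$(\ell_1+\ell_2)$ cochain retains precisely the type-$T$ part (here $|T| = \ell_1+\ell_2+1$, so no proper subtype of $T$ has the right size).
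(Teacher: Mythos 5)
Your proposal is correct and follows essentially the same route as the paper's proof: the $T_1\cap T_2=\{j\}$ case is handled by the sheaf cup product of \ref{CupProduct} computed with a color order tailored so that $T_1\setminus\{j\}$ precedes $j$ precedes $T_2\setminus\{j\}$ (which is exactly the ordering the paper chooses), giving $L_1*L_2=\pi_\uparrow\circ\iota\circ\mathrm{res}_{T_1\cup T_2}(f_1\cup f_2)$ with $f_1\cup f_2$ a cocycle by the Leibniz rule, while the $|T_1\cap T_2|>1$ case uses \ref{GeneralizedDisjointUnion} to decompose the product into codewords of $\mathcal F^{*(\ell_1+\ell_2)}$ on faces of a fixed color type of $\ell_1+\ell_2$ colors, hence into $\langle\mathcal B^{(\ell_1+\ell_2)}\rangle$. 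Your intermediate cochain $e$ is just a convenient repackaging of the same computation the paper performs directly, so there is no substantive difference.
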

\begin{proof}
    We will handle the case $T_1 \cap T_2 = \{j\}$ first. Choose any vertex partial ordering induced by the following total ordering on the colors $\mathds{Z}_{D+1}$: order colors in $T_1$ arbitrarily subject to $j$ being maximal, and order colors in $T_2$ arbitrarily subject to $j$ being minimal, then order the remaining colors arbitrarily but all greater than colors in $T_2$. With such a vertex coloring in hand, we can define a cup product $\cup : C^{\ell_1}\left(\Delta, \mathcal{F}^{*\ell_1}\right) \times C^{\ell_2}\left(\Delta, \mathcal{F}^{*\ell_2}\right) \rightarrow C^{\ell_1+\ell_2}\left(\Delta, \mathcal{F}^{*\left(\ell_1+\ell_2\right)}\right)$ as in Section \ref{CupProduct}. For $i=1,2$ let 
    \begin{align}
        f_i \in [f_i] \in H^{\ell_i}\left(\Delta,{\mathcal{F}^{*\ell_i}}\right): \quad  L_i = \pi_\uparrow \circ \iota \circ \text{res}_T\left(f_i\right)
    \end{align} 
    which must exist by definition of the $L_i$. Then we claim that 
    \begin{align}
        L_1 * L_2 = \pi_\uparrow \circ \iota \circ \text{res}_{T_1 \cup T_2}\left(f_1 \cup f_2\right)  \label{MultiplicationIsCup}
    \end{align}
    Indeed, for any $\tau \in \Delta\left(D\right)$,
    \begin{align}
        \left.\left(f_1 \cup f_2 \right)\right|_{T_1 \cup T_2}^\uparrow \left(\tau\right) &= \sum_{\sigma \in \tau\left(\ell_1+\ell_2\right)} \left(\left.\left(f_1 \cup f_2 \right)\right|_{T_1 \cup T_2}\left(\sigma\right)\right)\left(\tau\right)\\
        &= \sum_{\substack{S \subset T_1 \cup T_2\\ |S|=\ell_1+\ell_2+1}} \left(\left(f_1 \cup f_2 \right)\left(\tau_S\left(\ell_1+\ell_2\right)\right)\right)\left(\tau\right)
    \end{align}
    where $\tau_S\left(\ell_1+\ell_2\right)$ is the unique $\left(\ell_1+\ell_2\right)$-face in $\tau$ of color type $S$. Because $|T_1 \cap T_2| = 1$, there is only one such subset $S$ in the sum, and it is equal to the union $T_1 \cup T_2$ itself. Then let $\tau_S\left(\ell_1+\ell_2\right)=[v_1,\dots,v_{|S|}]$ denote the set of vertices that make up this face, ordered according to our color-induced vertex ordering, so that 
    \begin{enumerate}
        \item $i\leq\left|T_1\right| \implies T\left(v_i\right) \in T_1$
        \item $T\left(v_{\left|T_1\right|}\right) = j \in T_1 \cap T_2$
        \item $i\geq\left|T_1\right| \implies T\left(v_i\right) \in T_2$
    \end{enumerate} 
    Then we have
    \begin{align}
         \left.\left(f_1 \cup f_2 \right)\right|_{T_1 \cup T_2}^\uparrow \left(\tau\right) &= \left(\left(f_1 \cup f_2\right)\left(\tau_S\left(\ell_1+\ell_2\right)\right)\right)\left(\tau\right)\\
         &= f_1\left([v_1,\dots,v_{|T_1|}]\right)\left(\tau\right)
        f_2\left([v_{|T_1|},\dots,v_{|S|}]\right)\left(\tau\right) \\
       &= \left.f_1\right|_{T_1}\left([v_1,\dots,v_{|T_1|}]\right)\left(\tau\right) 
       \left.f_2\right|_{T_2}\left([v_{|T_1|},\dots,v_{|S|}]\right)\left(\tau\right)\\
       &= \left.f_1\right|_{T_1}^\uparrow\left([v_1,\dots,v_{|T_1|}]\right)\left(\tau\right) 
       \left.f_2\right|_{T_2}^\uparrow\left([v_{|T_1|},\dots,v_{|S|}]\right)\left(\tau\right)\\
       &= L_1\left(\tau\right) L_2\left(\tau\right)
    \end{align}
    This establishes the claim \ref{MultiplicationIsCup}. Finally, since the cup product induces a map on cohomology, we conclude that $[f_1 \cup f_2] \in H^{\ell_1 + \ell_2}\left(\Delta,{\mathcal{F}^{*\left(\ell_1 + \ell_2\right)}}\right)$, which completes the first part of this lemma. 
    
    Now we treat the case $|T_1 \cap T_2| \geq 2$, which means that $\left|T_1 \cup T_2 \right| \leq \ell_1+\ell_2$. Pick any color type $T \supset T_1 \cup T_2 $ of $\ell_1+\ell_2$ colors that contains $T_1 \cup T_2$. Then using \ref{GeneralizedDisjointUnion} we can decompose $L_i$ into a disjoint sum of codewords on faces of type $T_i$, and subsequently decompose each summand itself into a disjoint sum of codewords on faces of type $T$. We see that the product $L_1 * L_2$ can be written as a disjoint sum of products of codewords from $\mathcal{F}^{*\ell_1}_\xi$ and $\mathcal{F}^{*\ell_2}_\xi$ for all $\xi \in \Delta_T\left(\ell_1 + \ell_2 -1\right)$, which proves the claim.
\end{proof}

The idea in the second part of this proof---decomposing basis codewords into concatenations of local codewords of a fixed color type---is something we will use several times. We can further refine the idea into the following lemma that establishes divisibility of products of basis coboundaries and logicals, which each have a well-defined color type that we use in the proof.
\begin{lemma} \label{ProductDivisibilityLemma}
    Pick any integers $0 \leq k \leq \ell \leq D$, any $k$ coboundary basis elements $\{b_{1},\dots, b_{k}\}$ with $b_j \in \mathcal{B}^{(1)}_{T_j}$, and any $\ell-k$ logical basis elements $\{L_{k+1},\dots, L_{\ell} \}$ with $L_j  \in \mathcal{L}^{(1)}_{T_j}$. Consider the union of types $T=\bigcup_{j=1}^\ell T_j$. If $|T| \leq D$---which is trivially true when $k =\ell$ or when $\ell < D$---and the defining $\left(D-1\right)$-local codes of $\mathcal{F}$ are $D$-even \ref{MultiEvennessCondition2} then 
\begin{align}
   \left|b_1 * \dots * b_k * L_{k+1} * \dots * L_\ell\right| = 0 \mod 2^{D-\ell+1}
\end{align}
    If $|T| \leq D$ and the defining $\left(D-1\right)$-local codes of $\mathcal{F}$ are merely $D$-orthogonal \ref{MultiOrthogonalCondition2} then 
\begin{align}
   \left|b_1 * \dots * b_k * L_{k+1} * \dots * L_\ell\right| = 0 \mod 2
\end{align}
\end{lemma}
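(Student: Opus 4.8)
The plan is to localize the whole computation to a single $(D-1)$-dimensional face. The key observation is that each of the $\ell$ factors in the product is the $D$-face projection $g^\uparrow$ of a cochain $g$ supported on faces carrying a single color type $U \subseteq T$: for a coboundary basis element $b_j \in \mathcal{B}^{(1)}_{T_j}$ one has $U = \{i_j\}$ for the color $i_j$ of the single vertex $\sigma_j$ in the support, with $g(\sigma_j) \in \mathcal{F}_{\sigma_j}$; for a logical basis element $L_j = \pi_\uparrow \circ \iota \circ \text{res}_{T_j}(f_j)$ one has $U = T_j$ and $g = \text{res}_{T_j}(f_j)$ is supported on edges of type $T_j$ with $g(e) \in \mathcal{F}_e$ for each such edge. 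Since $|T| \le D$, I first fix a color type $S$ with $T \subseteq S$ and $|S| = D$ (so $S = T$ if $|T| = D$, and otherwise extend arbitrarily inside $\mathds{Z}_{D+1}$); the faces of type $S$ are then exactly the $(D-1)$-dimensional faces of $\Delta_S$.

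I would then decompose each factor $g^\uparrow$ along $S$. For $\xi \in \Delta_S(D-1)$, a $D$-face of $\Delta$ contains a unique $(D-1)$-subface of type $S$, and $\xi$ in turn contains a unique subface $\xi_U$ of type $U$; unwinding the definition of $\pi_\uparrow$ gives $g^\uparrow|_{\xi^\uparrow} = g(\xi_U)|_{\xi^\uparrow}$, and since $g(\xi_U) \in \mathcal{F}_{\xi_U}$ (possibly zero) and $\xi_U \subseteq \xi$, Lemma \ref{GeneralizedIntersection} shows $g^\uparrow|_{\xi^\uparrow} \in \mathcal{F}_\xi$. By Lemma \ref{GeneralizedDisjointUnion} the sets $\{\xi^\uparrow : \xi \in \Delta_S(D-1)\}$ partition the set of $D$-faces supporting $g^\uparrow$, and distinct faces of the same type lie in no common $D$-simplex (the argument in the proof of Lemma \ref{DisjointUnion}), so the decomposition $g^\uparrow = \sum_{\xi \in \Delta_S(D-1)} \left(g^\uparrow|_{\xi^\uparrow}\right)$ has pairwise-disjoint supports.

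Now take the element-wise product. Because all $\ell$ factors decompose along the same disjoint family and each $D$-face lies in a unique $\xi^\uparrow$, the cross terms vanish and
\begin{align}
    b_1 * \dots * b_k * L_{k+1} * \dots * L_\ell = \sum_{\xi \in \Delta_S(D-1)} \left( b_1|_{\xi^\uparrow} * \dots * b_k|_{\xi^\uparrow} * L_{k+1}|_{\xi^\uparrow} * \dots * L_\ell|_{\xi^\uparrow}\right),
\end{align}
again with pairwise-disjoint supports, whence $\left| b_1 * \dots * L_\ell\right| = \sum_{\xi} \left| b_1|_{\xi^\uparrow} * \dots * L_\ell|_{\xi^\uparrow}\right|$. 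Each summand is an $\ell$-fold element-wise product of codewords of $\mathcal{F}_\xi$ with $\xi \in \Delta(D-1)$, so if the defining $(D-1)$-level codes are $D$-even then condition \ref{MultiEvennessCondition2} with $s = \ell \le D$ gives $\left| b_1|_{\xi^\uparrow} * \dots * L_\ell|_{\xi^\uparrow}\right| \equiv 0 \mod 2^{D-\ell+1}$, and summing over $\xi$ gives the first claim. If the defining codes are only $D$-orthogonal, I would first pad each summand up to a $D$-fold product by appending $D - \ell$ extra copies of one of the factors, using idempotency $c * c = c$ over $\mathds{F}_2$; condition \ref{MultiOrthogonalCondition2} then makes each padded product even, and summing gives the second claim. (We may assume $\ell \ge 1$, since $\ell = 0$ does not arise.)

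I expect the only genuinely delicate point to be the claim $g^\uparrow|_{\xi^\uparrow} \in \mathcal{F}_\xi$ for $\xi$ of type $S$, i.e. that restricting a basis element to the $D$-faces above a type-$S$ face returns an honest defining local codeword; this is exactly where colorability is used, via Lemmas \ref{GeneralizedIntersection} and \ref{GeneralizedDisjointUnion}, and it relies on the fact that a level-$1$ logical representative $\text{res}_{T_j}(f_j)$ is built from local codewords living on edges of one fixed type, so that the projections $\pi_\uparrow$ reassemble into codewords face by face. This `decompose into fixed-type concatenations' move is the same one already used in the proof of Lemma \ref{LogicalLogicalBasisProduct}; everything else is bookkeeping with disjoint supports together with the multi-even/multi-orthogonal identities recorded in Section \ref{MultiEvenSection}.
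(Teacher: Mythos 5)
Your proposal is correct and follows essentially the same route as the paper's proof: fix a $D$-color type containing $T$, decompose every factor into a disjoint concatenation of codewords of the $(D-1)$-level local codes via Lemmas \ref{GeneralizedIntersection} and \ref{GeneralizedDisjointUnion}, and apply the multi-even or multi-orthogonal condition face by face. Your explicit idempotency-padding step for the $D$-orthogonal case is a small point the paper leaves implicit, but otherwise the arguments coincide.
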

\begin{proof}
For $1 \leq j \leq k$ let $f_j\in C^0\left(\Delta_{T_j}, \mathcal{F} \right)$ be defined such that $f_j^\uparrow = b_j$, while for $k < j \leq \ell$ let $f_j\in C^1\left(\Delta_{T_j}, \mathcal{F} \right)$ be defined such that $f_j^\uparrow = L_j$ (ignore that we are suppressing the inclusion map $\iota: \Delta_{T_j} \hookrightarrow \Delta$ for notational simplicity). Consider any color type $T_D \supset T$ of $|T_D| = D$ colors that contains all the colors of the basis elements. 

Then by \ref{GeneralizedDisjointUnion} the following is true:
\begin{enumerate}
    \item The sets $\{\tau^\uparrow\}_{\tau \in \Delta_{T_D}\left(D-1\right)}$ partition $\Delta\left(D\right)$
    \item For any $1 \leq j \leq k$ and any $\sigma_j \in \Delta_{T_j}\left(0\right)$, $\sigma_j^\uparrow$ is partitioned by the sets $\{\tau^\uparrow\}_{\sigma_j\subset \tau \in \Delta_{T_D}\left(D-1\right)}$ and $\left.f_j\left(\sigma_j\right)\right|_{\tau^\uparrow} \in \mathcal{F}_\tau$
    \item For any $k < j \leq \ell$ and any $\sigma_j \in \Delta_{T_j}\left(1\right)$, $\sigma_j^\uparrow$ is partitioned by the sets $\{\tau^\uparrow\}_{\sigma_j\subset \tau \in \Delta_{T_D}\left(D-1\right)}$ and $\left.f_j\left(\sigma_j\right)\right|_{\tau^\uparrow} \in \mathcal{F}_\tau$
\end{enumerate}
We conclude that the projections $f_j^\uparrow$ can all be thought of as concatenations of codewords in the local codes $\mathcal{F}_\tau$ for $\tau \in \Delta_{T_D}\left(D-1\right)$ respecting the same partition of $\Delta\left(D\right)$ so that the product
\begin{align}
   b_1 * \dots * b_k * L_{k+1} * \dots * L_\ell = f_1^\uparrow * \dots * f_\ell^\uparrow
\end{align}
satisfies for arbitrary $\tau \in \Delta_{T_D}\left(D-1\right)$
\begin{align}
   \left.\left(b_1 * \dots * b_k * L_{k+1} * \dots * L_\ell\right)\right|_{\tau^\uparrow} = \left.f_1^\uparrow\right|_{\tau^\uparrow} * \dots * \left.f_\ell^\uparrow\right|_{\tau^\uparrow}
\end{align}
When the local codes $\mathcal{F}_\tau$ are $D$-even then each $\left.f_j^\uparrow\right|_{\tau^\uparrow} \in \mathcal{F}_\tau$ so that the weight of the product is $2^{D-\ell+1}$ divisible. If the local codes are only $D$-orthogonal then the product still at least has even weight. Because the $\tau^\uparrow$ restrictions partition the entire domain $\Delta\left(D\right)$ we conclude that $|b_1 * \dots * b_k * L_{k+1} * \dots * L_\ell|$ has at least the same divisibility, which completes the proof. 
\end{proof}

This immediately establishes that $D$-evenness and $D$-orthogonality each lift from the local $\left(D-1\right)$-level codes to the entire space of $X$ stabilizers.
\begin{corollary}\label{BoundaryEvenSpace}
    If the defining $\left(D-1\right)$-level codes are $D$-even, then the span of the level-$1$ coboundary basis set $\left\langle \mathcal{B}^{\left(1\right)} \right\rangle$, or equivalently the space of $X$-stabilizers of our code $B^1\left(\mathcal{C}_\mathcal{F}\left(0,D-2\right)\right)$, is also $D$-even. Similarly, if the defining $\left(D-1\right)$-level codes are merely $D$-orthogonal, then the space $\left\langle \mathcal{B}^{\left(1\right)} \right\rangle$ is also $D$-orthogonal. 
\end{corollary}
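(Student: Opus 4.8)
The plan is to read this off directly from Lemma~\ref{ProductDivisibilityLemma} together with the basis-reduction criterion for multi-evenness and multi-orthogonality recalled in Section~\ref{MultiEvenSection}. Recall from that section (following \autocite{PinCodes}) that a space $\mathcal{C} \subset \mathds{F}_2^n$ is $D$-even if and only if the divisibility condition~\ref{MultiEvennessCondition} holds for every tuple of \emph{basis} elements: for every $1 \leq s \leq D$ and every $s$-tuple $b_1,\dots,b_s$ drawn from a fixed basis one has $\left|b_1 * \dots * b_s\right| \equiv 0 \mod 2^{D-s+1}$. Likewise, the (general, possibly-distinct-space) notion of $D$-orthogonality reduces to verifying condition~\ref{MultiOrthogonalityCondition} on $D$-tuples of basis elements, and in particular so does the ``$D$ copies of a single space'' instance to which the phrase ``$\langle\mathcal{B}^{\left(1\right)}\rangle$ is $D$-orthogonal'' refers. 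We take $\mathcal{B}^{\left(1\right)}$ itself as the fixed basis of $\langle\mathcal{B}^{\left(1\right)}\rangle = B^1\left(\mathcal{C}_{\mathcal{F}}\left(0,D-2\right)\right)$.

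First I would fix $1 \leq s \leq D$ and arbitrary $b_1,\dots,b_s \in \mathcal{B}^{\left(1\right)}$. Each $b_j$ is the projection of a $0$-cochain supported on a single vertex $\sigma_j$, so $b_j \in \mathcal{B}^{\left(1\right)}_{T_j}$ for the singleton type $T_j = \{T(\sigma_j)\}$; consequently the union $T = \bigcup_{j=1}^{s} T_j$ satisfies $|T| \leq s \leq D$, which is precisely the hypothesis ``$|T| \leq D$, trivially true when $k = \ell$'' of Lemma~\ref{ProductDivisibilityLemma}. Applying that lemma with $k = \ell = s$ and no logical factors then yields $\left|b_1 * \dots * b_s\right| \equiv 0 \mod 2^{D-s+1}$ when the defining $(D-1)$-level codes of $\mathcal{F}$ are $D$-even, and $\left|b_1 * \dots * b_s\right| \equiv 0 \mod 2$ when they are merely $D$-orthogonal. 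In the first case this verifies condition~\ref{MultiEvennessCondition} on every basis tuple, so $\langle\mathcal{B}^{\left(1\right)}\rangle$ is $D$-even; in the second, the instance $s = D$ verifies condition~\ref{MultiOrthogonalityCondition} on every $D$-tuple of basis elements, so $\langle\mathcal{B}^{\left(1\right)}\rangle$ is $D$-orthogonal.

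There is essentially no obstacle here beyond bookkeeping: all the content sits in Lemma~\ref{ProductDivisibilityLemma}, whose proof decomposes each projected coboundary $b_j = f_j^\uparrow$ into a concatenation of genuine $(D-1)$-level local codewords over a common partition of $\Delta\left(D\right)$ indexed by a $D$-color type $T_D \supset T$, so that divisibility of products of $(D-1)$-level codewords propagates to the global product. The only points requiring care are that the basis-reduction criterion of \autocite{PinCodes} (stated there for a single space) indeed applies to the ``one space, $D$ factors'' form of multi-orthogonality---this is the extension noted in Section~\ref{MultiEvenSection}---and that the type-union bound $|T| \leq D$ needed by Lemma~\ref{ProductDivisibilityLemma} is automatic here, since a product of at most $D$ vertex-supported coboundary basis elements involves at most $D$ distinct colors.
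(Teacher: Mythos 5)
Your proposal is correct and follows essentially the same route as the paper: reduce to tuples of basis elements via the criterion recalled in Section~\ref{MultiEvenSection}, observe that coboundary basis elements have singleton color types so the union of types has size at most $D$, and invoke Lemma~\ref{ProductDivisibilityLemma} with no logical factors. You are slightly more explicit than the paper about the basis-reduction step, but the argument is the same.
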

\begin{proof}
    For any $1 \leq \ell \leq D$, consider any product of $\ell$ coboundary basis elements $b_1 *\dots *b_\ell $ where $b_j \in \mathcal{B}^{\left(1\right)}$; our goal is to show that this product has weight divisible by $2^{D-\ell+1}$ when the $\left(D-1\right)$-level codes are $D$-even or that the product has even weight when the $\left(D-1\right)$-level codes are $D$-orthogonal. The conditions of Lemma \ref{ProductDivisibilityLemma} hold trivially because we are taking a product of at most $D$ basis coboundaries that have color types $|T\left(b_j\right)|=1$ of exactly one color. Our goal is proved directly by this lemma for each respective condition on the $\left(D-1\right)$-level codes. 
\end{proof}

\begin{theorem}[Transversal $C^{D-1}Z$ Preserves Code Space] \label{TransversalCZThm}
Consider a cell-wise flasque locally acyclic $(D+1)$-colorable sheaf $\mathcal{F}$ with defining $\left(D-1\right)$-level codes that are $D$-orthogonal. Then $C^{D-1}Z$ applied transversally across $D$ blocks of the code $\mathcal{C}_\mathcal{F}\left(0,D-2\right)$ preserves the code space.
\end{theorem}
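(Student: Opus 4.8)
The plan is to verify the sufficient condition for transversality from Section \ref{CZAction}: by Proposition \ref{CZProposition} it suffices to show that the collection of $D$ spaces consisting of $D-1$ copies of $Z^1\bigl(\mathcal{C}_\mathcal{F}(0,D-2)\bigr)$ together with one copy of $B^1\bigl(\mathcal{C}_\mathcal{F}(0,D-2)\bigr)$ is $D$-orthogonal. Since multi-orthogonality can be checked on tuples of basis elements (Section \ref{MultiEvenSection}), I would work with the basis $\mathcal{B}^{(1)}\cup\mathcal{L}$ of $Z^1$ — the $X$-stabilizer basis together with the colored logical basis of Corollary \ref{Structure}, whose elements all have color type of size $2$ containing the color $0$ — and the basis $\mathcal{B}^{(1)}$ of $B^1$. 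Collapsing repeated factors (harmless over $\mathds{F}_2$, where $c*c=c$), every tuple to be checked is a product of $m$ logical basis elements $L_1,\dots,L_m$ with types $T_i=\{0,c_i\}$ and $k$ coboundary basis elements $b_1,\dots,b_k$ supported on single vertices of colors $d_1,\dots,d_k$, with $k\ge 1$ and $m+k\le D$; the goal is $|L_1*\dots*L_m*b_1*\dots*b_k|\equiv 0\pmod 2$. If $m=0$ the product lies in $\langle\mathcal{B}^{(1)}\rangle$, which is $D$-orthogonal by Corollary \ref{BoundaryEvenSpace}, so it is even. If the union $T:=\{0,c_1,\dots,c_m\}\cup\{d_1,\dots,d_k\}$ has $|T|\le D$ — automatic when $m+k<D$ — then Lemma \ref{ProductDivisibilityLemma} (with $\ell=D$ and the assumed $D$-orthogonality of the defining local codes) gives even weight at once. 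The only remaining, and essential, case is $m\ge1$, $m+k=D$, and $|T|=D+1$: then the colors partition $\mathds{Z}_{D+1}=\{0\}\sqcup\{c_1,\dots,c_m\}\sqcup\{d_1,\dots,d_k\}$, so the face $\theta:=\bigcup_j\sigma_{b_j}$ — if it is a simplex at all, else the product vanishes by Lemma \ref{Intersection} — is a $(D-1-m)$-face whose type is exactly the complement $\tilde T^c$ of $\tilde T:=\{0,c_1,\dots,c_m\}$.

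For this case I would first exploit the cup-product structure. Iterating Lemma \ref{LogicalLogicalBasisProduct} (legal because the $c_i$ are distinct, so at each step the accumulated type meets the next type in exactly $\{0\}$, and all cohomological degrees stay below $D$), the logical factors combine to a single level-$m$ logical $\tilde L=\pi_\uparrow\circ\iota\circ\text{res}_{\tilde T}(\tilde f)\in\mathcal{L}^{(m)}_{0,\tilde T}$ for some cocycle $\tilde f\in Z^m(\Delta,\mathcal{F}^{*m})$, while the coboundary factors combine to $\beta:=b_1*\dots*b_k$ supported on $\theta^\uparrow$. Since $\beta$ is supported on $\theta^\uparrow$, $|\tilde L*\beta|\equiv \tilde L|_{\theta^\uparrow}\cdot\beta|_{\theta^\uparrow}\pmod 2$.

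Because $\tilde T$ and $T(\theta)$ are complementary, the plain intersection argument of Corollary \ref{GeneralizedEvenOverlap} does not apply, and here is the key move: use the cocycle condition $\delta^m\tilde f=0$. Applying it along the $(m+1)$-face $\tau_{\tilde T\cup\{d_1\}}$ inside each top face $\tau\supseteq\theta$ and restricting to $\tau$ gives $\tilde f(\tau_{\tilde T})|_\tau=\sum_{s\in\tilde T}\tilde f\bigl(\tau_{(\tilde T\setminus\{s\})\cup\{d_1\}}\bigr)|_\tau$, replacing the complementary-type $m$-face $\tau_{\tilde T}$ by $m$-faces that each share a color with $T(\theta)$. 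Summing over $\tau\in\theta^\uparrow$ and regrouping by the $(D-1)$-face $\psi:=\tau_{(\tilde T\setminus\{s\})\cup T(\theta)}$ (Lemma \ref{GeneralizedDisjointUnion}), the restrictions $\tilde f\bigl(\tau_{(\tilde T\setminus\{s\})\cup\{d_1\}}\bigr)|_{\psi^\uparrow}$ and $\beta|_{\psi^\uparrow}$ become, using the identification $(\mathcal{F}^{*m})_\psi=(\mathcal{F}_\psi)^{*m}$ valid at a $(D-1)$-face together with Lemma \ref{GeneralizedIntersection}, a sum of $m$-fold products and a $(D-m)$-fold product of codewords of the defining local code $\mathcal{F}_\psi$, respectively. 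Their $\mathds{F}_2$ inner product is then a sum of weights of $D$-fold products of $\mathcal{F}_\psi$-codewords, each even by $D$-orthogonality; summing over $\psi$ and $s$ gives $|\tilde L*\beta|\equiv0$, which finishes the last case and hence the theorem.

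The main obstacle is precisely the case $|T|=D+1$ that escapes Lemma \ref{ProductDivisibilityLemma}: one must use the cocycle structure of the logical representatives — not merely their support — to ``complete'' the otherwise-unmatched contribution onto a $(D-1)$-face, where the local $D$-orthogonality hypothesis can be invoked, and this is the same mechanism underlying the cup-product transversal gates of \autocite{LinSheaf}. The remaining risk is purely bookkeeping: picking the correct $(m+1)$-face for the cocycle identity and the correct $(D-1)$-face for the regrouping, and keeping the color types straight; everything else is assembled from results already established above.
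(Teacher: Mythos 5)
Your proof is correct and follows essentially the same route as the paper's: the same reduction to tuples of basis elements, the same disposal of the easy cases via Lemma \ref{ProductDivisibilityLemma}, and in the hard case ($|T|=D+1$) the same mechanism --- your face-local application of $\delta^m\tilde f=0$ along $\tau_{\tilde T\cup\{d_1\}}$ is precisely the paper's re-expansion of $L_1*\dots*L_m$ in the colored basis $\mathcal{L}^{(m)}_{d_1}$ anchored at a coboundary color, after which both arguments conclude by decomposing along common $(D-1)$-faces and invoking local $D$-orthogonality. The only cosmetic difference is that you evaluate the $\mathds{F}_2$ bilinear pairing directly (so the cocycle re-expression distributes linearly), whereas the paper expands Hamming weights by inclusion--exclusion, discards the even cross-terms, and then cites Lemma \ref{ProductDivisibilityLemma} term by term.
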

\begin{proof}
Per the discussion in \ref{CZAction} and proposition \ref{CZProposition}, it suffices to show that any product of $D$ logical and coboundary basis elements with at least one coboundary has even weight; that is, for any $1 \leq n \leq D$ and $\{b_{1},\dots, b_{n}\}$ with $b_j \in \mathcal{B}^{(1)}_{T_j}$ and for any $\{L_{n+1},\dots, L_{D} \}$ with $L_j\in \mathcal{L}^{(1)}_{0,T_j}$, we want to show 
\begin{align}
  \left| b_1 * \dots * b_n * L_{n+1} * \dots * L_D \right| \equiv 0 \mod 2
\end{align}
Define the color types $T_b:= \bigcup_{j=1}^{n} T_j$, $T_L := \bigcup_{j=n+1}^D T_j$, and $T_\star:= \bigcup_{j=1}^D T_j = T_b \cup T_L$. Since each coboundary color type $T_j$ contains exactly one color and each logical color type $T_j$ includes the color $0$ and one other color, we conclude that $|T_b| \leq n$ and $|T_L| \leq D-n+1$. If $|T_L| < D-n+1$, then $|T_\star| \leq |T_b| + |T_L| \leq D$ so that we satisfy the conditions of \ref{ProductDivisibilityLemma}, which proves the claim. Otherwise, $|T_L| = D-n+1$ and each logical color type $T_j$ overlaps exactly on the color type $\{0\} = \bigcap_{j=n+1}^D T_j$, so that we can repeatedly apply \ref{LogicalLogicalBasisProduct} to conclude that 
\begin{align}
    L_{n+1} * \dots * L_D \in \mathcal{L}^{\left(D-n\right)}_{0, T_L}
\end{align}
Importantly, this means that we can expand $L_{n+1} * \dots * L_D$ in a different basis $\mathcal{L}^{\left(D-n\right)}_{j}$ for any $j$. Fix $i$ to be any color in $T_b$. Then we have 
\begin{align}
    L_{n+1} * \dots * L_D = \sum_{\substack{T \subset \mathds{Z}_{D+1}\\ |T|=D-n+1 \\i \in T}} L_T
\end{align}
We substitute this sum into our expression of the weight of the full product
\begin{align}
  \left| b_1 * \dots * b_n * L_{n+1} * \dots * L_D \right| &\equiv \left| b_1 * \dots * b_n * \left( \sum_{T} L_T\right) \right| \mod 2\\
  &\equiv \sum_T \left| b_1 * \dots * b_n * L_T \right| \mod 2
\end{align}
where we have dropped terms involving higher-order intersections such as $b_1 * \dots * b_n * L_T * L_{T'}$ because they always contribute an even amount to the total weight (e.g. $|x_1 + x_2| = |x_1| + |x_2| - 2 |x_1 * x_2|$; see also Appendix A of \autocite{PinCodes}).

We conclude the proof by noting that each of the types $T$ in the sum satisfies $i \in T \cap T_b$ by construction, so $|T \cup T_b| \leq |T| + |T_b| - 1 \leq D$ and we can use \ref{ProductDivisibilityLemma} to establish that $b_1 * \dots * b_n * L_T$ has even weight. 
\end{proof}

\begin{theorem}[Transversal $R_D$ Preserves Code Space] \label{RDTheorem}
Consider a cell-wise flasque locally acyclic $(D+1)$-colorable sheaf $\mathcal{F}$ with defining $\left(D-1\right)$-level codes that are $D$-even. Then $R_D$ applied transversally on every qubit of the code $\mathcal{C}_\mathcal{F}\left(0,D-2\right)$ preserves the code space.
\end{theorem}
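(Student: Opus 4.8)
The plan is to verify the sufficient condition isolated in Section~\ref{RAction} for $R_D$ applied to every qubit of a single block: that (i) every $X$-stabilizer $s$ has $|s|\equiv 0\pmod{2^{D}}$, and (ii) every $X$-logical $L$ together with every $X$-stabilizer $s$ has $|L*s|\equiv 0\pmod{2^{D-1}}$. Condition (i) is immediate from Corollary~\ref{BoundaryEvenSpace}: since the defining $\left(D-1\right)$-level codes are $D$-even by hypothesis, the whole $X$-stabilizer space $\left\langle\mathcal{B}^{\left(1\right)}\right\rangle=B^{1}\left(\mathcal{C}_\mathcal{F}\left(0,D-2\right)\right)$ is $D$-even.

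For (ii) the plan closely mirrors the proof of Theorem~\ref{TransversalCZThm}. Expand $L$ in the logical basis $\mathcal{L}^{\left(1\right)}_0$ of Corollary~\ref{Structure} (each element the color-$T$ projection of a cohomology representative, with $|T|=2$ and $0\in T$) and $s$ in the coboundary basis $\mathcal{B}^{\left(1\right)}$ (each element of single-color type), and then apply the inclusion--exclusion identity $\left|\sum_j x_j\right|=\sum_{\emptyset\neq S}\left(-2\right)^{|S|-1}\left|\prod_{j\in S}x_j\right|$ for Hamming weights of $\mathds{F}_2$-vectors---first to the sum $s=\sum_j b_j$ and then to the sum $L=\sum_i L_{a_i}$---to write $|L*s|$ as a signed sum of terms $\left(-2\right)^{p+q-2}\left|L_{a_1}*\cdots*L_{a_p}*b_1*\cdots*b_q\right|$ over $p\ge1$ distinct basis logicals and $q\ge1$ distinct basis coboundaries. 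Because of the prefactor it suffices to show each mixed product has weight divisible by $2^{D-\left(p+q\right)+1}$, which is vacuous once $p+q>D$, so we may assume $p+q\le D$. Let $T_\star$ be the union of the $p+q$ factor types; since each logical type contains $0$ and each coboundary type is a single color, $|T_\star|\le p+q+1\le D+1$.

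If $|T_\star|\le D$, fix a $D$-color set $T_D\supseteq T_\star$ and use Lemma~\ref{GeneralizedDisjointUnion} to partition $\Delta\left(D\right)$ into the blocks $\{\xi^\uparrow\}_{\xi\in\Delta_{T_D}\left(D-1\right)}$; on each block every logical factor $\pi_\uparrow\circ\iota\circ\text{res}_T\left(f\right)$ restricts to $\left.f\left(\xi_T\right)\right|_{\xi^\uparrow}\in\mathcal{F}_\xi$ (by Lemma~\ref{GeneralizedIntersection}, using $T\subseteq T_D$), and each coboundary factor likewise restricts into $\mathcal{F}_\xi$, so the restricted product is a $\left(p+q\right)$-fold $*$-product of $\mathcal{F}_\xi$-codewords and hence has weight divisible by $2^{D-\left(p+q\right)+1}$ by $D$-evenness \eqref{MultiEvennessCondition2}; summing over the partition gives the bound, and the prefactor $\left(-2\right)^{p+q-2}$ upgrades it to $2^{D-1}$. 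The only remaining possibility is $p+q=D$ with $|T_\star|=D+1$, which forces the $p$ logical types to intersect pairwise exactly in $\{0\}$. Here we copy the endgame of Theorem~\ref{TransversalCZThm}: repeatedly applying Lemma~\ref{LogicalLogicalBasisProduct} collapses $L_{a_1}*\cdots*L_{a_p}$ into a level-$p$ logical representative (modulo a level-$p$ coboundary), which we re-expand in the logical basis $\mathcal{L}^{\left(p\right)}_{d}$ centered at one of the surviving coboundary colors $d$; discarding higher-intersection terms (which contribute only even weight), every remaining summand has type-union of size $\le D$ and restricts block-wise to a sum of $D$-fold $*$-products of $\mathcal{F}_\xi$-codewords, hence has even weight, and together with the $\left(-2\right)^{D-2}$ prefactor this again yields divisibility by $2^{D-1}$.

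The main obstacle is precisely this all-$\left(D+1\right)$-colors case: Lemma~\ref{ProductDivisibilityLemma} no longer applies off the shelf, so one must pass through the graded sheaf algebra---collapsing a product of level-$1$ logicals into level-$p$ logicals via Lemma~\ref{LogicalLogicalBasisProduct} and re-expanding in a color basis adapted to the remaining coboundary factors---and verify that level-$p$ logicals (and level-$p$ coboundaries) restrict at $\left(D-1\right)$-faces into the defining product code $\mathcal{C}^{*p}_\xi$, which is spanned by $p$-fold $*$-products of $\mathcal{F}_\xi$-codewords; multiplying by the remaining $q=D-p$ single-vertex coboundary factors then lands in $\mathcal{C}^{*D}_\xi$, where $D$-evenness forces even weight. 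A subsidiary point is the bookkeeping of the inclusion--exclusion signs and powers of $2$, so that the divisibility deficit $2^{D-\left(p+q\right)+1}$ is exactly compensated by the prefactor $\left(-2\right)^{p+q-2}$.
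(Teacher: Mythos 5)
Your proposal is correct and follows essentially the same route as the paper's proof: claim (i) is Corollary~\ref{BoundaryEvenSpace}, and claim (ii) uses the same inclusion--exclusion weight expansion, with the block-decomposition mechanism of Lemma~\ref{ProductDivisibilityLemma} when the union of color types has at most $D$ colors, and the Lemma~\ref{LogicalLogicalBasisProduct} collapse-and-re-expansion (the endgame of Theorem~\ref{TransversalCZThm}) in the remaining case $p+q=D$ with all $D+1$ colors. The only differences are presentational---you expand $L$ into individual basis logicals rather than into its $D$ color components (which your type-union case split handles anyway), and you make explicit the correct point that level-$p$ logicals restrict at $\left(D-1\right)$-faces into $\mathcal{C}^{*p}_\xi$, which the paper uses implicitly when invoking Lemma~\ref{ProductDivisibilityLemma} inside Theorem~\ref{TransversalCZThm}.
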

\begin{proof}
Per the discussion in \ref{RAction}, it suffices to establish two separate claims; the first is that every $X$ stabilizer has weight divisible by $2^D$, which is implied by the Lemma \ref{BoundaryEvenSpace} establishing that the space of stabilizers is $D$-even. We proceed to focus on the second claim, which is that for any $X$ logical operator representative $L \in \left\langle \mathcal{L}_0^{\left(1\right)} \right\rangle$ and any $X$ coboundary $b \in \left\langle \mathcal{B}^{\left(1\right)} \right\rangle$
\begin{align}
  \left| b * L \right| \equiv 0 \mod 2^{D-1}
\end{align}
We can decompose the logical $L$ into a sum over $D$ logicals of different colors 
\begin{align}
    L = \sum_{j=1}^D L_j
\end{align}
where $L_j \in\mathcal{L}^{(1)}_{0,\{0,j\}}$

Similarly, we can decompose $b$ as a sum over $m$ basis elements $b_i \in \mathcal{B}^{\left(1\right)}$ for some integer $m$ which will not be important
\begin{align}
    b = \sum_{i=1}^m b_i
\end{align}
We substitute these decompositions of each term in the product, and then use the expansion for the Hamming weight of a $\mathds{F}_2$-linear combination that is explained in detail in Appendix A of \autocite{PinCodes} (it amounts to counting different overlaps of support for inclusion and exclusion with the correct multiplicity)
    \begin{align}
        |b * L| &= \left|b * \left(\sum_{j=1}^D  L_{j} \right) \right| \\
        &= \sum_{j=1}^D \left| b* L_{j} \right| - 2\sum_{1 \leq j_1 < j_2 \leq D }\left|b*  L_{j_1}  * L_{j_2} \right| + \dots\\
        &= \sum_{t=1}^D  \left(-2\right)^{t-1}  \sum_{ 1 \leq j_1 < \dots<j_t \leq D }  \left|b* L_{j_1}  * L_{j_2} * \dots *L_{j_t} \right| \\
        &= \sum_{t=1}^D  \left(-2\right)^{t-1} \sum_{ 1 \leq j_1  < \dots<j_t \leq D }  \left| \left(\sum_{i=1}^m b_i\right) * L_{j_1}  * L_{j_2} * \dots *L_{j_t} \right| \\
         &= \sum_{s=1}^m \sum_{t=1}^D    \left(-2\right)^{s+t-2} \sum_{ 1 \leq i_1 <  \dots<i_s \leq m }\sum_{ 1 \leq j_1  < \dots<j_t \leq D } \left| b_{i_1}  * \dots * b_{i_s} * L_{j_1}  * \dots *L_{j_t}  \right| 
    \end{align}
    where any summation outside of the Hamming weight $\left| \cdot \right|$ is over the integers (the other sums are in $\mathds{F}_2$-vector spaces).

Since we only care about the weight $\mod 2^{D-1}$ the coefficient $\left(-2\right)^{s+t-2}$ tells us that we can ignore the terms with $s+t \geq D+1$ and furthermore that our task reduces to showing that 
\begin{align}
    \left| b_{i_1}  * \dots * b_{i_s} * L_{j_1}  * \dots *L_{j_t} \right| \equiv 0 \mod 2^{D +1 - \left(s +t\right)}
\end{align}
for all $1 \leq s \leq D$ and $1 \leq t \leq D-s$. 

The case $s+t = D$ is resolved by the proof of Theorem \ref{TransversalCZThm}, so we can assume that $1 \leq s \leq D$ and $1 \leq t \leq D-s-1$. Then the union of all of the color types of each term in the product $T = \left(\bigcup_{k=1}^s T\left(b_{i_k}\right)\right) \bigcup \left(\bigcup_{k=1}^t T\left(L_{j_k}\right)\right)$ has size $|T| \leq s + \left(t+1\right) \leq D$. The desired claim follows by applying \ref{ProductDivisibilityLemma}.
\end{proof}

In fact, we can extend this theorem \ref{RDTheorem} to show that applying $R_{D-\ell}$ transversally on a subset of qubits specified by the product of $\ell$ $X$-logical basis operators also preserves the code space; the case $\ell=0$ is what we just proved (where the product of $0$ logicals is understood to specify all qubits), and the case $\ell=D$ becomes trivial because $R_0 = \text{Id}$. The proof of this more general theorem uses the same ideas as those of Theorem \ref{RDTheorem} with only minor modifications.

\begin{theorem}[Transversal $R_{D-\ell}$ on a Subset Preserves Code Space] \label{RlTheorem}
Consider a cell-wise flasque locally acyclic $(D+1)$-colorable sheaf $\mathcal{F}$ with defining $\left(D-1\right)$-level codes that are $D$-even. Choose any $1 \leq \ell < D$ and any set of $\ell$ $X$-logical basis operators $\left(L_1,\dots,L_\ell\right)$ of distinct color types $L_j \in \mathcal{L}^{\left(1\right)}_{0,T_j}$, where $j\neq j' \implies T_j \neq T_{j'}$. Let $\Upsilon := L_1 * \dots *L_\ell$ denote the product of these $\ell$ $X$-logicals. Then $R_{D-\ell}$ applied transversally on every qubit within the support $\text{supp}\left(\Upsilon\right)$ in the code $\mathcal{C}_\mathcal{F}\left(0,D-2\right)$ preserves the code space.
\end{theorem}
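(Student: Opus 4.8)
The plan is to reduce, via the analysis in \ref{RSubsetAction}, to verifying two divisibility conditions for the gate $R_{D-\ell}$ applied on $\mathrm{supp}(\Upsilon)$: that $|\Upsilon * s| \equiv 0 \bmod 2^{D-\ell}$ and $|\Upsilon * L * s| \equiv 0 \bmod 2^{D-\ell-1}$ for every $X$-stabilizer $s$ and every $X$-logical operator $L$ of the code $\mathcal{C}_{\mathcal{F}}(0,D-2)$. The structural fact I would establish first is that, since $T_1,\dots,T_\ell$ are distinct and all contain the color $0$, they pairwise meet exactly in $\{0\}$; hence repeated application of \ref{LogicalLogicalBasisProduct} (legitimate because every intermediate level stays below $D$, as $\ell<D$) shows $\Upsilon = L_1 * \dots * L_\ell \in \mathcal{L}^{(\ell)}_{0,T_\Upsilon}$ with $T_\Upsilon := \bigcup_i T_i$ and $|T_\Upsilon| = \ell+1$. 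In particular $\Upsilon$ is a single level-$\ell$ logical representative, so it may be re-expanded in the overcomplete basis $\mathcal{L}^{(\ell)}_i$ for any color $i$.

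For the first condition I would decompose $s = \sum_{i=1}^{m} b_i$ into level-$1$ coboundary basis elements $b_i \in \mathcal{B}^{(1)}$ and expand $|\Upsilon * s|$ using the $\mathds{F}_2$ inclusion-exclusion weight identity of \autocite{PinCodes}. Since the degree-$t$ terms carry a factor $(-2)^{t-1}$, modulo $2^{D-\ell}$ it suffices to prove $|L_1 * \dots * L_\ell * b_{i_1} * \dots * b_{i_t}| \equiv 0 \bmod 2^{D-\ell+1-t}$ for $1 \leq t \leq D-\ell$. This is a product of $\ell$ level-$1$ logical and $t$ level-$1$ coboundary basis elements whose color types union to at most $(\ell+1)+t$. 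When $t < D-\ell$ this is $\leq D$, so \ref{ProductDivisibilityLemma} applies directly and gives divisibility by $2^{D-(\ell+t)+1} = 2^{D-\ell+1-t}$. The remaining case $t = D-\ell$ only requires even weight, and there the color union may reach $D+1$; I would then collapse $L_1 * \dots * L_\ell$ to $\mathcal{L}^{(\ell)}_{0,T_\Upsilon}$ as above, re-expand it as $\sum_{T'\ni i}\Upsilon_{T'}$ with $\Upsilon_{T'}\in\mathcal{L}^{(\ell)}_{i,T'}$ for $i$ a color occurring among the $b_{i_j}$, observe that $|\Upsilon * b_{i_1} * \dots * b_{i_{D-\ell}}| \equiv \sum_{T'}|\Upsilon_{T'} * b_{i_1} * \dots * b_{i_{D-\ell}}| \bmod 2$ (higher-intersection terms contributing even weight), and note each summand has color union $\leq |T'| + (D-\ell) - 1 = D$; applying the fixed-color-type decomposition of \ref{ProductDivisibilityLemma}—now with one level-$\ell$ logical factor, which simply contributes $\ell$ codewords of $\mathcal{F}_\tau$ toward the degree-$D$ product at each $(D-1)$-face $\tau$ of a $D$-color type—forces even weight. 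This mirrors the device used in the proof of \ref{TransversalCZThm}.

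For the second condition the scheme is the same, additionally decomposing $L = \sum_m L'_m$ into level-$1$ logical basis elements of color type $\{0,m\}$ and expanding in both decompositions; modulo $2^{D-\ell-1}$ the surviving terms are $|\Upsilon * L'_{m_1} * \dots * L'_{m_u} * b_{i_1} * \dots * b_{i_v}|$ with $u+v\leq D-\ell$, required to be divisible by $2^{D-\ell+1-u-v}$. A subtlety absent from the first condition is that a type $\{0,m_j\}$ may coincide with some $T_i=\{0,c_i\}$; by \ref{LogicalLogicalBasisProduct} each colliding pair $L_i * L'_{m_j}$ then lies in $\langle\mathcal{B}^{(2)}\rangle$, and after this reorganization the whole product is a product of at most $D$ basis elements (of levels $1$ and $2$) whose color types union to at most $(\ell+u+1)+v\leq D+1$. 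Unions of size $\leq D$ are handled by \ref{ProductDivisibilityLemma}; the size-$(D+1)$ case occurs only when $u+v = D-\ell$, where only even weight is needed, and is again settled by collapsing the color-$0$-sharing logical factors into a single higher-level logical, re-expanding in a color basis adapted to a color of the coboundary part, and applying \ref{ProductDivisibilityLemma} to each summand, exactly as for \ref{TransversalCZThm}. The endpoints $\ell=0$ and $\ell=D$ reduce respectively to \ref{RDTheorem} and to the trivial statement $R_0 = \mathrm{Id}$.

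\emph{Main obstacle.} The inclusion-exclusion bookkeeping and the color counting are routine. The delicate part is the boundary case in which the color types of the factors of a product union to $D+1$, where \ref{ProductDivisibilityLemma} cannot be invoked as stated: I must recognize that all logical factors share the color $0$ so their product collapses, via \ref{LogicalLogicalBasisProduct}, to a single higher-level logical (with colliding color types instead producing level-$2$ coboundaries), re-expand that logical in a color basis adapted to a color carried by the coboundary part, and check that the fixed-color-type decomposition underlying \ref{ProductDivisibilityLemma} extends to such higher-level basis elements—precisely the point handled implicitly in the proof of \ref{TransversalCZThm}, and which here relies on $D$-evenness of each $\mathcal{F}_\tau$ lifting to the product codes $(\mathcal{F}_\tau)^{*\ell}$ via \ref{MultiEvennessCondition2}.
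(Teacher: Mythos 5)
Your proposal is correct and takes essentially the same route as the paper's proof: reduce via the subset-$R_\ell$ analysis of \ref{RSubsetAction} to the two divisibility claims, expand over the coboundary and color-logical basis decompositions with the inclusion--exclusion weight identity, invoke Lemma \ref{ProductDivisibilityLemma} whenever the color types union to at most $D$, and settle the boundary terms (where only evenness is needed) by the collapse-and-re-expand device from the proof of Theorem \ref{TransversalCZThm}, including the implicit extension of the fixed-color-type decomposition to higher-level logical factors that you rightly flag. Your only deviation, reorganizing colliding color types into level-$2$ coboundaries in the second condition, is unnecessary---Lemma \ref{ProductDivisibilityLemma} does not require distinct types, and a collision only shrinks the color union to at most $D$ so those terms are handled directly---but this detour does not affect correctness.
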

\begin{proof}
Per the discussion in \ref{RSubsetAction}, it suffices to establish two separate claims; the first is that for every $X$ stabilizer $b \in \left\langle \mathcal{B}^{\left(1\right)} \right\rangle$
\begin{align}
  \left| b * \Upsilon \right| \equiv 0 \mod 2^{D-\ell}
\end{align}

The second claim is that for any $X$ logical operator representative $L \in \left\langle \mathcal{L}_0^{\left(1\right)} \right\rangle$ and any $X$ stabilizer $b \in \left\langle \mathcal{B}^{\left(1\right)} \right\rangle$
\begin{align}
  \left| b * L* \Upsilon \right| \equiv 0 \mod 2^{D-\ell-1}
\end{align}

We can proceed with each claim exactly as we did for the second claim in the proof of \ref{RDTheorem}. We decompose the logical $L$ into a sum over $D$ logicals of different colors 
\begin{align}
    L = \sum_{j=1}^D L_{j}
\end{align}
where $L_j \in\mathcal{L}^{(1)}_{0, \{0,j\}}$.

Similarly, we can decompose $b$ as a sum over $m$ basis elements $b_i \in \mathcal{B}^{\left(1\right)}$ for some integer $m$ which will not be important
\begin{align}
    b = \sum_{i=1}^m b_i
\end{align}
After substitution and expanding the weight of the sum we get
\begin{align}
 \left| b * \Upsilon \right| =  \sum_{s=1}^m \left(-2\right)^{s-1} \sum_{ 1 \leq i_1 <  \dots<i_s \leq m } \left| b_{i_1}  * \dots * b_{i_s} * L_1  * \dots * L_\ell  \right| 
\end{align}
for the first claim and 
\begin{align}
   \left| b * L* \Upsilon \right|  &= \sum_{s=1}^m \sum_{t=1}^D    \left(-2\right)^{s+t-2} \sum_{ 1 \leq i_1 <  \dots<i_s \leq m }\sum_{ 1 \leq j_1  < \dots<j_t \leq D } \left| b_{i_1}  * \dots * b_{i_s} * L_{j_1}  * \dots *L_{j_t} * L_1  * \dots * L_\ell  \right| 
    \end{align}
for the second claim. 

Since we only care about the weight $\mod 2^{D-\ell}$ in the first sum and $\mod 2^{D-\ell-1}$ in the second sum, our task reduces to showing
\begin{align}
    \left| b_{i_1}  * \dots * b_{i_s} * L_1  * \dots * L_\ell  \right| \equiv 0 \mod 2^{\ell +1 - s}
\end{align}
for the terms in the first sum with $1 \leq s \leq D-\ell$ and
\begin{align}
    \left| b_{i_1}  * \dots * b_{i_s} * L_{j_1}  * \dots *L_{j_t} * L_1  * \dots * L_\ell  \right| \equiv 0 \mod 2^{\ell +1 - \left(s +t\right)}
\end{align}
for the terms in the second sum with $1 \leq s \leq \ell$ and $1 \leq t \leq D-\ell-s$ . 

The cases $s +\ell = D$ and $s+t + \ell = D$ are proven in Theorem \ref{TransversalCZThm}, so we can assume that $1 \leq s \leq D-\ell-1$ and $1 \leq t \leq D-\ell -s-1$, respectively. Then in each case, the union of all the color types of each term in the product $T = \left(\bigcup_{k=1}^s T\left(b_{i_k}\right)\right) \bigcup \left(\bigcup_{k=1}^\ell T\left(L_k\right)\right)$ or $T = \left(\bigcup_{k=1}^s T\left(b_{i_k}\right)\right) \bigcup \left(\bigcup_{k=1}^t T\left(L_{j_k}\right)\right) \bigcup \left(\bigcup_{k=1}^\ell T\left(L_k\right)\right)$ has size $|T| \leq D$. The desired claims follow by applying \ref{ProductDivisibilityLemma}.
\end{proof}

One more notable instance of this theorem is the case $\ell=D-1$, wherein we are applying $R_{1} = Z$ to every qubit supported on the operator $L_1 * \dots *L_{D-1}$, which---by repeated application of \ref{LogicalLogicalBasisProduct}---is a $\left(D-1\right)$-level logical $L_1 * \dots *L_{D-1} \in \mathcal{L}^{\left(D-1\right)}_{0,\bigcup_j T\left(L_j\right)}$. By the following simple claim
\begin{claim}
   If $\mathcal{C}$ is $D$-orthogonal, then $\mathcal{C}^{*\left(D-1\right)} \subset \mathcal{C}^\perp$
\end{claim}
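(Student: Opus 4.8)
The plan is to unwind the two notions into the common language of Hamming weights of element-wise products and then apply the $D$-orthogonality hypothesis once. First I would record the elementary identity that for $\mathds{F}_2$-vectors the bilinear form and the weight of the product agree: for any $u, v \in \mathds{F}_2^n$ we have $u \cdot v = |u * v| \bmod 2$, since $u\cdot v$ counts mod $2$ precisely those coordinates on which both $u$ and $v$ are supported. Using associativity and commutativity of $*$, this gives, for any $c_1,\dots,c_D \in \mathds{F}_2^n$,
\begin{align}
\left(c_1 * \dots * c_{D-1}\right) \cdot c_D = \left| c_1 * \dots * c_{D-1} * c_D \right| \bmod 2.
\end{align}

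Next I would reduce to checking generators. By definition $\mathcal{C}^{*(D-1)}$ is the $\mathds{F}_2$-span of the vectors $c_1 * \dots * c_{D-1}$ over all $(c_1,\dots,c_{D-1}) \in \mathcal{C}^{D-1}$, and $\mathcal{C}^\perp$ is a linear subspace, hence closed under $\mathds{F}_2$-linear combinations; so it suffices to show each such generator lies in $\mathcal{C}^\perp$, i.e. that $\left(c_1 * \dots * c_{D-1}\right)\cdot c_D = 0$ for every $c_D \in \mathcal{C}$. By the identity above this equals $\left| c_1 * \dots * c_{D-1} * c_D \right| \bmod 2$, which vanishes exactly by the $D$-orthogonality of $\mathcal{C}$ (Definition \ref{MultiOrthogonalSpace} with $\mathcal{C}_1 = \dots = \mathcal{C}_D = \mathcal{C}$). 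Therefore every generator of $\mathcal{C}^{*(D-1)}$ is orthogonal to all of $\mathcal{C}$, and passing to spans yields $\mathcal{C}^{*(D-1)} \subset \mathcal{C}^\perp$.

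I do not expect a genuine obstacle: the statement is essentially a dictionary entry translating between the "weights of products" and "orthogonality" phrasings, combined with the standard fact that containment in a dual code can be checked on a spanning set (mirroring the remark earlier in the excerpt that multi-orthogonality can be verified on tuples of basis elements). The only subtlety worth a sentence of care is bookkeeping of the number of factors: the hypothesis needs $D$ elements of $\mathcal{C}$, of which $D-1$ are absorbed into the single vector of $\mathcal{C}^{*(D-1)}$ and the last is the arbitrary test vector used to certify membership in $\mathcal{C}^\perp$, so no stronger divisibility assumption (such as $D$-evenness) is required.
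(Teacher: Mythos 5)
Your proof is correct and follows essentially the same route as the paper's: both reduce to the generators $c_1 * \dots * c_{D-1}$ via linearity of $\mathcal{C}^\perp$ (the paper does this by explicitly expanding an arbitrary element of $\mathcal{C}^{*(D-1)}$ as an $\mathds{F}_2$-combination), convert the inner product to $\left| c_1 * \dots * c_{D-1} * c_D \right| \bmod 2$, and invoke $D$-orthogonality. No issues.
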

\begin{proof}
\begin{align}
  \forall  c_* \in \mathcal{C}^{*\left(D-1\right)}, &\forall c \in C, \nonumber \\
  c_* \cdot c 
  & \equiv \sum_{\left(c_{j_1},\dots,c_{j_{D-1}}\right) \in C^{D-1}} a_{j_1,\dots,j_{D-1} } \left(c_{j_1} * \dots * c_{j_{D-1}}\right) \cdot c \mod 2 \\
  & \equiv \sum_{\left(c_{j_1},\dots,c_{j_{D-1}}\right) \in C^{D-1}} a_{j_1,\dots,j_{D-1} } \left|c_{j_1} * \dots * c_{j_{D-1}} * c\right| \equiv 0 \mod 2
\end{align}
\end{proof} we see that in our case where the defining $\left(D-1\right)$-level local codes are $D$-orthogonal, these (possibly trivial) $\left(D-1\right)$-level $X$ logicals $\mathcal{L}^{\left(D-1\right)}_{0,\bigcup_j T\left(L_j\right)} \subset Z^{1}\left(\mathcal{C}_{\mathcal{F}^{*\left(D-1\right)}}\left(D-2,0\right)\right)=Z^{1}\left(\mathcal{C}_{\overline{\mathcal{F}}}\left(D-2,0\right)\right)$ are equivalently (possibly trivial) $Z$ logicals of our original code $Z_1\left(\mathcal{C}_\mathcal{F}\left(0,D-2\right)\right)$. Hence we find that applying $Z$ to any nontrivial intersection of $D-1$ $X$-logicals (which requires that they are different color types per the second claim of \ref{LogicalLogicalBasisProduct}) is equivalent to applying some nontrivial $Z$ logical. 

Each of the transversal $C^{D-1}Z$ and $R_{D-\ell}$ gates we consider are nontrivial as long as at least one set of $D$ $X$-logicals have an odd-parity intersection. Consider any such set $\left(L_1,\dots,L_D\right)$ with distinct color types $L_j \in \mathcal{L}^{\left(1\right)}_{0,T_j}$ satisfying
\begin{align}
    \left| L_1 * \dots * L_D\right| \equiv 1 \mod 2
\end{align}
Furthermore, choose a basis of $X$ logicals such that $X_{L_j} \ket{0} = \ket{1_L}_{j}$ so that we can describe the logical action of the transversal gates with respect to this basis.

Per \ref{CZAction}, transversal $C^{D-1}Z$ applies a phase of $-1$ to the state $\bigotimes_{j=1}^D X_{L_j}\ket{0}$, which is equivalent to a logical $C^{D-1}Z$ applied across the $D$ blocks. Meanwhile, we saw in Section \ref{RAction} that 
\begin{align}
    R_D^{\otimes n} \left(\prod_{j=1}^D X_{L_j} \right) \ket{0} =  \exp\left(\frac{2 \pi i}{2^D} \left| \sum_{j=1}^D L_j \right|\right) \left(\prod_{j=1}^D X_{L_j} \right) \ket{0}
\end{align}
Using the same expansion as in the proof of \ref{RDTheorem},
\begin{align}
       \left|\sum_{j=1}^D L_j \right| 
        &= \sum_{t=1}^D  \left(-2\right)^{t-1}  \sum_{ 1 \leq j_1 < \dots<j_t \leq D }  \left| L_{j_1} * \dots *L_{j_t} \right| \\
        &\equiv \left(-2\right)^{D-1}  \left| L_{1}  * \dots *L_{D} \right| \mod 2^D\\
        &\equiv \left(-2\right)^{D-1} \mod 2^D
\end{align}
All of the terms $t<D$ vanish by \ref{ProductDivisibilityLemma}, so the only term left is the odd-parity product of all $D$ logicals. We conclude that the phase applied is $\exp\left(\frac{2 \pi i}{2^D}\left(-2\right)^{D-1} \right) = -1$ so that transversal $R_D$ acts as a logical $C^{D-1}Z$ gate applied across the $D$ logical qubits labeled by the $L_j$ in a single code block. 

If we pick some subset of $\ell$ distinct logicals with indices $S = \{j_i \mid 1\leq i \leq \ell \text{ and } 1 \leq j_1 < \dots < j_\ell \leq D \}$ and define $\Upsilon := L_{j_1} * \dots * L_{j_\ell}$, then we can consider the transversal application of $R_{D-\ell}$ on the set of qubits in $\text{supp}\left(\Upsilon\right)$. Per \ref{RSubsetAction}, we have 
\begin{align}
    \otimes_{k \in \text{supp}\left(\Upsilon\right)} R^{\left(k\right)}_{D-\ell} \left(\prod_{\substack{j=1\\j \notin S}}^D X_{L_j} \right) \ket{0} =  \exp\left(\frac{2 \pi i}{2^{D-\ell}} \left| \sum_{\substack{j=1\\j \notin S}}^D \Upsilon * L_j \right|\right) \left(\prod_{\substack{j=1\\j \notin S}}^D X_{L_j} \right) \ket{0}
\end{align}
We again use the same expansion as in the proof of \ref{RlTheorem},
\begin{align}
       \left|\Upsilon * \sum_{\substack{j=1\\j \notin S}}^D L_j \right| 
        &= \sum_{t=1}^{D-\ell}  \left(-2\right)^{t-1}  \sum_{ \substack{1 \leq j_1 < \dots<j_t \leq D\\j_i \notin S} }  \left| \Upsilon * L_{j_1} * \dots *L_{j_t}  \right| \\
        &\equiv \left(-2\right)^{D-\ell-1}  \left| L_{1}  * \dots *L_{D} \right| \mod 2^{D-\ell}\\
        &\equiv \left(-2\right)^{D-\ell-1} \mod 2^{D-\ell}
\end{align}
All of the terms $t<D-\ell$ vanish by \ref{ProductDivisibilityLemma}, so the only term left is the odd-parity product of all $D$ logicals. We conclude that the phase applied is $\exp\left(\frac{2 \pi i}{2^{D-\ell}}\left(-2\right)^{D-\ell-1} \right) = -1$ so that our subset-transversal $R_{D-\ell}$ acts as a logical $C^{D-\ell-1}Z$ gate applied across the $D-\ell$ logical qubits labeled by the $\{L_j\}_{j\notin S}$ in a single code block. 

Furthermore, we can show that our subset-transversal $R_{D-\ell}$ commutes with any $X$-logical basis operator $\widetilde{L} \in \mathcal{L}^{\left(1\right)}_{0,T_j}$ of type $T\left(\widetilde{L}\right)=T_j$ for any $j \in S$ within the code space. We show this by acting on a generic $X$ logical basis state labeled by $L \in \mathcal{L}^{\left(1\right)}_{0,T}$
\begin{align}
    X_{\widetilde{L}}  \left( \otimes_{j \in \Upsilon} R^{\left(j\right)}_{D-\ell} \right) X_{\widetilde{L}} \ket{\psi_L} &= \left|S_X\right|^{-1/2} \sum_{s \in S_X} \exp\left(\frac{2 \pi i}{2^{D-\ell}}\left|\Upsilon*\left(L + \widetilde{L} + s\right)\right|  \right)\ket{L + s} \\
    &= \exp\left(\frac{2 \pi i}{2^{D-\ell}}\left|\Upsilon*\left(L+\widetilde{L}\right)\right| \right) \ket{\psi_L}\\
    &= \exp\left(\frac{2 \pi i}{2^{D-\ell}}\left(\left|  \Upsilon*\widetilde{L} \right| - 2\left|  \Upsilon*L*\widetilde{L} \right|\right) \right) \left( \otimes_{j \in \Upsilon} R^{\left(j\right)}_{D-\ell} \right) \ket{\psi_L}
\end{align}
We want to show that the phase is equal to $1$, or equivalently, that
\begin{align}
 \left|  \Upsilon*\widetilde{L} \right| - 2\left|  \Upsilon*L*\widetilde{L} \right| \equiv 0 \mod 2^{D-\ell}
\end{align}
Since the operators in the product $\Upsilon$ share a color type with $\widetilde{L}$, we conclude by \ref{ProductDivisibilityLemma} that $\left|\Upsilon*\widetilde{L}\right| \equiv 0 \mod 2^{D-\left(\ell+1\right)+1}$ so this term vanishes. If $\ell = D-1$ then the second term trivially vanishes because of the $2$ coefficient; otherwise, we can again use \ref{ProductDivisibilityLemma} to conclude that $\left|\Upsilon*L*\widetilde{L}\right| \equiv 0 \mod 2^{D-\left(\ell+2\right)+1}$ so that with the $2$ coefficient the term again vanishes. We conclude that the subset-transversal $R_{D-\ell}$ logical action depends only on logical qubits associated with $X$-logicals with a color type distinct from the types that define the supporting subset on which we apply $R_{D-\ell}$. 

One mechanism by which to find sets of $D$ $X$-logicals with odd intersection is when a $Z$ logical itself has support given by an intersection of $(D-1)$ $X$-logicals. Then, because this $Z$ logical must anti-commute with some $X$ logical we get the desired set of $D$ $X$-logicals with odd intersection. For example, in $D=2$ when the code is self-dual, every $Z$ logical has support identical to some $X$ logical. When we choose the standard (Darboux) symplectic basis for our $X$ and $Z$ logicals then we find that each $X$ basis logical has a unique pairing with another $X$ basis logical from the other color which share an odd overlap. We explore this setting further in Section \ref{2DSelfDual} describing the self-dual construction on an expanding coset complex.

\section{Quantum Tanner Color Codes on \texorpdfstring{$D$}{TEXT}-Dimensional Expanders}\label{DdimQTCC}

In this section, we explicitly instantiate the framework described above by choosing an expanding simplicial complex along with a local code that satisfies the multiplication property relevant for transversal gates described in \ref{TannerCodeTransversal}. As we noted before, even for a fixed complex $\Delta$ with uniform degree $\forall \sigma \in \Delta\left(D-1\right), \left| \sigma^\uparrow \right|= q$ and a choice of local code $\mathcal{C}_{D-1} \subset \mathds{F}_2^q$, it is generally difficult to choose an appropriate orientation of the local code at each $\left(D-1\right)$-simplex $\mathcal{F}_\sigma \cong \mathcal{C}_{D-1}$ so that the resulting lower-level local codes $\mathcal{F}_\tau$ for $\tau \in \Delta\left(\tau <D-1\right)$ are nonzero; concretely, for independent random choices at each face, there is no immediate reason to expect $\dim \mathcal{F}_\tau > 0$ unless we pick large rate $\rho\left(\mathcal{C}_{D-1} \right) > \frac{D-\left|\tau\right|}{D-\left|\tau\right|+1}$, and such rate is incompatible with the multiplication property. To overcome this challenge, we use a complex $\Delta$ with a rich symmetry group $\text{Aut}\left(\Delta\right)$ and a local code $\mathcal{C}_{D-1}$ compatible with the symmetry, so that the orientation at each $\left(D-1\right)$-face is naturally defined. 

For any $D \geq 2$ and $q = 2^\eta \geq 8$, fix $\Delta$ to be any member of the infinite family of $G:=\text{SL}_{D+1}\left(\mathds{F}_{q^m} \right)$ coset complexes described in \ref{SLCosetComplex}, which requires choosing $\eta$ and $m$ so that $q^m-1$ and $\left(D+1\right)$ are coprime (see \ref{CoprimeRemark}). Recall that any simplex $\sigma \in \Delta\left(D-1\right)$ has $\left| \sigma^\uparrow \right| = \left| K_{\{j\}^c} \right| = q$. Consequently, we will choose the $\left(D-1\right)$-level local codes to be isomorphic to the Reed-Muller code $\mathcal{C}_{D-1} :=\text{RM}\left(r, \eta\right)$ whose codewords are evaluations of multi-linear polynomials of degree at most $r$ on $\eta$ binary variables, and which is $2^{\left\lfloor \left(\eta-1\right)/r\right\rfloor}$-divisible (if we want $2^D$ divisibility then we have to think ahead when choosing $\eta$. The choice of $r$ will generically impact the Tanner code parameters in ways that we do not currently fully understand). We think of $\mathds{F}_2^\eta$ as indexing the symbols in $\mathcal{C}_{D-1} \subset \mathds{F}_2^{\mathds{F}_2^\eta}$, so that for any $x \in \mathds{F}_2^\eta$ and some codeword $c_f\in \mathcal{C}_{D-1}$ corresponding to the polynomial $f \in \mathds{F}_2[X_1,\dots,X_\eta]$,
\begin{align}
    c_f\left(x\right) = f\left(x_1,\dots,x_\eta\right)
\end{align}

Meanwhile, recall that the simplices $\sigma \in \Delta\left(D-1\right)$ are labeled by cosets $g K_{\left\{j\right\}^c}$ for $g \in G$ and $j \in \mathds{Z}_{D+1}$, and have sets $\sigma^\uparrow$ that we can index with elements of $G$, specifically
\begin{align}
    \sigma^\uparrow = gK_{\left\{j\right\}^c} = \{g h\}_{h \in K_{\left\{j\right\}^c}}
\end{align}
To fix an orientation of a copy of $\mathcal{C}_{D-1}$ onto $\sigma$, it suffices to pick a set isomorphism between $\sigma^\uparrow$ and $\mathds{F}_2^\eta$. 

To start, for each color $j \in \mathds{Z}_{D+1}$ we fix an arbitrary coset representative $g_\sigma \in G$ for each \linebreak $\sigma \in \Delta_{\mathds{Z}_{D+1}\setminus\{j\}}\left(D-1\right)$, which allows us to use $K_{\left\{j\right\}^c}$ to index the symbols of $\mathcal{F}_\sigma$: the index $h \in K_{\left\{j\right\}^c}$ is interpreted as the element $g_\sigma h \in \sigma^\uparrow$. We can then use the type-cycling automorphism $ \pi_{T^+}  \in \text{Aut}\left(\Delta\right)$ to index all of these codes with $K_{\left\{0\right\}^c}$, recalling $K_{\left\{j\right\}^c} =  \pi_{T^+} ^{\circ j} K_{\left\{0\right\}^c}$. 

Next, recall that the groups $K_{\left\{j\right\}^c}$ (in particular $K_{\left\{0\right\}^c}$) are each isomorphic to the additive structure of the field $\mathds{F}_q^+$. Fix any group isomorphism $\gamma : K_{\left\{0\right\}^c} \rightarrow \mathds{F}_q^+$, such as simply taking the coefficient of $t$ in the bottom-left entry of the matrix representation \ref{MatrixRep} of $K_{\left\{0\right\}^c}$; it is essential to preserve the additive structure rather than choose any generic set isomorphism. 

Finally, we treat $\mathds{F}_q^+$ as a vector space over $\mathds{F}_2$ and pick a vector space isomorphism $U: \mathds{F}_q^+ \rightarrow \mathds{F}_2^\eta$. As a concrete example, we could pick any multiplicative generator $\omega$ of $\mathds{F}^\times_q$, and define the isomorphism $U$ by its action
\begin{align}
    \forall 0\leq j < \eta, \quad U\left(\omega^j\right) := e_{j+1}
\end{align}
where $\{e_j\}^\eta_{j=1}$ is the standard basis for $\mathds{F}_2^\eta$, and we extend the action of $U$ to the remainder of $\mathds{F}_q$ by linearity. 

This series of maps determines an orientation of the code $\mathcal{F}_\sigma$: for all $c_f \in \mathcal{C}_{D-1}$ and $\sigma \in \Delta_{\mathds{Z}_{D+1}\setminus\{j\}}\left(D-1\right)$ corresponding to coset $g_\sigma K_{\left\{j\right\}^c}$ there is a local codeword $c_f^{\left(\sigma\right)} \in \mathcal{F}_\sigma$ such that
\begin{align}
\forall h \in K_{\left\{j\right\}^c}, \quad c_f^{\left(\sigma\right)}\left(g_\sigma h\right) := c_f\left(U \circ\gamma\circ \left( \pi_{T^+} ^{-1}\right)^{\circ j}\left(h\right)\right) 
\end{align}
(note the elements $g_\sigma h$ constitute the entire set $\sigma^\uparrow$). 

We claimed that this orientation was natural, so what is left is to show that different choices of coset representatives $g_\sigma$, group isomorphism $\gamma$, and vector space isomorphism $U$ do not affect the definition of $\mathcal{F}_\sigma$; although these choices generically \emph{do} affect the definition of each codeword $c_f^{\left(\sigma\right)}$, the altered codeword is simply a different member of the original orientation of $\mathcal{F}_\sigma$. This follows from the symmetry of $\mathcal{C}_{D-1}$, namely $\text{Aut}\left(\text{RM}\left(r,\eta\right)\right) = \text{AGL}_\eta\left(\mathds{F}_2\right)$. Different choices $\widetilde{U}$ and $\widetilde{\gamma}$ in the definition of $c_f^{\left(\sigma\right)}$ amount to a permutation of the code symbols by a linear transformation $\widetilde{U} \circ \widetilde{\gamma} \circ\gamma^{-1} \circ U^{-1} \in \text{Aut}\left(\mathcal{C}_{D-1}\right)$
\begin{align}
    c_f\left(\widetilde{U} \circ \widetilde{\gamma} \circ \left( \pi_{T^+} ^{-1}\right)^{\circ j}\left(h\right)\right)  =  c_f\left(\left(\widetilde{U} \circ \widetilde{\gamma} \circ\gamma^{-1} \circ U^{-1}\right)  \circ U \circ\gamma\circ \left( \pi_{T^+} ^{-1}\right)^{\circ j}\left(h\right)\right)
\end{align}
Similarly, different choices of coset representative result in a translation of the vector space, since $\gamma\left(\widetilde{g}_\sigma\right) = \gamma\left(\widetilde{g}_\sigma g^{-1}_\sigma g_\sigma\right) = \gamma\left(\widetilde{g}_\sigma g^{-1}_\sigma\right) + \gamma\left(g_\sigma\right)$, which is an affine transformation in the symmetry group of our local code $\text{Aut}\left(\text{RM}\left(r,\eta\right)\right)$. 

In fact, we see that the above definition of the orientation would work for any local code $\mathcal{C}_{D-1}$ with an isomorphism (the pullback of $U$) $U^*:\mathcal{C}_{D-1} \rightarrow \mathds{F}_2^{\mathds{F}_q}$ such that $ \text{AGL}_1\left(\mathds{F}_q\right)\subset \text{Aut}\left(U^{*}\left(\mathcal{C}_{D-1}\right)\right)$. Possible such alternatives to Reed-Muller codes when $q=2^\eta$ are Reed-Solomon codes over $\mathds{F}_q$ (as used in \autocite{NewHDXCodes}) or their binary alphabet $\mathds{F}_2$-subfield subcodes the extended BCH codes. Being polynomial codes, these also satisfy the multiplication property. The choice of Reed-Solomon codes was made in \autocite{NewHDXCodes}, which would yield non-qubit codes. We have chosen Reed-Muller codes over BCH codes for their self-duality and superior scaling of dual distance with $q$ when we fix the rate to be $1/2$. 

With the orientation fixed, our definition of the sheaf $\mathcal{F}$ is complete and all that remains is to specify the pair of integers $x,z$ to obtain the code $\mathcal{C}_{\mathcal{F}\left(\Delta, \{\mathcal{F}_\sigma\}_{\sigma \in \Delta\left(D-1\right)}\right)}\left(x,z\right)$. For each choice of $m$ in the definition of the complex $\Delta$ we get a member of an infinite family of qLDPC codes with maximum check weight fixed by $q$, $x$, $z$, and $D$, and with total number of qubits $|\Delta\left(D\right)|=\left|\text{SL}_{D+1}\left(\mathds{F}_{q^m} \right)\right|$ growing with $m$. 

What is left is to show that this orientation gives a quantum code with nontrivial stabilizers. Specifically, we will show that $\dim \mathcal{F}_\sigma > 0$ for all $\sigma \neq \emptyset$. 

\begin{proposition}\label{VertexCodeDimLB}
    Consider any local code $\mathcal{C}_{D-1}$ with an isomorphism $U^*: \mathcal{C}_{D-1} \rightarrow \mathds{F}_2^{\mathds{F}_q}$ such that $ \text{AGL}_1\left(\mathds{F}_q\right)\subset \text{Aut}\left(U^{*}\left(\mathcal{C}_{D-1}\right)\right)$ and fix an orientation as above with a choice of group isomorphism $\gamma : K_{\left\{0\right\}^c} \rightarrow \mathds{F}_q^+$. Then for any non-empty type $\emptyset \neq T \subset \mathds{Z}_{D+1}$ and $\sigma \in \Delta_T\left(|T|-1\right)$ there is an embedding 
    \begin{align}
        \iota: \mathcal{C}_{D-1}^{\otimes|T^c|} \hookrightarrow \mathcal{F}_\sigma
    \end{align}
    of the $|T^c|$-fold tensor code into the local code at $\sigma$. Therefore, we have $\dim\left( \mathcal{F}_\sigma \right) \geq \dim\left(\mathcal{C}_{D-1}\right)^{D+1-|\sigma|}$.
\end{proposition}
\begin{proof}
    In our construction, the code $\mathcal{F}_\sigma$ of interest is isomorphic to the code $\mathcal{F}_{K_T}$ corresponding to the identity coset of $K_T$, so we only need to establish the embedding for each non-empty type $T$. The cases $T = \mathds{Z}_{D+1}$ and $T = \{j\}^c$ are trivial. We can use the type cycling automorphism $\pi_{T^+}$ to further reduce the number of cases we need to check, though for our argument we will only use it to permute the type so that we can assume $0 \notin T^c$. 

    We remind the reader of the following definitions and results from sections \ref{CosetComplex} and \ref{SLCosetComplex}:
    \begin{align}
    \forall 0 < j < D+1, \quad K_{\{j\}^c} &= \left\{ e_{j,j+1} \left(\alpha t\right) \middle\bracevert \alpha \in \mathds{F}_q \right\} \\
    K_T &= \left\langle K_{\{j\}^c} \right\rangle_{j \in T^c}
    \end{align}
    and the elementary matrices $e_{i,j}\left(\alpha\right)$ constituting the groups $K_{\{j\}^c}$ satisfy the group commutation relation \ref{Commutator}
    \begin{align}
    \left[ e_{i,j}\left(\alpha\right),  e_{j,k}\left(\beta\right) \right]= e_{i,k}\left(\alpha\beta\right) 
    \end{align}

    Because we have assumed that $0 \notin T^c$ we know that the group $K_T \subset K_0$ is a subgroup of the upper triangular matrices with ones along the diagonal. We will be interested in the commutator subgroup $\left[K_T,K_T \right]$, which we will use to define our embedding $\iota$. The commutator subgroup is generated by the conjugates of the commutators of the generators $K_{\{j\}^c}$ of $K_T$
    \begin{align}
        \left[K_T,K_T \right] = \left\langle g^{-1}[h_i,h_j]g \middle\bracevert g \in K_T,\, i,j \in T^c, \, h_i \in K_{\{i\}^c}, \,h_j \in K_{\{j\}^c} \right\rangle
    \end{align}
    One can show by direct computation that conjugation of an elementary matrix $e_{i,j}\left(\alpha\right)$ by an upper-triangular matrix $g$ leaves all entries $\left(a,b\right)$ with $b-a \leq j-i$ unchanged; that is, conjugation only has the potential to change entries further from the main diagonal than entry $\left(i,j\right)$. Each subgroup $K_{\{j\}^c}$ has its single nontrivial entry immediately adjacent to the main diagonal and therefore is not in the commutator subgroup. Meanwhile, the commutators of the generators by themselves already generate the whole subgroup 
    \begin{align}
        \left[K_T,K_T \right] = \left\langle [h_i,h_j] \middle\bracevert i,j \in T^c, \, h_i \in K_{\{i\}^c}, \,h_j \in K_{\{j\}^c} \right\rangle
    \end{align}
    We conclude that the abelianization of $K_T$ can be decomposed as a product:
    \begin{align}
        K_T/\left[K_T,K_T \right] \cong \times_{j\in T^c} K_{\{j\}^c}
    \end{align}
    The product form of the abelianization is what allows us to define the tensor code embedding. We let $f^*: \mathds{F}_2^{K_T/[K_T,K_T]} \rightarrow \mathds{F}_2^{K_T}$ denote the pullback of the projection $f: K_T \rightarrow K_T/[K_T,K_T]$,
    \begin{align}
        f^*\left(e_{g [K_T,K_T]}\right) = \sum_{h \in [K_T,K_T]} e_{g h}
    \end{align}
    where $e_x$ denotes the binary indicator function $e_x\left(g\right) = 1$ iff $x=g$. Then clearly $f^*$ is injective, since for any $\phi \in \mathds{F}_2^{K_T/[K_T,K_T]}$, $f^* \phi \left(g\right) = \phi\left(g [K_T,K_T]\right)$ so that $f^*\phi = 0 \implies \phi = 0$ and we conclude $\ker f^* = 0$.

    Finally, for any tensor codeword $\bigotimes_{j \in T^c} c_j \in \mathcal{C}_{D-1}^{\otimes|T^c|}$ we use the isomorphisms $U$ and $\gamma$ to get a codeword 
    \begin{align}
       \left( \bigotimes_{j \in T^c}\left( \left(\pi_{T^+}^{*}\right)^{ -j} \circ \gamma^{*} \circ U^{-1}\right) \right) \left(\bigotimes_{j \in T^c} c_j \right) \in \mathds{F}_2^{\times_{j \in T^c} K_{\{j\}^c}}
    \end{align}
    where we remember that $\pi_{T^+} : K_{\{j\}^c} \rightarrow K_{\{j+1\}^c}$ so that $\pi_{T^+}^* : \mathds{F}_2^{K_{\{j+1\}^c}} \rightarrow \mathds{F}_2^{K_{\{j\}^c}}$ decreases the color index, which is why we use its inverse. 
    
    Composing this with $f^*$ completes the definition of $\iota: \mathcal{C}_{D-1}^{\otimes|T^c|} \hookrightarrow \mathcal{F}_\sigma$, 
    \begin{align}
        \iota := f^* \circ  \left( \bigotimes_{j \in T^c}\left(  \left(\pi_{T^+}^{*}\right)^{ -j} \circ \gamma^{*} \circ U^{-1}\right) \right)
    \end{align}
    which must be injective because $f^*$ is injective and the maps in the tensor product are each isomorphisms. 
\end{proof}

This is a crude first step in establishing the code parameters. From numerical calculations with small examples of vertex codes in the $2$-dimensional complex, we find some `evidence' that this symmetry does indeed play a role in determining the dimension. In these `experiments' we have found that for fields of prime order $q=p$ and cyclic local code $\mathcal{C}_{1}$ this lower bound on the local vertex code dimension $\dim \mathcal{F}_v$ from the tensor code embedding can be tight. When the local code  $\mathcal{C}_{1}$ has the additional symmetry of being reversible (i.e. the local code is a Linear Complementary Dual cyclic code) then the dimension seems to be slightly larger. When we switch to prime power $q=p^m$ fields and use affine invariant codes for $\mathcal{C}_1$, then the resulting vertex code dimension seems to be much larger---at least the cube of the local (edge) code dimension. In the next section we see how the global quantum code rate can depend on these local code rates. 

We conclude this section with the following proposition
\begin{proposition}\label{SheafAction}
    For the sheaves $\mathcal{F}\left(\Delta, \left\{\mathcal{C}_\sigma \cong \mathcal{C}_{D-1}\right\}_{\sigma \in \Delta(D-1)}\right)$ defined in this section with local code $\mathcal{C}_{D-1}$ and isomorphism $U^*:\mathcal{C}_{D-1} \rightarrow \mathds{F}_2^{\mathds{F}_q} $ such that $ \text{AGL}_1\left(\mathds{F}_q\right)\subset \text{Aut}\left(U^{*}\left(\mathcal{C}_{D-1}\right)\right)$, any of the simplicial automorphisms $\alpha \in G \rtimes \text{Aut}(G) \subset \text{Aut}(\Delta)$ induces a well-defined action on the sheaf $\alpha \rhd :C\left(\Delta,\mathcal{F}\right) \to C\left(\Delta,\mathcal{F}\right)$.
\end{proposition}
\begin{proof}
    First, we define the action $\alpha \rhd: \mathcal{F}_\sigma \to \mathcal{F}_{\alpha(\sigma)}$ on an arbitrary local code $\mathcal{F}_\sigma \subset \mathds{F}_2^{\sigma^\uparrow}$ for $\sigma \in \Delta(\ell)$. For any element $c \in \mathcal{F}_\sigma$ we define
\begin{align}
\alpha \rhd c := c \circ \left.\alpha^{-1}\right|_{\alpha(\sigma)^\uparrow}
\end{align}
where $\alpha$ naturally acts on $\Delta(D) \cong G$. Subsequently, the action on any $\ell$-cochain $f \in C^\ell\left(\Delta,\mathcal{F}\right)$ for any $\ell$-face $\sigma \in \Delta$ is given by
\begin{align}
    \left(\alpha \rhd f\right)(\sigma) = \alpha \rhd \left(f\left(\alpha^{-1} (\sigma)\right)\right)
\end{align}
where $\alpha$ acts on $\sigma$ via the associated simplicial automorphism and acts on the local code $\mathcal{F}_{\alpha^{-1}(\sigma)}$ to produce a local codeword in $\mathcal{F}_\sigma$ as desired. What is left to show is that the action on the local codes is well-defined. Because everything in our definition of the sheaf is induced from the defining $(D-1)$-level codes, it suffices to establish this for these codes. 

Consider an arbitrary $\sigma \in \Delta(D-1)$ and local codeword $c \in \mathcal{F}_\sigma$. We want to show that $c \circ \left.\alpha^{-1}\right|_{\alpha(\sigma)^\uparrow} \in \mathcal{F}_{\alpha(\sigma)}$. Without loss of generality, say that $\sigma$ is labeled by the coset $g_\sigma K_{\{0\}^c}$ and $\alpha(\sigma)$ is labeled by the coset $g_{\alpha(\sigma)}K_{\{j\}^c}$, where $g_\sigma$ and $g_{\alpha(\sigma)}$ are the coset representatives chosen when defining the code; the choice of coset representative gives the natural set isomorphism e.g. $\rho_{g_\sigma}: g_\sigma K_{\{0\}^c} \to K_{\{0\}^c}$ defined by $\rho_{g_\sigma}(h) = g_\sigma^{-1}h$ for $h \in g_\sigma K_{\{0\}^c}$. Recall that during the definition of the sheaf we also used the group isomorphisms $\gamma : K_{\{0\}^c} \to \mathds{F}_q^+$ and $U: \mathds{F}_q^+ \to \mathds{F}_2^\eta$ so that for any codeword $c \in \mathcal{F}_\sigma$ there is a local codeword $c' \in \mathds{F}_2^{\mathds{F}_2^\eta}$ such that
\begin{align}
   c = c' \circ U \circ \gamma \circ \rho_{g_\sigma}
\end{align}
(here we are assuming that $\mathcal{C}_{D-1} \subset \mathds{F}_2^{\mathds{F}_2^\eta}$ for concreteness and consistency with the discussion above). Hence, for $h \in K_{\{j\}^c}$
\begin{align}
c \circ \left.\alpha^{-1}\right|_{\alpha(\sigma)^\uparrow} (g_{\alpha(\sigma)}h) &= c' \circ U \circ \gamma \circ \rho_{g_\sigma}\circ \alpha^{-1}(g_{\alpha(\sigma)}h) \\
&= c' \circ U \circ \gamma \left( g_\sigma ^{-1} \alpha^{-1}(g_{\alpha(\sigma)}h) \right)\\
&= c' \circ U \circ \left(\gamma \left( g_\sigma ^{-1} \alpha^{-1}(g_{\alpha(\sigma)})\right) + \gamma\left(\alpha^{-1}(h) \right) \right)\\
&= c'' \circ U \circ \gamma \circ \rho_{g_{\alpha(\sigma)}}\left(g_{\alpha(\sigma)}h\right)
\end{align}
where the last line follows from the affine invariance of the local code; the first term in the third line is some translation, and the second term is a general linear transformation.
\end{proof}

\subsection{Code Rate}

In Section \ref{UnfoldingSection} we found that for a cell-wise flasque locally acyclic sheaf $\mathcal{F}$, the dimension of our Tanner color code (with $z = D-2-x$) is 
\begin{align}
    \dim\left(\mathcal{C}_\mathcal{F}\left(x,z \right)\right) = \binom{D }{ x+1} \dim \left(H^{x+1}\left(\Delta,\mathcal{F}\right)\right)
\end{align}
We can then use \ref{DimensionRelation} to determine the dimension of $H^{x+1}\left(\Delta,\mathcal{F}\right)$,
\begin{align}
\dim H^{x+1} = \dim C^{x+1} + \left(-1\right)^{x+1}\sum_{j=0, j\neq x+1}^{D} \left(-1\right)^j \left(\dim C^j-  \dim H^j\right)
\end{align}

The dimensions $\dim C^{j}$ in turn can be decomposed as 
\begin{align}
    \dim C^{j} &= \sum_{\substack{T \subset \mathds{Z}_{D+1} \\ |T| = j+1 } } \left|\Delta_T\left(j\right) \right| \dim \mathcal{F}_{K_T} \\
    &= \left|\Delta\left(D\right)\right| \sum_{\substack{T \subset \mathds{Z}_{D+1} \\ |T| = j+1 } } \frac{\dim \mathcal{F}_{K_T}}{\left|K_T\right| }\\
    &=\left|\Delta\left(D\right)\right| \sum_{\substack{T \subset \mathds{Z}_{D+1} \\ |T| = j+1 } } \rho_T
\end{align}
where $\rho_T := \rho\left(\mathcal{F}_{K_T}\right)$ is defined as the rate of any local code of type $T$, and where we have used $\left|\Delta_T\left(j\right) \right| \left|K_T\right| = |\Delta\left(D\right)|$ because every $D$-face contains exactly one face of color type $T$ and every face of type $T$ belongs to $\left|K_T\right|$ $D$-faces. Note that $\rho_{\mathds{Z}_{D+1}} = 1$ and $\rho_{\{j\}^c} = \rho\left(\mathcal{C}_{D-1}\right)$.

We conclude that our code rate $\rho\left(\mathcal{C}_\mathcal{F}\left(x,z \right)\right)$ satisfies
\begin{align}
 \binom{D }{ x+1}^{-1}\rho\left(\mathcal{C}_\mathcal{F}\left(x,z \right)\right) =  \sum_{\substack{T \subset \mathds{Z}_{D+1} \\ |T| = x+2 } } \rho_T + \left(-1\right)^{x}\sum_{j=0, j\neq x+1}^{D} \left(-1\right)^j \left(  \frac{\dim H^j}{\left|\Delta\left(D\right)\right|} - \sum_{\substack{T \subset \mathds{Z}_{D+1} \\ |T| = j+1 } } \rho_T\right)
\end{align}

For example, in two dimensions $D=2$ the sheaf is always locally acyclic and all types $T$ of the same cardinality are equivalent by the existence of the type-cycling automorphism $\pi_{T^+}$ so that we can define $\rho_{|T|-1} :=\rho_T$. Then we find
\begin{align}
D=2 \implies \frac{1}{2}\rho\left(\mathcal{C}_\mathcal{F}\left(0,0 \right)\right) &=  3\rho_1 +  \left(\frac{ \dim H^0}{\left|\Delta\left(D\right)\right|}-  3 \rho_0 + \frac{\dim H^2}{\left|\Delta\left(D\right)\right|} -\rho_2 \right) \\
&=  3 \rho_1 - 3 \rho_0 - 1 + \frac{ \dim Z^0}{\left|\Delta\left(D\right)\right|}  + \frac{\dim Z_2}{\left|\Delta\left(D\right)\right|}  \\
&=  3\rho_1 - 3\rho_0 - 1 + \rho_{-1}  + \overline{\rho_{-1}}
\end{align}
where $\rho_{-1}$ and $\overline{\rho_{-1}}$ are the rates of the classical codes associated with the cycle spaces $\overline{Z}_2$ and $Z_2$ respectively, and where we have used the sheaf Poincar\'e duality $Z^0 \cong \overline{Z}_2$ for our locally acyclic sheaf. 

Compare this to the naive lower bound for computing the dimension of the quantum sheaf code associated to $\mathcal{F}$, where we subtract the number of $X$ stabilizers $3 \left|\Delta\left(D\right)\right|\rho_0$ and number of $Z$ stabilizers $\left|\Delta\left(D\right)\right|$ from the number of qubits $3 \left|\Delta\left(D\right)\right|\rho_1$; our formula above modifies this naive calculation with a correction $\rho_{-1}  + \overline{\rho_{-1}}$ arising from global redundancies of the $X$ and $Z$ stabilizers. 

In the next section we will focus on this $D=2$ case and numerically calculate $\rho_0$ exactly (for small $q$) to show that we can get positive rate $\rho\left(\mathcal{C}_\mathcal{F}\left(0,0 \right)\right)>0$ for appropriate $\rho_1 \approx 1/2$.

\section{Self-Dual Quantum Tanner Color Code on Symmetric 2D Expander}\label{2DSelfDual}

In this section we present our main theorem regarding the self-dual quantum Tanner color code on an expanding two-dimensional complex, which has constant rate and many fault tolerant gates.

\begin{theorem}\label{CodeThm}
    Let $q=8$ and pick any odd $m>0$. Pick a primitive polynomial $\varphi \in \mathds{F}_q[t]$ of degree $m$ and define the field $R_m := \mathds{F}_q[t]/\left\langle \varphi \right\rangle \cong \mathds{F}_{q^m}$. Let $G := \text{SL}_{3}\left(R_m\right)$ and construct the $2D$ expanding coset complex $\Delta\left(G; \left(K_{\{0\}^c},K_{\{1\}^c},K_{\{2\}^c}\right)\right)$ with $K_{\{j\}^c}$ defined in \ref{fig:2DExample}. Finally, pick the local edge code to be the $4$-divisible self-dual Reed-Muller code $\mathcal{C}_1 := \text{RM}(1,3)$ and choose the orientation described in \ref{DdimQTCC} to define the sheaf $\mathcal{F}\left(\Delta, \{ \mathcal{C}_\sigma \cong \mathcal{C}_1\}_{\sigma \in \Delta(1)}\right)$. Then the resulting self-dual CSS quantum Tanner color code $\mathcal{C}_{\mathcal{F}}(0,0)$ has rate $\rho\left(\mathcal{C}_\mathcal{F}\left(0,0 \right)\right) \geq 7/64$ and several fault-tolerant gates, which include transversal $H^{\otimes |G|}$, $S^{\otimes  |G|}$, and $\text{CZ}^{\otimes |G|}$, along with depth $\leq 3$ qubit-permutations given by the action of $G \rtimes D_3 \subset \text{Aut}(\Delta)$, a family of diagonal `$g$-orbit gates' for each element $g \in G \rtimes D_3$, and a non-diagonal gate related to permuting colors.
\end{theorem}
\begin{proof}
For any chosen $m$ our complex $\Delta$ has fixed degree $\left|v^\uparrow\right| = q^3=2^9$. Consequently, we can compute the vertex code $\mathcal{C}_0 \cong \mathcal{F}_v$ dimension numerically, which we find to be $\dim \mathcal{C}_0 = 76$, giving a rate of $\rho_0 = 19/128$. We plug this into the formula for the rate of $\mathcal{C}_{\mathcal{F}}(0,0)$ in the last section to find 
\begin{align}
\rho\left(\mathcal{C}_\mathcal{F}\left(0,0 \right)\right) &\geq 6\rho_1 - 6\rho_0 - 2 = 7/64
\end{align}

The self-duality of the local code $\mathcal{C}_1$ and the fact that our $X$ and $Z$ stabilizers are both associated with the vertices $x=z=0$ immediately implies that our code $\mathcal{C}_{\mathcal{F}}(0,0)$ is self-dual, which establishes that the transversal $H^{\otimes |G|}$ and $\text{CZ}^{\otimes |G|}$ gates preserve the code space. Since the local edge code $\mathcal{C}_1$ is furthermore $4$-divisible, transversal $S^{\otimes |G|}$ preserves the code space by Theorem \ref{RDTheorem}. The qubit permutations related to the action of $G \rtimes D_3$ preserve the code space because they act as sheaf automorphisms in the manner described by \ref{SheafAction}. The remaining `$g$-orbit gates' are defined in the next subsection and are shown to preserve the code space by \ref{orbitGates}. The remaining gate is subsequently defined and shown to preserve the code space at the end of that subsection. Further discussion of the logical action of all these gates is also included in the next subsection.
\end{proof}

Intriguingly, the vertex code dimension $\dim \mathcal{C}_0 > \left(\dim \mathcal{C}_1\right)^3 = \dim {\mathcal{C}_0}_{,\text{RS}}$ is larger than the vertex code dimension $\dim {\mathcal{C}_0}_{,\text{RS}}$ when the local edge code is set to be a Reed-Solomon code, as was done in \autocite{NewHDXCodes}. It is an interesting challenge to fully determine the vertex code dimension $\dim \mathcal{C}_0$ as a function of $r$ and $\eta$ when the edge code is Reed-Muller $\mathcal{C}_{1} = \text{RM}\left(r, \eta\right)$. An additional point of data along these lines is that we find $\dim \mathcal{C}_0 = 5116$ for the larger self-dual choice $\mathcal{C}_{1} = \text{RM}\left(2, 5\right)$ when $q=32$; the rate $\rho_0$ in this case is slightly larger than our $q=8$ choice.

While on the topic of code parameters, we should also comment on the distance. It appears to be a significant technical challenge to develop a proof method which will establish a lower bound for the distance of this code. The work \autocite{NewHDXCodes} gives a lower bound for the cosystolic distance of the sheaf $\mathcal{F}_\text{RS}(\Delta)$ on the same complex $\Delta$ with Reed-Solomon local codes $\mathcal{C}_1 \cong \mathcal{C}_{\text{RS}}$ whenever the local code rate is sufficiently small $\rho\left(\mathcal{C}_1\right) < 1/4$. If we construct the non-qubit quantum code $\mathcal{C}_{\mathcal{F}_\text{RS}}\left(0,0 \right)$ in this parameter regime, then their bound equivalently establishes linear $X$ distance. However, in this regime $\rho\left(\mathcal{C}_1\right) < 1/4$, we do not have a lower bound for either the rate of the global quantum code $\rho\left(\mathcal{C}_{\mathcal{F}_\text{RS}}\left(0,0 \right)\right)$ nor its $Z$ distance (the $Z$ distance corresponds to dual sheaf cosystolic expansion, but the dual sheaf in these parameter regimes has local edge codes with high rate $\rho\left(\overline{\mathcal{C}_1}\right) >3/4$, so the bound in \autocite{NewHDXCodes} does not apply to the $Z$ distance). Indeed, it is known \autocite{Two-sidedRobustlyTestableCodes} that the key property---coboundary expansion of the vertex sheaf $C\left(\Delta_v, \mathcal{F}\right)$---that this proof technique relies upon necessarily does not hold for self-dual codes. 

Meanwhile, we conjecture that our self-dual qubit code $\mathcal{C}_{\mathcal{F}}(0,0)$ described in \ref{CodeThm} does nevertheless have good distance. We expect that an arbitrary choice of local code does not necessarily result in good $X$ and $Z$ distance simultaneously, so it is natural to wonder what features of the local code lead to large distance. One promising feature of the self-dual RM codes is that their relative distance $\delta\left(\text{RM}\left(r, 2r+1\right)\right) = 2^{-r} = \sqrt{\frac{2}{q}}$ is larger than the spectral expansion of the link of a vertex $\lambda\left(\Delta_v\right) = \frac{1}{\sqrt{q}}$. Although we do not make a concrete connection from this feature to the global distance of the quantum code, it is a common requirement that enables proofs that use expansion. For example, an expansion argument yields a lower bound for the relative distance of the vertex code $\delta\left(\mathcal{C}_0\right) \geq \delta\left(\mathcal{C}_1\right) \left(\delta\left(\mathcal{C}_1\right)- \lambda\left(\Delta_v\right)\right)$, which is positive whenever the relative distance of the edge code $\delta\left(\mathcal{C}_1\right)$ is greater than the expansion of the link $\lambda\left(\Delta_v\right)$. 

Of course, the self-dual non-qubit code $\mathcal{C}_{\mathcal{F}_\text{RS}}\left(0,0 \right)$ has local code with an even larger (constant) relative distance. Furthermore, we have not, so far, identified any feature of the Reed-Solomon version of the code that deviates significantly from our Reed-Muller construction \ref{CodeThm} in a manner that would clearly reduce the distance (as we noted, there is a small difference in the dimension of the codes $\mathcal{F}_v$ as a function of $q$, but how this would impact the distance is unclear); consequently, our conjecture would also suggest that the self-dual quantum code $\mathcal{C}_{\mathcal{F}_\text{RS}}\left(0,0 \right)$ also has linear distance, even if this cannot be established by the methods of \autocite{NewHDXCodes}.

Finally, we note that the smallest code in the infinite family of \ref{CodeThm} has a number of physical qubits equal to $|G|=|\text{SL}_{3}\left(\mathds{F}_8\right)|=8^3(8^3-1)(8^2-1)\lesssim 2^{24}$, around $16$ million.

\subsection{Symmetry and Gates}
We have shown that the self-dual code has constant rate, and we have conjectured that it has large distance. In this subsection we demonstrate concretely the results of \ref{TannerCodeTransversal} and further show how the code's remarkable degree of symmetry and self-duality give rise to a large number of fault-tolerant constant-depth gates. 

The first thing to note is that for any choice of self-dual local code $\text{RM}\left(r, 2r+1\right)$, this code is $(2^{(\eta-1)/r}=4=2^D)$-divisible, so we can use the results of \ref{TannerCodeTransversal} accordingly. 

To start, we must fix a basis for our $X$ and $Z$ logicals so that we can characterize the resulting logical action of various gates. Our $2D$ complex $\Delta$ has three color types given by $\mathds{Z}_{3}$, but we will make frequent reference to the edges of type $\{0,1\}$ and $\{0,2\}$ such that we will find it convenient to rename these types as simply red $\{0,1\}\to r$ and blue $\{0,2\}\to b$. For another simplification, let $k:= \dim H^{1}\left(\Delta,\mathcal{F}\right)$ and recall from \ref{Structure} that our code has $\dim H_1\left(\mathcal{C}_\mathcal{F}\left(0,0\right)\right) = 2k$ logical qubits.

From the result \ref{Structure}, we know that we can start by picking a basis for the cohomology $\left\{[f_j]\right\}_j \subset H^{1}\left(\Delta,\mathcal{F}\right)$ and then find representatives $f_j$ of the $j^{th}$ basis element such that the red-colored projections $L^r_j:=\pi^\uparrow \circ \iota \circ \text{res}_r(f_j)$ and the blue-colored projections $L^b_j:=\pi^\uparrow \circ \iota \circ \text{res}_b(f_j)$ yield an independent basis for our code's cohomology group
\begin{align}
    \left\langle\left\{[L_j^r]\right\}_j \bigsqcup \left\{[L_j^b]\right\}_j\right\rangle = H^1\left(\mathcal{C}_\mathcal{F}\left(0,0\right)\right)
\end{align}
Because of self-duality, we know that this is also a basis of  $H_1\left(\mathcal{C}_\mathcal{F}\left(0,0\right)\right)\cong H^{1}\left(\Delta,\overline{\mathcal{F}}\right)\oplus H^{1}\left(\Delta,\overline{\mathcal{F}}\right)$. Subsequently, we know that the basis vectors $\left\{L_j^r \right\}_j \bigsqcup \left\{L_j^b\right\}$ must form a basis for the $2k$-dimensional symplectic vector space with symplectic form $\omega(a,b) = a \cdot b \mod 2$. 

Already, from the discussion in \ref{TannerCodeTransversal}, we know that any two basis elements of the same color have even overlap $\omega\left(L^r_i, L^r_j\right) = 0$. Subsequently, we want to modify this basis into a Darboux basis while preserving this red/blue split, which we can do by an iterative process. First, for $L^r_0$ there must be some $L^b_j$ such that $\omega(L^r_0,L^b_j) = 1$. We pick any such vector $L^b_j$ (e.g. with the lowest index $j$) and relabel the blue basis vectors so that our chosen vector now has index $0$ (e.g. swap $L^b_0 \leftrightarrow L^b_j$), after which we have $\omega(L^r_0,L^b_0) = 1$. Then for each red basis vector $L^r_i$ with $i > 0$ such that $\omega(L^r_i,L^b_0) = 1$ we redefine $L^r_i := L^r_i + L^r_0$ so that afterwards $\omega(L^r_i,L^b_0) = 0$. Similarly, for each blue basis vector $L^b_i$ with $i > 0$ such that $\omega(L^r_0,L^b_i) = 1$, redefine $L^b_i := L^b_i + L^b_0$. Now we repeat this entire process for $L^r_1$; we see that the chosen blue vector $L^b_j$ must have $j>0$ so that we are guaranteed not to change the label of $L^b_0$. Similarly, we already have $\omega(L^r_0,L^b_j) = 0$ so that we do not have to modify $L^r_0$. Furthermore, the changes that we make to $L^r_j$ (and $L^b_j$) for $j>1$ are guaranteed to preserve $\omega(L^r_0,L^b_j)=0$ (and $\omega(L^r_j,L^b_0)=0$) because the modification is to add a vector that has even overlap. Iterating this process for each of our $k$ indices in increasing order results in the desired basis, where for all $0\leq i,j\leq k$
\begin{align}
    \omega\left(L^r_i, L^r_j\right) =\omega\left(L^b_i, L^b_j\right) &= 0\\
    \omega\left(L^r_i, L^b_j\right) &= \delta_{ij}
\end{align}

With this basis in hand, we will refer to the $X$ logical operator representative with support identical to $L_j^r$ or $L^b_j$ (up to stabilizers) as ${X}_j^r$ or ${X}^b_j$, respectively, and similarly for $Z$ logicals. Finally, we define a basis for our logical operators (without color superscripts) as 
\begin{align}
    \widetilde{X}_j &:= \begin{cases}
        {X}^r_j & 0 \leq j< k \\
        {X}^b_{j-k} & k \leq j <2k  \\
    \end{cases} \\
    \widetilde{Z}_j &:= \begin{cases}
        {Z}^b_j & 0 \leq j< k \\
        {Z}^r_{j-k} & k \leq j <2k \\
    \end{cases}
\end{align}
defined implicitly up to the application of stabilizers. These choices provide a standard basis for our logical Pauli operators where $\widetilde{X}_j$ anti-commutes with $\widetilde Z_j$ and commutes with every other $\widetilde Z_i$. Furthermore, representatives of each logical basis operator are guaranteed to have support with size divisible by $4$ per the discussion in \ref{TannerCodeTransversal} and the fact that the local code is $4$-divisible.

Now we can proceed to describe the logical action of interesting constant-depth circuits. We will describe each physical Clifford operator acting by conjugation on our basis logical operators so that we can identify the equivalent logical action, which we denote with an overhead tilde. Whenever we write an expression like $j+k$ it should be understood as shorthand for $j+k \mod 2k$.

First, the discussion in \ref{TannerCodeTransversal} applies very nicely here. Let $n=\left|\Delta(D)\right|$ denote the number of physical qubits. Transversal $\text{CZ}^{\otimes n}$ across two code blocks has the action
\begin{align}
   \text{CZ}^{\otimes n} :   {X}^{r\,  (1)}_i \to {X}^{r\,  (1)}_i {Z}^{r\,  (2)}_i  \quad \equiv \quad \widetilde{X}_j^{(1)} \to \widetilde{X}_j^{(1)}\widetilde{Z}^{(2)}_{j+k}
\end{align}
(and symmetrically with first and second register swapped) which is equivalent to a logical $\text{CZ}$ conjugated by a swap on the second register, which we can remove by appropriately relabeling the basis in the second register so that we just have 
\begin{align}
\text{CZ}^{\otimes n} \equiv \bigotimes_{j=0}^{2k-1}\widetilde{\text{CZ}}^{(1),(2)}_{j,j} 
\end{align}

Transversal $S^{\otimes n}$ has the action
\begin{align}
        S^{\otimes n} : X^r_i \to X^r_i{Z}_i^r  \quad \equiv \quad \widetilde{X}_j \to \widetilde{X}_j\widetilde{Z}_{j+k}
\end{align}
which is equivalent to logical $\text{CZ}$ within a single code block
\begin{align}
    S^{\otimes n} \equiv \bigotimes_{j=0}^{k-1} \widetilde{\text{CZ}}_{j,j+k}
\end{align}

Finally, self-duality ensures that transversal Hadamard $H^{\otimes n}$ has the effect of logical $\widetilde{H}^{\otimes 2k}$ followed by a swap. 
\begin{align}
    H^{\otimes n} : X^r_i \leftrightarrow {Z}_i^r  \quad &\equiv \quad \widetilde{X}_j \leftrightarrow \widetilde{Z}_{j+k} \\
    H^{\otimes n } &\equiv \widetilde{\text{SWAP}}_{j \leftrightarrow (j+k)}\widetilde{H}^{\otimes2k}
\end{align}

Next, we consider the group of simplicial automorphisms $\text{Aut}(\Delta)$ which permute the physical qubits but preserve the code due to our choice of symmetric local code. In particular, let us focus on the subgroup $G \rtimes D_3 \subset \text{Aut}(\Delta)$. Recall that we can label the qubits of our code with group elements $G$, so that these automorphisms induce a permutation of qubits in a straightforward way; however, we will also opt to label the physical qubits with indices $1\leq j\leq n$ in which case we denote the action by, for example, $g\rhd j$. As an aside, there are additional simplicial automorphisms such as those derived from the Frobenius endomorphism, whose logical action we do not discuss. 

The free transitive group action of $G$ permutes the qubits in a way that preserves the color type of the logical operators. However, without more knowledge of the structure of each basis element, it is difficult to pin down the exact logical action. 
\begin{align}
    g : X^r_i \to gX^r_i = \prod_{j} {X}^r_j
\end{align}
where we have left the indexing vague to express that the permuted operator is some unknown product of the basis elements of the same color type.

The type cycling automorphism $\pi_{T^+}$---which we could implement by a depth-3 circuit of physical swaps because each qubit falls into an orbit of size 3---has the action
\begin{align}
    \pi_{T^+}: X^r_i &\to \pi_{T^+}\left(X^r_i\right) = \prod_{j}{X}^b_j \\
    {X}^b_i &\to \pi_{T^+}\left({X}^b_i\right) = \prod_{j_r}{X}^r_{j_r}\prod_{j_b}{X}^b_{j_b}
\end{align}
which we similarly cannot completely determine but which must be nontrivial because it changes the color of each basis operator. Similarly, the type `reflection' automorphism swaps the red and blue colors, but each $X^r_i$ is sent to some unknown product of blue ${X}^b_j$ operators and vice versa (one especially convenient possibility would be if this reflection simply enacts a logical swap $\widetilde{\text{SWAP}}_{j \leftrightarrow (j+k)}$, but it is unclear if we can refine our choice of basis to achieve this).

We can also pair the self-duality with these symmetries to get constant-depth circuits that constitute logical gates. One way to do this is described in \autocite{FoldTransversal}. Specifically, their theorem 7 says that for our self-dual code and free group action, any involution $g \in G$ with $g^2 = \text{Id}$ that has an even number of orbits within the support of each $X$ check can be used to construct the fold-transversal gate 
\begin{align}
    \bigotimes_{\substack{1\leq j \leq n: \\ (g\rhd j) >j}}\text{CZ}_{j, g\rhd j}
\end{align}
which preserves the code space. 

We will show not only that our code satisfies these requirements with any involution in $G$, but that furthermore we can construct a depth $\leq 3$ generalization of the fold-transversal gates for the left-action of any group element. To accomplish this, we will need the following simple lemma.  
\begin{lemma}\label{NoFixedPoint}
    Consider the group $K_0$ whose left cosets correspond to vertices in the $(D=2)$-dimensional coset complex, and the subgroups $K_{\{0,1\}}$ and $K_{\{0,2\}}$ whose left cosets correspond to vertices in the link of a vertex (see \ref{fig:2DExample} caption). Then any element $g \in K_0\setminus\{\text{Id}\}$ has no fixed points when acting on at least one of the sets $K_0/K_{\{0,1\}}$ or $K_0/K_{\{0,2\}}$ from the left.
\end{lemma}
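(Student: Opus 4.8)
The plan is to work explicitly with the matrix realization of the relevant groups. Recall from Section \ref{SLCosetComplex} that $K_0$ consists of upper-triangular unipotent matrices of the form \eqref{K0Def}, and the two edge-subgroups living inside it are $K_{\{0,1\}^c}$ and $K_{\{0,2\}^c}$ — wait, more precisely, $K_{\{0,1\}}$ is the subgroup generated by those $K_{\{j\}^c}$ with $j \notin \{0,1\}$, i.e. $j=2$, so $K_{\{0,1\}} = K_{\{2\}^c} = \{e_{2,3}(\alpha t) : \alpha \in \mathds{F}_q\}$, and similarly $K_{\{0,2\}} = K_{\{1\}^c} = \{e_{1,2}(\alpha t): \alpha \in \mathds{F}_q\}$. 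So $K_0 \cong \mathds{F}_q^3$ as a set (entries $\alpha_{1,2}t$, $\alpha_{1,3}t^2$, $\alpha_{2,3}t$), and the two subgroups are the two "edge" coordinate axes $\{\alpha_{2,3} = \alpha_{1,3} = \alpha_{1,2} = 0$ except one$\}$. First I would set up coordinates: write a general $g \in K_0$ by its triple $(\alpha_{1,2}, \alpha_{1,3}, \alpha_{2,3}) \in \mathds{F}_q^3$ (using $D=2$, so the matrix is $3\times 3$), and describe the left-coset spaces $K_0/K_{\{0,1\}}$ and $K_0/K_{\{0,2\}}$ as quotients by these one-parameter subgroups.

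The key computation is the following: $g$ fixes the coset $h K_{\{0,1\}}$ under left multiplication iff $h^{-1} g h \in K_{\{0,1\}}$, i.e. $g$ is conjugated into $K_{\{0,1\}}$ by $h^{-1}$. So $g$ has a fixed point on $K_0/K_{\{0,1\}}$ iff $g$ is conjugate (by an element of $K_0$) into $K_{\{0,1\}} = \{e_{2,3}(\beta t)\}$, and likewise for $K_{\{0,2\}} = \{e_{1,2}(\beta t)\}$. Now I would use the observation already invoked in the paper's embedding lemma: conjugation of a unipotent upper-triangular matrix by an upper-triangular matrix cannot decrease the distance-from-diagonal of its "leading" nonzero entry; more precisely, if $g$ has $\alpha_{1,2} \neq 0$ (a superdiagonal entry in position $(1,2)$), then every $K_0$-conjugate of $g$ still has a nonzero $(1,2)$-entry equal to $\alpha_{1,2} t$, so $g$ can never be conjugated into $K_{\{0,1\}}$ (whose elements have zero $(1,2)$-entry). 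Symmetrically, if $\alpha_{2,3} \neq 0$ then $g$ can never be conjugated into $K_{\{0,2\}}$. Hence: if $\alpha_{1,2} \neq 0$, $g$ acts freely on $K_0/K_{\{0,1\}}$; if $\alpha_{2,3}\neq 0$, $g$ acts freely on $K_0/K_{\{0,2\}}$.

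The remaining case is $\alpha_{1,2} = \alpha_{2,3} = 0$ but $\alpha_{1,3} \neq 0$ (this is the only way to have $g \neq \mathrm{Id}$ left). Here $g = e_{1,3}(\alpha_{1,3} t^2)$ is central-ish in $K_0$ (it lies in $[K_0,K_0]$), and in fact commutes with both $K_{\{0,1\}}$ and $K_{\{0,2\}}$ — actually $g$ commutes with all of $K_0$ since $e_{1,3}$ entries are the most distant from the diagonal in a $3\times 3$ matrix, and conjugation by upper-triangular matrices cannot touch them. So $g$ fixes the coset $hK_{\{0,1\}}$ iff $g \in K_{\{0,1\}}$, which fails since $g$ has nonzero $(1,3)$-entry while $K_{\{0,1\}}$-elements have zero there; thus $g$ acts freely on $K_0/K_{\{0,1\}}$ (indeed on both quotients, by symmetry). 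I would collect these three cases into one statement: in every case, $g$ acts without fixed points on at least one of the two quotient sets, which is exactly the claim.

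The main obstacle I anticipate is being careful with the conjugation calculation — specifically proving cleanly the claim "conjugation by an upper-triangular unipotent matrix leaves the entry closest to the diagonal unchanged." This is elementary (it follows from expanding $h^{-1} g h$ and noting that lower-order-in-distance corrections to an entry $(i,j)$ involve only entries of $g$ strictly closer to the diagonal, of which there are none for the leading one), but writing it so it is manifestly correct for the $3\times 3$ case — or invoking the exact form already used in the proof of the tensor-code embedding lemma — requires a small amount of bookkeeping. Everything else is a direct case analysis on which of $\alpha_{1,2}, \alpha_{1,3}, \alpha_{2,3}$ vanish, using that $g \neq \mathrm{Id}$ forces at least one to be nonzero.
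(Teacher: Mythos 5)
Your proof is correct and follows the same route as the paper's: the criterion that $g$ fixes $hK_T$ iff $h^{-1}gh \in K_T$, combined with the fact that conjugation within the unipotent group $K_0$ preserves the superdiagonal entries $\alpha_{1,2}$ and $\alpha_{2,3}$, so a non-identity $g$ cannot be conjugated into both $K_{\{0,1\}}$ and $K_{\{0,2\}}$. Your explicit case analysis on $(\alpha_{1,2},\alpha_{1,3},\alpha_{2,3})$ simply carries out the ``abstract matrix multiplication'' that the paper leaves to the reader, and correctly restricts the conjugating elements to $K_0$ (where the paper's notation $h^G$ is slightly loose, since all transvections are conjugate in the full group $G$).
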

\begin{proof}
    For any subgroup $H \subset G$ and group elements $g,a \in G$, the action of $g$ fixes the coset $aH = g\rhd aH = ga H$ if and only if $a^{-1}ga \in H$. Therefore, the action of an element $g$ on the cosets of $H$ has no fixed points if it does not belong to any of the conjugacy classes of elements in $H$, $g \notin \bigcup_{h \in H} h^G$. One can check through abstract matrix multiplication with generic elements that  
    \begin{align}
        \left(\bigcup_{h \in K_{\{0,1\}}} h^G\right) \cap  \left(\bigcup_{h \in K_{\{0,2\}}} h^G\right) = \text{Id}
    \end{align}
    so that any non-identity element $g \in K_0\setminus\{\text{Id}\}$ will have no fixed points when acting on at least one of the two sets of cosets. 
\end{proof}
This lemma, together with the $4$-divisibility of our local edge code will ensure our gates preserve the code space. 

\begin{theorem}[g-orbit gates] \label{orbitGates}
Consider the abelian group $\left\langle g\right\rangle$ generated by an element $g \in G$ whose left-action partitions the set of qubits labeled by $G$ into orbits $G/\left\langle g \right\rangle$. Let $\{h_i\}_{1\leq i \leq |G|/\text{ord}(g)}$ denote a set of coset representatives for each orbit. Then the following depth $\leq 3$ generalization of a fold transversal gate preserves the code space of our code with its $2$-orthogonal local code
\begin{align}
   \left(\bigotimes_{i=1}^{|G|/\text{ord}(g)}\text{CZ}_{g^{-1}h_i,h_i} \right) \left(\bigotimes_{i=1}^{|G|/\text{ord}(g)} \bigotimes_{j=0}^{\lfloor \text{ord}(g)/2\rfloor-1} \text{CZ}_{g^{2j+1}h_i,g^{2j+2}h_i}\right)\left(\bigotimes_{i=1}^{|G|/\text{ord}(g)} \bigotimes_{j=0}^{\lfloor \text{ord}(g)/2\rfloor-1} \text{CZ}_{g^{2j}h_i,g^{2j+1}h_i}\right)
\end{align}
where the parentheses separate groups of gates that can be performed in parallel, the gates $\text{CZ}_{g^{-1}h_i,h_i}$ are only included when $\text{ord}(g)$ is odd, and only the single layer $\left(\bigotimes_{i=1}^{|G|/\text{ord}(g)}  \text{CZ}_{h_i,gh_i}\right)$ is applied in the special case that $\text{ord}(g)=2$. 
\end{theorem}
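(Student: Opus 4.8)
Write $U$ for the unitary in the statement. By construction $U=\prod_{e\in E(\mathcal G)}\mathrm{CZ}_e$, where $\mathcal G$ is the graph on the qubit set $\Delta(D)\cong G$ whose edges are the unordered pairs $\{x,gx\}$, $x\in G$; since the free left action of $g$ decomposes $G$ into orbits of size $\mathrm{ord}(g)$, $\mathcal G$ is a disjoint union of $|G|/\mathrm{ord}(g)$ cycles of length $\mathrm{ord}(g)$ (a perfect matching when $\mathrm{ord}(g)=2$). The three parenthesized layers of the circuit are exactly a proper $2$-edge-colouring of the even part of each cycle plus one extra layer for the single remaining edge of an odd cycle (and a single layer in the length-$2$ case), so the depth is $\le 3$. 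Its $\mathds F_2$-adjacency matrix is $A=L_g+L_{g^{-1}}$ (resp.\ $A=L_g$ when $\mathrm{ord}(g)=2$), $L_h$ the permutation matrix of left multiplication by $h$. Conjugation by $U$ fixes all diagonal ($Z$-type) operators and sends $X_S\mapsto(-1)^{e_{\mathcal G}(S)}X_S Z_{A\mathds 1_S}$, where $e_{\mathcal G}(S)$ is the number of edges of $\mathcal G$ with both endpoints in $S$. Because the code is self-dual, $U$ preserves the code space provided that for every generating $X$-stabiliser $X_S$ (i.e.\ $S=\mathrm{supp}(\pi_\uparrow c)$ with $c\in\mathcal F_v$, $v\in\Delta(0)$): (a) $A\mathds 1_S$ is a $Z$-stabiliser, and (b) $e_{\mathcal G}(S)$ is even; indeed (a)+(b) give $U X_S U^\dagger=X_S Z_{A\mathds 1_S}\in\mathcal S$ with no sign, hence $U\mathcal S U^\dagger=\mathcal S$.

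Condition (a) is immediate. $A\mathds 1_S$ is the indicator of $g\rhd S$ (together with $g^{-1}\rhd S$ when $\mathrm{ord}(g)\ge 3$), and since the left $G$-action preserves the sheaf (established above), $g^{\pm 1}\rhd S$ are again supports of $X$-stabilisers; by self-duality they are also supports of $Z$-stabilisers, so their mod-$2$ sum $A\mathds 1_S$ is a $Z$-stabiliser.

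The content is condition (b). An $X$-stabiliser support $S$ lies inside a single set $v^\uparrow$, and $\mathcal G$ has edges inside $v^\uparrow$ only when $g$ is conjugate into $\mathrm{Stab}_G(v)$; in that case $g$ acts on $v^\uparrow$—identified with $K_{T(v)}$ after fixing a coset representative—by left multiplication by a conjugate $k$ of $g$, of the same order $\ell$. Using the type-cycling automorphism $\pi_{T^+}$ to bring $T(v)$ to $\{0\}$ and the free $G$-action to bring $v$ to the identity coset, the claim reduces to: for every non-identity $k\in K_0$ of order $\ell$ and every $c\in\mathcal F_{v_0}$ with $S_0:=\mathrm{supp}(c)\subset K_0$, one has $|S_0\cap(k\rhd S_0)|\equiv 0\bmod 2$ if $\ell\ge 3$, and $\equiv 0\bmod 4$ if $\ell=2$ (because $e_{\mathcal G}(S)=|S_0\cap(k\rhd S_0)|$ when $\ell\ge3$ and $=\tfrac12|S_0\cap(k\rhd S_0)|$ when $\ell=2$). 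This is where Lemma~\ref{NoFixedPoint} is used: choose $\chi\in\{1,2\}$ so that $k$ acts on $K_0/K_{\{0,\chi\}}$ without fixed cosets. Since $c\in\mathcal F_{v_0}$ restricts to a codeword of $\mathcal C_1$ on each coset of $K_{\{0,\chi\}}$ in $K_0$, and $k$ permutes these cosets in orbits of length $\ge 2$ (divisors of $\ell$), the count $|S_0\cap(k\rhd S_0)|$ splits as a sum over coset-orbits. Within an orbit, each individual contribution is, by the abelian structure of $K_{\{0,\chi\}}\cong\mathds F_q^+$ and the affine-invariance of $\mathcal C_1=\mathrm{RM}$, a Hamming weight $|c'\ast c''|$ of an element-wise product of two $\mathcal C_1$-codewords, which is even because the $4$-divisible code $\mathcal C_1$ is $2$-orthogonal (Definition~\ref{MultiEvenSpace}, condition~\ref{MultiEvennessCondition}, \ref{MultiOrthogonalSpace}); when $\ell=2$ the orbit has length exactly $2$ and, by the bijection $x\mapsto kx$ between its two cosets, its contribution is \emph{twice} such a weight, hence divisible by $4$. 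Summing over coset-orbits, and then over vertices $v$, gives (b).

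Putting (a) and (b) together shows $U\mathcal S U^\dagger=\mathcal S$, so $U$ preserves the code space. The genuinely delicate step is (b): the clean orbit decomposition of the edge-count inside a check support hinges precisely on the fixed-point statement of Lemma~\ref{NoFixedPoint}, and the $\mathrm{ord}(g)=2$ sub-case needs the extra factor of two coming from the two-element coset-orbits, which is exactly why $4$-divisibility of the local code (not merely even weight) is indispensable—this is also why \autocite{FoldTransversal} must impose its ``even number of orbits per check'' hypothesis by hand, whereas here it is automatic from the symmetry of the complex together with the $4$-divisibility of $\mathrm{RM}(1,3)$.
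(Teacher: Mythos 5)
Your proposal is correct and follows essentially the same route as the paper's proof: compute the conjugation of each vertex $X$-check, reduce the sign to the parity of the number of edges of the $\mathrm{CZ}$-graph inside the check's support, invoke Lemma~\ref{NoFixedPoint} to get a fixed-point-free permutation of the cosets of $K_{\{0,\chi\}}$, decompose the codeword along those cosets via \ref{GeneralizedDisjointUnion}, and conclude evenness from $2$-orthogonality of the local code; your graph/adjacency-matrix framing and your explicit verification that the $Z$-part $Z_{A\mathds{1}_S}$ is a stabilizer (which the paper leaves implicit via the $G$-action on the sheaf and self-duality) are welcome clarifications rather than a different argument. One small caveat: your closing claim that $4$-divisibility is ``indispensable'' for the $\mathrm{ord}(g)=2$ case overstates what your own argument (and the paper's) actually uses — the divisibility by $4$ of $|S\cap gS|$ there comes from the factor of two supplied by the length-$2$ coset-orbits times the \emph{evenness} of pairwise products, so $2$-orthogonality of the local code suffices, exactly as the paper notes parenthetically.
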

\begin{proof}
We need to show that the proposed gate preserves the stabilizer group. It is sufficient to check that each element in a basis of the stabilizer group is transformed into another stabilizer. Subsequently, without loss of generality due to the color symmetry, consider any $X$ stabilizer $X_c$ corresponding to a vertex labeled by coset $a K_0$ whose support is given by a codeword $c \in \mathcal{F}_{a K_0}$. Conjugation of this stabilizer by the proposed gate yields
\begin{align}
    X_c \to \left(g^{-1} \rhd Z_c\right)  X_c \left(g\rhd Z_c\right)
\end{align}
up to a possible sign, since any qubit $h \in \text{supp}(X_c)$ is acted upon by the gate $\text{CZ}_{h,gh}$ and $\text{CZ}_{g^{-1}h,h}$. If $\text{ord}(g)=2$ then the action is $X_c \to X_c \left(g\rhd Z_c\right)$ up to a possible sign, but this will not impact our discussion. 
The possible sign comes from the conjugation $\text{CZ}_{i,j} \left(X_i X_j\right)\text{CZ}_{i,j} = -X_i X_jZ_i Z_j$ so that each occurrence of a pair $\{h, gh\} \subset \text{supp}(X_c)$ contributes a phase $(-1)$; we will show that there are always an even number of such occurrences when the local code is $2$-orthogonal so that there is no overall sign. 

Either the action $g \rhd aK_0 = a'K_0$ permutes the vertex $a \neq a'$, in which case $c$ and $g \rhd c$ have no overlap because the vertices are the same color; or, the action fixes the vertex $a K_0$ so that the overlap of $c$ and $g \rhd c$ is entirely contained in the set $\left(a K_0\right)^\uparrow$ (with qubits labeled by elements of the coset). We can relabel the qubits in this set $\left(a K_0\right)^\uparrow$ by elements of $K_{0}$ like so $\left\{a^{-1} y \mid y \in a K_{0} \right\}$, in which case the action of $g$ is to permute the qubit labels by the element $g' = a^{-1}ga \in K_{0}$; that is, if $y \in K_{0}$ is the label of a qubit within the coset then the action of $g$ is to permute $y$ to the qubit $g' y$. By Lemma \ref{NoFixedPoint}, the action of $g'$ has no fixed points on at least one of the sets of cosets $K_0/K_{\{0,1\}}$ or $K_0/K_{\{0,2\}}$; without loss of generality assume this is true for cosets of $K_{\{0,1\}}$. Consider an arbitrary coset $b K_{\{0,1\}} \subset a K_0$. The action of $g$ sends this to a distinct coset $b' K_{\{0,1\}}\subset a K_0$ with $b \neq b'$, so that the overlap is even
\begin{align}
  \left|  \left.c\right|_{b' K_{\{0,1\}}} * \left.\left(g \rhd c\right)\right|_{b' K_{\{0,1\}}} \right| =0 \mod 2
\end{align}
by the $2$-orthogonality (indeed stronger $4$-divisibility) of our local codes $\mathcal{F}_{b K_{\{0,1\}}} \cong \mathcal{F}_{b' K_{\{0,1\}}}$. By \ref{GeneralizedDisjointUnion} we can partition the entirety of the support of $c$ into the cosets of $K_{\{0,1\}}$. Hence, whenever $\{h,gh\} \subset \text{supp}(c)$ it must be that $h \in b K_{\{0,1\}}$ and $gh \in b' K_{\{0,1\}}$ for some $b,b' \in aK_0$ with $gbK_{\{0,1\}}=b' K_{\{0,1\}}$, in which case there must be other pairs $\{h_i,gh_i\}\subset \text{supp}(c): h_i \in b K_{\{0,1\}}, gh_i \in b' K_{\{0,1\}}$ such that the total number of pairs across these two cosets is even. This is true for every pair of cosets, so we conclude that the total number of pairs $\{h,gh\} \subset \text{supp}(c)$ must be even.
\end{proof}
The logical action of these $g$-orbit gates appears to be somewhat similar to the logical action of the transversal $S$ gate, which is some pattern of $\widetilde{\text{CZ}}$ between the first half and the second half of the logical qubits. 
\begin{align}
    \text{Action of CZ $g$-orbit gate} : X^r_i \to X^r_i \left(g^{-1}Z^r_i \right)\left(gZ^r_i \right)= X^r_i \prod_{j} {Z}^r_j \equiv \widetilde{X}_i\to \widetilde{X}_i\prod_{j\nleq k} \widetilde{Z}_j
\end{align}

There are two final fault-tolerant logical gates that we will discuss that arise from the color-cycling automorphism $\pi_{T^+}$. The first logical gate $U_{T^+\text{-phase}}$ is similar to the $g$-orbit gates \ref{orbitGates} except that the permutation $\pi_{T^+}$ has fixed points that we will have to deal with separately. The second gate $U_{T^+}$ treats $X$, $Y$, and $Z$ symmetrically, and uses a three-qubit gate that was found to generate a sporadic class of Cliffords in \autocite{CliffordClassification} (see the gate $O_3$ in their Appendix C).

To see that both of these gates preserve the stabilizer group, we will need to characterize how a vertex codeword $c \in \mathcal{F}_v$ for some vertex $v \in \Delta(0)$ behaves under permutation by $\pi_{T^+}$. First, note that $v$ and $\pi_{T^+}(v)$ are necessarily different colors, so the overlap $v^\uparrow \cap \left(\pi_{T^+}(v)\right)^\uparrow$ is necessarily empty or $e^\uparrow$ for the edge $e = \left\{v,\pi_{T^+}(v) \right\}$. Similarly, the three-way intersection $v^\uparrow \cap \left(\pi_{T^+}(v)\right)^\uparrow \cap \left(\pi_{T^+}^2(v)\right)^\uparrow$ is empty or the unique triangle that contains all three vertices (and the edge $e$). From this we conclude that if we partition the set of all qubits into the $1-$qubit and $3-$qubit orbits $G/\left\langle \pi_{T^+} \right\rangle$, then the support of our arbitrary vertex codeword $c$ either has no fixed point, in which case 
\begin{align}
    \forall Q \in G/\left\langle \pi_{T^+} \right\rangle, \, Q \cap \text{supp}(c) \neq 0 \implies Q \cap \text{supp}(c)=1 \text{ and } |Q|=3 
\end{align}
or there is a unique fixed point $q \in v^\uparrow:\pi_{T^+} (q)=q $ so that 
\begin{align}
    \forall q\notin Q \in G/\left\langle \pi_{T^+} \right\rangle, \,  Q \cap \text{supp}(c) \neq 0\implies
    Q \cap \text{supp}(c) \leq2  \text{ and } |Q|=3 
\end{align}
In particular, if there is a fixed qubit $q \in v^\uparrow$ then the intersections $c * \pi_{T^+}(c)$ and $c * \pi_{T^+}^2(c)$ are each a product of edge codewords, which have even parity by the $2$-orthogonality of the local code. If $q \in \text{supp}(c)$ then this qubit is shared by both intersections so that there are an odd number of orbits $Q \in G/\left\langle \pi_{T^+} \right\rangle$ such that $ Q \cap \text{supp}(c) =2$, while if $q \notin \text{supp}(c)$ then there are an even number of such orbits.

From these observations, we can conclude that the following depth-$3$ circuit is a logical gate for our code 
\begin{align}
   U_{T^+\text{-phase}} := \bigotimes_{q: \pi_{T^+}(q) = q} Z_q  \bigotimes_{Q \in G/\left\langle \pi_{T^+} \right\rangle: Q=\{q_1,q_2,q_3\}} \text{CZ}_{q_1 q_2} \text{CZ}_{q_2 q_3}\text{CZ}_{q_1 q_3} 
\end{align}
The action by conjugation of this gate on an arbitrary $X$-stabilizer generator $X_c$ with support given by an arbitrary vertex codeword $c \in \mathcal{F}_v$ is
\begin{align}
    U_{T^+\text{-phase}}: X_c \to X_c \left(\pi_{T^+} Z_c\right)\left(\pi^2_{T^+} Z_c\right)
\end{align}
If there is a (unique) fixed point $q \in \text{supp}(c)$ then the $X$ operator on that qubit is acted on by $Z$ such that $X_q \to -X_q$. Meanwhile, there are an odd number of pairs $X_{q_1} X_{q_2}$ with $q_1,q_2 \in \text{supp}(c) \cap Q$ for some orbit $Q$ that are transformed by the three $\text{CZ}$ gates that act on these qubits as 
\begin{align}
    X_{q_1} X_{q_2} \to- X_{q_1} X_{q_2} Z_{q_1} Z_{q_2} Z_{q_3} Z_{q_3}
\end{align}
which is consistent with the claimed action; the overall phase from the odd number of these pairs cancels the phase from the fixed qubit. The remaining qubits $q_1 \in \text{supp}(c)$ are each the unique element of their orbit $Q = \{q_1, q_2, q_3\}$ in the support $Q \cap \text{supp}(c) = \{q_1\}$ and transform as $X_{q_1} \to X_{q_1}Z_{q_2}Z_{q_3}$. Finally, if there is no fixed point $q \in \text{supp}(c)$ then there are an even number of two-qubit pairs, so the phase is again trivial, and otherwise the action on the remaining qubits is the same as the previous case. We conclude that this gate preserves the stabilizer group, and its logical action is 
\begin{align}
     U_{T^+\text{-phase}}: X_i^r \to \pm X_i^r \pi_{T^+}\left(Z_i^r\right)\pi_{T^+}^2\left(Z_i^r\right) = \pm X_i^r \prod_{j}Z_j^r\prod_{j}Z_j^b
\end{align}
which unlike the previous $g$-orbit gates involves some of the blue $Z$ operators. 

Finally, to define the second gate $U_{T^+}$, first define the single-qubit gate $\Gamma:=H S X$ with the following action on the Paulis
\begin{align}
   \Gamma:= H S X: X\to -Y \to -Z \to X
\end{align}
Then define the three-qubit gate $\Upsilon$ by its action on the following Paulis
\begin{align}
    \Upsilon : XII \to YXX, \quad  ZII \to XZZ
\end{align}
and the requirement that $\Upsilon$ is invariant under cyclic shifts of the three qubits. We define the gate 
\begin{align}
    U_{T^+} := \bigotimes_{q: \pi_{T^+}(q) = q} \Gamma_q  \bigotimes_{Q \in G/\left\langle \pi_{T^+} \right\rangle: Q=\{q_1, q_2=\pi_{T^+}(q_1),q_3=\pi_{T^+}^2(q_1)\}} \Upsilon_{\left\{q_1, q_2, q_3 \right\}}
\end{align}
We claim that this gate acts on some vertex codeword stabilizer $X_c$ for $c \in \mathcal{F}_v$, as 
\begin{align}
    U_{T^+} : X_c \to Y_c \left( \pi_{T^+} X_c \right)  \left( \pi_{T^+}^2 X_c \right)
\end{align}
If there is a fixed point $q \in \text{supp}(c)$ then the single-qubit $X$ operator is transformed as $\Gamma_j: X_q \to -Y_q$ which is consistent with our claimed action, because $\pi_{T^+} X_c$ and $\pi_{T^+}^2 X_c$ both have support on the qubit $q$ so that these $X_q$ contributions cancel. For the two-qubit pairs $X_{q_1} X_{q_2}$ with $\{q_1, q_2=\pi_{T^+}(q_1)\} \subset \text{supp}(X_c)$ we see that 
\begin{align}
    \Upsilon: X_{q_1} X_{q_2}I_{\pi_{T^+}^2(q_1)} &\to Z_{q_1} Z_{q_2} I_{\pi_{T^+}^2(q_1)} \nonumber \\
    &=- Y_{q_1} Y_{q_2} \left(X_{\pi_{T^+}(q_1)}   X_{\pi_{T^+} (q_2)} \right) \left(X_{\pi_{T^+}^2(q_1)} X_{\pi_{T^+}^2(q_2)}\right)
\end{align}
Again, this is manifestly consistent with our claim; the combined sign from all such pairs cancels the sign from the fixed qubit $q$ whenever $q \in \text{supp}(c)$, or when there is no fixed qubit, the number of pairs is even so that the sign is still positive. For the remaining qubits, the action is straightforwardly consistent with the claim, $ \Upsilon: XII \to YXX$. The analysis for the $Z$ vertex stabilizers is equivalent after we cycle the labels $X\to Z \to Y \to X$, so we conclude that the gate preserves the stabilizer group.

\subsection{Floquet Tanner Color Code in Two Dimensions}\label{sec:Floquet}
Finally, we make the simple observation that the ideas behind the CSS Floquet color code generalize to our setting so that we can define a Floquet variant of our 2D code in the same manner. This has the beneficial effect of reducing the check weight that we measure from the weight of the local vertex code $\mathcal{F}_v$ basis for the static code to the weight of the edge code $\mathcal{F}_e$ basis for the Floquet code. The code above with $\mathcal{F}_e \cong \text{RM}(1,3)$ has a basis where each element has weight $4$, so that this becomes the maximum weight of the checks that we have to measure; however, each qubit necessarily participates in three separate checks in each round for the optimal choice of basis for this local code. We can keep this particular code in mind (for which the check weight of $4$ applies), but we present the details in general even when the code is not the self-dual construction above.

Concretely, we follow the measurement sequence depicted in figure \ref{fig:Floquet}. We start by measuring every basis element $X$ check in a basis of the edge codes $\mathcal{F}_e$ for each edge $e$ of a particular color (orange = pink + yellow in the figure) at time $\text{t}=0$. Then we measure the basis element $Z$ checks in a basis of the dual edge codes $\overline{\mathcal{F}}_e$ for each edge $e$ of the next color (green = yellow + cyan in the figure) at $\text{t}=1$. At $\text{t}=2$ we measure the basis element $X$ checks in a basis of the primal edge codes ${\mathcal{F}}_e$ for each edge $e$ of the final color (purple = cyan + pink in the figure). We continue to cycle the colors and switch between $X$-primal and $Z$-dual until we finish at $\text{t=5}$ a six-step cycle that starts over at time $\text{t=6}$. 

\begin{figure}
    \centering
    \includegraphics[width=1\linewidth]{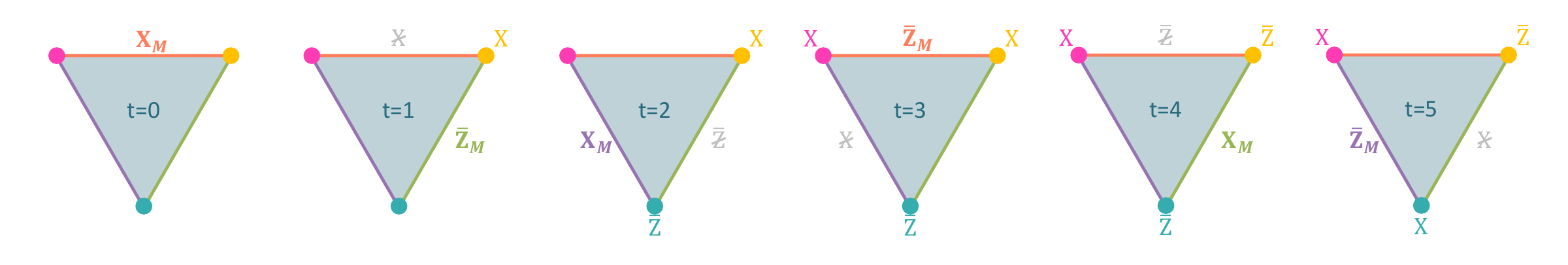}
    \caption{Pictorial representation of the periodic measurement sequence of the Floquet code. The overline on the $\overline{Z}$ operators is meant to evoke that these checks come from elements of the dual code $\overline{\mathcal{F}}_e$, which is what ensures they commute with the neighboring X vertex checks from ${\mathcal{F}}_v$. The $M$ subscript specifies which set of (edge) operators is being measured in the given round t. The gray operators with a slash denote previously measured check operators that are kicked out of the instantaneous stabilizer group that round. The remaining operators (together with the $M$ subscript operators) represent elements in the instantaneous stabilizer group. Note that a vertex operator can be reconstructed from the set of edge operators on any edge that includes it, and we only visually depict the non-redundant vertex operators that cannot be derived in this way. We see that any yellow (top-right) $X$ vertex operator is initialized at $\text{t}=0$ and preserved until $\text{t}=4$, at which point it is measured again by the green edge $X$ check measurement; it is destroyed in the subsequent round $\text{t}=5$ before being reinitialized at $\text{t}=6$ (not shown) when the cycle repeats.}
    \label{fig:Floquet}
\end{figure}

The value of any $X$ vertex operator can be inferred from the measurement of the set of $X$ edge checks at any edge that contains the vertex by \ref{GeneralizedDisjointUnion}. As a result, we see that any given vertex operator is initialized by an edge measurement, preserved for the subsequent $4$ steps at which point it is measured again, destroyed in the next step, and then reinitialized in the $6^{th}$ subsequent step when the cycle repeats. This is illustrated explicitly (without the last step $\text{t}=6$) for the yellow vertex $X$ operator in the figure. 

By time step $\text{t}=3$ in the figure, we can see the general form of the instantaneous stabilizer group (ISG). The stabilizers are nearly identical to the static parent code from the last section except that we are missing the $X$ stabilizers from one vertex color and instead have $Z$ stabilizers from the opposite edges (at even time steps, the role of $X$ and $Z$ are swapped). The effect of this is that logical operators associated with that edge color (say orange) get absorbed into the stabilizer group, so that the ISG has half the dimension of the Tanner color code. Indeed, we can see that any logical $Z$ representative from one of the remaining edge colors (say purple) can be transformed into the third edge color (green) by adding the edge $Z$ stabilizers from the original edge color (orange). By this process, one can track a logical representative as it gets transformed through each time step. 

When the code is self-dual, one can adapt this protocol to resemble the original $X\to Y\to Z$ period-3 measurement sequence with similar results.

There is one more interesting feature to highlight from an implementation standpoint. While the checks are necessarily highly geometrically nonlocal across the different rounds---as is required by the expansion of the complex that we expect gives rise to good parameters---there is a sense in which this nonlocality is somewhat controlled. Concretely, for the self-dual code described above, suppose that at time $\text{t}=0$ we spatially collect our qubits into subsets of $|e^\uparrow|=8$ according to the following partition given by orange edges
\begin{align}
    \Delta(D) = \bigsqcup_{e \in \Delta_{\text{orange}}(1)} e^\uparrow
\end{align}
Then at this time step the $X$ orange edge checks that we measure are performed geometrically locally within each group of 8 qubits. Then, if we have the ability to move the data qubits according to the type-cycling automorphism $\pi_{T^+}$, we can keep our measurement infrastructure fixed in place and permute the data qubits (highly nonlocally) according to $\sigma \to \pi_{T^+}(\sigma)$ for each $\sigma \in \Delta(D)$. After this permutation, our qubits are now again geometrically-localized into groups of eight according to the partition given by green edges, which is the next type of measurement in our sequence. We can do this again for the purple checks, and then again for the orange checks, by which time our data qubits return to their original position, and we can repeat the process. Perhaps this allows for a scheme that optimizes the data-qubit permutation where qubits are quickly shuffled along their respective three-position orbits in parallel, but the measurement infrastructure can remain fixed and geometrically local.

 \section*{Acknowledgments}
 KG thanks Tibor Rakovsky, Chris Umans, Thomas Vidick, and Irit Dinur for valuable conversations. We also thank anonymous reviewers for suggestions that improve clarity, for example the term \emph{cell-wise flasque}. KG was supported by the Department of Defense (DoD) through the National Defense Science \& Engineering Graduate (NDSEG) Fellowship Program. This material is based on work supported in part by the U.S. Department of Energy Office of Science, Office of Advanced Scientific Computing Research (DE-SC0020290 and DE-SC0025535). The Institute for Quantum Information and Matter is an NSF Physics Frontiers Center.

\printbibliography

@misc{CosetComplex,
      title={High dimensional expanders and coset geometries}, 
      author={Tali Kaufman and Izhar Oppenheim},
      year={2020},
      eprint={1710.05304},
      archivePrefix={arXiv},
      primaryClass={math.CO}
}

@book{IncidenceGeometry,
  title={Diagram Geometry: Related to Classical Groups and Buildings},
  author={Buekenhout, F. and Cohen, A.M.},
  isbn={9783642344534},
  series={Ergebnisse der Mathematik und ihrer Grenzgebiete. 3. Folge / A Series of Modern Surveys in Mathematics},
  url={https://books.google.com/books?id=4UQ_AAAAQBAJ},
  year={2013},
  publisher={Springer Berlin Heidelberg}
}

@misc{StrongSymmetry,
      title={Coboundary and cosystolic expansion from strong symmetry}, 
      author={Tali Kaufman and Izhar Oppenheim},
      year={2021},
      eprint={1907.01259},
      archivePrefix={arXiv},
      primaryClass={math.CO},
      url={https://arxiv.org/abs/1907.01259}, 
}

@misc{ChevComplex,
      title={High-Dimensional Expanders from Chevalley Groups}, 
      author={Ryan O'Donnell and Kevin Pratt},
      year={2022},
      eprint={2203.03705},
      archivePrefix={arXiv},
      primaryClass={cs.DM}
}

@misc{QTanner,
      title={Quantum Tanner codes}, 
      author={Anthony Leverrier and Gilles Zémor},
      year={2022},
      eprint={2202.13641},
      archivePrefix={arXiv},
      primaryClass={quant-ph}
}

@misc{DHLV,
      title={Good Quantum LDPC Codes with Linear Time Decoders}, 
      author={Irit Dinur and Min-Hsiu Hsieh and Ting-Chun Lin and Thomas Vidick},
      year={2022},
      eprint={2206.07750},
      archivePrefix={arXiv},
      primaryClass={quant-ph}
}

@misc{Sheaf,
      title={On Good $2$-Query Locally Testable Codes from Sheaves on High Dimensional Expanders}, 
      author={Uriya A. First and Tali Kaufman},
      year={2023},
      eprint={2208.01778},
      archivePrefix={arXiv},
      primaryClass={math.CO}
}

@misc{NewHDXCodes,
      title={New Codes on High Dimensional Expanders}, 
      author={Irit Dinur and Siqi Liu and Rachel Yun Zhang},
      year={2023},
      eprint={2308.15563},
      archivePrefix={arXiv},
      primaryClass={cs.IT}
}

@misc{PKSheaf,
      title={Maximally Extendable Sheaf Codes}, 
      author={Pavel Panteleev and Gleb Kalachev},
      year={2024},
      eprint={2403.03651},
      archivePrefix={arXiv},
      primaryClass={cs.IT},
      url={https://arxiv.org/abs/2403.03651}, 
}

@misc{LinSheaf,
      title={Transversal non-Clifford gates for quantum LDPC codes on sheaves}, 
      author={Ting-Chun Lin},
      year={2024},
      eprint={2410.14631},
      archivePrefix={arXiv},
      primaryClass={quant-ph},
      url={https://arxiv.org/abs/2410.14631}, 
}

@misc{CupsAndGates,
      title={Cups and Gates I: Cohomology invariants and logical quantum operations}, 
      author={Nikolas P. Breuckmann and Margarita Davydova and Jens N. Eberhardt and Nathanan Tantivasadakarn},
      year={2024},
      eprint={2410.16250},
      archivePrefix={arXiv},
      primaryClass={quant-ph},
      url={https://arxiv.org/abs/2410.16250}, 
}

@article{PinCodes,
   title={Quantum Pin Codes},
   volume={68},
   ISSN={1557-9654},
   url={http://dx.doi.org/10.1109/TIT.2022.3170846},
   DOI={10.1109/tit.2022.3170846},
   number={9},
   journal={IEEE Transactions on Information Theory},
   publisher={Institute of Electrical and Electronics Engineers (IEEE)},
   author={Vuillot, Christophe and Breuckmann, Nikolas P.},
   year={2022},
   month=sep, pages={5955–5974} }

@article{FoldTransversal,
  doi = {10.22331/q-2024-06-13-1372},
  url = {https://doi.org/10.22331/q-2024-06-13-1372},
  title = {Fold-{T}ransversal {C}lifford {G}ates for {Q}uantum {C}odes},
  author = {Breuckmann, Nikolas P. and Burton, Simon},
  journal = {{Quantum}},
  issn = {2521-327X},
  publisher = {{Verein zur F{\"{o}}rderung des Open Access Publizierens in den Quantenwissenschaften}},
  volume = {8},
  pages = {1372},
  month = jun,
  year = {2024}
}

@misc{Rainbow,
      title={Quantum Rainbow Codes}, 
      author={Thomas R. Scruby and Arthur Pesah and Mark Webster},
      year={2024},
      eprint={2408.13130},
      archivePrefix={arXiv},
      primaryClass={quant-ph},
      url={https://arxiv.org/abs/2408.13130}, 
}

@article{BombinColor,
   title={Exact topological quantum order in D=3 and beyond: Branyons and brane-net condensates},
   volume={75},
   ISSN={1550-235X},
   url={http://dx.doi.org/10.1103/PhysRevB.75.075103},
   DOI={10.1103/physrevb.75.075103},
   number={7},
   journal={Physical Review B},
   publisher={American Physical Society (APS)},
   author={Bombin, H. and Martin-Delgado, M. A.},
   year={2007},
   month=feb }

@misc{BombinGauge,
      title={Gauge Color Codes: Optimal Transversal Gates and Gauge Fixing in Topological Stabilizer Codes}, 
      author={H. Bombin},
      year={2015},
      eprint={1311.0879},
      archivePrefix={arXiv},
      primaryClass={quant-ph},
      url={https://arxiv.org/abs/1311.0879}, 
}

@misc{DimensionalJump,
      title={Dimensional Jump in Quantum Error Correction}, 
      author={H. Bombin},
      year={2016},
      eprint={1412.5079},
      archivePrefix={arXiv},
      primaryClass={quant-ph},
      url={https://arxiv.org/abs/1412.5079}, 
}

@article{CliffordHierarchy,
   title={Diagonal gates in the Clifford hierarchy},
   volume={95},
   ISSN={2469-9934},
   url={http://dx.doi.org/10.1103/PhysRevA.95.012329},
   DOI={10.1103/physreva.95.012329},
   number={1},
   journal={Physical Review A},
   publisher={American Physical Society (APS)},
   author={Cui, Shawn X. and Gottesman, Daniel and Krishna, Anirudh},
   year={2017},
   month=jan }

@article{CliffordClassification,
   title={The Classification of Clifford Gates over Qubits},
   volume={6},
   ISSN={2521-327X},
   url={http://dx.doi.org/10.22331/q-2022-06-13-734},
   DOI={10.22331/q-2022-06-13-734},
   journal={Quantum},
   publisher={Verein zur Forderung des Open Access Publizierens in den Quantenwissenschaften},
   author={Grier, Daniel and Schaeffer, Luke},
   year={2022},
   month=jun, pages={734} }

@article{Unfolding,
   title={Unfolding the color code},
   volume={17},
   ISSN={1367-2630},
   url={http://dx.doi.org/10.1088/1367-2630/17/8/083026},
   DOI={10.1088/1367-2630/17/8/083026},
   number={8},
   journal={New Journal of Physics},
   publisher={IOP Publishing},
   author={Kubica, Aleksander and Yoshida, Beni and Pastawski, Fernando},
   year={2015},
   month=aug, pages={083026} }

@PHDTHESIS{KubicaThesis,
  title     = "The {ABCs} of the color code: A study of topological quantum
               codes as toy models for fault-tolerant quantum computation and
               quantum phases of matter",
  author    = "Kubica, Aleksander Marek",
  publisher = "California Institute of Technology",
  year      =  2018,
    school = {California Institute of Technology},
    DOI = {10.7907/059V-MG69},
    url = {https://resolver.caltech.edu/CaltechTHESIS:05282018-173928314}

}

@misc{PKGoodCodes,
      title={Asymptotically Good Quantum and Locally Testable Classical LDPC Codes}, 
      author={Pavel Panteleev and Gleb Kalachev},
      year={2022},
      eprint={2111.03654},
      archivePrefix={arXiv},
      primaryClass={cs.IT},
      url={https://arxiv.org/abs/2111.03654}, 
}

@misc{GolowichLin,
      title={Quantum LDPC Codes with Transversal Non-Clifford Gates via Products of Algebraic Codes}, 
      author={Louis Golowich and Ting-Chun Lin},
      year={2024},
      eprint={2410.14662},
      archivePrefix={arXiv},
      primaryClass={quant-ph},
      url={https://arxiv.org/abs/2410.14662}, 
}

@misc{Parallelizable3DColor,
      title={Non-Clifford and parallelizable fault-tolerant logical gates on constant and almost-constant rate homological quantum LDPC codes via higher symmetries}, 
      author={Guanyu Zhu and Shehryar Sikander and Elia Portnoy and Andrew W. Cross and Benjamin J. Brown},
      year={2024},
      eprint={2310.16982},
      archivePrefix={arXiv},
      primaryClass={quant-ph},
      url={https://arxiv.org/abs/2310.16982}, 
}

@misc{AddressableTransversal,
      title={Quantum Codes with Addressable and Transversal Non-Clifford Gates}, 
      author={Zhiyang He and Vinod Vaikuntanathan and Adam Wills and Rachel Yun Zhang},
      year={2025},
      eprint={2502.01864},
      archivePrefix={arXiv},
      primaryClass={quant-ph},
      url={https://arxiv.org/abs/2502.01864}, 
}

@misc{HanNoGo,
      title={No-go theorems for logical gates on product quantum codes}, 
      author={Xiaozhen Fu and Han Zheng and Zimu Li and Zi-Wen Liu},
      year={2025},
      eprint={2507.16797},
      archivePrefix={arXiv},
      primaryClass={quant-ph},
      url={https://arxiv.org/abs/2507.16797}, 
}

@misc{WillsMSD,
      title={Constant-Overhead Magic State Distillation}, 
      author={Adam Wills and Min-Hsiu Hsieh and Hayata Yamasaki},
      year={2024},
      eprint={2408.07764},
      archivePrefix={arXiv},
      primaryClass={quant-ph},
      url={https://arxiv.org/abs/2408.07764}, 
}

@misc{GolowichGuruswami,
      title={Asymptotically Good Quantum Codes with Transversal Non-Clifford Gates}, 
      author={Louis Golowich and Venkatesan Guruswami},
      year={2024},
      eprint={2408.09254},
      archivePrefix={arXiv},
      primaryClass={quant-ph},
      url={https://arxiv.org/abs/2408.09254}, 
}

@misc{NguyenTransversal,
      title={Good binary quantum codes with transversal CCZ gate}, 
      author={Quynh T. Nguyen},
      year={2024},
      eprint={2408.10140},
      archivePrefix={arXiv},
      primaryClass={quant-ph},
      url={https://arxiv.org/abs/2408.10140}, 
}

@misc{ClassifyingGates,
      title={Classifying Logical Gates in Quantum Codes via Cohomology Operations and Symmetry}, 
      author={Po-Shen Hsin and Ryohei Kobayashi and Guanyu Zhu},
      year={2024},
      eprint={2411.15848},
      archivePrefix={arXiv},
      primaryClass={quant-ph},
      url={https://arxiv.org/abs/2411.15848}, 
}

@misc{ZhuBeyond1/3,
      title={A topological theory for qLDPC: non-Clifford gates and magic state fountain on homological product codes with constant rate and beyond the $N^{1/3}$ distance barrier}, 
      author={Guanyu Zhu},
      year={2025},
      eprint={2501.19375},
      archivePrefix={arXiv},
      primaryClass={quant-ph},
      url={https://arxiv.org/abs/2501.19375}, 
}

@misc{ZhuBeyond1/2,
      title={Transversal non-Clifford gates on qLDPC codes breaking the $\sqrt{N}$ distance barrier and quantum-inspired geometry with $\mathbb{Z}_2$ systolic freedom}, 
      author={Guanyu Zhu},
      year={2025},
      eprint={2507.15056},
      archivePrefix={arXiv},
      primaryClass={quant-ph},
      url={https://arxiv.org/abs/2507.15056}, 
}

@misc{ProductExpansion,
      title={High-dimensional Expansion of Product Codes is Stronger than Robust and Agreement Testability}, 
      author={Gleb Kalachev},
      year={2023},
      eprint={2308.02889},
      archivePrefix={arXiv},
      primaryClass={cs.IT},
      url={https://arxiv.org/abs/2308.02889}, 
}

@misc{Two-sidedRobustlyTestableCodes,
      title={Two-sided Robustly Testable Codes}, 
      author={Gleb Kalachev and Pavel Panteleev},
      year={2023},
      eprint={2206.09973},
      archivePrefix={arXiv},
      primaryClass={cs.IT},
      url={https://arxiv.org/abs/2206.09973}, 
}

@misc{MaximallyExtendableProductCodes,
      title={Maximally Extendable Product Codes are Good Coboundary Expanders}, 
      author={Gleb Kalachev and Pavel Panteleev},
      year={2025},
      eprint={2501.01411},
      archivePrefix={arXiv},
      primaryClass={cs.IT},
      url={https://arxiv.org/abs/2501.01411}, 
}

@article{LiftedProductIntro,
   title={Quantum LDPC Codes With Almost Linear Minimum Distance},
   volume={68},
   ISSN={1557-9654},
   url={http://dx.doi.org/10.1109/TIT.2021.3119384},
   DOI={10.1109/tit.2021.3119384},
   number={1},
   journal={IEEE Transactions on Information Theory},
   publisher={Institute of Electrical and Electronics Engineers (IEEE)},
   author={Panteleev, Pavel and Kalachev, Gleb},
   year={2022},
   month=jan, pages={213–229} }

@misc{LocalToGlobal1D,
      title={Ramanujan Complexes and bounded degree topological expanders}, 
      author={Tali Kaufman and David Kazhdan and Alexander Lubotzky},
      year={2014},
      eprint={1408.6351},
      archivePrefix={arXiv},
      primaryClass={math.CO},
      url={https://arxiv.org/abs/1408.6351}, 
}

@misc{LocalToGlobal3D,
      title={Bounded Degree Cosystolic Expanders of Every Dimension}, 
      author={Shai Evra and Tali Kaufman},
      year={2017},
      eprint={1510.00839},
      archivePrefix={arXiv},
      primaryClass={math.CO},
      url={https://arxiv.org/abs/1510.00839}, 
}

@misc{ImprovedLocalToGlobal,
      title={Coboundary and cosystolic expansion without dependence on dimension or degree}, 
      author={Yotam Dikstein and Irit Dinur},
      year={2024},
      eprint={2304.01608},
      archivePrefix={arXiv},
      primaryClass={math.CO},
      url={https://arxiv.org/abs/2304.01608}, 
}

@misc{RamanujanComplexes,
      title={Ramanujan Complexes and High Dimensional Expanders}, 
      author={Alexander Lubotzky},
      year={2013},
      eprint={1301.1028},
      archivePrefix={arXiv},
      primaryClass={math.CO},
      url={https://arxiv.org/abs/1301.1028}, 
}

@misc{QLTCNearlyGood,
      title={Expansion of higher-dimensional cubical complexes with application to quantum locally testable codes}, 
      author={Irit Dinur and Ting-Chun Lin and Thomas Vidick},
      year={2024},
      eprint={2402.07476},
      archivePrefix={arXiv},
      primaryClass={quant-ph},
      url={https://arxiv.org/abs/2402.07476}, 
}

@article{BravyiKoenigBound,
   title={Classification of Topologically Protected Gates for Local Stabilizer Codes},
   volume={110},
   ISSN={1079-7114},
   url={http://dx.doi.org/10.1103/PhysRevLett.110.170503},
   DOI={10.1103/physrevlett.110.170503},
   number={17},
   journal={Physical Review Letters},
   publisher={American Physical Society (APS)},
   author={Bravyi, Sergey and König, Robert},
   year={2013},
   month=apr }

@article{EastinKnillTheorem,
   title={Restrictions on Transversal Encoded Quantum Gate Sets},
   volume={102},
   ISSN={1079-7114},
   url={http://dx.doi.org/10.1103/PhysRevLett.102.110502},
   DOI={10.1103/physrevlett.102.110502},
   number={11},
   journal={Physical Review Letters},
   publisher={American Physical Society (APS)},
   author={Eastin, Bryan and Knill, Emanuel},
   year={2009},
   month=mar }

@article{BalancedProduct,
   title={Balanced Product Quantum Codes},
   volume={67},
   ISSN={1557-9654},
   url={http://dx.doi.org/10.1109/TIT.2021.3097347},
   DOI={10.1109/tit.2021.3097347},
   number={10},
   journal={IEEE Transactions on Information Theory},
   publisher={Institute of Electrical and Electronics Engineers (IEEE)},
   author={Breuckmann, Nikolas P. and Eberhardt, Jens N.},
   year={2021},
   month=oct, pages={6653–6674} }

@misc{CodeToManifold,
      title={Building manifolds from quantum codes}, 
      author={Michael Freedman and Matthew B. Hastings},
      year={2021},
      eprint={2012.02249},
      archivePrefix={arXiv},
      primaryClass={math.DG},
      url={https://arxiv.org/abs/2012.02249}, 
}

@misc{FullCliffordGroup,
      title={Construction of the full logical Clifford group for high-rate quantum Reed-Muller codes using only transversal and fold-transversal gates}, 
      author={Theerapat Tansuwannont and Tim Chan and Ryuji Takagi},
      year={2026},
      eprint={2602.09788},
      archivePrefix={arXiv},
      primaryClass={quant-ph},
      url={https://arxiv.org/abs/2602.09788}, 
}

@article{KMSComplexes,
   title={High-dimensional expanders from Kac–Moody–Steinberg groups},
   volume={126},
   ISSN={0195-6698},
   url={http://dx.doi.org/10.1016/j.ejc.2025.104131},
   DOI={10.1016/j.ejc.2025.104131},
   journal={European Journal of Combinatorics},
   publisher={Elsevier BV},
   author={Grave de Peralta, Laura and Valentiner-Branth, Inga},
   year={2025},
   month=may, pages={104131} }

\appendix

\section{Chain Map from Sheaf to Shrunk Complex to Tanner Code Complex} \label{ApdxChainMap}
This section contains the proof of \ref{BodyChainMap}, restated as \ref{ChainMap} below, which shows that the claimed map between complexes is indeed a chain map. 

\begin{lemma}\label{ChainMap}
Pick any $0\leq x \leq D-2$ and set $z = D-2-x$. For any color type $T \subset \mathds{Z}_{D+1}$ of $|T|=x+2$ colors, the following diagram constitutes a chain map from the sheaf complex of $\mathcal{F}$ to the Tanner code complex of $\mathcal{C}_{\mathcal{F}}\left(x,z\right)$ via the $T^c$-shrunk complex
\[
\begin{tikzcd}[column sep=large, row sep = large] \label{Diagram}
    C^x\left(\Delta, \mathcal{F}\right)  \arrow{r}{\pi_\uparrow} 
    & C^D\left(\Delta, \mathcal{F}\right) \arrow{r}{\overline{\pi}_\uparrow^\top} 
    & C_z\left(\Delta, \overline{\mathcal{F}}\right)\\
    C^x\left(\Delta_T,\mathcal{F}\right)  \arrow{r}{\delta^x_T} \arrow{u}{\iota} 
    & C^{x+1}\left(\Delta_T,\mathcal{F}\right) \arrow{r}{\text{res}_{T^c} \circ\overline{\pi}_\uparrow^\top \circ \pi_\uparrow \circ \iota} \arrow{u}{\pi_\uparrow \circ \iota} 
    & C_{z}\left(\Delta_{T^c}, \overline{\mathcal{F}}\right) \arrow[swap]{u}{\iota}  \\
    C^x\left(\Delta, \mathcal{F}\right)  \arrow{r}{\delta^x} \arrow{u}{\text{res}_T} 
    &  C^{x+1} \left(\Delta, \mathcal{F}\right) \arrow{r}{\delta^{x+1}} \arrow{u}{\text{res}_T} 
    & C^{x+2} \left(\Delta, \mathcal{F}\right)\arrow[swap]{u}{\zeta}
\end{tikzcd}
\]
where $\iota$ is the inclusion map and $\zeta$ will be defined in the proof as necessary.
\end{lemma}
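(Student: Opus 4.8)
The plan is to verify that each of the four small squares of the diagram commutes; three of them are routine repackagings of the arguments already used for the well-definedness of the $T^c$-shrunk complex and for the commutation of checks in Section~\ref{SheafProperties}, while the bottom-right square — whose commutativity also pins down $\zeta$ — carries the real content. I will treat the three easy squares first, then the hard one.

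\emph{The two left squares.} For the bottom-left square, $\delta^x_T\circ\mathrm{res}_T=\mathrm{res}_T\circ\delta^x$: writing $P_T=\iota\circ\mathrm{res}_T$ and using $\delta^x_T=\mathrm{res}_T\circ\delta^x\circ\iota$, the two composites differ by $\mathrm{res}_T\circ\delta^x\circ(P_T-\mathrm{id})$, and this vanishes because $(P_T-\mathrm{id})f$ is supported on $x$-faces of type not contained in $T$, while every $x$-subface of an $(x+1)$-face of type contained in $T$ (equivalently, since $|T|=x+2$, of type exactly $T$) already has type contained in $T$; hence $\delta^x\bigl((P_T-\mathrm{id})f\bigr)$ is zero on every face of type contained in $T$ and is killed by the outer $\mathrm{res}_T$. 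For the middle-left square, $(\pi_\uparrow\circ\iota)\circ\delta^x_T=\pi_\uparrow\circ\iota$ is exactly the identity $\pi_\uparrow(\iota c)=\pi_\uparrow(\iota\,\delta^x_T c)$ proved when checking the shrunk complex is well defined: by \ref{GeneralizedDisjointUnion}, $\delta^x_T$ only rewrites the codeword at a sub-$(x+1)$-face as a sum of codewords with disjoint support on $D$-faces, which $\pi_\uparrow$ forgets.

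\emph{The top-right square.} I would show that for $g\in C^{x+1}(\Delta_T,\mathcal{F})$ the element $\overline{\pi}_\uparrow^\top(\pi_\uparrow(\iota g))\in C_z(\Delta,\overline{\mathcal{F}})$ is supported only on faces of type contained in $T^c$, so $\iota\circ\mathrm{res}_{T^c}$ fixes it. Pairing its component at a face $\widetilde{\sigma}\in\Delta(z)$ with $T(\widetilde{\sigma})\cap T\neq\emptyset$ against $\overline{c}\in\overline{\mathcal{F}}_{\widetilde{\sigma}}$ gives a sum over type-$T$ faces $\sigma$ with $\sigma^\uparrow\cap\widetilde{\sigma}^\uparrow=\rho^\uparrow\neq\emptyset$, where $\rho=\sigma\cup\widetilde{\sigma}$ by \ref{GeneralizedIntersection}; since $T(\sigma)$ and $T(\widetilde{\sigma})$ share a colour, $|T(\rho)|\le(x+2)+(z+1)-1=D$, so $\dim\rho\le D-1$, and each term $g(\sigma)|_{\rho^\uparrow}\cdot\overline{c}|_{\rho^\uparrow}$ vanishes by \ref{GeneralizedEvenSupport} (using that restrictions of primal, resp.\ dual, codewords to the up-sets of larger faces stay in the primal, resp.\ dual, local code there).

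\emph{The bottom-right square and $\zeta$.} It suffices to prove that $\Psi:=\mathrm{res}_{T^c}\circ\overline{\pi}_\uparrow^\top\circ\pi_\uparrow\circ\iota\circ\mathrm{res}_T$ annihilates $Z^{x+1}(\Delta,\mathcal{F})=\ker\delta^{x+1}$; then $\Psi$ descends along $\delta^{x+1}$ to a well-defined map $B^{x+2}(\Delta,\mathcal{F})\to C_z(\Delta_{T^c},\overline{\mathcal{F}})$, and I take $\zeta$ to be any linear extension of it to $C^{x+2}(\Delta,\mathcal{F})$, so that $\zeta\circ\delta^{x+1}=\Psi$ by construction. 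To see $\Psi f=0$ for a cocycle $f$, fix $\widetilde{\sigma}\in\Delta(z)$ of type exactly $T^c$ and $\overline{c}\in\overline{\mathcal{F}}_{\widetilde{\sigma}}$; unwinding the maps, the component of $\Psi f$ at $\widetilde{\sigma}$ pairs with $\overline{c}$ to give $\sum_{\tau\in\widetilde{\sigma}^\uparrow}f(\tau_T)|_\tau\,\overline{c}(\tau)$, where $\tau_T$ is the unique type-$T$ face of the $D$-face $\tau$. I would read this inside the link $\Delta_{\widetilde{\sigma}}$, an $(x+1)$-dimensional complex coloured by $T$ whose top faces $\mu$ biject with $\widetilde{\sigma}^\uparrow$ via $\mu\mapsto\mu\cup\widetilde{\sigma}$ (with $\tau_T=\mu$). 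Fixing any vertex $w_0\in\widetilde{\sigma}(0)$, set $g(\nu):=f(\nu\cup\{w_0\})|_{(\nu\cup\widetilde{\sigma})^\uparrow}$ for $\nu\in\Delta_{\widetilde{\sigma}}(x)$; since $\nu\cup\{w_0\}\subseteq\nu\cup\widetilde{\sigma}$, this lies in $\mathcal{F}_{\nu\cup\widetilde{\sigma}}$, the defining local code of $\mathcal{F}(\Delta_{\widetilde{\sigma}})$ at $\nu$, so $g\in C^x(\Delta_{\widetilde{\sigma}},\mathcal{F}(\Delta_{\widetilde{\sigma}}))$. Applying $\delta^{x+1}f=0$ at the $(x+2)$-face $\mu\cup\{w_0\}\subseteq\mu\cup\widetilde{\sigma}$ yields, for every top face $\mu$ of the link,
\[
f(\mu)|_{\mu\cup\widetilde{\sigma}}=\sum_{v\in\mu(0)}f\bigl((\mu\setminus v)\cup\{w_0\}\bigr)\big|_{\mu\cup\widetilde{\sigma}}=(\delta^x_{\widetilde{\sigma}}g)(\mu),
\]
so the cochain $\widetilde{f}:=\bigl(\tau\mapsto f(\tau_T)|_\tau\bigr)$ is the coboundary $\delta^x_{\widetilde{\sigma}}g$. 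Hence
\[
\sum_{\tau\in\widetilde{\sigma}^\uparrow}f(\tau_T)|_\tau\,\overline{c}(\tau)=\widetilde{f}\cdot\overline{c}=\sum_{\nu\in\Delta_{\widetilde{\sigma}}(x)}g(\nu)|_{(\nu\cup\widetilde{\sigma})^\uparrow}\cdot\overline{c}|_{(\nu\cup\widetilde{\sigma})^\uparrow}=0,
\]
because $\nu\cup\widetilde{\sigma}$ is a $(D-1)$-face containing $\widetilde{\sigma}$, so $\overline{c}|_{(\nu\cup\widetilde{\sigma})^\uparrow}\in\overline{\mathcal{F}}_{\nu\cup\widetilde{\sigma}}=\mathcal{F}_{\nu\cup\widetilde{\sigma}}^\perp$ is orthogonal to $g(\nu)\in\mathcal{F}_{\nu\cup\widetilde{\sigma}}$.

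I expect the bottom-right square to be the main obstacle: the work is in finding the right reinterpretation of $\Psi$ in the link $\Delta_{\widetilde{\sigma}}$ — in particular the observation that $\overline{c}\in\overline{\mathcal{F}}_{\widetilde{\sigma}}$ restricts to the perp code on every $(D-1)$-face through $\widetilde{\sigma}$, making the link sheaf its natural home — and in building the coboundary witness $g$ from $f$ using a single fixed vertex $w_0$ of $\widetilde{\sigma}$, so that one instance of the cocycle relation (rather than a parity-sensitive average over the vertices of $\widetilde{\sigma}$) does the job. The other three squares are immediate from \ref{GeneralizedDisjointUnion}, \ref{GeneralizedIntersection}, and \ref{GeneralizedEvenSupport}.
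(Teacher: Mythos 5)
Your proposal is correct and follows the same overall strategy as the paper's proof: the diagram is verified square by square, the two left squares and the top-right square are dispatched exactly as in the paper (via \ref{GeneralizedDisjointUnion}, \ref{GeneralizedIntersection}, and \ref{GeneralizedEvenSupport}), and $\zeta$ is obtained, as in the paper, by showing $\ker\delta^{x+1}\subseteq\ker\Psi$ and extending linearly off $\operatorname{Im}\delta^{x+1}$.

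The one place you genuinely diverge is the proof of $\ker\delta^{x+1}\subseteq\ker\Psi$ for the bottom-right square. The paper argues globally at the level of color types: for $S=T\sqcup\{j\}$ the cocycle condition gives $\left.f\right|_T^\uparrow=\sum_{T\neq\widetilde T\subset S}\left.f\right|_{\widetilde T}^\uparrow$, and each $\widetilde T$ shares the color $j$ with $T^c$, so every term pairs to zero with a dual codeword on a type-$T^c$ face by \ref{GeneralizedEvenOverlap}. You instead work locally in the link of each type-$T^c$ face $\widetilde\sigma$, fix a vertex $w_0\in\widetilde\sigma$, and exhibit the restriction of $f(\tau_T)|_\tau$ to $\widetilde\sigma^\uparrow$ as an explicit coboundary $\delta^x_{\widetilde\sigma}g$ in the link sheaf, after which orthogonality of $\mathcal{F}_{\nu\cup\widetilde\sigma}$ and $\overline{\mathcal{F}}_{\nu\cup\widetilde\sigma}$ at the $(D-1)$-faces $\nu\cup\widetilde\sigma$ finishes the computation. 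These are two organizations of the same underlying cancellation — your faces $(\mu\setminus v)\cup\{w_0\}$ are exactly the type-$\widetilde T$ faces in the paper's expansion with $j=T(w_0)$ — but your version makes the coboundary structure in the link explicit, which is a useful point of view (it is essentially the mechanism reused in Appendices B and C), at the cost of slightly more bookkeeping than the paper's one-line identity. Both are complete; no gap.
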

\begin{proof}

\begin{lemma}[Bottom-Left-Square]
For any cochain $f \in C^{\ell}\left(\Delta,\mathcal{F}\right)$ and color type $T$ with $|T|\geq\ell+2$
\begin{align}
    \text{res}_T \circ\delta^{\ell} f = \delta_T^{\ell} \circ \text{res}_T f
\end{align}
\end{lemma}
\begin{proof}
By definition of the coboundary map $\delta$ of the sheaf complex $C\left(\Delta,\mathcal{F}\right)$, the codeword $f\left(\tau\right)$ at an $\left(\ell+1\right)$-simplex $\tau$ of type $T\left(\tau\right) \subset T$ only receives contributions from sheaf restrictions $\mathcal{F}_{\sigma \rightarrow\tau } f\left(\sigma\right)$ for $\sigma \in \Delta\left(\ell\right)$ with type $T\left(\sigma\right) \subset T\left(\tau\right) \subset T$.
\end{proof}

\begin{lemma}[Top-Left-Square]
    For any cochain $f \in C^{\ell}\left(\Delta_T, \mathcal{F} \right)$ and color type $T \subset \mathds{Z}_{D+1}$ of $|T| = \ell+2$ colors, 
    \begin{align}
       \pi_\uparrow \circ \iota f =  \pi_\uparrow \circ \iota\circ \delta_T^{\ell} f
    \end{align}
\end{lemma}
\begin{proof}
By \ref{GeneralizedDisjointUnion}, any codeword $f\left(\sigma\right)$ for $\sigma \in \Delta_T\left(\ell\right)$ can be decomposed into a concatenation of codewords of $\mathcal{F}_\tau$ for $\tau \in \sigma^T$. Then for the left-hand side
\begin{align}
    \forall \xi \in \Delta\left(D\right), \quad \left(\pi_\uparrow \circ \iota f\right)\left(\xi\right)
    &= \sum_{\sigma \in \xi\left(\ell\right)} \left.\iota f\left(\sigma\right)\right|_\xi \\
    &= \sum_{\sigma \in \xi\left(\ell\right)} \left.\left(\oplus_{\tau \in \sigma^T} \left. \iota f\left(\sigma\right)\right|_{\tau^\uparrow}\right)\right|_\xi \\
    &= \sum_{\sigma \in \xi\left(\ell\right)} \left.\left( \left.\iota f\left(\sigma\right)\right|_{\xi_T^\uparrow} \right) \right|_\xi 
\end{align}
In other words, for any $\tau \in \Delta_T\left(|T|-1\right)$,  
\begin{align}
\left.\left(\pi_\uparrow \circ \iota f\right)\right|_{\tau^\uparrow} &= \sum_{\sigma \in \tau\left(|T|-2\right)} \left. \iota f\left(\sigma\right)\right|_{\tau^\uparrow} \\
&= \sum_{\sigma \in \tau\left( \ell\right)} \mathcal{F}_{\sigma^\uparrow \rightarrow \tau^\uparrow}\left( \iota f\left(\sigma\right)\right) \\
&= \left(\delta^{\ell} \circ \iota f\right)\left(\tau\right)\\
&= \left(\iota \circ \delta_T^{\ell} f\right)\left(\tau\right) \label{ActionOnTau}
\end{align}
Since by \ref{DisjointUnion}, the full complex $\emptyset^\uparrow = \Delta\left(D\right)$ can be decomposed as a disjoint union of all the $\Delta_T\left(|T|-1\right)$, and since $\left(\iota \circ \delta_T^{\ell} f\right)$ is only nonzero on $\Delta_T\left(|T|-1\right)$, it follows that \ref{ActionOnTau} establishes the desired claim. 
\end{proof}

\begin{lemma}[Top-Right-Square]
    For any color type $T \subset \mathds{Z}_{D+1}$ of $|T| = \ell+1$ colors and any cochain $f \in C^{\ell}\left(\Delta_T, \mathcal{F} \right)$, 
    \begin{align}
      \overline{\pi}_\uparrow^\top \circ  \pi_\uparrow \circ \iota f = \iota \circ \text{res}_{T^c} \circ \overline{\pi}_\uparrow^\top \circ\pi_\uparrow \circ \iota f
    \end{align}
\end{lemma}
\begin{proof}
It suffices to show that $\overline{\pi}_\uparrow^\top \circ\pi_\uparrow \circ \iota f$ is only supported on faces of type $T^c$. This is equivalent to showing that for any codeword $c \in \overline{\mathcal{F}_\sigma}$ on a face $\sigma \in \Delta\left(D-\ell-1\right)$ of type $T\left(\sigma\right) \neq T^c$ of $\left|T\left(\sigma\right)\right| = \left|T^c\right| = D- \ell$ colors, 
\begin{align}
    \overline{\pi}_\uparrow \left(c\right) \cdot \pi_\uparrow\left(\iota f\right) = 0
\end{align}
Define $\widetilde{T} := T \cup T\left(\sigma\right)$. Because $T\left(\sigma\right) \neq T^c$ but has the same number of colors, we conclude that $T\left(\sigma\right)$ must contain exactly one color in common with $T$ so that $\left|\widetilde{T}\right| = D-1$. By \ref{GeneralizedDisjointUnion} we can decompose both $c$ and $\iota f$ into a disjoint sum of functions only supported on sets $\xi^\uparrow$ for faces $\xi \in \Delta_{\widetilde{T}}\left(D-1\right)$. Since $\left.c\right|_{\xi^\uparrow} \in \mathcal{F}_\xi^\perp$ and $\left.f\right|_{\xi^\uparrow} \in \mathcal{F}_\xi$ the inner product on each set $\xi^\uparrow$ vanishes so that the entire inner product is zero.
\end{proof}

\begin{lemma}[Bottom-Right-Square]
Let $\psi := \text{res}_{T^c} \circ \overline{\pi}_\uparrow^\top \circ \pi_\uparrow \circ \iota \circ \text{res}_T$ to simplify notation. Then we can define a map $\zeta$ such that 
\begin{align}
\psi = \zeta \circ \delta^{x+1}
\end{align}
\end{lemma}
\begin{proof}
It suffices to show
\begin{align}
     \ker \delta^{x+1}  \subset \ker \psi
\end{align}
Subsequently, for any cochain $f \notin \ker \psi$ we can choose $\zeta$ such that $\zeta\left(\delta^{x+1}f\right) = \psi\left(f\right)$, and we finish specifying $\zeta$ by choosing an arbitrary decomposition $C^{x+2}\left(\Delta, \mathcal{F}\right) = \Img \delta^{x+1} + W$, declaring $\zeta:W\rightarrow 0$, and linearly extending $\zeta$ to the remaining space (for example, a choice of basis for $C^{x+2}\left(\Delta,\mathcal{F}\right)$ offers a preferred choice of $W = \ker {\delta^{x+1}}^\top$).

We proceed to show $\ker \delta^{x+1}  \subset \ker \psi$ using the informal idea we sketched in the beginning of Section \ref{UnfoldingSection}. For any color type $S \subset \mathds{Z}_{D+1}$ such that $|S|=x+3$ and any cocycle $f \in \ker \delta^{x+1}$, we have
\begin{align}
   0 = \left.\left(\delta^{x+1}f\right)\right|_S^\uparrow = \sum_{T \subset S: |T|=x+2} \left.f\right|_T^\uparrow 
\end{align}
so that any color projection of $f$ can be rewritten as a sum of other color projections. To prove the claim, we want to show that for the color type $T\subset \mathds{Z}_{D+1}$ of $|T|=x+2$ colors and any $\sigma \in \Delta_{T^c}\left(f\right)$ and codeword $\overline{c}_\sigma \in \overline{\mathcal{F}_\sigma}$,
\begin{align}
    \left.f\right|_T^\uparrow  \cdot \overline{c}_\sigma^\uparrow = 0
\end{align}
We pick any color type $S = T \sqcup \{j\}$ of one more color and then expand $ \left.f\right|_T^\uparrow  = \sum_{\substack{T \neq \widetilde T \subset S\\|\widetilde T|=|T|}} \left.f\right|_{\widetilde{T}}^\uparrow$ in the dot product above. We conclude that each term vanishes
\begin{align}
    \left.f\right|_{\widetilde{T}}^\uparrow  \cdot \overline{c}_\sigma^\uparrow = 0
\end{align}
because $T\left(\sigma\right)=T^c$ shares the color $j$ with each of the types $\widetilde{T}$ so that we can use \ref{GeneralizedEvenOverlap}. 
\end{proof}
\end{proof}

\section{Chain Map Induces Isomorphism between Sheaf and Shrunk Cohomology} \label{ApdxShrunkSheafIso}
Ultimately, we want to use these chain maps to understand the Tanner code cohomology in terms of the cohomology of several copies of the sheaf code. The first step is to establish that the map between the bottom two rows of \ref{Diagram} induces an isomorphism on the middle-column cohomology.

We start this task by showing that (for locally acyclic sheaves) any $T^c$-shrunk cocycle 
\begin{align}
    f_{T^c\text{-shrunk}} \in Z^1_{T^c\text{-shrunk}} := \ker\left(\text{res}_{T^c} \circ\overline{\pi}_\uparrow^\top \circ \pi_\uparrow \circ \iota \right)  \subset C^x\left(\Delta_T,\mathcal{F}\right)
\end{align} 
can be extended to a sheaf cocycle $f_{\text{sheaf}} \in Z^x\left(\Delta, \mathcal{F}\right)$.

\begin{lemma}\label{shrunkToSheaf}
If $C\left(\Delta,\mathcal{F}\right)$ is locally acyclic, then any cocycle $f_{T^c\text{-shrunk}} \in Z^1_{T^c\text{-shrunk}}$ in a $T^c$-shrunk complex for any type $T$ of $|T|=x+1$ colors can be extended (not necessarily uniquely) to a sheaf cocycle $f_{\text{sheaf}} \in Z^x\left(\Delta, \mathcal{F}\right)$ such that 
\begin{align}
    \text{res}_{T} f_\text{sheaf} = f_{T^c\text{-shrunk}}
\end{align}
\end{lemma}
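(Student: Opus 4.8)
The plan is to realize $f_{T^c\text{-shrunk}}$ as $\mathrm{res}_T$ of an honest sheaf cocycle by an obstruction argument. First I would lift $f_{T^c\text{-shrunk}}$ to a cochain $\tilde f \in C(\Delta,\mathcal{F})$ by extension by zero: set $\tilde f(\sigma) = f_{T^c\text{-shrunk}}(\sigma)$ when $T(\sigma)\subseteq T$ and $\tilde f(\sigma)=0$ otherwise, so that $\mathrm{res}_T \tilde f = f_{T^c\text{-shrunk}}$ by construction. Since $\mathrm{res}_T$ is a surjective chain map (cf. \ref{ChainMap}) with kernel the subcomplex $K\subseteq C(\Delta,\mathcal{F})$ of cochains vanishing on $\Delta_T$, and since $\mathrm{res}_T(\delta\tilde f) = \delta_T(\mathrm{res}_T\tilde f) = \delta_T f_{T^c\text{-shrunk}}$ lands in the degree just above where $f_{T^c\text{-shrunk}}$ sits — a degree in which $C(\Delta_T,\mathcal{F})$ vanishes, as $\dim\Delta_T = |T|-1$ — we get that $\delta\tilde f$ is a cocycle lying in $K$, hence defines a class $[\delta\tilde f]\in H^{\bullet}(K)$ in the relative cohomology of $(\Delta,\Delta_T)$ with $\mathcal{F}$-coefficients. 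This class is exactly the obstruction to correcting $\tilde f$ to a cocycle: if $[\delta\tilde f]=0$, choose $k\in K$ with $\delta\tilde f = \delta k$ and set $f_{\mathrm{sheaf}} := \tilde f - k$; then $\delta f_{\mathrm{sheaf}}=0$, and since $k$ vanishes on $\Delta_T$ we have $\mathrm{res}_T f_{\mathrm{sheaf}} = \mathrm{res}_T\tilde f = f_{T^c\text{-shrunk}}$, as desired. So the whole content reduces to showing $[\delta\tilde f]=0$.

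The vanishing of $[\delta\tilde f]$ is where local acyclicity and the shrunk-cocycle hypothesis enter. I would study $H^{\bullet}(K)$ via the spectral sequence of the filtration of $K$ by the number $u(\sigma):=|T(\sigma)\cap T^c|$ of colors of $\sigma$ outside $T$; this is a decreasing filtration because the coboundary map only adds colors and hence only raises $u$. On the associated graded the surviving differential is the ``add-a-$T$-color'' coboundary, and the piece indexed by a fixed set $U\subseteq T^c$ of outside colors reorganizes — using \ref{GeneralizedDisjointUnion} and the link isomorphisms for subgroup geometry systems — into a direct sum, over faces $\rho$ of type $U$, of the sheaf cochain complexes $C(\Delta_\rho,\mathcal{F})$ of the links, truncated to the colors in $T$. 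Local acyclicity of $\mathcal{F}$ kills the intermediate cohomology of each such link complex, so the $E_1$ page collapses onto the outermost layer $U = T^c$; there the truncation to $T$ is vacuous, since $T = \mathds{Z}_{D+1}\setminus T^c$ is the full color set of each link $\Delta_\rho$ with $\rho\in\Delta_{T^c}$, and what survives is computed by a relative version of the sheaf Poincar\'e duality of \autocite{LinSheaf}, giving $H^{\bullet}(K)\cong H_{\bullet}(\Delta_{T^c},\overline{\mathcal{F}})$ in the relevant degree — precisely the target of the map $\psi := \mathrm{res}_{T^c}\circ\overline{\pi}_\uparrow^{\top}\circ\pi_\uparrow\circ\iota$ appearing in the shrunk complex. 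I would then check that the edge homomorphism of the spectral sequence carries $[\delta\tilde f]$ to $[\psi(f_{T^c\text{-shrunk}})]$; since $f_{T^c\text{-shrunk}}$ is by definition in $\ker\psi$, this class is zero, and we are done.

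The main obstacle is the bookkeeping in that second paragraph: correctly identifying the associated graded of $K$ with a sum of link complexes (getting the degree shifts and the identification of the local codes $\mathcal{F}_{\rho\cup\tau}$ with link-sheaf local codes right), verifying that local acyclicity applies in the precise degree window for every $\rho$ of every type $U\subsetneq T^c$, and — most delicately — tracing the differentials and edge maps carefully enough to see that $[\delta\tilde f]$ is genuinely sent to $[\psi(f_{T^c\text{-shrunk}})]$ and not merely to some other representative. A more elementary but more laborious alternative, avoiding the spectral sequence, is to perform the correction one outside-color at a time: order $T^c = \{c_1,\dots,c_{|T^c|}\}$ and, localizing to the link of each $c_j$-vertex, solve a coboundary equation there; local acyclicity guarantees solvability at every step except the last, where the relevant link cohomology need not vanish and one instead invokes $\psi(f_{T^c\text{-shrunk}})=0$ together with the $H^0\cong H_D$ identification established in the excerpt to clear the final obstruction. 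In either approach one should also track the boundary cases (e.g. $x=0$, where a $(-1)$-level term appears and the flasque hypothesis is what stands in for ``$H^0$ of the link vanishes''), since local acyclicity as stated only controls strictly intermediate degrees.
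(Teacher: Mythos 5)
Your obstruction-theoretic framing is sound and is, at heart, the same mechanism as the paper's proof: extend $f_{T^c\text{-shrunk}}$ by zero to $\tilde f$, observe that $\delta\tilde f$ lands in the kernel subcomplex $K=\ker(\text{res}_T)$ because $C^{x+1}\left(\Delta_T,\mathcal{F}\right)=0$, and try to kill the class $[\delta\tilde f]\in H^{x+1}(K)$ so that a correction $k\in K^x$ yields $f_\text{sheaf}=\tilde f+k$ with the right restriction. The paper carries out exactly this correction, but by an explicit induction over subsets of $T^c$ (equivalently over links of faces of type contained in $T^c$, of decreasing dimension), building local cocycles $\xi_{(\sigma,\cdot)}$ and trivializing them via local acyclicity at every step; the hypothesis $f_{T^c\text{-shrunk}}\in\ker\psi$ is what launches the \emph{base case} (it lets one rewrite $\left.f^\uparrow\right|_{\sigma^\uparrow}\in\overline{\mathcal{F}_\sigma}^\perp$ as a sum of primal codewords), rather than clearing a final obstruction as your sketch of the ``elementary alternative'' has it.

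The genuine gap is in your second paragraph, where the whole content of the lemma lives. First, the claim that the $E_1$ page ``collapses onto the outermost layer $U=T^c$'' is false in the relevant total degree. With your filtration by $u=|T(\sigma)\cap T^c|$, the graded piece at filtration $p$ contributes $\bigoplus_{|U|=p}\bigoplus_{\rho\in\Delta_U}H^{x+1-p}\left(\Delta_{\rho,T},\mathcal{F}\right)$ in total degree $x+1$. Local acyclicity (via \ref{RestrictedSheafIso} applied in links) kills only the strictly intermediate degrees $0<x+1-p<x$; the top cohomology $H^{x}\left(\Delta_{v,T},\mathcal{F}\right)$ at $p=1$ and the degree-zero terms $H^0\left(\Delta_{\rho,T},\mathcal{F}\right)\cong\mathcal{F}_\rho$ at $p=x+1$ survive, while the outermost layer $p=|T^c|=D-x$ contributes $H^{2x+1-D}$, which is zero for most $(x,D)$. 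So the spectral sequence does not degenerate where you say it does, and the higher differentials you would need to analyze are precisely the maps the paper constructs by hand. Second, the ``relative version of the sheaf Poincar\'e duality'' identifying the surviving term with $H_\bullet\left(\Delta_{T^c},\overline{\mathcal{F}}\right)$ is nowhere established (the paper only proves the absolute $H^0\cong H_D$ case without a spectral sequence and otherwise defers to \autocite{LinSheaf}), and the edge-homomorphism computation sending $[\delta\tilde f]$ to $[\psi(f_{T^c\text{-shrunk}})]$ --- which you correctly flag as the most delicate point --- is asserted, not performed. As it stands the proposal reduces the lemma to a cohomology computation and then does not do that computation; to complete it you would either have to run the multi-page spectral sequence analysis in full or fall back on the explicit link-by-link correction, which is what the paper does.
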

\begin{proof}
    We use an iterative procedure to construct the sheaf cocycle locally around faces of decreasing dimension, using the local acyclicity of the sheaf in the link of each of those faces to produce cochains that we use in the next iteration. To start, we use that $f_{T^c\text{-shrunk}}$ is a $T^c$-shrunk cocycle to construct an $x$-cocycle in the links of $\left(D-2-x\right)$-faces of each type $T^c\setminus\{j\}$ where $j \in T^c$. Then we use the local acyclicity in these links to convert the $x$-cocycles into a set of $\left(x-1\right)$-cochains in each link. We combine each set of local cochains from the links of the same color into a $\left(\left(x-1\right)+\left(D-2-x\right)+1=D-2\right)$-cochain. Subsequently, we pair these cochains with pieces of $f_{T^c\text{-shrunk}}^\uparrow$ to form new local $x$-cocycles in the links of $\left(D-3-x\right)$-faces of each type $T^c\setminus\{j,k\}$ where $\{j,k\} \subset T^c$. We repeat the entire process in successively lower dimensions until we get a $x$-cocycle $f_\text{sheaf} \in Z^x\left(\Delta,\mathcal{F}\right)$ (i.e. a cocycle in the link of the $\left(-1\right)$-face $\emptyset$). 
    
    We proceed to start the base case of our iterative process by forming $x$-cocycles in the links of certain $\left(D-2-x\right)$-faces. Because $f_{T^c\text{-shrunk}} \in \ker \left(\text{res}_{T^c} \circ \overline{\pi}_\uparrow^\top \circ \pi_\uparrow \circ \iota \right)$, we see that for any face $\sigma \in \Delta_{T^c}\left(D-1-x\right)$
    \begin{align}
        \left.f^\uparrow_{T^c\text{-shrunk}}\right|_{\sigma^\uparrow} \in \overline{\mathcal{F}_\sigma}^\perp
    \end{align}
    so that we can write this projection as a sum of (primal) codewords of type $\mathds{Z}_{D+1} \setminus\{j\}$ for all $j \in T$
    \begin{align}
        \left.f^\uparrow_{T^c\text{-shrunk}}\right|_{\sigma^\uparrow} = \sum_{j \in T} \sum_{\sigma \subset \tau \in \Delta_{\{j\}^c}\left(D-1\right)} c_\tau^\uparrow
    \end{align}
    for some collection of $c_\tau \in \mathcal{F}_\tau$. 
    
    Meanwhile, for any color $j \in T^c$ we can use \ref{GeneralizedDisjointUnion} to decompose $f^\uparrow_{T^c\text{-shrunk}}$ as a sum of codewords $\left.f^\uparrow_{T^c\text{-shrunk}}\right|_{\tau^\uparrow} \in \mathcal{F}_\tau$ from faces $\tau \in \Delta_{\mathds{Z}_{D+1}\setminus\{j\}}\left(D-1\right)$.
    \begin{align}
        f^\uparrow_{T^c\text{-shrunk}} =  \sum_{v \in \tau \in \Delta_{\{j\}^c}\left(D-1\right)}\left.f^\uparrow_{T^c\text{-shrunk}}\right|_{\tau^\uparrow}
    \end{align}
    
    Consider any choice of color $j \in T^c$ and any face $\sigma \in \Delta_{T^c \setminus \{j\}}\left(D-2-x\right)$. We can combine these two different decompositions to get a cocycle $\xi_{\left(\sigma,j\right)} \in Z^{x}\left(\Delta_\sigma, \mathcal{F} \right)$; for any face $\tau \in \Delta_{\sigma,T}\left(x\right)$ we define
    \begin{align}
        \xi_{\left(\sigma,j\right)}\left(\tau\right) := \left.f^\uparrow_{T^c\text{-shrunk}}\right|_{\left(\sigma \cup \tau\right)^\uparrow}
    \end{align}
    while for any other face $\tau \in \Delta_{\sigma}\left(x\right):\, j \in T\left(\tau\right)\neq T$ we define
    \begin{align}
        \xi_{\left(\sigma,j\right)}\left(\tau\right) := c_{\sigma\cup\tau}
    \end{align}
    using the first decomposition we described. That decomposition immediately shows that $\xi_{\left(\sigma,j\right)}$ is a cocycle 
    \begin{align}
        \delta_\sigma^{x} \xi_{\left(\sigma,j\right)} &=  \sum_{\tau \in \Delta_{\sigma,T}\left(x\right)}\left.f^\uparrow_{T^c\text{-shrunk}}\right|_{\left(\sigma \cup \tau\right)^\uparrow} + \sum_{k \in T} \sum_{\tau \in \Delta_{\sigma,\{j\}\cup T\setminus \{k\}}\left(x\right)} c_{\sigma \cup \tau}^\uparrow \\
        &= \left.f^\uparrow_{T^c\text{-shrunk}}\right|_{\sigma^\uparrow} + \sum_{k \in T} \sum_{\tau \in \Delta_{\sigma,\{j\}\cup T\setminus \{k\}}\left(x\right)} c_{\sigma \cup \tau}^\uparrow = 0
    \end{align}

    Since we assumed that any link is acyclic, we conclude that $\xi_{\left(\sigma,j\right)}$ is furthermore a coboundary $\xi_{\left(\sigma,j\right)} \in B^{x}\left(\Delta_\sigma, \mathcal{F} \right)$ so that we can choose a cochain $\gamma_{\left(\sigma,j\right)} \in C^{x-1}\left(\Delta_\sigma, \mathcal{F} \right)$ such that $\delta_\sigma^{x-1}\gamma_{\left(\sigma,j\right)} = \xi_{\left(\sigma,j\right)}$. We can repeat this procedure for each such $\sigma \in \Delta_{T^c \setminus \{j\}}\left(D-2-x\right)$ and collect the corresponding $\gamma_{\left(\sigma,j\right)}$ together into a cochain $\omega_j \in C^{D-2}\left(\Delta, \mathcal{F} \right)$ defined for all $\tau \in \Delta(D-2)$ as  
    \begin{align}
    \omega_j\left(\tau\right) :=\begin{cases}
         \gamma_{\left(\tau_{T^c \setminus \{j\}},j\right)}\left(\tau \setminus \tau_{T^c \setminus \{j\}}\right) & T^c\setminus\{j\} \subset T\left(\tau\right) \\
         0 &\text{otherwise}
    \end{cases}
    \end{align}
    We repeat this procedure for each choice $j \in T^c$ to produce the corresponding $\omega_j \in C^{D-2}\left(\Delta, \mathcal{F} \right)$. 
    
    Now we iterate the entire process one level lower. The following is covered by our inductive step below, but we provide this explicit step in case it is helpful for the reader. We pick a pair of distinct colors $\{j,k\} \in  T^c$ and work in the link of a face $\sigma \in \Delta_{T^c \setminus \{j,k\}}\left(D-3-x\right)$. We define a new cocycle $ \xi_{\left(\sigma,\{j,k\}\right)} \in Z^{x}\left(\Delta_\sigma, \mathcal{F} \right)$ as follows: for any face $\tau \in \Delta_{\sigma,T}\left(x\right)$ we use \ref{GeneralizedDisjointUnion} to decompose $f^\uparrow_{T^c\text{-shrunk}}$ and let 
    \begin{align} 
         \xi_{\left(\sigma,\{j,k\}\right)}\left(\tau\right) := \left.f^\uparrow_{T^c\text{-shrunk}}\right|_{\left(\sigma\cup \tau\right)^\uparrow}
    \end{align}
    For the remaining faces $\tau \in \Delta_\sigma\left(x\right):\, T\left(\tau\right) \neq T$ we let 
    \begin{align}
         \xi_{\left(\sigma,\{j,k\}\right)}\left(\tau\right) := \omega_j\left(\sigma \cup \tau\right) + \omega_k\left(\sigma \cup \tau\right)
    \end{align} 
    Note that $\omega_j\left(\sigma \cup \tau\right)$ is zero if $k \notin T\left(\tau\right)$ because it was originally defined in the link of faces that included the color $k$ (the same goes for $\omega_k$ with faces excluding the color $j$).
    
    We show this is a cocycle in two steps. First, for a face $\tau\in \Delta_{\sigma,T\cup\{j\}}\left(x+1\right)$ of type $T\cup\{j\}$ with $j$-vertex $v_j = \tau_{\{j\}}$ we get 
    \begin{align}
        \left(\delta_\sigma^x \xi_{\left(\sigma,\{j,k\}\right)}\right)\left(\tau\right) &= \sum_{\tau \supset\alpha \in \Delta_\sigma\left(x\right) } \left. \xi_{\left(\sigma,\{j,k\}\right)}\left(\alpha\right)\right|_{\left(\sigma \cup \tau\right)^\uparrow} \\
        &= \left.f^\uparrow_{T^c\text{-shrunk}}\right|_{\left(\sigma\cup \tau\right)^\uparrow} + \sum_{\ell \in T} \sum_{\tau \supset\alpha \in \Delta_{\sigma,\{j\}\cup T\setminus\{\ell\}}\left(x\right) } \left.\left(\omega_j\left(\sigma\cup\alpha\right) + \omega_k\left(\sigma\cup\alpha\right)\right)\right|_{\left(\sigma \cup \tau\right)^\uparrow} \\
        &= \left.f^\uparrow_{T^c\text{-shrunk}}\right|_{\left(\sigma\cup \tau\right)^\uparrow} + \sum_{\ell \in T} \sum_{\tau \supset\alpha \in \Delta_{\sigma,\{j\}\cup T\setminus\{\ell\}}\left(x\right) } \left.\omega_k\left(\sigma\cup\alpha\right)\right|_{\left(\sigma \cup \tau\right)^\uparrow} \\
        &= \left.f^\uparrow_{T^c\text{-shrunk}}\right|_{\left(\sigma\cup \tau\right)^\uparrow} + \delta_{\sigma \cup \{v_j\}}^{x-1}\gamma_{\left(\sigma \cup \{v_j\},k\right)}\left(\tau\setminus\{v_j\}\right)\\
        &=  \left.f^\uparrow_{T^c\text{-shrunk}}\right|_{\left(\sigma\cup \tau\right)^\uparrow} + \xi_{\left(\sigma \cup \{v_j\},k\right)}\left(\tau\setminus\{v_j\}\right)\\
        &= 0
    \end{align}
    where $\xi_{\left(\sigma \cup \{v_j\},k\right)}\left(\tau\setminus\{v_j\}\right)$ is one of the old cocycles we defined above in the link of $\sigma \cup \{v_j\}$ that matches $\left.f^\uparrow_{T^c\text{-shrunk}}\right|_{\left(\sigma\cup \tau\right)^\uparrow}$ on the faces of type $T$. The same logic holds for any face of type $T \cup \{k\}$ with the roles of colors $j$ and $k$ switched. Finally, for any remaining face $\tau\in \Delta_{\sigma,\{j,k\}\cup T\setminus\{i\}}\left(x+1\right)$ with $i \in T$ and vertices $\{v_j, v_k\}\subset \tau$ of colors $j$ and $k$ respectively, we get
    \begin{align}
        \left(\delta_\sigma^x  \xi_{\left(\sigma,\{j,k\}\right)}\right)\left(\tau\right) &= \sum_{\tau \supset\alpha \in \Delta_\sigma\left(x\right) } \left. \xi_{\left(\sigma,\{j,k\}\right)}\left(\alpha\right)\right|_{\left(\sigma \cup \tau\right)^\uparrow} \\
        &=  \sum_{\ell \in \{k\}\cup T\setminus \{i\}} \sum_{\tau \supset\alpha \in \Delta_{\sigma,\{j,k\}\cup T\setminus\{\ell,i\}}\left(x\right) } \left.\omega_j\left(\sigma\cup\alpha\right)\right|_{\left(\sigma \cup \tau\right)^\uparrow} \nonumber \\
        &\, \, + \sum_{\ell \in \{j\}\cup T\setminus \{i\}} 
        \sum_{\tau \supset\alpha \in \Delta_{\sigma,\{j,k\}\cup T\setminus\{\ell,i\}}\left(x\right) }
        \left.\omega_k\left(\sigma\cup\alpha\right)\right|_{\left(\sigma \cup \tau\right)^\uparrow} \\
        &=  \delta_{\sigma \cup \{v_k\}}^{x-1}\gamma_{\left(\sigma \cup \{v_k\},j\right)}\left(\tau\setminus \{v_k\}\right) + \delta_{\sigma \cup \{v_j\}}^{x-1}\gamma_{\left(\sigma \cup \{v_j\},k\right)}\left(\tau\setminus \{v_j\}\right)\\
        &=  \xi_{\left(\sigma \cup \{v_k\},j\right)}\left(\tau \setminus \{v_k\}\right) + \xi_{\left(\sigma \cup \{v_j\},k\right)}\left(\tau\setminus\{v_j\}\right)\\
        &= c_{\sigma \cup \tau} + c_{\sigma \cup \tau} = 0
    \end{align}
    We conclude that $\xi_{\left(\sigma,\{j,k\}\right)}$ is a cocycle, and we carry on just as we did before by using the acyclicity of the link of $\sigma$ to determine $ \xi_{\left(\sigma,\{j,k\}\right)}$ is a coboundary, etc. to produce $\omega_{\{j,k\}} \in C^{D-3}\left(\Delta, \mathcal{F} \right)$. 
    
    We iterate this process by picking increasingly many colors from $T^c$ until we get the desired sheaf cocycle $f_\text{sheaf} :=\xi_{\left(\emptyset,T^c\right)} \in Z^x\left(\Delta, \mathcal{F}\right)$. The key inductive step is that for $\sigma \in \Delta(D-\ell - x)$ with $T\left(\sigma\right) \subset T^c$ we define for any $\tau\in \Delta_{\sigma}\left(x\right)$
    \begin{align}
        \xi_{\left(\sigma, T^c \setminus T\left(\sigma\right)\right)}\left(\tau\right) := \begin{cases} 
            \left.f^\uparrow_{T^c\text{-shrunk}}\right|_{\left(\sigma\cup \tau\right)^\uparrow} & \text{ if } T\left(\tau\right) = T\\
            \sum_{j \in T^c \setminus T\left(\sigma\right)} \omega_{T^c \setminus \left(T\left(\sigma\right) \cup \{j\}\right)}\left(\sigma \cup \tau\right) & \text{ otherwise}
        \end{cases}
    \end{align}
    which we can show is a cocycle. This definition relies on the $\ell-1$ inductive hypothesis that for any $\sigma \in \Delta\left(D-x-\left(\ell-1\right)\right)$ with $T\left(\sigma\right) \subset T^c$ we have $\delta_\sigma^{x}\xi_{\left(\sigma, T^c \setminus T\left(\sigma\right)\right)}=0$, so that we can use local acyclicity in the link of $\sigma$ to choose $\gamma_{\left(\sigma, T^c \setminus T\left(\sigma\right)\right)} \in C^{x-1}(\Delta_\sigma,\mathcal{F})$ such that 
    \begin{align}
        \delta_\sigma^{x-1}\gamma_{\left(\sigma, T^c \setminus T\left(\sigma\right)\right)}=\xi_{\left(\sigma, T^c \setminus T\left(\sigma\right)\right)}
    \end{align}
    which we use to define for all $\sigma \in \Delta(D-x-\left(\ell-1\right))$ and $\tau \in \Delta(D - \ell + 1)$  
    \begin{align}
    \omega_{T^c \setminus T\left(\sigma\right)}\left( \tau\right) :=\begin{cases}
         \gamma_{\left(\tau_{T\left(\sigma\right)}, T^c \setminus T\left(\sigma\right)\right)}\left(\tau \setminus \tau_{T\left(\sigma\right)}\right) & T\left(\sigma\right) \subset T\left(\tau\right) \\
         0 &\text{otherwise}
    \end{cases}
    \end{align}
    With these definitions in hand we can prove the inductive step that for any $\sigma \in \Delta(\ell)$ with $T\left(\sigma\right) \subset T^c$ the $\xi$ are indeed cocycles, $\delta_\sigma^{x}\xi_{\left(\sigma, T^c \setminus T\left(\sigma\right)\right)}=0$. First, for any $\tau \in \Delta_{\sigma,T \cup \{j\}}\left(x+1\right)$ with $j$-vertex $v_j = \tau_{\{j\}}$ we get 
     \begin{align}
         \left(\delta_\sigma^x \xi_{\left(\sigma,T^c \setminus T\left(\sigma\right)\right)}\right)\left(\tau\right) 
         &= \sum_{\tau \supset\alpha \in \Delta_\sigma\left(x\right) } \left. \xi_{\left(\sigma,T^c \setminus T\left(\sigma\right)\right)}\left(\alpha\right)\right|_{\left(\sigma \cup \tau\right)^\uparrow} \\
        &= \left.f^\uparrow_{T^c\text{-shrunk}}\right|_{\left(\sigma\cup \tau\right)^\uparrow} + \sum_{\ell \in T} \sum_{\tau \supset\alpha \in \Delta_{\sigma,\{j\}\cup T\setminus\{\ell\}}\left(x\right) } \left.\omega_{T^c\setminus\left(T\left(\sigma\right) \cup\{j\}\right)}\left(\sigma\cup\alpha\right)\right|_{\left(\sigma \cup \tau\right)^\uparrow} \\
        &= \left.f^\uparrow_{T^c\text{-shrunk}}\right|_{\left(\sigma\cup \tau\right)^\uparrow} + \delta_{\sigma \cup \{v_j\}}^{x-1}\gamma_{\left(\sigma \cup \{v_j\}, T^c\setminus\left(T\left(\sigma\right) \cup\{j\}\right)\right)}\left(\tau\setminus\{v_j\}\right)\\
        &=  \left.f^\uparrow_{T^c\text{-shrunk}}\right|_{\left(\sigma\cup \tau\right)^\uparrow} + \xi_{\left(\sigma \cup \{v_j\},T^c\setminus\left(T\left(\sigma\right) \cup\{j\}\right)\right)}\left(\tau\setminus\{v_j\}\right)\\
        &= 0
    \end{align}
    and that for any other $\tau \in \Delta_{\sigma, S \cup T \setminus \widetilde{S}}\left(x+1\right)$ with $x+2\geq |S|=|\widetilde{S}|+1>1$, $S \subset T^c \setminus T\left(\sigma\right)$, and $\widetilde{S}\subset T$ we get 
    \begin{align}
         \left(\delta_\sigma^x \xi_{\left(\sigma,T^c \setminus T\left(\sigma\right)\right)}\right)\left(\tau\right) 
         &= \sum_{\tau \supset\alpha \in \Delta_\sigma\left(x\right) } \left. \xi_{\left(\sigma,T^c \setminus T\left(\sigma\right)\right)}\left(\alpha\right)\right|_{\left(\sigma \cup \tau\right)^\uparrow} \\
        &=\sum_{j \in S} \sum_{\substack{R \subset S \cup T\setminus \widetilde{S} : \\ j\in R, |R|=x+1}  }\sum_{\alpha \in \Delta_{\sigma, R}(x)}\left.\omega_{T^c \setminus\left(T\left(\sigma\right)\cup \{j\} \right)}\left(\sigma\cup\alpha\right)\right|_{\left(\sigma \cup \tau\right)^\uparrow} \\
        &=\sum_{j \in S} \delta_{\sigma \cup \{\tau_{\{j\}}\}}^{x-1}\gamma_{\left(\sigma \cup \{\tau_{\{j\}}\}, T^c\setminus\left(T\left(\sigma\right) \cup\{j\}\right)\right)}\left(\tau\setminus\{\tau_{\{j\}}\}\right)\\
        &=\sum_{j \in S} \xi_{\left(\sigma \cup \{\tau_{\{j\}}\}, T^c\setminus\left(T\left(\sigma\right) \cup\{j\}\right)\right)}\left(\tau\setminus\{\tau_{\{j\}}\}\right)\\
        &=\sum_{j \in S} 
        \sum_{k \in T^c\setminus\left(T\left(\sigma\right) \cup\{j\}\right)} 
        \omega_{T^c\setminus\left(T\left(\sigma\right) \cup\{j, k\}\right)}\left(\sigma \cup \tau\right)\\
        &=\sum_{j \in S} 
        \sum_{k \in S \setminus \{j\}} 
        \omega_{T^c\setminus\left(T\left(\sigma\right) \cup\{j, k\}\right)}\left(\sigma \cup \tau\right)\\
        &=\sum_{\{j,k\} \subset S} 
         2 \omega_{T^c\setminus\left(T\left(\sigma\right) \cup\{j, k\}\right)}\left(\sigma \cup \tau\right) = 0
    \end{align}
where the sum over $k$ is nonzero only for colors in $S$ because $\omega_{T^c\setminus\left(T\left(\sigma\right) \cup\{j, k\}\right)}\left(\sigma \cup \tau\right) = 0$ unless $\left(T\left(\sigma\right) \cup\{j, k\} \right)\subset T\left(\sigma \cup \tau\right)$. 

This completes our proof by induction. According to our definition, $f_\text{sheaf}(\tau) := \xi_{\left(\emptyset,T^c\right)}\left(\tau\right) = \left.f^\uparrow_{T^c\text{-shrunk}}\right|_{\tau^\uparrow}$ for any $\tau \in \Delta_T\left(x\right)$ so that $\text{res}_T\left(f_\text{sheaf}\right) = f_{T^c\text{-shrunk}}$ as desired.
\end{proof}
This lemma \ref{shrunkToSheaf} immediately establishes that $\text{res}_T$ induces a surjection from $H^{x+1}(\Delta,\mathcal{F})$ to $H^1_{T^c\text{-shrunk}}$. We can use a very similar argument to show that the induced map ${\text{res}_T}_*$ is furthermore injective. 

\begin{lemma} \label{ShrunkSheafIso}
When $C\left(\Delta, \mathcal{F}\right)$ is locally acyclic, the middle restriction map $\text{res}_T$ of the chain map defined in \ref{ChainMap}
\[
\begin{tikzcd}[column sep=large, row sep = large] 
    C^x\left(\Delta_T,\mathcal{F}\right)  \arrow{r}{\delta^x_T} 
    & C^{x+1}\left(\Delta_T,\mathcal{F}\right) \arrow{r}{\text{res}_{T^c} \circ\overline{\pi}_\uparrow^\top \circ \pi_\uparrow \circ \iota} 
    & C_{z}\left(\Delta_{T^c}, \overline{\mathcal{F}}\right)  \\
    C^x\left(\Delta, \mathcal{F}\right)  \arrow{r}{\delta^x} \arrow{u}{\text{res}_T} 
    &  C^{x+1} \left(\Delta, \mathcal{F}\right) \arrow{r}{\delta^{x+1}} \arrow{u}{\text{res}_T} 
    & C^{x+2} \left(\Delta, \mathcal{F}\right)\arrow[swap]{u}{\zeta}
\end{tikzcd}
\]
induces an isomorphism ${\text{res}_T}_*$ on cohomology $H^{x+1}(\Delta, \mathcal{F}) \cong H^1_{T^c\text{-shrunk}}$ for any $T$ of $|T|=x+2$ colors.
\end{lemma}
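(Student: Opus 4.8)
The statement has two halves, and the excerpt already does most of the work for surjectivity. Given a class $[\phi]\in H^1_{T^c\text{-shrunk}}$, pick a representative cocycle $\phi\in\ker\big(\text{res}_{T^c}\circ\overline{\pi}_\uparrow^\top\circ\pi_\uparrow\circ\iota\big)\subseteq C^{x+1}(\Delta_T,\mathcal{F})$. Lemma \ref{shrunkToSheaf} produces a sheaf cocycle $\widetilde\phi\in Z^{x+1}(\Delta,\mathcal{F})$ with $\text{res}_T\widetilde\phi=\phi$, and since $\text{res}_T$ sends sheaf cocycles to $T^c$-shrunk cocycles (this is precisely the content of the chain map Lemma \ref{ChainMap}), we get ${\text{res}_T}_*[\widetilde\phi]=[\phi]$. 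So ${\text{res}_T}_*$ is surjective, and it remains to prove injectivity.

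For injectivity, suppose $\phi\in Z^{x+1}(\Delta,\mathcal{F})$ satisfies ${\text{res}_T}_*[\phi]=0$, i.e. $\text{res}_T\phi=\delta^x_T g$ for some $g\in C^x(\Delta_T,\mathcal{F})$; I must show $\phi\in B^{x+1}(\Delta,\mathcal{F})$. First I replace $\phi$ by the cohomologous cocycle $\phi':=\phi-\delta^x(\iota g)$. Using the Bottom-Left-Square identity $\text{res}_T\circ\delta^x=\delta^x_T\circ\text{res}_T$ from the proof of Lemma \ref{ChainMap}, together with $\text{res}_T\circ\iota=\mathrm{id}$, one computes $\text{res}_T\phi'=\text{res}_T\phi-\delta^x_T g=0$. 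Thus $\phi'$ is supported only on $(x+1)$-faces whose type is not contained in $T$, and it suffices to prove: \emph{every cocycle in $Z^{x+1}(\Delta,\mathcal{F})$ vanishing on all type-$T$ faces is a coboundary.}

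This last statement is the mirror image of Lemma \ref{shrunkToSheaf}, and I would prove it by the same link-by-link induction. One builds a primitive $h\in C^x(\Delta,\mathcal{F})$ with $\delta^x h=\phi'$ by specifying its components on faces of type $S$ in order of increasing $|S\cap T^c|$: since $\phi'$ vanishes on type-$T$ faces, its cocycle condition read in the link $\Delta_\sigma$ of each face $\sigma$ with $T(\sigma)\subset T^c$ yields, at each stage, a local cocycle in $Z^x(\Delta_\sigma,\mathcal{F})$; local acyclicity turns this into a local coboundary $\delta^{x-1}_\sigma(\gamma_{(\sigma,\cdot)})$, and the $\gamma_{(\sigma,\cdot)}$ are assembled into the next layer of $h$ exactly as the $\gamma$'s and $\omega$'s are assembled in the proof of Lemma \ref{shrunkToSheaf}. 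Verifying $\delta^x h=\phi'$ globally reduces, as there, to the telescoping relations $\delta^{x-1}_\sigma\gamma_{(\sigma,\cdot)}=\xi_{(\sigma,\cdot)}$ and to the local decomposition/orthogonality facts \ref{GeneralizedDisjointUnion} and \ref{GeneralizedEvenOverlap} (the latter supplying the characteristic-$2$ cancellations $2\omega=0$).

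\textbf{Main obstacle.} No new conceptual ingredient beyond local acyclicity is needed; the difficulty is the bookkeeping. One must keep track of the partially-built primitive across the links of many overlapping faces and check that the local choices are mutually consistent — that the $\xi$'s defined on the lower links really are cocycles given the inductive hypothesis, that the pairwise $2\omega=0$ cancellations occur on the right faces, and that the assembled cochain is well defined on all of $\Delta$. This is purely combinatorial but lengthy, dualizing the roles of ``extending a shrunk cocycle'' and ``contracting a vanishing sheaf cocycle'' in the argument already carried out for Lemma \ref{shrunkToSheaf}. Once both inclusions are established, ${\text{res}_T}_*$ is the asserted isomorphism $H^{x+1}(\Delta,\mathcal{F})\cong H^1_{T^c\text{-shrunk}}$.
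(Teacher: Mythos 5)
Your proposal follows the paper's own route: surjectivity is obtained exactly as in the paper by extending a shrunk cocycle via Lemma \ref{shrunkToSheaf}, and for injectivity you make the same reduction (subtract $\delta^x(\iota\, g)$ so the representative vanishes on type-$T$ faces) and then invoke a link-by-link induction powered by local acyclicity, which is precisely the content of the paper's argument. The only difference is organizational: the paper does not construct a primitive in a single pass but repeatedly adds coboundary corrections, each round run with a progressively enlarged color set $T\subset T'\subset\cdots$ and eliminating support on one more shell of color types, with the later rounds' local cocycle conditions relying on the vanishing achieved by the earlier corrections --- exactly the consistency bookkeeping you flag as the main obstacle.
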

\begin{proof}
    By Lemma \ref{shrunkToSheaf} any $T^c$-shrunk cohomology representative $f_{T^c\text{-shrunk}} \in \left[f_{T^c\text{-shrunk}}\right] \in H^1_{T^c\text{-shrunk}}$ can be obtained as the image under $\text{res}_T$ of some $f_\text{sheaf} \in \left[f_{\text{sheaf}}\right] \in H^{x+1}(\Delta, \mathcal{F})$ so that the induced map ${\text{res}_T}_*$ is surjective on cohomology.

    It remains to show that ${\text{res}_T}_*$ is injective. Consider any $f_\text{sheaf} \in \left[f_{\text{sheaf}}\right] \in H^{x+1}(\Delta, \mathcal{F})$ such that $\text{res}_{T} f_\text{sheaf} \in B^1_{T^c\text{-shrunk}}$; we can show that ${\text{res}_T}_*$ is injective by establishing that any such $f_\text{sheaf}$ must itself be a sheaf coboundary in $B^{x+1}(\Delta, \mathcal{F})$. We will do so by showing that there must exist a cocycle homologous to $f_\text{sheaf}$ that has no support on faces of type $T$. Then we will use local acyclicity of the sheaf to show that this cocycle is itself homologous to a cocycle that lacks support on faces of more color types as well. We repeat this procedure of finding homologous cocycles with dwindling support until we end up showing that $f_\text{sheaf} \in [0]$ is a sheaf coboundary. 
    
    First, let $\gamma \in C^x\left(\Delta_T,\mathcal{F}\right)$ be the cochain such that 
    \begin{align}
        \delta_T^x \gamma = \text{res}_{T} f_\text{sheaf}
    \end{align}
    It follows that 
    \begin{align}
        \text{res}_T\left(\delta^x\iota (\gamma) + f_\text{sheaf}\right) = 0
    \end{align}
    so that the homologous sheaf cocycle $f:= \delta^x\iota (\gamma) + f_\text{sheaf}$ has no support on faces of type $T$. 

    We proceed to use local acyclicity of the sheaf in an inductive argument similar to \ref{shrunkToSheaf} to find a coboundary that has no support on faces of type $T$ and that agrees with $f$ on faces of type $T 
    \cup \{j\} \setminus\{k\}$ for $j \in T^c$ and $k \in T$. Adding this coboundary to $f$ will produce a homologous cochain that is guaranteed to lack support on faces of these types as well. 
    
    Consider any face $\sigma \in \Delta_{T^c}\left(D-2-x\right)$. For each $j \in T^c$ define $\xi_{(\sigma,j,\emptyset)} \in Z^{x}\left(\Delta_\sigma,\mathcal{F}\right)$ as, for any $\tau \in \Delta_\sigma(x)$, 
    \begin{align}
        \xi_{\left(\sigma,j,\emptyset\right)}(\tau) := \left.f\left( \tau \cup \sigma_{\{j\}}\right) \right|_{\left(\tau \cup \sigma\right)^\uparrow}
    \end{align}
    We see that $\xi_{\left(\sigma,j,\emptyset\right)}$ is indeed a cocycle
    \begin{align}
        \left(\delta_\sigma^x \xi_{\left(\sigma,j,\emptyset\right)}\right)(\alpha) &= \sum_{\alpha \supset \tau \in \Delta_\sigma(x)} \left.\xi_{\left(\sigma,j,\emptyset\right)}(\tau)\right|_{(\sigma \cup \alpha)^\uparrow}\\
        &= 0 + \sum_{\alpha \supset \tau \in \Delta_\sigma(x)} \left.f\left( \tau \cup \sigma_{\{j\}}\right) \right|_{(\sigma \cup \alpha)^\uparrow}\\
        &= \left.f(\alpha) \right|_{(\sigma \cup \alpha)^\uparrow} + \sum_{\alpha \supset \tau \in \Delta_\sigma(x)} \left.f\left( \tau \cup \sigma_{\{j\}}\right) \right|_{(\sigma \cup \alpha)^\uparrow}\\
        &= \sum_{\tau \in \left(\alpha \cup \sigma_{\{j\}} \right)(x+1)} \left.f(\tau) \right|_{(\sigma \cup \alpha)^\uparrow} \\
        &=  \left.\delta^{x+1}f\left(\alpha \cup \sigma_{\{j\}}\right) \right|_{(\sigma \cup \alpha)^\uparrow} = 0 
    \end{align}
    where we used $f(\alpha)=0$ for any $\alpha \in \Delta_\sigma(x+1)$ because we necessarily have $T(\alpha)=T$.

    Subsequently, we can use local acyclicity to conclude there must exist some $\gamma_{\left(\sigma,j,\emptyset \right)} \in C^{x-1}\left(\Delta_\sigma, \mathcal{F}\right)$ such that 
    \begin{align}
        \delta_\sigma^{x-1} \gamma_{\left(\sigma,j,\emptyset \right)} = \xi_{\left(\sigma,j,\emptyset\right)}
    \end{align}
    We use the collection of $\gamma_{\left(\sigma,j,\emptyset \right)}$ for each $\sigma \in \Delta_{T^c}\left(D-2-x\right)$ to define the global cochain $\omega_{\left(j,\emptyset\right)} \in C^{D-2}$
    \begin{align}
        \omega_{\left(j,\emptyset\right)} (\tau) := \begin{cases}
         \gamma_{\left(\tau_{T^c}, j,\emptyset\right)}\left(\tau_{T\setminus\{k,\ell\}}\right) & \exists \{k,\ell\} \subset T: T\left(\tau\right)=\mathds{Z}_{D+1}\setminus\{k,\ell\}  \\
         0 &\text{otherwise}
         \end{cases}
    \end{align}

    Building on this base case of $\ell=2$ we can iterate toward our desired coboundary with an inductive argument for any $D-x \geq \ell > 2$. For any $\sigma \in \Delta\left(D-\ell-x\right)$ with $j \in T\left(\sigma\right) \subset T^c$ we define for any $\tau \in \Delta_\sigma(x)$
    \begin{align}
        \xi_{\left(\sigma, j,T^c \setminus T\left(\sigma\right)\right)}\left(\tau\right) := \begin{cases} 
        \left.f\left( \tau \cup \sigma_{\{j\}}\right) \right|_{\left(\tau \cup \sigma\right)^\uparrow}
             & \text{ if } T\left(\tau\right) \subset T\\
        \sum_{k \in T^c\setminus T\left(\sigma\right)}    \omega_{\left(j, T^c \setminus \left(T\left(\sigma\right)\cup \{k\}\right)\right)}\left(\sigma \cup \tau\right) 
            & \text{ otherwise}
        \end{cases}
    \end{align}
    which we will show is a cocycle. This definition relies on the $\ell-1$ inductive hypothesis that for any $\sigma \in \Delta\left(D-(\ell-1)-x\right)$ with $j \in T\left(\sigma\right) \subset T^c$ we have $\delta_\sigma^{x} \xi_{\left(\sigma, j,T^c \setminus T\left(\sigma\right)\right)} = 0$ so that we can use local acyclicity in the link of $\sigma$ to find $\gamma_{\left(\sigma,j,T^c\setminus T\left(\sigma\right)\right)} \in C^{x-1}\left(\Delta_\sigma,\mathcal{F}\right)$ such that 
    \begin{align}
        \delta_\sigma^{x-1} \gamma_{\left(\sigma,j,T^c\setminus T\left(\sigma\right)\right)} =\xi_{\left(\sigma, j,T^c \setminus T\left(\sigma\right)\right)}
    \end{align}
    We then use these $\gamma_{\left(\sigma,j,T^c\setminus T\left(\sigma\right)\right)}$ for all $\sigma \in \Delta\left(D-(\ell-1)-x\right)$ to define for all $\tau \in \Delta\left(D-\ell+1\right)$
    \begin{align}
        \omega_{\left(j, T^c\setminus T\left(\sigma\right)\right)}(\tau):= \begin{cases}
            \gamma_{\left(\tau_{T\left(\sigma\right)}, j, T^c\setminus T\left(\sigma\right)\right)}\left(\tau \setminus \tau_{T\left(\sigma\right)}\right) & T\left(\sigma\right)\subset T\left(\tau\right)  \\
         0 &\text{otherwise}
        \end{cases}
    \end{align}
    which are the cochains we needed in the $\ell$-level inductive definition of the relevant $\xi$. With these definitions in hand we can prove the inductive step that for any $\sigma \in \Delta(\ell)$ with $j \in T\left(\sigma\right) \subset T^c$ the $\xi$ are indeed cocycles, $\delta_\sigma^{x}\xi_{\left(\sigma, j, T^c \setminus T\left(\sigma\right)\right)}=0$. 
    First, for any face $\alpha \in \Delta_{\sigma,T}\left(x+1\right)$ we get
    \begin{align}
        \left(\delta_\sigma^x \xi_{\left(\sigma,j,T^c \setminus T\left(\sigma\right)\right)}\right)(\alpha) &= \sum_{\alpha \supset \tau \in \Delta_\sigma(x)} \left.\xi_{\left(\sigma,j,T^c \setminus T\left(\sigma\right)\right)}(\tau)\right|_{(\sigma \cup \alpha)^\uparrow}\\
        &= 0 + \sum_{\alpha \supset \tau \in \Delta_\sigma(x)} \left.f\left( \tau \cup \sigma_{\{j\}}\right) \right|_{(\sigma \cup \alpha)^\uparrow}\\
        &= \left.f(\alpha) \right|_{(\sigma \cup \alpha)^\uparrow} + \sum_{\alpha \supset \tau \in \Delta_\sigma(x)} \left.f\left( \tau \cup \sigma_{\{j\}}\right) \right|_{(\sigma \cup \alpha)^\uparrow}\\
        &= \sum_{\tau \in \left(\alpha \cup \sigma_{\{j\}} \right)(x+1)} \left.f(\tau) \right|_{(\sigma \cup \alpha)^\uparrow} \\
        &=  \left.\delta^{x+1}f\left(\alpha \cup \sigma_{\{j\}}\right) \right|_{(\sigma \cup \alpha)^\uparrow} = 0 \\
    \end{align}
    In a second case, for any $\alpha \in \Delta_{\sigma}\left(x+1\right)$ with $T(\alpha)=\{k\}\cup T \setminus \{\ell\}$ for some $\ell \in T$ and $k \in T^c\setminus T\left(\sigma\right)$ we get 
     \begin{align}
         \left(\delta_\sigma^x \xi_{\left(\sigma,j,T^c \setminus T\left(\sigma\right)\right)}\right)\left(\alpha\right) 
         &= \sum_{\alpha \supset\tau \in \Delta_\sigma\left(x\right) } \left. \xi_{\left(\sigma,j,T^c \setminus T\left(\sigma\right)\right)}\left(\tau\right)\right|_{\left(\sigma \cup \alpha\right)^\uparrow} \\
        &= \left.f(\alpha_{T\setminus\{\ell\}}\cup\sigma_{\{j\}}) \right|_{(\sigma \cup \alpha)^\uparrow} \nonumber \\
        &\quad + \sum_{m \in T\setminus\{\ell\}} \sum_{\alpha \supset\tau \in \Delta_{\sigma,\{k\}\cup T\setminus\{\ell,m\}}\left(x\right) } \left.\omega_{\left(j, T^c\setminus\left(T\left(\sigma\right) \cup\{k\}\right)\right)}\left(\sigma\cup\tau\right)\right|_{\left(\sigma \cup \alpha\right)^\uparrow} \\
        &= \left.f(\alpha_{T\setminus\{\ell\}}\cup\sigma_{\{j\}}) \right|_{(\sigma \cup \alpha)^\uparrow} + \delta_{\sigma \cup \alpha_{\{k\}}}^{x-1}\gamma_{\left(\sigma \cup \alpha_{\{k\}}, j,T^c\setminus\left(T\left(\sigma\right) \cup\{k\}\right)\right)}\left(\alpha\setminus\alpha_{\{k\}}\right)\\
        &=  \left.f(\alpha_{T\setminus\{\ell\}}\cup\sigma_{\{j\}}) \right|_{(\sigma \cup \alpha)^\uparrow} + \xi_{\left(\sigma \cup \alpha_{\{k\}}, j, T^c\setminus\left(T\left(\sigma\right) \cup\{k\}\right)\right)}\left(\alpha\setminus\alpha_{\{k\}}\right)\\
        &= 0
    \end{align}
    Finally, for any other $\alpha \in \Delta_{\sigma, S \cup T \setminus \widetilde{S}}\left(x+1\right)$ with $x+2\geq |S|=|\widetilde{S}|>1$, $S \subset T^c \setminus T\left(\sigma\right)$, and $\widetilde{S}\subset T$ we get 
    \begin{align}
         \left(\delta_\sigma^x \xi_{\left(\sigma,j,T^c \setminus T\left(\sigma\right)\right)}\right)\left(\alpha\right) 
         &= \sum_{\alpha \supset\tau \in \Delta_\sigma\left(x\right) } \left. \xi_{\left(\sigma,j,T^c \setminus T\left(\sigma\right)\right)}\left(\tau\right)\right|_{\left(\sigma \cup \alpha\right)^\uparrow} \\
        &=\sum_{k \in S} \sum_{\substack{R \subset S \cup T\setminus \widetilde{S} : \\ k\in R, |R|=x+1}  }\sum_{\tau \in \Delta_{\sigma, R}(x)}\left.\omega_{\left(j,T^c \setminus\left(T\left(\sigma\right)\cup \{k\} \right)\right)}\left(\sigma\cup\tau\right)\right|_{\left(\sigma \cup \alpha\right)^\uparrow} \\
        &=\sum_{k \in S} \delta_{\sigma \cup \{\alpha_{\{k\}}\}}^{x-1}\gamma_{\left(\sigma \cup \{\alpha_{\{k\}}\}, j,  T^c\setminus\left(T\left(\sigma\right) \cup\{k\}\right)\right)}\left(\alpha\setminus\{\alpha_{\{k\}}\}\right)\\
        &=\sum_{k \in S} \xi_{\left(\sigma \cup \{\alpha_{\{k\}}\}, T^c\setminus\left(T\left(\sigma\right) \cup\{k\}\right)\right)}\left(\alpha\setminus\{\alpha_{\{k\}}\}\right)\\
        &=\sum_{k \in S} 
        \sum_{\ell \in T^c\setminus\left(T\left(\sigma\right) \cup\{k\}\right)} 
        \omega_{\left(j,T^c\setminus\left(T\left(\sigma\right) \cup\{k, \ell\}\right)\right)}\left(\sigma \cup \alpha\right)\\
        &=\sum_{k \in S} 
        \sum_{\ell \in S \setminus \{k\}} 
        \omega_{\left(j,T^c\setminus\left(T\left(\sigma\right) \cup\{k, \ell\}\right)\right)}\left(\sigma \cup \alpha\right)\\
        &=\sum_{\{k,\ell\} \subset S} 
         2 \omega_{\left(j,T^c\setminus\left(T\left(\sigma\right) \cup\{k, \ell\}\right)\right)}\left(\sigma \cup \alpha\right) = 0
    \end{align}
where the sum over $\ell$ is nonzero only for colors in $S$ because $\omega_{T^c\setminus\left(T\left(\sigma\right) \cup\{k, \ell\}\right)}\left(\sigma \cup \alpha\right) = 0$ unless $\left(T\left(\sigma\right) \cup\{k, \ell\} \right)\subset T\left(\sigma \cup \alpha\right)$. This completes the induction, leaving us with a set of cochains $\omega_{\left(j, T^c \setminus\{j\} \right)} \in C^{x}\left(\Delta,\mathcal{F}\right)$ for each $j \in T^c$, such that for any $\alpha \in \Delta(x+1)$ with $T(\alpha)=\{j\}\cup T \setminus \{k\}$ for $k \in T$ we have
\begin{align}
    \left( \delta^x \omega_{\left(j, T^c \setminus\{j\} \right)} \right)(\alpha) &= \sum_{\tau \in \alpha\left(x\right)}\omega_{\left(j, T^c \setminus\{j\} \right)}(\tau) \\
    &= \sum_{\tau \in \alpha\left(x\right): j \in T\left(\tau\right) }\gamma_{\left(\tau_{\{j\}},j, T^c \setminus\{j\} \right)}(\tau \setminus \tau_{\{j\}}) \\
    &= \delta^x_{\alpha_{\{j\}}}\gamma_{\left(\alpha_{\{j\}},j, T^c \setminus\{j\} \right)}(\alpha \setminus \alpha_{\{j\}}) \\
    &= \xi_{\left(\alpha_{\{j\}},j, T^c \setminus\{j\} \right)}(\alpha \setminus \alpha_{\{j\}}) \\
    &= f(\alpha)
\end{align}
since $T\left(\alpha \setminus \alpha_{\{j\}}\right)=T \setminus \{k\} \subset T$. Meanwhile, if $j \notin T(\alpha)$ then we immediately see that $\left( \delta^x \omega_{\left(j, T^c \setminus\{j\} \right)} \right)(\alpha) = 0$. We conclude that the homologous cochain $f' \in [f]=\left[f_\text{sheaf}\right]$ defined as
\begin{align}
f' := f +  \delta^x\sum_{j \in T^c} \omega_{\left(j, T^c \setminus\{j\} \right)}
\end{align}
satisfies $\text{res}_T(f')=\text{res}_T(f) = 0$ and, furthermore, $\text{res}_{\{j\} \cup T \setminus \{k\}} f' = 0$ for any $j\in T^c$ and $k \in T$.

We have made progress toward our goal of showing that $f_\text{sheaf} \in [0]$. We can make further progress by picking an arbitrary color $j\in T^c$, defining $T' = T \cup \{j\}$, and carrying out a similar inductive argument above with the cochain $f'$ and color type $T'$ replacing the cochain $f$ and color type $T$ (minor tweaks are needed because $|T'|>|T|$, but the key point is that $\forall S \subset T': |S| = |T|, \text{res}_S f'=0$). This produces a homologous cochain $f'' \in [f']=\left[f_\text{sheaf}\right]$ such that $\text{res}_{\{k\} \cup T' \setminus \{\ell\}} f'' = 0$ for any $k\in T'^c$ and $\ell \in T'$. Continuing in this manner eventually results in the desired statement that $f_\text{sheaf} \in [0]$, because we end up completely eliminating the support of $f_\text{sheaf}$ by adding coboundaries. 
\end{proof}

We note that the local acyclicity condition appears to be necessary for injectivity. Otherwise, there are local cohomology elements $H^{x}\left(\Delta_v,\mathcal{F} \right)\neq 0$ with representatives that can be straightforwardly included into the global complex as cocycles $Z^{x}\left(\Delta_v,\mathcal{F} \right)\hookrightarrow Z^{x+1}\left(\Delta,\mathcal{F}\right)$ that are not supported on every color type of $x+2$ colors, and which are not readily seen to be coboundaries. Hence, these would appear to constitute sheaf cohomology elements that vanish when restricting to certain color types.

Next, we prove a similar lemma for the restriction map from the sheaf to the $T$-restricted sheaf for small levels $\ell < |T|-1$
\begin{lemma} \label{RestrictedSheafIso}
When $C\left(\Delta, \mathcal{F}\right)$ is locally acyclic, for any color type $T$ and level $\ell < |T|-1$, the middle restriction map $\text{res}_T$ of the chain map
\[
\begin{tikzcd}[column sep=large, row sep = large] 
    C^{\ell-1}\left(\Delta_T,\mathcal{F}\right)  \arrow{r}{\delta^{\ell-1}_T} 
    & C^{\ell}\left(\Delta_T,\mathcal{F}\right) \arrow{r}{\delta^\ell_T} 
    & C^{\ell+1}\left(\Delta_{T}, \mathcal{F}\right)  \\
    C^{\ell-1}\left(\Delta, \mathcal{F}\right)  \arrow{r}{\delta^{\ell-1}} \arrow{u}{\text{res}_T} 
    &  C^{\ell} \left(\Delta, \mathcal{F}\right) \arrow{r}{\delta^{\ell}} \arrow{u}{\text{res}_T} 
    & C^{\ell+1} \left(\Delta, \mathcal{F}\right)\arrow[swap]{u}{\text{res}_T}
\end{tikzcd}
\]
induces an isomorphism ${\text{res}_T}_*$ on cohomology $H^{\ell}(\Delta, \mathcal{F}) \cong H^{\ell}(\Delta_T, \mathcal{F})$.
\end{lemma}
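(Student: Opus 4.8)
The plan is to prove this by a short filtration argument rather than the explicit link‑by‑link constructions used for Lemmas~\ref{shrunkToSheaf} and~\ref{ShrunkSheafIso}. First, if $T=\mathds{Z}_{D+1}$ then $\Delta_T=\Delta$ and there is nothing to prove, so assume $|T|\le D$; combined with $\ell<|T|-1$ this forces $\ell\le D-2$. For a face $\sigma$ let $m(\sigma):=|T(\sigma)\setminus T|$ count its colors outside $T$, and let $G_m\subseteq C^\bullet(\Delta,\mathcal{F})$ be the subspace of cochains supported on faces $\sigma$ with $m(\sigma)\ge m$. Since $\sigma\subseteq\tau$ implies $m(\sigma)\le m(\tau)$, each $G_m$ is a subcomplex, so we have a finite decreasing filtration $C^\bullet(\Delta,\mathcal{F})=G_0\supseteq G_1\supseteq\cdots\supseteq G_{|T^c|+1}=0$. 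By construction $G_0/G_1\cong C^\bullet(\Delta_T,\mathcal{F})$ with the quotient map equal to $\mathrm{res}_T$ and the induced differential equal to $\delta_T$. The long exact sequence of $0\to G_1\to G_0\to G_0/G_1\to 0$ shows that it is enough to prove $H^\ell(G_1)=H^{\ell+1}(G_1)=0$, and since $\ell,\ell+1\le D-1$ this will follow from the stronger statement $H^j(G_1)=0$ for all $j<D$.

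To establish that statement I would filter $G_1$ further by $G_1\supseteq G_2\supseteq\cdots$ and compute the associated graded. The piece $G_m/G_{m+1}$ consists of cochains supported on faces with \emph{exactly} $m$ colors outside $T$; such a face $\sigma$ splits uniquely as $\sigma=\sigma_{\mathrm{ext}}\sqcup\sigma_{\mathrm{int}}$ with $T(\sigma_{\mathrm{ext}})=T(\sigma)\cap T^c$ (so $\dim\sigma_{\mathrm{ext}}=m-1$) and $\sigma_{\mathrm{int}}\in\Delta_{\sigma_{\mathrm{ext}}}$ of type contained in $T$. Checking which codimension-one subfaces of an exactly‑$m$‑external face are again exactly‑$m$‑external (precisely those obtained by deleting an internal vertex) identifies the induced differential with the link coboundary, giving
\begin{align}
G_m/G_{m+1}\;\cong\;\bigoplus_{\substack{\sigma_{\mathrm{ext}}:\ T(\sigma_{\mathrm{ext}})\subseteq T^c\\ |T(\sigma_{\mathrm{ext}})|=m}}\widetilde{C}^{\,\bullet-m}\!\left(\Delta_{\sigma_{\mathrm{ext}}},\mathcal{F}\right),
\end{align}
a direct sum of extended link complexes shifted into degrees $\ge m-1$, so $H^k(G_m/G_{m+1})=\bigoplus_{\sigma_{\mathrm{ext}}}\widetilde{H}^{\,k-m}(\Delta_{\sigma_{\mathrm{ext}}},\mathcal{F})$. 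Now $\Delta_{\sigma_{\mathrm{ext}}}$ has dimension $D-m$, and $\widetilde{H}^{\,i}(\Delta_{\sigma_{\mathrm{ext}}},\mathcal{F})=0$ for every $i<D-m$: flasqueness makes $\mathcal{F}_{\sigma_{\mathrm{ext}}}\to C^0(\Delta_{\sigma_{\mathrm{ext}}},\mathcal{F})$ injective (the stars $(\sigma_{\mathrm{ext}}\cup w)^\uparrow$ of link vertices cover $\sigma_{\mathrm{ext}}^\uparrow$) and makes $\mathcal{F}_{\sigma_{\mathrm{ext}}}\xrightarrow{\ \sim\ } Z^0(\Delta_{\sigma_{\mathrm{ext}}},\mathcal{F})$ (a compatible family of link-vertex codewords glues to a codeword of $\mathcal{F}_{\sigma_{\mathrm{ext}}}$ by the definition of the induced local code), so $\widetilde{H}^{-1}=\widetilde{H}^{0}=0$, while local acyclicity gives $H^{i}(\Delta_{\sigma_{\mathrm{ext}}},\mathcal{F})=0$ for $0<i<D-\dim\sigma_{\mathrm{ext}}-1=D-m$. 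Hence each $H^\bullet(G_m/G_{m+1})$ is concentrated in degree $D$, and a downward induction on $m$ using the long exact sequences of $0\to G_{m+1}\to G_m\to G_m/G_{m+1}\to 0$ (base case $G_{|T^c|}=G_{|T^c|}/G_{|T^c|+1}$) yields $H^j(G_m)=0$ for all $j<D$; in particular $H^j(G_1)=0$ for $j<D$, which finishes the proof that $\mathrm{res}_T$ is an isomorphism on $H^\ell$.

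The only genuinely technical points are the identification of the graded pieces $G_m/G_{m+1}$ with the shifted extended link complexes — essentially a bookkeeping of the degree shift $\bullet\mapsto\bullet-m$ together with a $\delta^2=0$-style verification that the quotient differential is $\delta_{\sigma_{\mathrm{ext}}}$ — and the two bottom-degree vanishing statements $\widetilde{H}^{-1}(\Delta_{\sigma_{\mathrm{ext}}},\mathcal{F})=\widetilde{H}^{0}(\Delta_{\sigma_{\mathrm{ext}}},\mathcal{F})=0$, which require unwinding the definition of the induced lower-level local codes and invoking flasqueness; the rest is formal homological algebra. I expect this to be the main obstacle, but it is considerably lighter than the nested inductions of Lemmas~\ref{shrunkToSheaf} and~\ref{ShrunkSheafIso}, precisely because the faces absent from $\Delta_T$ organize cleanly into links of purely external faces. (This is in effect a hands-on instance of the spectral sequence of \autocite{LinSheaf}: the filtration by external color count refines the one used there, and local acyclicity is exactly what collapses the relevant columns into top degree.)
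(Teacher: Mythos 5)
Your filtration by the external-color count $m(\sigma)=|T(\sigma)\setminus T|$ is a clean idea, and the reduction to $H^{\ell}(G_1)=H^{\ell+1}(G_1)=0$ via the long exact sequence is correct. The gap is in the identification of the graded pieces. In $G_m/G_{m+1}$ the induced differential only records cofaces $\tau\supset\sigma$ with $m(\tau)=m(\sigma)=m$, i.e.\ those obtained by adjoining a vertex of color in $T$; cofaces obtained by adjoining an external vertex land in $G_{m+1}$ and are killed in the quotient. Hence
\begin{align}
G_m/G_{m+1}\;\cong\;\bigoplus_{\substack{\sigma_{\mathrm{ext}}:\ T(\sigma_{\mathrm{ext}})\subseteq T^c\\ |T(\sigma_{\mathrm{ext}})|=m}}\widetilde{C}^{\,\bullet-m}\!\left(\Delta_{\sigma_{\mathrm{ext}},\,T},\mathcal{F}\right),
\end{align}
the \emph{$T$-restricted} link complexes, not the full link complexes $\widetilde{C}^{\bullet-m}(\Delta_{\sigma_{\mathrm{ext}}},\mathcal{F})$; the two coincide only at the top level $m=|T^c|$. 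The restricted link complex has dimension $|T|-1$, not $D-m$, so your conclusion that $H^{\bullet}(G_m/G_{m+1})$ is concentrated in degree $D$ fails: the complex simply stops at (shifted) degree $m+|T|-1\le D-1$, where its top cohomology $C^{|T|-1}/B^{|T|-1}$ is generically nonzero, and its cohomology in degrees $0<i<|T|-1$ is \emph{not} controlled by local acyclicity of $\Delta_{\sigma_{\mathrm{ext}}}$ --- relating $H^{i}(\Delta_{\sigma_{\mathrm{ext}},T},\mathcal{F})$ to $H^{i}(\Delta_{\sigma_{\mathrm{ext}}},\mathcal{F})$ is precisely the statement of this lemma applied to the link. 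So the key vanishing claim ``$H^{j}(G_1)=0$ for all $j<D$'' is unjustified and false in that range.

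The route is salvageable, but only with two substantive repairs: (i) weaken the claim to $H^{j}(G_m/G_{m+1})=0$ for $j\le|T|-1$, which suffices since $\ell+1\le|T|-1$ (your bottom-degree arguments $\widetilde{H}^{-1}=\widetilde{H}^{0}=0$ do carry over to the restricted link, because every $D$-face and every $(D-1)$-face over $\sigma_{\mathrm{ext}}$ contains a $T$-colored link vertex); and (ii) handle the middle degrees $0<i<|T|-1$ of $\widetilde{C}(\Delta_{\sigma_{\mathrm{ext}},T},\mathcal{F})$ by applying the lemma inductively to the lower-dimensional locally acyclic link $\Delta_{\sigma_{\mathrm{ext}}}$ and then invoking its local acyclicity --- i.e.\ the whole proof must be set up as an induction on the dimension of the complex, which your proposal does not do. With those additions the filtration argument would be a genuinely different (and arguably tidier) proof than the paper's, which instead establishes injectivity via the explicit coboundary construction in Lemma~\ref{ShrunkSheafIso} and surjectivity by an explicit link-by-link extension of a restricted cocycle to a global one; but as written the proposal's central step is broken.
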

\begin{proof}
We have already shown that ${\text{res}_T}_*$ is injective in the proof of \ref{ShrunkSheafIso}. To see that ${\text{res}_{T}}_*$ is surjective, we use an argument very similar to \ref{shrunkToSheaf} to show that we can extend any $f \in Z^{\ell}\left(\Delta_T,\mathcal{F} \right)$ to some $f_{\text{sheaf}} \in Z^{\ell}\left(\Delta,\mathcal{F} \right)$ such that $\text{res}_T f_{\text{sheaf}} = f$. 

For any face $\sigma \in \Delta_{T^c}\left(D-|T|\right)$ we start by defining a cocycle $\xi_{(\sigma,\emptyset)} \in Z^\ell\left(\Delta_\sigma,\mathcal{F}\right)$; for any face $\tau \in \Delta_{\sigma, T}(\ell)$ let
\begin{align}
    \xi_{(\sigma,\emptyset)}(\tau) := \left.f(\tau)\right|_{\left(\sigma \cup \tau\right)^\uparrow}
\end{align}
Then clearly 
\begin{align}
    \left(\delta_\sigma^{\ell}\xi_{(\sigma,\emptyset)}\right)(\tau) &= \sum_{\tau \supset \alpha \in \Delta_\sigma(\ell)} \left.\xi_{(\sigma,\emptyset)}(\alpha)\right|_{(\sigma \cup \tau)^\uparrow}\\
    &= \sum_{\tau \supset \alpha \in \Delta_\sigma(\ell)} \left.f(\alpha)\right|_{(\sigma \cup \tau)^\uparrow}\\
    &= \left(\delta_T^{\ell}f\right)\left(\tau\right) = 0
\end{align}
since any face $\tau \in \Delta_\sigma(\ell+1)$ necessarily has $T\left(\tau\right) \subset T$.

By local acyclicity, there must exist some $\gamma_{\left(\sigma, \emptyset\right)} \in C^{\ell-1}\left(\Delta_\sigma, \mathcal{F}\right)$ such that $\delta_\sigma^{\ell-1}\gamma_{\left(\sigma, \emptyset\right)}=\xi_{(\sigma,\emptyset)}$. We collect all such link cochains for every face $\sigma \in \Delta_{T^c}\left(D-|T|\right)$ into a cochain $\omega_\emptyset \in C^{D+\ell-|T|}$ defined as 
\begin{align}
    \omega_\emptyset(\tau) := \begin{cases}
        \gamma_{\left(\tau_{T^c}, \emptyset\right)}\left(\tau \setminus \tau_{T^c}\right) & T^c \subset T\left(\tau\right) \\
        0 & \text{otherwise}
    \end{cases}
\end{align}
This completes the base case for our induction.

For any $1\leq i \leq |T^c|-1$ and any $\sigma \in \Delta\left(D-|T|-i\right)$ with $T\left(\sigma\right) \subset T^c$ define for any $\tau \in \Delta_\sigma(\ell)$
\begin{align}
    \xi_{\left(\sigma, T^c\setminus T\left(\sigma\right)\right)} (\tau) := \begin{cases}
       \left. f(\tau) \right|_{\left(\sigma \cup \tau\right)^\uparrow} & \text{if } T\left(\tau\right) \subset T\\
       \sum_{j \in T^c \setminus T\left(\sigma\right)} \omega_{T^c \setminus \left(T\left(\sigma\right) \cup \{j\}\right)} \left(\sigma \cup \tau\right) & \text{otherwise}
    \end{cases}
\end{align}
which we will show is a cocycle. This definition relies on the $i-1$ inductive hypothesis that this is a cocycle so that we can appropriately define the $\omega$ cochains like we did for the base case. We proceed to show the inductive step that for any $\sigma \in \Delta\left(D-|T|-i\right)$ we get $\delta^\ell_\sigma \xi_{\left(\sigma, T^c\setminus T\left(\sigma\right)\right)}=0$. For any $\tau \in \Delta_{\sigma, T}\left(\ell+1\right)$ we get 
\begin{align}
    \left(\delta_\sigma^\ell \xi_{\left(\sigma, T^c\setminus T\left(\sigma\right)\right)}\right) (\tau) &=\sum_{\tau \supset \alpha \in \Delta_\sigma(\ell)} \left.\xi_{(\sigma,T^c\setminus T\left(\sigma\right))}(\alpha)\right|_{(\sigma \cup \tau)^\uparrow}\\
    &= \sum_{\tau \supset \alpha \in \Delta_\sigma(\ell)} \left.f(\alpha)\right|_{(\sigma \cup \tau)^\uparrow}\\
    &= \left(\delta_T^{\ell}f\right)\left(\tau\right) = 0
\end{align}
Next, for any $\tau \in \Delta_{\sigma,\{k\} \cup T \setminus \widetilde{S}}\left(\ell+1\right)$ where $k \in T^c \setminus T(\sigma)$ and $\widetilde{S} \subset T:|\widetilde{S}|=|T|-\ell-1$ we get 
\begin{align}
    \left(\delta_\sigma^\ell \xi_{\left(\sigma, T^c\setminus T\left(\sigma\right)\right)}\right) (\tau) &=\sum_{\tau \supset \alpha \in \Delta_\sigma(\ell)} \left.\xi_{(\sigma,T^c\setminus T\left(\sigma\right))}(\alpha)\right|_{(\sigma \cup \tau)^\uparrow}\\
    &= \left.f(\tau_{T\setminus \widetilde{S}})\right|_{(\sigma \cup \tau)^\uparrow} +  \sum_{\tau \supset \alpha \in \Delta_{\sigma, \{k\} \cup T } (\ell)} \sum_{j  \in T^c \setminus T(\sigma)} \left.\omega_{T^c \setminus \left(T(\sigma) \cup \{j\} \right)}\left(\sigma \cup \alpha\right)\right|_{(\sigma \cup \tau)^\uparrow} \\
        &=  \left.f(\tau_{T\setminus \widetilde{S}})\right|_{(\sigma \cup \tau)^\uparrow} + \sum_{\tau\setminus\tau_{\{k\}} \supset \alpha \in \Delta_{\sigma \cup \tau_{\{k\}}}(\ell-1)} \left. \omega_{T^c \setminus \left( T(\sigma) \cup \{k\}\right)}\left(\alpha \cup \sigma \cup \tau_{\{k\}}\right)\right|_{\left(\sigma \cup \tau \right)^\uparrow}\\
    &=  \left.f(\tau_{T\setminus \widetilde{S}})\right|_{(\sigma \cup \tau)^\uparrow} + \sum_{\tau\setminus\tau_{\{k\}} \supset \alpha \in \Delta_{\sigma \cup \tau_{\{k\}}}(\ell-1)} \left. \gamma_{\left(\sigma \cup \tau_{\{k\}}, T^c \setminus \left( T(\sigma) \cup \{k\}\right) \right)}\left(\alpha\right)\right|_{\left(\sigma \cup \tau \right)^\uparrow}\\
    &=  \left.f(\tau_{T\setminus \widetilde{S}})\right|_{(\sigma \cup \tau)^\uparrow} + \delta^{\ell-1}_{\sigma \cup \tau_{\{k\}}} \gamma_{\left(\sigma \cup \tau_{\{k\}}, T^c \setminus \left( T(\sigma) \cup \{k\}\right) \right)}\left(\tau \setminus \tau_{\{k\}}\right)\\
    &=  \left.f(\tau_{T\setminus \widetilde{S}})\right|_{(\sigma \cup \tau)^\uparrow} + \xi_{\left(\sigma \cup \tau_{\{k\}}, T^c \setminus \left( T(\sigma) \cup \{k\}\right) \right)}\left(\tau \setminus \tau_{\{k\}}\right)\\
    &=0
\end{align}

Finally, for any other $\tau \in \Delta_{\sigma, S\cup T\setminus\widetilde{S}}\left(\ell+1\right)$ where $2 \leq |S| = |\widetilde{S}|-|T|+\ell+2 \leq \ell+2$, $S \subset T^c \setminus T\left(\sigma\right)$, and $\widetilde{S} \subset T$ we get 
\begin{align}
    \left(\delta_\sigma^\ell \xi_{\left(\sigma, T^c\setminus T\left(\sigma\right)\right)}\right) (\tau) &=\sum_{\tau \supset \alpha \in \Delta_\sigma(\ell)} \left.\xi_{(\sigma,T^c\setminus T\left(\sigma\right))}(\alpha)\right|_{(\sigma \cup \tau)^\uparrow}\\
    &= \sum_{\substack{R \subset S \cup T \setminus \widetilde{S}:\\ |R|=\ell+1}} \sum_{\alpha \in \Delta_{\sigma, R}(\ell)} \sum_{j \in T^c \setminus T\left(\sigma\right)}\left.\omega_{T^c\setminus \left( T\left(\sigma\right) \cup \{j \} \right)}(\sigma \cup \alpha)\right|_{(\sigma \cup \tau)^\uparrow}\\
    &= \sum_{k \in S} \sum_{\substack{R \subset S \cup T \setminus \widetilde{S}:\\k\in R, |R|=\ell+1}} \sum_{\alpha \in \Delta_{\sigma, R}(\ell)} \left.\omega_{T^c\setminus \left( T\left(\sigma\right) \cup \{k \} \right)}(\sigma \cup \alpha)\right|_{(\sigma \cup \tau)^\uparrow}\\
    &= \sum_{k \in S} \sum_{\substack{R \subset S \cup T \setminus (\widetilde{S} \cup \{k\}):\\ |R|=\ell}} \sum_{\alpha \in \Delta_{\sigma \cup \tau_{\{k\}}, R}(\ell)} \left.\omega_{T^c\setminus \left( T\left(\sigma\right) \cup \{k \} \right)}(\sigma \cup \alpha \cup \tau_{\{k\}})\right|_{(\sigma \cup \tau)^\uparrow}\\
&= \sum_{k \in S} \sum_{\substack{R \subset S \cup T \setminus (\widetilde{S} \cup \{k\}):\\ |R|=\ell}} \sum_{\alpha \in \Delta_{\sigma \cup \tau_{\{k\}}, R}(\ell)} \left.\gamma_{\left( \sigma \cup \tau_{\{k\}}, T^c\setminus \left( T\left(\sigma\right) \cup \{k \} \right)\right)}(\alpha)\right|_{(\sigma \cup \tau)^\uparrow}\\
&= \sum_{k \in S} \delta^{\ell-1}_{ \sigma \cup \tau_{\{k\}}} \gamma_{ \left( \sigma \cup \tau_{\{k\}}, T^c\setminus \left( T\left(\sigma\right) \cup \{k \} \right)\right)}(\tau \setminus \tau_{\{k\}})\\
&= \sum_{k \in S}\xi_{ \left( \sigma \cup \tau_{\{k\}}, T^c\setminus \left( T\left(\sigma\right) \cup \{k \} \right)\right)}(\tau \setminus \tau_{\{k\}})\\
&= \sum_{k \in S} \sum_{j \in T^c \setminus \left(T(\sigma) \cup \{k\}\right) }\omega_{ T^c\setminus \left( T\left(\sigma\right) \cup \{j,k \} \right)}\left(\sigma \cup \tau\right)\\
&= \sum_{\{j,k\} \subset S} 2\omega_{ T^c\setminus \left( T\left(\sigma\right) \cup \{j,k \} \right)}\left(\sigma \cup \tau\right) = 0
\end{align}
This completes the induction. In the end, we are left with a cochain $f_\text{sheaf} := \xi_{\left(\emptyset, T^c\right)} \in Z^\ell\left(\Delta, \mathcal{F}\right)$ such that $\text{res}_T(f_\text{sheaf}) = f$ by definition.
\end{proof}

\section{Dimension of Tanner Code Matches Collection of Sheaf Codes} \label{ApdxCounting}
In this section, we generalize a multi-step argument from \autocite{BombinColor} into our sheaf setting to count the number of linear dependencies $\dim Z^0\left(\mathcal{C}_\mathcal{F}\left(x,z\right)\right)$ among the $X$-stabilizers of the Tanner code in terms of the sheaf parameters. The same argument can be easily used to count the dependencies $\dim Z_2\left(\mathcal{C}_\mathcal{F}\left(x,z\right)\right)$ among the $Z$-stabilizers by switching to the dual sheaf. We can use this counting to establish the main result of this section that 
\begin{align}
    \dim H^1\left(\mathcal{C}_\mathcal{F}\left(x,z\right)\right)
    &=\binom{D}{x+1} \dim H^{x+1}\left(\Delta,\mathcal{F} \right)
\end{align}
We also use the counting in a generalization of the argument of \autocite{Unfolding} to establish the existence of the constant depth unitary that performs the unfolding.

First, we show in \ref{depChainMap} that the dependencies $ Z^0\left(\mathcal{C}_\mathcal{F}\left(x,z\right)\right)$ can be viewed as a collection of $1$-cocycles 
\begin{align}
   \phi\left(Z^0\left(\mathcal{C}_\mathcal{F}\left(x,z\right)\right) \right) \subset \prod_{\substack{T \subset \mathds{Z}_{D+1}\\ |T| = x+1}} Z^1_{T^c\text{-shrunk}}
\end{align}
in the set of shrunk complexes across all possible types $T$ of size $|T|=x+1$ via an injective map that we call $\phi := \prod_{\substack{T \subset \mathds{Z}_{D+1}\\ |T| = x+1}} \text{res}_T$. Separately, we show in \ref{shrunkToTanner} that any cocycle $f_{T^c\text{-shrunk}} \in Z^1_{T^c\text{-shrunk}}$ in such a $T^c$-shrunk complex of type $T\not\ni0$ can be extended (not necessarily uniquely) to a Tanner cocycle $f_\text{Tanner} \in Z^0\left(\mathcal{C}_\mathcal{F}\left(x,z\right)\right)$ such that $\phi\left(f_\text{Tanner}\right)$ matches $f_{T^c\text{-shrunk}}$ on the $T^c$-shrunk complex but is $0$ on all other $S^c$-shrunk complexes for types $S\not\ni0$ that exclude the color $0$. Let us label by $\Gamma_0$ the subspace of $Z^0\left(\mathcal{C}_\mathcal{F}\left(x,z\right)\right)$ whose image under $\phi$ is nonzero only on $T^c$-shrunk complexes with types $T\ni0$ that contain the color $0$. Combining the two lemmas, we see that  
\begin{align}
\phi\left(Z^0\left(\mathcal{C}_\mathcal{F}\left(x,z\right)\right)\right)
/\phi\left(\Gamma_0\right) &\cong \prod_{\substack{0 \notin T \subset \mathds{Z}_{D+1}\\ |T| = x+1}} Z^1_{T^c\text{-shrunk}} \\
\implies 
    \dim Z^0\left(\mathcal{C}_\mathcal{F}\left(x,z\right)\right) &=  \dim \Gamma_0 + \sum_{\substack{0 \notin T \subset \mathds{Z}_{D+1}\\ |T| = x+1}}  \dim Z^1_{T^c\text{-shrunk}} 
\end{align}
where in the top line we include the map $\phi$ to emphasize that the cocycles on the right-hand side themselves can serve as coset labels for the quotient space. 

The dimension $\dim \Gamma_0$ can be found by considering a Tanner code $\mathcal{C}_{\mathcal{F},v}\left(x-1,z\right)$ within each vertex, since the sets $v^\uparrow$ for $v \in \Delta_{\{0\}}\left(0\right)$ partition the set of qubits $\Delta(D)$ of the Tanner code, and $\Gamma_0$ is precisely the subspace whose dependencies only involve faces that include a $0$-color vertex
\begin{align}
    \dim \Gamma_0 = \sum_{v \in \Delta_{\{0\}}(0)} \dim Z^0\left(\mathcal{C}_{\mathcal{F},v}\left(x-1, z\right)\right)
\end{align}
To be more specific, $\dim Z^0\left(\mathcal{C}_{\mathcal{F},v}\left(x-1, z\right)\right) = \dim \ker \pi_{\uparrow,v,x}$ where $\pi_{\uparrow,v,x} : C^{x-1}\left(\Delta_v, \mathcal{F}\right) \rightarrow C^{D-1}\left(\Delta_v, \mathcal{F}\right)$ is the coboundary operator of the Tanner code in the link of $v$. This establishes a recursion, where we compute $\dim Z^0\left(\mathcal{C}_{\mathcal{F},\sigma}\left(x-|\sigma|,z\right)\right)$ for Tanner codes on smaller and smaller complexes $\Delta_\sigma$ with fewer and fewer colors. The recursion ends when $x-|\sigma|=-1$, since then we trivially have 
\begin{align}
    \dim Z^0\left(\mathcal{C}_{\mathcal{F},\sigma}\left(-1, z\right)\right) =0
\end{align} 
Meanwhile, we have
\begin{align}
    \dim Z^1_{T^c\text{-shrunk}} &= \dim H^1_{T^c\text{-shrunk}} + \dim B^1_{T^c\text{-shrunk}}\\
    &= \dim H^1_{T^c\text{-shrunk}} + \dim C^0_{T^c\text{-shrunk}} - \dim Z^0_{T^c\text{-shrunk}} \\
    &= \dim H^{|T|-1}\left(\Delta,\mathcal{F} \right) + \dim C^{|T|-2}\left(\Delta_T,\mathcal{F} \right) - \dim Z^{|T|-2}\left(\Delta_T,\mathcal{F} \right) \\
    &= \dim H^{|T|-1}\left(\Delta,\mathcal{F} \right) + \dim C^{|T|-2}\left(\Delta_T,\mathcal{F} \right) \nonumber \\
    & \quad- \dim H^{|T|-2}\left(\Delta_T,\mathcal{F} \right) - \dim C^{|T|-3}\left(\Delta_T,\mathcal{F} \right) + \dim Z^{|T|-3}\left(\Delta_T,\mathcal{F} \right) \\
    \dots \nonumber \\
    &= \sum_{j=0}^{|T|-1}(-1)^j\dim H^{|T|-1-j}\left(\Delta,\mathcal{F} \right) + \sum_{j=0}^{|T|-2} (-1)^j \dim C^{|T|-2-j}\left(\Delta_T,\mathcal{F} \right) 
\end{align}
where we used \ref{ShrunkSheafIso} to relate $H^1_{T^c\text{-shrunk}} \cong  H^{|T|-1}\left(\Delta,\mathcal{F} \right)$, \ref{RestrictedSheafIso} to relate $H^{\ell}\left(\Delta_T,\mathcal{F} \right) \cong  H^{\ell}\left(\Delta,\mathcal{F} \right)$ for $\ell<|T|-1$, and the other substitutions follow by definition. 

Combining the above, we have
\begin{align}
    \dim &Z^0\left(\mathcal{C}_\mathcal{F}\left(x,z\right)\right) \nonumber \\
    &=  \dim \Gamma_0 + \sum_{\substack{0 \notin T \subset \mathds{Z}_{D+1}\\ |T| = x+1}}  \dim Z^1_{T^c\text{-shrunk}} \\
    &= \sum_{v \in \Delta_{\{0\}}(0)} \dim Z^0\left(\mathcal{C}_{\mathcal{F},v}\left(x-1, z\right)\right) \nonumber \\
    &\quad + \sum_{\substack{T \subset \mathds{Z}_{D+1}\setminus \{0\}\\ |T| = x+1}} \left( \sum_{j=0}^{x}(-1)^{x-j}\dim H^{j}\left(\Delta,\mathcal{F} \right) + \sum_{j=0}^{x-1} (-1)^{x-1-j} \dim C^{j}\left(\Delta_T,\mathcal{F} \right) \right)\\
    &= \sum_{v \in \Delta_{\{0\}}(0)} \dim Z^0\left(\mathcal{C}_{\mathcal{F},v}\left(x-1, z\right)\right) + \binom{D}{x+1} \sum_{j=0}^{x}(-1)^{x-j}  \dim H^{j}\left(\Delta,\mathcal{F} \right) \nonumber \\
    &\quad  + \sum_{j=0}^{x-1}(-1)^{x-1-j} \binom{D-(j+1)}{(x+1)-(j+1)} \sum_{\substack{S \subset \mathds{Z}_{D+1}\setminus \{0\}\\ |S| = j+1}} \dim C^{j}\left(\Delta_S,\mathcal{F} \right)\\
    &= \sum_{v \in \Delta_{\{0\}}(0)} 
    \left( 
    \sum_{u \in \Delta_{v,\{1\}}(0)} \dim Z^0\left(\mathcal{C}_{\mathcal{F},\{u,v\}}\left(x-2, z\right)\right) 
    \phantom{ + \sum_{\substack{T \subset \mathds{Z}_{D+1}\setminus \{0\}\\ |T| = x+1}} \left( \sum_{j=0}^{x}(-1)^j\dim H^{x-j}\left(\Delta,\mathcal{F} \right) + \sum_{j=0}^{x-1} (-1)^j \dim C^{x-1-j}\left(\Delta_T,\mathcal{F} \right) \right)}
    \right. \nonumber \\
    &\quad + \binom{D-1}{x} \sum_{j=0}^{x-1}(-1)^{x-1-j}  \dim H^{j}\left(\Delta_v,\mathcal{F} \right) \\
    &\quad \left. 
    + \sum_{j=0}^{x-2}(-1)^{x-2-j} \binom{D-2-j}{D-1-x} \sum_{\substack{S \subset \mathds{Z}_{D+1}\setminus \{0,1\}\\ |S| = j+1}} \dim C^{j}\left(\Delta_{v,S},\mathcal{F} \right)
    \right) \nonumber \\
    &\quad + \binom{D}{x+1} \sum_{j=0}^{x}(-1)^{x-j}  \dim H^{j}\left(\Delta,\mathcal{F} \right) \nonumber \\
    &\quad + \sum_{j=0}^{x-1}(-1)^{x-1-j} \binom{D-1-j}{D-1-x} \sum_{\substack{S \subset \mathds{Z}_{D+1}\setminus \{0\}\\ |S| = j+1}} \dim C^{j}\left(\Delta_S,\mathcal{F} \right)
\end{align}

We can simplify the vertex link cohomology summands in the second term using local acyclicity
\begin{align}
  \sum_{j=0}^{x-1}(-1)^{x-1-j}\dim H^{j}\left(\Delta_v,\mathcal{F} \right)  &=(-1)^{x-1} \dim H^{0}\left(\Delta_v,\mathcal{F} \right)\\
  &=(-1)^{x-1} \dim \mathcal{F}_v \\
  &=:(-1)^{x-1}\dim C^{-1}\left(\Delta_{v},\mathcal{F} \right)
\end{align}
where we use the notation $C^{-1}\left(\Delta_{v},\mathcal{F} \right)$ so that we can treat this term indistinguishably from the other terms $C^{j}\left(\Delta_{v},\mathcal{F} \right)$.  Plugging this simplification in above, we get 
\begin{align}
    \dim Z^0\left(\mathcal{C}_\mathcal{F}\left(x,z\right)\right) 
    &= \sum_{v \in \Delta_{\{0\}}(0)} 
    \left( 
    \sum_{u \in \Delta_{v,\{1\}}(0)} \dim Z^0\left(\mathcal{C}_{\mathcal{F},\{u,v\}}\left(x-2, z\right)\right) 
    \phantom{ + \sum_{\substack{T \subset \mathds{Z}_{D+1}\setminus \{0\}\\ |T| = x+1}} \left( \sum_{j=0}^{x}(-1)^j\dim H^{x-j}\left(\Delta,\mathcal{F} \right) + \sum_{j=0}^{x-1} (-1)^j \dim C^{x-1-j}\left(\Delta_T,\mathcal{F} \right) \right)}
    \right. \nonumber \\
    &\quad \left. 
    + \sum_{j=-1}^{x-2}(-1)^{x-2-j} \binom{D-2-j}{D-1-x} \sum_{\substack{S \subset \mathds{Z}_{D+1}\setminus \{0,1\}\\ |S| = j+1}} \dim C^{j}\left(\Delta_{v,S},\mathcal{F} \right)
    \right) \nonumber \\
    &\quad + \binom{D}{x+1} \sum_{j=0}^{x}(-1)^{x-j}  \dim H^{j}\left(\Delta,\mathcal{F} \right) \nonumber \\
    &\quad  + \sum_{j=0}^{x-1}(-1)^{x-1-j} \binom{D-1-j}{D-1-x} \sum_{\substack{S \subset \mathds{Z}_{D+1}\setminus \{0\}\\ |S| = j+1}} \dim C^{j}\left(\Delta_S,\mathcal{F} \right) 
\end{align}
Additionally, for any $0\leq x \leq D-2$, any type $T$ of $1 \leq \ell+1 \leq x+1$ colors, any color $k \in \mathds{Z}_{D+1}\setminus T$, and any $0 \leq j\leq x-|T|$
\begin{align}
    \sum_{\sigma \in \Delta_{T}(\ell)} \sum_{\substack{S \subset \mathds{Z}_{D+1}\setminus \left(T\cup\{k\}\right)\\|S|=j+1}} \dim C^{j}\left(\Delta_{\sigma,S}, \mathcal{F}\right) = \sum_{\substack{R \subset \mathds{Z}_{D+1}\setminus \{k\}:\\T\subset R \, \land\, |R|=j+|T|+1}}\dim C^{j+|T|}\left(\Delta_{R}, \mathcal{F} \right)
\end{align}
which brings our expression to 
\begin{align}
     \dim Z^0\left(\mathcal{C}_\mathcal{F}\left(x,z\right)\right) 
    &= \sum_{e \in \Delta_{\{0,1\}}(1)}  \dim Z^0\left(\mathcal{C}_{\mathcal{F},e}\left(x-2, z\right)\right) 
     \nonumber \\
    &\quad 
    + \sum_{j=-1}^{x-2}(-1)^{x-2-j} \binom{D-2-j}{D-1-x} \sum_{\substack{R \subset \mathds{Z}_{D+1}\setminus \{1\} \\ \{0\} \subset R \, \land\, |R| = j+2}} \dim C^{j+1}\left(\Delta_{R},\mathcal{F} \right)
    \nonumber \\
    &\quad + \binom{D}{x+1} \sum_{j=0}^{x}(-1)^{x-j}  \dim H^{j}\left(\Delta,\mathcal{F} \right) \nonumber \\
    &\quad  + \sum_{j=0}^{x-1}(-1)^{x-1-j} \binom{D-1-j}{D-1-x} \sum_{\substack{S \subset \mathds{Z}_{D+1}\setminus \{0\}\\ |S| = j+1}} \dim C^{j}\left(\Delta_S,\mathcal{F} \right) 
\end{align}
Now we can recursively unpack the dimension of the code in the link of the faces $\sigma \in \Delta_{\mathds{Z}_{\ell+1}}(\ell)$ like so
\begin{align}
    \sum_{\sigma \in \Delta_{\mathds{Z}_{\ell+1}}(\ell) }&\dim Z^0\left(\mathcal{C}_{\mathcal{F},\sigma}\left(x-\ell-1, z\right)\right)  \nonumber \\
    &=  \sum_{\tau \in \Delta_{\mathds{Z}_{\ell+2}}(\ell+1)} \dim Z^0\left(\mathcal{C}_{\mathcal{F}, \tau}\left(x-\ell-2, z\right)\right) \nonumber \\
    &\quad + \sum_{j=-1}^{x-\ell-2}(-1)^{x-\ell-j} \binom{D-2-\ell-j}{D-1-x} \sum_{\substack{R \subset \mathds{Z}_{D+1}\setminus \{\ell+1\} \\ \mathds{Z}_{\ell+1} \subset R \, \land\, |R| = j+\ell+2}} \dim C^{j+\ell+1}\left(\Delta_{R},\mathcal{F} \right)
\end{align}
Performing this substitution for all $\ell < x$ (and allowing ourselves to briefly use the notation $C^{-1}\left(\Delta,\mathcal{F} \right) :=H^{0}\left(\Delta,\mathcal{F} \right) $) we arrive at
\begin{align}
     \dim & Z^0\left(\mathcal{C}_\mathcal{F}\left(x,z\right)\right) \nonumber \\
    &=  \binom{D}{x+1} \sum_{j=1}^{x}(-1)^{x-j}  \dim H^{j}\left(\Delta,\mathcal{F} \right) \nonumber \\
    &\quad + \sum_{\ell=-1}^{x-1} \sum_{j=-1}^{x-\ell-2}(-1)^{x-\ell-j} \binom{D-2-\ell-j}{D-1-x} \sum_{\substack{R \subset \mathds{Z}_{D+1}\setminus \{\ell+1\} \\ \mathds{Z}_{\ell+1} \subset R \, \land\, |R| = j+\ell+2}} \dim C^{j+\ell+1}\left(\Delta_{R},\mathcal{F} \right)\\
    &=  \binom{D}{x+1} \sum_{j=1}^{x}(-1)^{x-j}  \dim H^{j}\left(\Delta,\mathcal{F} \right) \nonumber \\
    &\quad +  \sum_{k=-2}^{x-2} (-1)^{x-k} \binom{D-2-k}{D-1-x} \sum_{\ell=-1}^{k+1} \sum_{\substack{R \subset \mathds{Z}_{D+1}\setminus \{\ell+1\} \\ \mathds{Z}_{\ell+1} \subset R \, \land\, |R| = k+2}} \dim C^{k+1}\left(\Delta_{R},\mathcal{F} \right)\\
    &=  \binom{D}{x+1} \sum_{j=1}^{x}(-1)^{x-j}  \dim H^{j}\left(\Delta,\mathcal{F} \right) \nonumber \\
    &\quad +  \sum_{k=-2}^{x-2} (-1)^{x-k} \binom{D-2-k}{D-1-x}  \dim C^{k+1}\left(\Delta,\mathcal{F} \right)\\
    &=  \binom{D}{x+1} \sum_{j=0}^{x}(-1)^{x-j} \dim H^{j}\left(\Delta,\mathcal{F} \right) -  \sum_{j=0}^{x-1}(-1)^{x-j}  \binom{D-1-j}{D-1-x}  \dim C^{j}\left(\Delta,\mathcal{F} \right) \label{TannerCodeRedundancies}
\end{align}

This expression allows us to calculate the dimension of the Tanner code $\mathcal{C}_\mathcal{F}\left(x,z\right)$ 
\begin{align}
    \dim &H^1\left(\mathcal{C}_\mathcal{F}\left(x,z\right)\right) \nonumber \\
    &= \dim C^1\left(\mathcal{C}_\mathcal{F}\left(x,z\right)\right)-  \dim B^1\left(\mathcal{C}_\mathcal{F}\left(x,z\right)\right) - \dim B_1\left(\mathcal{C}_\mathcal{F}\left(x,z\right)\right) \\
    &= \dim C^1\left(\mathcal{C}_\mathcal{F}\left(x,z\right)\right) - \left(\dim C^0\left(\mathcal{C}_\mathcal{F}\left(x,z\right)\right) - \dim Z^0\left(\mathcal{C}_\mathcal{F}\left(x,z\right)\right)\right) \nonumber \\
    &\quad - \left(\dim C_2\left(\mathcal{C}_\mathcal{F}\left(x,z\right)\right) - \dim Z_2\left(\mathcal{C}_\mathcal{F}\left(x,z\right)\right)\right) \\
    &= \dim C^D\left(\Delta,\mathcal{F}\right) - \dim C^x\left(\Delta,\mathcal{F}\right) - \dim C^z\left(\Delta, \overline{\mathcal{F}}\right) \nonumber \\
    &\quad + \binom{D}{x+1} \sum_{j=0}^{x}(-1)^{x-j} \dim H^{j}\left(\Delta,\mathcal{F} \right) -  \sum_{j=0}^{x-1}(-1)^{x-j}  \binom{D-1-j}{x-j}  \dim C^{j}\left(\Delta,\mathcal{F} \right) \nonumber \\
    &\quad + \binom{D}{z+1} \sum_{j=0}^{z}(-1)^{z-j} \dim H^{j}\left(\Delta,\overline{\mathcal{F}} \right) -  \sum_{j=0}^{z-1}(-1)^{z-j}  \binom{D-1-j}{z-j}  \dim C^{j}\left(\Delta,\overline{\mathcal{F}} \right) \\
    &= \dim C^D\left(\Delta,\mathcal{F}\right)  \nonumber \\
    &\quad + \binom{D}{x+1} \sum_{j=0}^{x}(-1)^{x-j} \dim H^{j}\left(\Delta,\mathcal{F} \right) -  \sum_{j=0}^{x}(-1)^{x-j}  \binom{D-1-j}{x-j}  \dim C^{j}\left(\Delta,\mathcal{F} \right) \nonumber \\
    &\quad + \binom{D}{z+1} \sum_{j=0}^{z}(-1)^{z-j} \dim H^{j}\left(\Delta,\overline{\mathcal{F}} \right) -  \sum_{j=0}^{z}(-1)^{z-j}  \binom{D-1-j}{z-j}  \dim C^{j}\left(\Delta,\overline{\mathcal{F}} \right) \nonumber \\
\end{align}
We can use Poincar\'e duality $\dim H^{j}\left(\Delta,\overline{\mathcal{F}} \right) = \dim H^{D-j}\left(\Delta,\mathcal{F} \right)$ and plug in $z=D-2-x$ to get

\begin{align}
    \dim & H^1\left(\mathcal{C}_\mathcal{F}\left(x,z\right)\right) \nonumber\\
    &= \dim C^D\left(\Delta,\mathcal{F}\right)  \nonumber \\
    &\quad + \binom{D}{x+1} \sum_{j=0}^{x}(-1)^{x-j} \dim H^{j}\left(\Delta,\mathcal{F} \right) -  \sum_{j=0}^{x}(-1)^{x-j}  \binom{D-1-j}{x-j}  \dim C^{j}\left(\Delta,\mathcal{F} \right) \nonumber \\
    &\quad + \binom{D}{x+1} \sum_{j=0}^{D-2-x}(-1)^{D-x-j} \dim H^{D-j}\left(\Delta,\mathcal{F} \right) -  \sum_{j=0}^{z}(-1)^{z-j}  \binom{D-1-j}{z-j}  \dim C^{j}\left(\Delta,\overline{\mathcal{F}} \right)  \\
    &= \dim C^D\left(\Delta,\mathcal{F}\right) + \binom{D}{x+1} \dim H^{x+1}\left(\Delta,\mathcal{F} \right)  + (-1)^{x}(-1)^{x+1} \binom{D}{x+1}\dim H^{x+1}\left(\Delta,\mathcal{F} \right)  \nonumber \\
    &\quad + (-1)^{x}\binom{D}{x+1} \sum_{j=0}^{x}(-1)^{j} \dim H^{j}\left(\Delta,\mathcal{F} \right) -  \sum_{j=0}^{x}(-1)^{x-j}  \binom{D-1-j}{x-j}  \dim C^{j}\left(\Delta,\mathcal{F} \right) \nonumber \\
    &\quad + (-1)^{x} \binom{D}{x+1} \sum_{j=x+2}^{D}(-1)^{j} \dim H^{j}\left(\Delta,\mathcal{F} \right) -  \sum_{j=0}^{z}(-1)^{z-j}  \binom{D-1-j}{z-j}  \dim C^{j}\left(\Delta,\overline{\mathcal{F}} \right)  \\
    &= \binom{D}{x+1}\dim H^{x+1}\left(\Delta,\mathcal{F} \right)  + (-1)^{x}\binom{D}{x+1} \sum_{j=0}^{D}(-1)^{j} \dim C^{j}\left(\Delta,\mathcal{F} \right)  \nonumber \\
    &\quad +\dim C^D\left(\Delta,\mathcal{F}\right)  -  \sum_{j=0}^{x}(-1)^{x-j}  \binom{D-1-j}{x-j}  \dim C^{j}\left(\Delta,\mathcal{F} \right) \nonumber \\
    &\quad  -  \sum_{j=0}^{z}(-1)^{z-j}  \binom{D-1-j}{z-j}  \dim C^{j}\left(\Delta,\overline{\mathcal{F}} \right) 
\end{align}
where we added and subtracted the $H^{x+1}$ term in the middle in order to complete the sum over the $H^j$, which we subsequently replaced with a sum over $C^j$ using the equality of each with $\chi$.

We will show below that the terms in the last two lines combine to $(-1)^{x+1}\binom{D}{x+1}\chi$, where $\chi$ is the Euler characteristic of the sheaf (which we will show is identical to the Euler characteristic of the $T^c$-shrunk complex) so that we get
\begin{align}
    \dim H^1\left(\mathcal{C}_\mathcal{F}\left(x,z\right)\right)
    &= \binom{D}{x+1}\dim H^{x+1}\left(\Delta,\mathcal{F} \right)  + (-1)^{x}\binom{D}{x+1} \chi -(-1)^{x}\binom{D}{x+1} \chi \\
    &=\binom{D}{x+1}\dim H^{x+1}\left(\Delta,\mathcal{F} \right)
\end{align}

We proceed to show the claims about the Euler characteristic $\chi$. For any type $T$, consider the Euler characteristic for the $T^c$-shrunk complex (extended beyond the three terms we have previously considered).
\begin{align}
    \chi &= \sum_{j=0}^{|T|-1} (-1)^j \dim C^{j}\left(\Delta_T,\mathcal{F}\right) + (-1)^D \sum_{j=0}^{|T^c|-1} (-1)^j \dim C^{j}\left(\Delta_{T^c},\overline{\mathcal{F}}\right)\\
    &= \sum_{j=0}^{|T|-2} (-1)^j \dim H^{j}\left(\Delta_T,\mathcal{F}\right) + (-1)^{|T|-1} \dim H^{|T|-1}\left(\Delta,\mathcal{F}\right) \nonumber \\
    &\quad +  (-1)^D \sum_{j=0}^{|T^c|-1} (-1)^j \dim H^{j}\left(\Delta_{T^c},\overline{\mathcal{F}}\right) \\
    &= \sum_{j=0}^{|T|-2} (-1)^j \dim H^{j}\left(\Delta,\mathcal{F}\right) + (-1)^{|T|-1} \dim H^{|T|-1}\left(\Delta,\mathcal{F}\right) \nonumber \\
    &\quad +  (-1)^D \sum_{j=|T|}^{D} (-1)^{D-j} \dim H^{j}\left(\Delta,\mathcal{F}\right) \\
    &= \sum_{j=0}^{D} (-1)^j \dim H^{j}\left(\Delta,\mathcal{F}\right)
\end{align}
where we used Poincar\'e duality and re-indexed the sum to get the penultimate line. We see that the Euler characteristic of the shrunk complex and the sheaf are identical. If we sum over all such types $T$ of the same size then we get
\begin{align}
    \binom{D+1}{|T|} \chi &= \sum_{j=0}^{|T|-1} (-1)^j \sum_{S: |S|=|T|} \dim C^{j}\left(\Delta_T,\mathcal{F}\right) + (-1)^D \sum_{j=0}^{|T^c|-1} (-1)^j \sum_{S: |S|=|T|} \dim C^{j}\left(\Delta_{T^c},\overline{\mathcal{F}}\right)\\
    &= \sum_{j=0}^{|T|-1} (-1)^j \binom{D-j}{|T|-j-1} \dim C^{j}\left(\Delta,\mathcal{F}\right) \nonumber\\
    &\quad\quad\quad\quad + (-1)^D \sum_{j=0}^{|T^c|-1} (-1)^j \binom{D-j}{|T^c|-j-1} \dim C^{j}\left(\Delta,\overline{\mathcal{F}}\right)
\end{align}
Finally, summing up all such expressions for all $0\leq |T| \leq x+1$ with an alternating sign $(-1)^{|T|}$ and utilizing the identity 
\begin{align}
    \sum_{j=0}^{a} (-1)^j\binom{b+1}{j} =\begin{cases} (-1)^a \binom{b}{a} & 0\leq a<b+1\\
    0 & 0<a=b+1
    \end{cases}
\end{align}
will give us the desired claim:

\begin{align}
(-1)^{x+1}  \binom{D}{x+1} \chi &= \sum_{|T|=0}^{x+1}(-1)^{|T|}\binom{D+1}{|T|} \chi \\
&=  \sum_{|T|=0}^{x+1}(-1)^{|T|} \sum_{j=0}^{|T|-1} (-1)^j \binom{D-j}{|T|-j-1} \dim C^{j}\left(\Delta,\mathcal{F}\right) \nonumber\\
&\quad\quad + (-1)^D  \sum_{|T^c|=D-x}^{D+1}(-1)^{D+1-|T^c|}\sum_{j=0}^{|T^c|-1} (-1)^j\binom{D-j}{|T^c|-j-1} \dim C^{j}\left(\Delta,\overline{\mathcal{F}}\right)\\
&=  \sum_{j=0}^{x} (-1)^j \dim C^{j}\left(\Delta,\mathcal{F}\right) \left( \sum_{|T|=j+1}^{x+1}(-1)^{|T|} \binom{D-j}{|T|-j-1}\right) \nonumber\\
&\quad\quad + \sum_{j=0}^{D-x-2} (-1)^{j+1}  \dim C^{j}\left(\Delta,\overline{\mathcal{F}}\right)  \left(\sum_{|T^c|=D-x}^{D+1}(-1)^{|T^c|}\binom{D-j}{|T^c|-j-1}\right) \nonumber \\
&\quad\quad + \sum_{j=D-x-1}^{D} (-1)^{j+1}  \dim C^{j}\left(\Delta,\overline{\mathcal{F}}\right)  \left(\sum_{|T^c|=j+1}^{D+1}(-1)^{|T^c|}\binom{D-j}{|T^c|-j-1}\right)\\
&=  \sum_{j=0}^{x} (-1)^j \dim C^{j}\left(\Delta,\mathcal{F}\right) \left( \sum_{\ell=0}^{x-j}(-1)^{\ell+j+1} \binom{D-j}{\ell}\right) \nonumber\\
&\quad\quad + \sum_{j=0}^{D-x-2} (-1)^{j+1}  \dim C^{j}\left(\Delta,\overline{\mathcal{F}}\right)  \left(\sum_{\ell=0}^{x+1}(-1)^{D+\ell-x}\binom{D-j}{D-1-x-j+\ell}\right)  \nonumber \\
&\quad\quad + \sum_{j=D-x-1}^{D} (-1)^{j+1}  \dim C^{j}\left(\Delta,\overline{\mathcal{F}}\right)  \left(\sum_{\ell=0}^{D-j}(-1)^{\ell+j+1}\binom{D-j}{\ell}\right)\\
&=  \sum_{j=0}^{x} (-1)^j \dim C^{j}\left(\Delta,\mathcal{F}\right) \left( (-1)^{j+1}(-1)^{x-j} \binom{D-1-j}{x-j}\right) \nonumber\\
&\quad\quad + \sum_{j=0}^{D-x-2} (-1)^{j+1}  \dim C^{j}\left(\Delta,\overline{\mathcal{F}}\right)  \left(\sum_{\ell=0}^{x+1}(-1)^{D+1-\ell}\binom{D-j}{\ell}\right) \nonumber \\
&\quad\quad + \dim C^{D}\left(\Delta,\overline{\mathcal{F}}\right) \\
&= \dim C^{D}\left(\Delta,\overline{\mathcal{F}}\right) - \sum_{j=0}^{x} (-1)^{x-j} \binom{D-1-j}{x-j} \dim C^{j}\left(\Delta,\mathcal{F}\right) \nonumber\\
&\quad\quad - \sum_{j=0}^{D-x-2} (-1)^{D-2+x-j}  \binom{D-1-j}{x+1}\dim C^{j}\left(\Delta,\overline{\mathcal{F}}\right) \label{ChiIdentity}
\end{align}
which completes our claim after trivial substitutions. Below are the lemmas from the beginning of the argument that we put off to streamline the larger argument.

\begin{lemma} \label{depChainMap}
    The following is a chain map.
    \[
    \begin{tikzcd} [column sep=10em]
    C^x\left(\Delta, \mathcal{F}\right)  \arrow{r}{\pi_\uparrow} \arrow[swap]{d}{\prod_T  \text{res}_T }
    & C^D\left(\Delta, \mathcal{F}\right)  \arrow{d}{\prod_T \text{res}_{T^c} \circ \overline{\pi}_\uparrow^\top}   \\
     \prod_{\substack{T \subset \mathds{Z}_{D+1}\\ |T| = x+1} }C^{x}\left(\Delta_T,\mathcal{F}\right) \arrow{r}{\prod_T \text{res}_{T^c} \circ\overline{\pi}_\uparrow^\top \circ \pi_\uparrow \circ \iota}  
    &  \prod_{\substack{T \subset \mathds{Z}_{D+1}\\ |T| = x+1 } } C_{D-1-x}\left(\Delta_{T^c}, \overline{\mathcal{F}}\right)
\end{tikzcd}
    \]
    where we suppressed that we are summing over all types with $|T|=x+1$ in the maps to reduce clutter. Furthermore, the map $\prod_T  \text{res}_T $ (that we called $\phi$ above) is injective. We conclude that any dependency in the $X$ stabilizers of the Tanner code can be viewed as a set of cocycles in `shrunk complexes', though note that here a `shrunk complex' goes from level $x$ to level $D-1-x=z+1$ unlike the last two terms in the middle row of \ref{Diagram}, so this is a piece of a different `level' of shrunk complex. 
\end{lemma}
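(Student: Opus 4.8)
The plan is to check the two assertions of the lemma — commutativity of the square (i.e.\ that the diagram is a chain map) and injectivity of $\phi := \prod_{T}\text{res}_T$ — and then read off the stated consequence. Since the top and bottom rows each consist of a single arrow, "chain map" means exactly that the one square commutes, and decomposing over the finitely many types $T\subset\mathds{Z}_{D+1}$ with $|T|=x+1$ this amounts to the identity
\[
\text{res}_{T^c}\circ\overline{\pi}_\uparrow^\top\circ\pi_\uparrow \;=\; \text{res}_{T^c}\circ\overline{\pi}_\uparrow^\top\circ\pi_\uparrow\circ\iota\circ\text{res}_T
\]
of maps $C^x(\Delta,\mathcal{F})\to C_{D-1-x}(\Delta_{T^c},\overline{\mathcal{F}})$, for each such $T$. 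Equivalently, writing any $f\in C^x(\Delta,\mathcal{F})$ as $\iota\,\text{res}_T f + g$ with $g$ supported on the $x$-faces whose type is \emph{not} contained in $T$, the task is to show $\text{res}_{T^c}\circ\overline{\pi}_\uparrow^\top\circ\pi_\uparrow(g)=0$.

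By linearity it suffices to take $g=c$ a single codeword $c\in\mathcal{F}_\sigma$ on one face $\sigma\in\Delta(x)$ with $T(\sigma)\not\subset T$, and to check $\pi_\uparrow(c)\cdot\overline{\pi}_\uparrow(\bar c)=0$ for every dual codeword $\bar c\in\overline{\mathcal{F}}_{\widetilde\sigma}$ on a face $\widetilde\sigma\in\Delta(D-1-x)$ of type $T(\widetilde\sigma)=T^c$. This is the same type-mismatch mechanism used in the ``Top-Right-Square'' step inside the proof of Lemma~\ref{ChainMap}, one level lower: by Lemma~\ref{GeneralizedIntersection}, $\sigma^\uparrow\cap\widetilde\sigma^\uparrow$ is either empty (and the pairing vanishes trivially) or equals $\tau^\uparrow$ for $\tau=\sigma\cup\widetilde\sigma$ with $T(\tau)=T(\sigma)\cup T^c$, with moreover $\left.c\right|_{\tau^\uparrow}\in\mathcal{F}_\tau$ and $\left.\bar c\right|_{\tau^\uparrow}\in\overline{\mathcal{F}}_\tau$. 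Since $T(\sigma)\not\subset T$ it meets $T^c$ in at least one color, so $|T(\tau)|=|T(\sigma)|+|T^c|-|T(\sigma)\cap T^c|\le (x+1)+(D-x)-1=D$, and Corollary~\ref{GeneralizedEvenOverlap} (its general form, applied to $\sigma_1=\sigma$, $\sigma_2=\widetilde\sigma$) gives $\pi_\uparrow(c)\cdot\overline{\pi}_\uparrow(\bar c)=\left.c\right|_{\tau^\uparrow}\cdot\left.\bar c\right|_{\tau^\uparrow}=0$. Hence the square commutes and the diagram is a chain map.

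Injectivity of $\phi$ is immediate: every face $\sigma\in\Delta(x)$ has a type $T(\sigma)$ consisting of exactly $x+1$ colors, so $T(\sigma)$ is one of the indices in the product and $\bigl(\text{res}_{T(\sigma)}f\bigr)(\sigma)=f(\sigma)$; thus $\phi(f)=0$ forces $f(\sigma)=0$ for all $\sigma\in\Delta(x)$, i.e.\ $f=0$. The stated consequence then follows by a diagram chase: for $f\in Z^0(\mathcal{C}_\mathcal{F}(x,z))=\ker\pi_\uparrow$, applying the bottom horizontal map to $\phi(f)$ equals applying the right vertical map to $\pi_\uparrow(f)=0$, so for each $T$ the component $\text{res}_T f$ lies in $\ker\bigl(\text{res}_{T^c}\circ\overline{\pi}_\uparrow^\top\circ\pi_\uparrow\circ\iota\bigr)=Z^1_{T^c\text{-shrunk}}$; combined with injectivity, $\phi$ embeds $Z^0(\mathcal{C}_\mathcal{F}(x,z))$ into $\prod_{|T|=x+1} Z^1_{T^c\text{-shrunk}}$.

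I do not expect a real obstacle here: this is essentially a re-run, at a different grading, of the Top-Right-Square computation already carried out for Lemma~\ref{ChainMap}. The one point needing care is the colorability bookkeeping — confirming that $|T(\sigma)\cup T^c|\le D$ holds whenever $T(\sigma)\not\subset T$ (it can fail only if $T(\sigma)=T$, which is excluded precisely because both sets have $x+1$ colors), so that Corollary~\ref{GeneralizedEvenOverlap} genuinely applies and the two legs of the square agree on every $x$-face.
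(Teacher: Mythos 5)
Your proof is correct and follows essentially the same route as the paper's: both reduce commutativity of the square to the observation that, by Corollary~\ref{GeneralizedEvenOverlap}, a codeword on an $x$-face of type $T(\sigma)$ can pair nontrivially only with dual codewords on faces of the complementary type $T(\sigma)^c$, so the $T$-component of $\text{res}_{T^c}\circ\overline{\pi}_\uparrow^\top\circ\pi_\uparrow$ kills everything except the faces of type exactly $T$; your decomposition $f=\iota\,\text{res}_T f+g$ and the paper's reduction to single-face basis cochains are two views of the same computation, and the injectivity and diagram-chase arguments likewise coincide.
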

\begin{proof}
    To see that this is a valid chain map, consider any basis cochain $f \in C^x\left(\Delta, \mathcal{F}\right)$ with $0 \neq f\left(\sigma\right) \in \mathcal{F}_\sigma$ for some $\sigma \in \Delta\left(x\right)$, and with $f$ zero everywhere else. We want to show that the diagram commutes for this basis element
    \begin{align}
        \prod_T\left(\text{res}_{T^c} \circ\overline{\pi}_\uparrow^\top \circ \pi_\uparrow \circ \iota \circ \text{res}_T \right) \left(f\right) \stackrel{?}{=} \prod_{T} \text{res}_{T^c} \circ\overline{\pi}_\uparrow^\top \circ \pi_\uparrow \left(f\right)
    \end{align}
    Because $f$ only has support on a face of type $T\left(\sigma\right)$ this becomes
        \begin{align}
        \text{res}_{T\left(\sigma\right)^c} \circ \overline{\pi}_\uparrow^\top \circ \pi_\uparrow \left(f\right) \stackrel{?}{=} \prod_{T} \text{res}_{T^c} \circ\overline{\pi}_\uparrow^\top \circ \pi_\uparrow \left(f\right)
    \end{align}
    which does indeed hold because of Corollary \ref{GeneralizedEvenOverlap}: only the types $T\left(\sigma\right)$ and $T\left(\sigma\right)^c$ that share no colors can possibly host codewords with an odd overlap, so the image of $\overline{\pi}_\uparrow^\top$ is already confined to faces of type $T\left(\sigma\right)^c$.

    Injectivity is immediate from the definition of the restriction, since any nonzero cocycle $f \in H^0 \subset C^x\left(\Delta, \mathcal{F}\right)$ necessarily has support on some face $\sigma \in \Delta_T\left(x\right)$ for some color $T$, and we are summing over all colors in the vertical $\prod_T\text{res}_T$ arrow so that the image must also be nonzero. Because the diagram commutes and we assumed $f$ is a cocycle, its image must also be a cocycle in the bottom complex. 
\end{proof}

\begin{lemma}\label{shrunkToTanner}
If $C\left(\Delta,\mathcal{F}\right)$ is locally acyclic, then any cocycle $f_{T^c\text{-shrunk}} \in Z^1_{T^c\text{-shrunk}}$ in a $T^c$-shrunk complex of type $T\not\ni0$ of $|T|=x+1$ colors can be extended (not necessarily uniquely) to a Tanner cocycle $f_\text{Tanner} \in Z^0\left(\mathcal{C}_\mathcal{F}\left(x,z\right)\right)$ such that the image $\text{res}_T \left(f_\text{Tanner}\right)$ from \ref{depChainMap} matches $f_{T^c\text{-shrunk}}$ on the $T^c$-shrunk complex but is $0$ on all other $S^c$-shrunk complexes for types $S\not\ni0$ that exclude the color $0$.
\end{lemma}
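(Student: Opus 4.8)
The plan is to adapt the iterative ``extend-up-the-links'' construction from Lemma~\ref{shrunkToSheaf} to the present setting, where the complex we want to land in is the Tanner code chain complex $C^x(\Delta,\mathcal{F}) \xrightarrow{\pi_\uparrow} C^D(\Delta,\mathcal{F}) \xrightarrow{\overline\pi_\uparrow^\top} C_z(\Delta,\overline{\mathcal F})$ rather than a single shrunk complex. Recall that a $T^c$-shrunk cocycle here is an element of $\ker(\mathrm{res}_{T^c}\circ\overline\pi_\uparrow^\top\circ\pi_\uparrow\circ\iota)\subset C^x(\Delta_T,\mathcal F)$, while a Tanner $X$-cocycle is simply an element of $Z^0(\mathcal C_\mathcal F(x,z)) = \ker(\overline\pi_\uparrow^\top\circ\pi_\uparrow)\subset C^x(\Delta,\mathcal F)$, i.e.\ a genuine linear dependency among the $X$-stabilizers. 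The key structural fact, which I would state first, is that $f_{T^c\text{-shrunk}}$ being a $T^c$-shrunk cocycle means exactly that $f^\uparrow_{T^c\text{-shrunk}}|_{\sigma^\uparrow}\in\overline{\mathcal F_\sigma}^\perp$ for every $\sigma\in\Delta_{T^c}(D-1-x)$; just as in the proof of Lemma~\ref{shrunkToSheaf} this lets us write that projection as a sum of primal codewords supported on the $(D-1)$-faces of types $\{j\}^c$ for $j\in T$, giving us the ``seed data'' $c_\tau\in\mathcal F_\tau$ needed to start the induction.

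First I would run essentially the same induction as in Lemma~\ref{shrunkToSheaf}: working in the links of faces $\sigma$ of types contained in $T^c$ of decreasing dimension, combine the decomposition of $f^\uparrow_{T^c\text{-shrunk}}$ via Lemma~\ref{GeneralizedDisjointUnion} with the seed codewords $c_\tau$ to build local cocycles $\xi_{(\sigma,j)}\in Z^x(\Delta_\sigma,\mathcal F)$; invoke local acyclicity to express each $\xi_{(\sigma,j)}$ as $\delta^{x-1}_\sigma\gamma_{(\sigma,j)}$; assemble the $\gamma$'s into global cochains $\omega_j$; and iterate, adding colors of $T^c$ one at a time. The verification that each $\xi$ is a cocycle is the same bookkeeping computation as in Lemma~\ref{shrunkToSheaf} (the cross terms cancel in pairs by the $2\omega(\cdots)=0$ mechanism, using Corollary~\ref{GeneralizedEvenOverlap} for the mixed type terms). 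The output of this induction is a cochain $f_\text{sheaf}:=\xi_{(\emptyset,T^c)}\in C^x(\Delta,\mathcal F)$ with $\mathrm{res}_T f_\text{sheaf}=f_{T^c\text{-shrunk}}$, and the induction moreover forces $\delta^x f_\text{sheaf}=0$, so $f_\text{sheaf}\in Z^x(\Delta,\mathcal F)$ --- a genuine sheaf cocycle, not merely a Tanner cocycle.

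Second, I would set $f_\text{Tanner}:=f_\text{sheaf}$ and check the two required properties. (i) \emph{$f_\text{Tanner}$ is a Tanner $X$-cocycle.} Since $f_\text{sheaf}\in Z^x(\Delta,\mathcal F)$, i.e.\ $\delta^x f_\text{sheaf}=0$, and since in the Tanner code complex the map $C^x\to C^D$ is $\pi_\uparrow$ while $C^D\to C_z$ is $\overline\pi_\uparrow^\top$, I need $\overline\pi_\uparrow^\top(\pi_\uparrow f_\text{sheaf})=0$. But the chain map of Lemma~\ref{ChainMap} (or the square in Lemma~\ref{depChainMap}) shows that $\overline\pi_\uparrow^\top\circ\pi_\uparrow$ factors through $\delta^x$: concretely, $\pi_\uparrow f_\text{sheaf}$ restricted to each $\tau^\uparrow$, $\tau\in\Delta(D-1)$, lies in $\mathcal F_\tau$ precisely because $f_\text{sheaf}$ is a cocycle, so its pairing with $\overline{\mathcal F_\tau}=\mathcal F_\tau^\perp$ vanishes, hence $\overline\pi_\uparrow^\top\pi_\uparrow f_\text{sheaf}=0$. (Equivalently, the cocycle condition $\delta^x f_\text{sheaf}=0$ together with Corollary~\ref{GeneralizedEvenOverlap} gives even overlap with every $Z$-stabilizer, which is exactly the statement that $f_\text{sheaf}\in Z^0(\mathcal C_\mathcal F(x,z))$.) (ii) \emph{Vanishing on the other $S^c$-shrunk complexes for $S\not\ni 0$.} This is the delicate point and where I expect to spend the most care. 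We have by construction $\mathrm{res}_T f_\text{Tanner}=f_{T^c\text{-shrunk}}$; we must show $\mathrm{res}_S f_\text{Tanner}=0$ for every $S\neq T$ with $|S|=x+1$ and $0\notin S$. The freedom to arrange this lies in the \emph{non-uniqueness} of the extension: the cochains $\gamma_{(\sigma,\dots)}$ chosen via local acyclicity are unique only up to cocycles in the links, and more importantly we are free to add to $f_\text{Tanner}$ any Tanner coboundary $\delta^x(\iota g)$ with $g$ supported on faces of type $T$ (this leaves $\mathrm{res}_T$ unchanged modulo shrunk coboundaries, which is all that is needed). The strategy I would use --- this is the real obstacle --- is to design the induction so that $f_\text{sheaf}$ is supported only on faces whose type is ``close to'' $T$, namely types $\widetilde T$ with $|\widetilde T\cap T|\ge x$ (this is the natural outcome: each $\xi_{(\sigma,j)}$ is built from $f^\uparrow_{T^c\text{-shrunk}}$ on type-$T$ faces and from seed codewords on types $\{j\}^c$, and the $\omega$'s inherit support near $T$). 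Then for any $S\neq T$ with $|S|=|T|$ and $0\notin S$, a type-$S$ face cannot have type within this support set unless $S=T$, because $|S\cap T|\le x = |T|-1$, and the one borderline case $|S\cap T|=x$ is exactly a shrunk-coboundary contribution that can be killed by the coboundary-adjustment above (the argument here mirrors the end of the proof of Lemma~\ref{ShrunkSheafIso}, where homologous cochains with shrinking support are produced by peeling off $\delta^x\omega$ terms). I would organize this last step by an explicit accounting of the support type of each summand of $f_\text{sheaf}$, then a single clean-up step adding a Tanner coboundary to eliminate the residual support on type-$S$ faces for all $S\not\ni 0$, $S\neq T$, simultaneously.

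The main obstacle, then, is property (ii): controlling the support of the constructed extension so that it vanishes on \emph{all} the other color-restrictions excluding $0$ at once, rather than just being an arbitrary Tanner cocycle restricting correctly to the single complex $\Delta_T$. The tools are already in hand (local acyclicity, Corollary~\ref{GeneralizedEvenOverlap}, Lemma~\ref{GeneralizedDisjointUnion}, and the coboundary-freedom exploited in Lemma~\ref{ShrunkSheafIso}); the work is in bookkeeping the types carefully through the induction.
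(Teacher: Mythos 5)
There is a genuine gap, and it begins with your identification of the target space. For the three-term Tanner complex $C^x\left(\Delta,\mathcal{F}\right)\xrightarrow{\pi_\uparrow}C^D\left(\Delta,\mathcal{F}\right)\xrightarrow{\overline{\pi}_\uparrow^\top}C_z\left(\Delta,\overline{\mathcal{F}}\right)$, the space $Z^0\left(\mathcal{C}_\mathcal{F}\left(x,z\right)\right)$ is $\ker\pi_\uparrow$ --- the linear dependencies among the $X$-stabilizers --- not $\ker\left(\overline{\pi}_\uparrow^\top\circ\pi_\uparrow\right)$ as you write. The composite $\overline{\pi}_\uparrow^\top\circ\pi_\uparrow$ is identically zero (that is precisely the chain-complex condition established via Corollary \ref{GeneralizedEvenOverlap}), so your condition is satisfied by every cochain, and your verification step (i) is vacuous: checking that $\pi_\uparrow f_{\text{sheaf}}$ pairs trivially with the $Z$-stabilizers proves nothing, since $\pi_\uparrow f$ is itself an $X$-stabilizer for any $f$. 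What must actually be shown is $\pi_\uparrow f_{\text{Tanner}}=0$, and the choice $f_{\text{Tanner}}:=f_{\text{sheaf}}$ does not achieve it: $\delta^x f_{\text{sheaf}}=0$ does not imply $\pi_\uparrow f_{\text{sheaf}}=0$ (for the constant sheaf on a triangulated surface with $x=0$, the all-ones $0$-cocycle projects to the all-ones vector on triangles, not to zero).

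Your handling of property (ii) is also off-target. In the way the lemma is used (the count of $\dim Z^0$ via $\phi=\prod_T\text{res}_T$ modulo $\Gamma_0$ in Appendix \ref{ApdxCounting}), the restriction $\text{res}_S f_{\text{Tanner}}$ must vanish exactly for every $S\neq T$ with $0\notin S$, not merely up to shrunk coboundaries, so a coboundary clean-up cannot substitute; moreover the Tanner complex has no coboundaries at position $0$ to add. The paper's proof sidesteps all of this with the one move your plan is missing: after invoking Lemma \ref{shrunkToSheaf} (your first stage does coincide with the paper here, and need not be re-derived), it does not take $f_{\text{sheaf}}$ itself but the color-restricted combination $f_{\text{Tanner}}=\text{res}_T\left(f_{\text{sheaf}}\right)+\sum_{S\ni 0,\,|S|=x+1}\text{res}_S\left(f_{\text{sheaf}}\right)$. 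With this choice the support condition holds by construction (the only $(x+1)$-color type without the color $0$ that appears is $T$), and $\pi_\uparrow f_{\text{Tanner}}=0$ follows by partitioning $\Delta\left(D\right)$ into the sets $\sigma^\uparrow$ for faces $\sigma$ of type $T\cup\{0\}$, where the contributions of the restrictions to the $(x+1)$-subsets of $T\cup\{0\}$ reassemble into $\left(\delta^x f_{\text{sheaf}}\left(\sigma\right)\right)^\uparrow=0$. The missing idea, in short, is this specific linear combination of color restrictions, which simultaneously converts the sheaf cocycle condition into the required $X$-stabilizer dependency and localizes the support so that no elaborate type-bookkeeping or support-control inside the induction is needed.
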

\begin{proof}
    We use Lemma \ref{shrunkToSheaf} to get a $x$-cocycle $f_\text{sheaf} \in Z^x\left(\Delta,\mathcal{F}\right)$ and define 
    \begin{align}
        f_\text{Tanner} = \text{res}_T\left(f_\text{sheaf}\right) + \sum_{0 \in S\subset\mathds{Z}_{D+1}: |S|=x+1} \text{res}_S\left(f_\text{sheaf}\right)
    \end{align}
    which is indeed a Tanner cocycle
    \begin{align}
        \pi^\uparrow f_\text{Tanner}  &= \pi^\uparrow \circ \text{res}_T\left(f_\text{sheaf}\right) + \sum_{0 \in S\subset\mathds{Z}_{D+1}: |S|=x+1}\pi^\uparrow \circ \text{res}_S\left(f_\text{sheaf}\right) \\
        &= \sum_{\sigma \in \Delta_{T\cup\{0\}}\left(x+1\right)}\left(\left.\left(\pi^\uparrow \circ \text{res}_T\left(f_\text{sheaf}\right)\right)\right|_{\sigma^\uparrow} + \sum_{0 \in S\subset\mathds{Z}_{D+1}: |S|=x+1}\left(\left.\pi^\uparrow \circ \text{res}_S\left(f_\text{sheaf}\right)\right)\right|_{\sigma^\uparrow} \right) \\
        &= \sum_{\sigma \in \Delta_{T\cup\{0\}}\left(x+1\right)}\left(\delta^x f_\text{sheaf} \left(\sigma\right)\right)^\uparrow = 0
    \end{align}
    Clearly $f_\text{Tanner}$ is not supported on any other types that exclude the color $0$ (aside from $T$), and \ref{shrunkToSheaf} already ensures that $f_\text{Tanner}$ matches $f_{T^c\text{-shrunk}}$ on the $T^c$-shrunk complex. 
\end{proof}

\section{Finite Depth Unitary From Several Shrunk Codes to Tanner Code}\label{ApdxUnitary}
We proceed to find a constant-depth Clifford unitary that converts between $\binom{D}{x+1}$ independent copies of the sheaf code centered at level $x+1$ and the quantum tanner code $\mathcal{C}_\mathcal{F}\left(x,D-2-x\right)$. We do so by generalizing the strategy of \autocite{Unfolding} to our sheaf setting in a way that is compatible with the particular chain map involving $\pi_\uparrow$ that we use to move from the shrunk complexes to the Tanner complex. The main idea is to break up the problem into small local patches around each vertex of a particular color and then solve the local problem. The cell-wise flasque assumption on the sheaf ensures that this truncation to a local patch cleanly relates back to the original sheaf, and the local acyclicity assumption allows us to use the results from the previous sections and overall simplifies the counting. 

\begin{theorem}
For any cell-wise flasque locally acyclic sheaf the following chain map induces an isomorphism 
\begin{align}    
\bigoplus_{\substack{T \subset \mathds{Z}_{D+1}\\ |T| = x+2 \\ 0 \in T}}H^1_{T^c-\text{shrunk}} \cong H^1\left(
\mathcal{C}_\mathcal{F}(x,z)\right
)\end{align}
    \[
\begin{tikzcd}
    C^x\left(\Delta, \mathcal{F}\right)  \arrow{r}{\pi_\uparrow} 
    & C^D\left(\Delta, \mathcal{F}\right) \arrow{r}{\overline{\pi}_\uparrow^\top} 
    & C_z\left(\Delta, \overline{\mathcal{F}}\right)\\
    \prod_{\substack{T \subset \mathds{Z}_{D+1}\\ |T| = x+2 \\ 0 \in T} } C^x\left(\Delta_T,\mathcal{F}\right)  \arrow{r}{\delta^x_T} \arrow{u}{\iota} 
    &  \prod_{\substack{T \subset \mathds{Z}_{D+1}\\ |T| = x+2 \\ 0 \in T} } C^{x+1}\left(\Delta_T,\mathcal{F}\right) \arrow{r}{\text{res}_{T^c} \circ\overline{\pi}_\uparrow^\top \circ \pi_\uparrow \circ \iota} \arrow{u}{\pi_\uparrow \circ \iota} 
    & \prod_{\substack{T \subset \mathds{Z}_{D+1}\\ |T| = x+2 \\ 0 \in T} }C_{z}\left(\Delta_{T^c}, \overline{\mathcal{F}}\right) \arrow[swap]{u}{\iota} 
\end{tikzcd}
\]
where each map below the first row is understood to include a product over all of the relevant types $T$. Furthermore, the induced isomorphism of cohomology (i.e. transformation between code spaces) can be realized by a constant-depth Clifford unitary (with the addition of necessary auxiliary qubits). 
\end{theorem}
\begin{proof}
First, for each face $\sigma \in \Delta$ pick a minimal cardinality basis $\mathcal{B}_\sigma \subset \mathcal{F}_\sigma$ for the code $\mathcal{F}_\sigma = \left\langle \mathcal{B}_\sigma \right\rangle$ and similarly pick a basis $\overline{\mathcal{B}_\sigma}\subset \overline{\mathcal{F}_\sigma}$ for the code $\overline{\mathcal{F}_\sigma} = \left\langle \overline{\mathcal{B}_\sigma} \right\rangle$. We use this choice of basis to interpret the shrunk complex as a quantum code.

We partition the set of all qubits $\Delta\left(D\right)$ of the Tanner code and the qubits $\bigsqcup_{\substack{T \subset \mathds{Z}_{D+1}\\ |T| = x+2 \\ 0 \in T}}\bigsqcup_{\sigma \in \Delta\left(T\right)} \mathcal{B}_\sigma$ of the shrunk lattice codes each into the disjoint sets given by vertices of the color type $0$. Respectively, these partitions are
\begin{align}
    \Delta\left(D\right) &= \bigsqcup_{v \in \Delta_{\{0\}}\left(0\right)} v^\uparrow \\
    \bigsqcup_{\substack{T \subset \mathds{Z}_{D+1}\\ |T| = x+2 \\ 0 \in T}}\bigsqcup_{\sigma \in \Delta\left(T\right)} \mathcal{B}_\sigma &=  \bigsqcup_{v \in \Delta_{\{0\}}\left(0\right)} \bigsqcup_{\substack{T \subset \mathds{Z}_{D+1}\\ |T| = x+2 \\ 0 \in T}}\bigsqcup_{\sigma \in v^T }\mathcal{B}_\sigma
\end{align}
We then focus on finding a local unitary $U_v$ for each vertex $v \in \Delta_{\{0\}}\left(0\right)$ that transforms shrunk-lattice code stabilizers truncated to the sets 
\begin{align}
    \bigsqcup_{\substack{T \subset \mathds{Z}_{D+1}\\ |T| = x+2 \\ 0 \in T}}\bigsqcup_{\sigma \in v^T }\mathcal{B}_\sigma
\end{align}
to Tanner code stabilizers truncated to $v^\uparrow$ in the same manner as the chain map in the lemma statement. The groups of truncated stabilizers on these local patches we call \emph{overlap groups} (same as \autocite{Unfolding}), which we will soon define. Each $U_v$ acts on a $2^\eta$-dimensional Hilbert space of 
\begin{align}
    \eta :=\max\left\{\sum_{\substack{T \subset \mathds{Z}_{D+1}\\ |T| = x+2 \\ 0 \in T}}\sum_{\sigma \in v^T }\left|\mathcal{B}_\sigma\right|, \left|v^\uparrow\right| \right\}
\end{align}
-many qubits, such that we add dummy auxiliary qubits in the state $\ket{0}$ (stabilized by $Z$) to whichever of the two code patches is smaller so that they have the same size. Since each $U_v$ performs the proper transformation between the truncation of all stabilizers with some support on its local patch, and since the patches partition the qubits of each code, we can apply all of the $U_v$ in parallel to get the desired constant-depth unitary 
\begin{align}
    U= \bigotimes_{ v \in \Delta_{\{0\}}\left(0\right)} U_v 
\end{align} 

We proceed to define the overlap groups, which are generated by the truncation of any stabilizer that has some of its support in the patch associated with $v$. When it is clear from context, we will define an $X$ or $Z$ operator by its support, so that an $X$ operator specified simply by its support $S$ should be interpreted as the operator $X_S := \bigotimes_{j \in S} X_j$; furthermore, as usual, we will treat interchangeably a function $f \in \mathds{F}_2^n$ (e.g. a local codeword) and the subset of standard basis elements that are nonzero in the expansion of $f$. For a vertex $v$ define the $X$ overlap group at $v$ for the Tanner code $\mathcal{C}_\mathcal{F}\left(x,D-2-x\right)$ as 
\begin{align}
    \mathcal{O}_{\text{Tanner}}\left(X,v\right) := \left\langle \left(\bigsqcup_{\substack{\sigma \in \Delta\left(x\right)\\ \{v\} \cup \sigma \in \Delta\left(x+1\right)}}\left\{\left.b\right|_{v^\uparrow \cap \sigma^\uparrow} \middle \bracevert b \in \mathcal{B}_\sigma \right\} \right) \bigsqcup \left(\bigsqcup_{\substack{\sigma \in \Delta\left(x\right)\\v \in \sigma}}\mathcal{B}_\sigma \right) \right\rangle
\end{align}
where the first subset of operators comes from faces that do not include $v$ so that we have to truncate the stabilizer to its overlap with $v^\uparrow$, and the second subset corresponds to stabilizers from our original code whose support already falls entirely within $v^\uparrow$.

The same can be done for the $Z$ overlap group 
\begin{align}
    \mathcal{O}_{\text{Tanner}}\left(Z,v\right) := \left\langle \left(\bigsqcup_{\substack{\sigma \in \Delta\left(z\right)\\ \{v\} \cup \sigma \in \Delta\left(z+1\right)}}\left\{\left.b\right|_{v^\uparrow \cap \sigma^\uparrow} \middle \bracevert b \in \overline{\mathcal{B}_\sigma} \right\} \right) \bigsqcup\left(\bigsqcup_{\substack{\sigma \in \Delta\left(z\right)\\v \in \sigma}}\overline{\mathcal{B}_\sigma} \right) \right\rangle
\end{align}

It will turn out that the second subset of operators (for both $X$ and $Z$) that are identical to stabilizers from our original code form the center of the respective group. However, using our assumption that the sheaf is cell-wise flasque, we can see that this subset is already generated by the first subset of operators such that we can simplify each overlap group. Specifically, cell-wise flasqueness guarantees that for $\sigma \subset \tau$, $\left.\mathcal{F}_\sigma\right|_{\tau^\uparrow} = \mathcal{F}_\tau$, and we know from \ref{GeneralizedDisjointUnion} that for $\sigma \in \Delta(x)$, $\mathcal{F}_\sigma = \left\langle \mathcal{F}_\tau \middle\bracevert \tau \subset \sigma \right\rangle$ so our second subset in the overlap group is redundant. Our simplified generators are 
\begin{align}
    \mathcal{O}_{\text{Tanner}}\left(X,v\right) &= 
     \left\langle X_{b^\uparrow}\middle \bracevert  v \in \sigma \in \Delta\left(x+1\right),\,   b \in \mathcal{B}_\sigma  \right\rangle \\
     &= 
     \pi_\uparrow C^{x}\left(\Delta_v, \mathcal{F} \right)  \\
    \mathcal{O}_{\text{Tanner}}\left(Z,v\right) &= 
    \left\langle Z_{b^\uparrow} \middle \bracevert  v \in \sigma \in \Delta\left(z+1\right),\,   b \in \overline{\mathcal{B}}_\sigma  \right\rangle\\
     &= 
     \overline{\pi}_\uparrow C^{z}\left(\Delta_v, \overline{\mathcal{F}} \right)  
\end{align}

Next, we define the $X$ overlap group for each of the $T^c$-shrunk sheaf codes, where $T$ is a set of $|T|=x+2$ colors that includes the color $0 \in T$. Recall that the qubits in shrunk codes are labeled by basis elements, so an $X$ or $Z$ operator will be specified by a subset of basis elements. For any vector $c \in \mathcal{F}_\sigma$ with expansion in the basis $\mathcal{B}_\sigma$ given by $c = \sum_{j=1}^{\left| \mathcal{B} \right|} c_j b_j$ for $c_j \in \mathds{F}_2$ and $b_j \in \mathcal{B}_\sigma$, let $S_\mathcal{B}\left(c\right) := \left\{b_j \in \mathcal{B}_\sigma \middle \bracevert  c_j = 1\right\}$ denote the set of basis elements in the expansion of $c$. 
\begin{align}
    \mathcal{O}_{T^c\text{-shrunk}}\left(X,v\right) := & \left\langle 
    \left(
        \bigsqcup_{\substack{\sigma \in \Delta_{T\setminus \{0 \}}\left(x\right)\\ \{v\} \cup \sigma \in \Delta_T\left(x+1\right)}} \left\{ S_\mathcal{B}\left( \left.b\right|_{v^\uparrow \cap \sigma^\uparrow} \right) \middle \bracevert  b \in \mathcal{B}_\sigma \right\} 
    \right) \right. \nonumber \\
   &\left. \bigsqcup
    \left(
        \bigsqcup_{\substack{\sigma \in \Delta_{T}\left(x\right)\\v \in \sigma}}
        \left\{
            \bigsqcup_{\substack{\tau \in \Delta_T\left(x+1\right)\\ \sigma \subset \tau}} S_\mathcal{B}\left( \left.b\right|_{\tau^\uparrow} \right) \middle \bracevert  b \in \mathcal{B}_\sigma 
        \right\} 
    \right) \right\rangle
\end{align}
As with the Tanner code overlap groups, the second subset of operators correspond to stabilizers from our original code and we will see constitute the center of the overlap group. Here too, we can greatly simplify this generating set by using the cell-wise flasque assumption. The operators coming from faces $\sigma \not\ni v$ get restricted to the intersection $v^\uparrow \cap\sigma^\uparrow =\tau^\uparrow$ for $\tau = \{v\} \cup\sigma \in \Delta_T\left(x+1\right)$ and generate each such code $\mathcal{F}_\tau$. Subsequently, we see that the $X$ overlap group is simply generated by the set of all single-qubit $X$ operators for each qubit (which recall is labeled by a basis element $b \in \mathcal{B}_\tau$ for all such faces $\tau$)
\begin{align}
    \mathcal{O}_{T^c\text{-shrunk}}\left(X,v\right) &=  \left\langle 
    X_{\{b\}} \middle \bracevert \tau \in v^T, \, b \in \mathcal{B}_\tau \right\rangle\\
    &=  C^x\left( \Delta_v, \mathcal{F} \right)
\end{align}
where the last line is stretching our notation a bit, but is consistent with the interpretation of $C\left( \Delta_v, \mathcal{F}\right)$ as an ordinary $\mathds{F}_2$-valued chain complex using the isomorphism $C^{j}\left(\Delta_v,\mathcal{F}\right) 
 \cong \mathds{F}_2^{\bigsqcup_{\sigma \in \Delta_v(j) }\mathcal{B}_\sigma}$.

Meanwhile, because $Z$ stabilizers in the shrunk code always are associated with faces of type $T^c$ that exclude the type $T\left(v\right)=0$ they can be treated uniformly 
\begin{align}
    &\mathcal{O}_{T^c\text{-shrunk}}\left(Z,v\right)  \nonumber \\
    &\quad := \left\langle 
        \bigsqcup_{\substack{\sigma \in \Delta_{T^c}\left(z\right)\\ \{v\} \cup \sigma \in \Delta\left(z+1\right) }} \left\{ \left\{ b \in \mathcal{B}_\tau \middle\bracevert v \in\tau \in \Delta_T\left(x+1\right),\,  \tau \cup \sigma \in \Delta\left(D\right),\, \left.b\right|_{\tau \cup \sigma} = \left.c\right|_{\tau \cup \sigma} = 1 \right\} \middle\bracevert c \in \overline{\mathcal{B}}_\sigma \right\}
    \right\rangle
\end{align}
Again, we can simplify this using our cell-wise flasque assumption, which tells us that $\left.\overline{\mathcal{B}}_\sigma\right|_{\{v\} \cup \sigma} = \overline{\mathcal{B}}_{\{v\} \cup \sigma}$, which is relevant because $\left(\{v\}\cup\sigma\right)^\uparrow$ is exactly the overlap between $\sigma$ and $v$ that we care about in the set conditioning where $c$ could be $1$. 
\begin{align}
    \mathcal{O}_{T^c\text{-shrunk}}\left(Z,v\right)  &= 
    \left\langle Z_{\left\{b \in \mathcal{B}_\tau \middle \bracevert  v \in\tau \in \Delta_T\left(x+1\right):   \left.b\right|_{\tau \cup \sigma} = \left.c\right|_{\tau \cup \sigma}= 1 \right\}} \middle\bracevert  \sigma \in v^{T^c \cup \{0\}}, \, c \in \overline{\mathcal{B}}_\sigma  \right\rangle\\
    &= \pi_{v,\uparrow}^\top \circ  \overline{\pi}_{v,\uparrow} \iota  C^{z}\left( \Delta_{v,T^c}, \overline{\mathcal{F}} \right)
\end{align}
where e.g. $\pi_{v,\uparrow}: C^x\left( \Delta_{v}, \mathcal{F} \right) \to C^{D-1}\left( \Delta_v, \mathcal{F} \right)$ is the $\pi_\uparrow$ map defined on the link of a vertex ( similarly $\overline{\pi}_{v,\uparrow}: C^z\left( \Delta_v, \overline{\mathcal{F}} \right) \to C^{D-1}\left( \Delta_v, \mathcal{F} \right)$), and where again we use the isomorphism $C^{j}\left(\Delta_v,\mathcal{F}\right) \cong \mathds{F}_2^{\bigsqcup_{\sigma \in \Delta_v(j) }\mathcal{B}_\sigma}$ to think of this space as specifying subsets of basis elements. 
 
We see that the cell-wise flasque assumption has allowed us to recast all of our truncated stabilizer Tanner and shrunk code generators more simply as non-truncated gauge generators of the corresponding code type in the link of $v$ (these are not stabilizer codes because $x+z=D-2>(D-1)-2$, where $(D-1)$ is the dimension of the link complex, so the generators do not commute). 
 
Finally, we define the overlap groups for the collection of $\binom{D}{x+1}$ different shrunk lattices together
\begin{align}
     \mathcal{O}_{\text{shrunk}}\left(X,v\right) &= \left\langle \bigsqcup_{\substack{T \subset \mathds{Z}_{D+1}\\ |T| = x+2 \\ 0 \in T}}  \mathcal{O}_{T^c\text{-shrunk}}\left(X,v\right) \right\rangle \\
      \mathcal{O}_{\text{shrunk}}\left(Z,v\right) &= \left\langle \bigsqcup_{\substack{T \subset \mathds{Z}_{D+1}\\ |T| = x+2 \\ 0 \in T}}  \mathcal{O}_{T^c\text{-shrunk}}\left(Z,v\right)  \right\rangle 
\end{align}
We note that the qubits of each distinct $T^c$-shrunk cochain complex correspond to basis elements $B_\sigma$ of codes for faces $\sigma \in \Delta_T\left(x+1\right)$ of distinct type, so that the corresponding overlap groups of different types have completely disjoint support. 

Next, we will establish that the Tanner code overlap group at a vertex $v$ 
\begin{align}
    \mathcal{O}_\text{Tanner}\left(v\right) := \left\langle \mathcal{O}_\text{Tanner}\left(X,v\right),  \mathcal{O}_\text{Tanner}\left(Z,v\right) \right\rangle
\end{align} is isomorphic to the collection of shrunk lattice overlap groups 
\begin{align}
    \mathcal{O}_{\text{shrunk}}\left(v\right)  := \left\langle \mathcal{O}_{\text{shrunk}}\left(X,v\right) , \mathcal{O}_{\text{shrunk}}\left(Z,v\right) \right\rangle
\end{align}
up to the addition of trivial $Z$ stabilizers acting on the dummy auxiliary qubits required to match the dimensions of the two Hilbert spaces. As discussed in \autocite{Unfolding}, it is sufficient to show that the number of independent generators of each overlap group is the same and also that the number of independent generators of the center of each group is the same. 

The trivial stabilizers on the dummy auxiliary qubits will necessarily belong to the center of whichever group they are added to, and the number of such stabilizers is $|A|$ where 
\begin{align}
    A:= |v^\uparrow|- \sum_{\substack{T \subset \mathds{Z}_{D+1}\\ |T| = x+2 \\ 0 \in T}}\sum_{\sigma \in v^T } |\mathcal{B}_\sigma|  &=   \dim C^{D-1}\left(\Delta_v, \mathcal{F}\right) -\dim C^{x}\left(\Delta_v, \mathcal{F}\right) 
\end{align}
is positive when we add the qubits to the shrunk code, and $A$ is negative when we add the qubits to the Tanner code. 

Let us start by counting the number of independent generators in the center $Z\left(\mathcal{O}_{T^c\text{-shrunk}}\left(v\right)\right)$. Each $X$ generator in $\mathcal{O}_{T^c\text{-shrunk}}\left(X,v\right)$ associated with a face $v\in \sigma \in \Delta_{T}\left(x\right)$ must be in the center because it is identical to a stabilizer in the full $T^c$-shrunk complex without any truncation. There are $\dim C^{x-1}\left(\Delta_{v,T\setminus\{0\}}, \mathcal{F} \right)$ such generators, but of those $\dim Z^{x-1}\left(\Delta_{v,T\setminus\{0\}}, \mathcal{F} \right)$ are linearly dependent. 

To see that these are all of the generators for the center, first note that the $X$ generators are simply all of the single-qubit operators, so it is impossible for any $Z$ operator to be in the center. To see that there are no other $X$ operators, consider the following stabilizer code with the same $Z$ stabilizers as our overlap group for just one color type $T\subset \mathds{Z}_{D+1}\setminus\{0\}$ of $|T|=x+1$ colors (we can consider each independently)
\begin{align} \label{TrivialShrunkCode}
     C^{x-1}\left(\Delta_{v,T}, \mathcal{F}\right)  \xrightarrow{\delta_{v,T}^{x-1}} 
     C^{x}\left(\Delta_{v,T}, \mathcal{F}\right) \xrightarrow{\text{res}_{T^c} \circ \overline{\pi}_{v,\uparrow}^\top \circ \pi_{v,\uparrow} \circ \iota} 
    C_z\left(\Delta_{v,T^c}, \overline{\mathcal{F}}\right)
\end{align}
We see that the $X$ stabilizers of the code are the subset of generators $B^{x}\left(\Delta_{v,T}, \mathcal{F}\right)$ that we already identified as belonging to the center. The full set of $X$ generators of the overlap group that commute with the all of the $Z$ generators must be the space $Z^{x}\left(\Delta_v, \mathcal{F}\right)$ in the code above. By \ref{ShrunkSheafIso}, the code space of this code has the same dimension as $H^x\left(\Delta_v, \mathcal{F}\right)$, which is empty by local acyclicity (if $x=0$ then the complex that is acyclic is the `extended' complex that starts $\mathcal{F}_v =: C^{-1}\left(\Delta_v, \mathcal{F}\right)  \xrightarrow{\delta_v^{x-1}} C^{0}\left(\Delta_v, \mathcal{F}\right)$, which is the relevant complex in this case, so while $H^x\left(\Delta_v, \mathcal{F}\right)\neq0$ the code above is still trivial as desired). We conclude that $Z^{x}\left(\Delta_v, \mathcal{F}\right)= B^{x}\left(\Delta_v, \mathcal{F}\right)$, so the total number $\#Z\left(\mathcal{O}_{\text{shrunk}}\left(v\right)\right)$ of independent generators across all of the colors is 
\begin{align}
    \#Z\left(\mathcal{O}_{\text{shrunk}}\left(v\right)\right)  
    &= \sum_{\substack{T \subset \mathds{Z}_{D+1}\setminus\{0\}\\ |T| = x+1}} \dim B^{x}\left(\Delta_{v,T}, \mathcal{F} \right) \\
    &= \sum_{\substack{T \subset \mathds{Z}_{D+1}\setminus\{0\}\\ |T| = x+1}} \left(\dim C^{x-1}\left(\Delta_{v,T}, \mathcal{F} \right) - \dim Z^{x-1}\left(\Delta_{v,T}, \mathcal{F} \right)\right)  \\
    &= \sum_{\substack{T \subset \mathds{Z}_{D+1}\setminus\{0\}\\ |T| = x+1}} \sum_{j=-1}^{x-1}(-1)^{x-1-j}\dim C^{j}\left(\Delta_{v,T}, \mathcal{F} \right)\\
    &= -\sum_{j=-1}^{x-1}(-1)^{x-j} \binom{D-1-j}{x-j} \dim C^{j}\left(\Delta_{v}, \mathcal{F} \right)
\end{align}
where we used local acyclicity of the link to get the sum over $j$ and let $C^{-1}\left(\Delta_{v}, \mathcal{F} \right):= \mathcal{F}_v$.

Now let us proceed to find the number of independent generators in the center $Z\left(\mathcal{O}_{\text{Tanner}}\left(v\right)\right)$. By a similar argument as for the shrunk code, we can simply count the operators in $\mathcal{O}_\text{Tanner}\left(v\right)$ that appear identically without truncation in the full Tanner color code. These are generated by the set of $X$ and $Z$ operators 
\begin{align}
    \left(\bigsqcup_{v \in \sigma \in \Delta\left(x\right)}\mathcal{B}_\sigma \right) \bigsqcup \left(\bigsqcup_{v \in \sigma \in \Delta\left(z\right)}\overline{\mathcal{B}_\sigma} \right)
\end{align}
corresponding to $x$-level and $z$-level codes on faces that include $v$. Together, the number of these generators is 
\begin{align}
\dim C^{x-1}\left( \Delta_v, \mathcal{F}\right) + \dim C^{z-1}\left( \Delta_v, \overline{\mathcal{F}}\right)
\end{align}
but, again, not all of these are independent. In fact, the number of linearly dependent $X$ checks can be phrased in terms of redundancies in Tanner codes, which we have already computed in the last Section \ref{TannerCodeRedundancies}
\begin{align}
\dim & Z^0\left(\mathcal{C}_{\mathcal{F},v}\left(x-1,z\right)\right) \nonumber \\
&=  \binom{D-1}{x} \sum_{j=0}^{x-1}(-1)^{x-1-j} \dim H^{j}\left(\Delta_v,\mathcal{F} \right) -  \sum_{j=0}^{x-2}(-1)^{x-1-j}  \binom{D-2-j}{D-1-x}  \dim C^{j}\left(\Delta_v,\mathcal{F} \right)\\
&=   \sum_{j=-1}^{x-2}(-1)^{x-j}  \binom{D-2-j}{x-1-j}  \dim C^{j}\left(\Delta_v,\mathcal{F} \right)
\end{align}
And similarly with $x$ and $z$ swapped for the dual sheaf. Hence the total number of independent generators of the center is 
\begin{align}
    \#Z\left(\mathcal{O}_{\text{Tanner}}\left(v\right)\right) &=\dim C^{x-1}\left( \Delta_v, \mathcal{F}\right) -\sum_{j=-1}^{x-2}(-1)^{x-j}  \binom{D-2-j}{x-1-j}  \dim C^{j}\left(\Delta_v,\mathcal{F} \right) \nonumber\\
    &+ \dim C^{z-1}\left( \Delta_v, \overline{\mathcal{F}}\right) -\sum_{j=-1}^{z-2}(-1)^{z-j}  \binom{D-2-j}{z-1-j}  \dim C^{j}\left(\Delta_v, \overline{\mathcal{F}} \right) \\
    &=-\sum_{j=-1}^{x-1}(-1)^{x-j}  \binom{D-2-j}{x-1-j}  \dim C^{j}\left(\Delta_v,\mathcal{F} \right) \nonumber\\
    &\quad -\sum_{j=-1}^{z-1}(-1)^{z-j}  \binom{D-2-j}{z-1-j}  \dim C^{j}\left(\Delta_v, \overline{\mathcal{F}} \right) 
\end{align}

We want to show that the difference $\#Z\left(\mathcal{O}_{\text{Tanner}}\left(v\right)\right) - \#Z\left(\mathcal{O}_{\text{shrunk}}\left(v\right)\right)$ between these numbers is equal to $A$.
\begin{align}
\#Z\left(\mathcal{O}_{\text{Tanner}}\left(v\right)\right) &- \#Z\left(\mathcal{O}_{\text{shrunk}}\left(v\right)\right)\\
\nonumber &= -\sum_{j=-1}^{x-1}(-1)^{x-j}  \binom{D-2-j}{x-1-j}  \dim C^{j}\left(\Delta_v,\mathcal{F} \right) \nonumber\\
    &\quad -\sum_{j=-1}^{z-1}(-1)^{z-j}  \binom{D-2-j}{z-1-j}  \dim C^{j}\left(\Delta_v, \overline{\mathcal{F}} \right) \nonumber \\
    &\quad +\sum_{j=-1}^{x-1}(-1)^{x-j} \binom{D-1-j}{x-j} \dim C^{j}\left(\Delta_{v}, \mathcal{F} \right)
\end{align}
We can simplify the difference of the binomial terms forming the coefficient of  $\dim C^{j}\left(\Delta_{v}, \mathcal{F} \right)$ using Pascal's rule
\begin{align}
\binom{D-1-j}{x-j} - \binom{D-2-j}{x-1-j} 
&= \frac{\left(D-2-j \right)!}{\left(x-1-j\right)!\left(D-1-x\right)!}\left(\frac{D-1-j}{x-j} -1\right)\\
&= \frac{\left(D-2-j \right)!}{\left(x-1-j\right)!\left(D-1-x\right)!}\left(\frac{D-1-x}{x-j}\right) = \binom{D-2-j}{x-j}
\end{align}
so that we get 
\begin{align}
\#Z\left(\mathcal{O}_{\text{Tanner}}\left(v\right)\right) &- \#Z\left(\mathcal{O}_{\text{shrunk}}\left(v\right)\right)\\
\nonumber &= \sum_{j=-1}^{x-1}(-1)^{x-j}  \binom{D-2-j}{x-j}  \dim C^{j}\left(\Delta_v,\mathcal{F} \right) \nonumber\\
    &\quad +\sum_{j=-1}^{z-1}(-1)^{z-1-j}  \binom{D-2-j}{z-1-j}  \dim C^{j}\left(\Delta_v, \overline{\mathcal{F}} \right) \nonumber \\
&= - \dim C^{x}\left(\Delta_v,\mathcal{F} \right)+\sum_{j=-1}^{x}(-1)^{x-j}  \binom{D-2-j}{x-j}  \dim C^{j}\left(\Delta_v,\mathcal{F} \right) \nonumber\\
&\quad +\sum_{j=-1}^{z-1}(-1)^{z-1-j}  \binom{D-2-j}{z-1-j}  \dim C^{j}\left(\Delta_v, \overline{\mathcal{F}} \right) \nonumber \\
&= - \dim C^{x}\left(\Delta_v,\mathcal{F} \right)+\dim C^{D-1}\left(\Delta_v,\mathcal{F} \right) = A
\end{align}
where we used \ref{ChiIdentity} for the vertex link sheaf, which has $\chi=0$ due to the local acyclicity assumption.

We proceed to count the number of independent generators in the full overlap groups. Starting again with the shrunk group, we see that each of the single-qubit $X$ generators are trivially independent for a total of $\dim\mathcal{C}^{x}\left(\Delta_v, \mathcal{F} \right)$. The trivial shrunk code \ref{TrivialShrunkCode} that we considered before had the same $Z$ stabilizer group, so by equating the number of qubits with the number of independent stabilizers we find
\begin{align}
    \sum_{\substack{T \subset \mathds{Z}_{D+1}\setminus\{0\}\\ |T| = x+1}} \dim C^x\left(\Delta_{v,T}, \mathcal{F}\right)&= \#\mathcal{O}_{\text{shrunk}}\left(Z,v\right)+\sum_{\substack{T \subset \mathds{Z}_{D+1}\setminus\{0\}\\ |T| = x+1}}\dim B^x\left(\Delta_{v,T}, \mathcal{F}\right)  \\
    \implies \#\mathcal{O}_{\text{shrunk}}\left(Z,v\right) &= \dim\mathcal{C}^{x}\left(\Delta_v, \mathcal{F} \right) +\sum_{j=-1}^{x-1}(-1)^{x-j} \binom{D-1-j}{x-j} \dim C^{j}\left(\Delta_{v}, \mathcal{F} \right)\\
    &= \sum_{j=-1}^{x}(-1)^{x-j} \binom{D-1-j}{x-j} \dim C^{j}\left(\Delta_{v}, \mathcal{F} \right)
\end{align}

Meanwhile, for the Tanner code, the number of independent generators $\#\mathcal{O}_{\text{Tanner}}\left(v\right)$ is the same as the expression for $\#Z\left(\mathcal{O}_{\text{Tanner}}\left(v\right)\right)$ but with the substitutions $x-1\to x$ and $z-1 \to z$
\begin{align}
    \#\mathcal{O}_{\text{Tanner}}\left(v\right) 
    &=\sum_{j=-1}^{x}(-1)^{x-j}  \binom{D-2-j}{x-j}  \dim C^{j}\left(\Delta_v,\mathcal{F} \right) \nonumber\\
    &\quad \sum_{j=-1}^{z}(-1)^{z-j}  \binom{D-2-j}{z-j}  \dim C^{j}\left(\Delta_v, \overline{\mathcal{F}} \right) 
\end{align}

All together we get 
\begin{align}
     \#\mathcal{O}_{\text{Tanner}}\left(v\right)  &- \#\mathcal{O}_{\text{shrunk}}\left(v\right) \nonumber \\
     &= \sum_{j=-1}^{x}(-1)^{x-j}  \binom{D-2-j}{x-j}  \dim C^{j}\left(\Delta_v,\mathcal{F} \right) \nonumber\\
    &\quad \sum_{j=-1}^{z}(-1)^{z-j}  \binom{D-2-j}{z-j}  \dim C^{j}\left(\Delta_v, \overline{\mathcal{F}} \right) \nonumber \\
    & -\dim\mathcal{C}^{x}\left(\Delta_v, \mathcal{F} \right) -\sum_{j=-1}^{x}(-1)^{x-j} \binom{D-1-j}{x-j} \dim C^{j}\left(\Delta_{v}, \mathcal{F} \right)\\
    &= \sum_{j=-1}^{x-1}(-1)^{x-1-j}  \binom{D-2-j}{x-1-j}  \dim C^{j}\left(\Delta_v,\mathcal{F} \right) \nonumber\\
    &\quad \sum_{j=-1}^{z}(-1)^{z-j}  \binom{D-2-j}{z-j}  \dim C^{j}\left(\Delta_v, \overline{\mathcal{F}} \right)-\dim\mathcal{C}^{x}\left(\Delta_v, \mathcal{F} \right)\\
    &=  \dim C^{D-1}\left(\Delta_v \mathcal{F}\right) -\dim\mathcal{C}^{x}\left(\Delta_v, \mathcal{F} \right) = A
\end{align}
where---similarly (but not identically) to the calculation with the center---we used a rearrangement of Pascal's rule and invoked \ref{ChiIdentity} (but for the index pair $(x-1,z)$ rather than $(x,z-1)$). We conclude that the two overlap groups with the appropriate addition of the auxiliary qubit stabilizers are isomorphic.

Finally, we want to show that there is a particular isomorphism between these local overlap groups consistent with the projection $\pi_\uparrow$ on the $X$-sector. Per the discussion in \autocite{Unfolding}, this is achieved by establishing a pairing of independent generators for each overlap group so that corresponding generators have identical commutation relations within their generating set. 

We start constructing such a pairing with the $X$ groups, because we need to pick these with care to ensure consistency with the map $\pi_\uparrow$. We have $\dim \mathcal{C}^{x}\left(\Delta_v, \mathcal{F} \right)$ generators of $\mathcal{O}_{\text{shrunk}}\left(X,v\right)$ that are naturally given by the set of single-qubit operators $\left\{X_{\{b\}} \middle \bracevert v \in \sigma \in \Delta(x+1), \,  b \in B_\sigma\right\}$, which is ideal because this is the entire domain of the function $\pi_\uparrow$. Naively, we want to simply map these via $X_{\{b\}} \to X_{b^\uparrow}$ to the 
\begin{align}
    \sum_{j=-1}^{x}(-1)^{x-j}  \binom{D-2-j}{x-j}  \dim C^{j}\left(\Delta_v,\mathcal{F} \right) < \dim C^x\left(\Delta_v,\mathcal{F} \right)
\end{align} independent generators of $\mathcal{O}_{\text{Tanner}}\left(X,v\right)$. However, we see that there are too few independent generators of the group $\mathcal{O}_{\text{Tanner}}\left(X,v\right)$ for this naive pairing---the set $\left\{X_{b^\uparrow} \middle\bracevert v \in \sigma \in \Delta(x+1), \,  b \in B_\sigma \right\}$ is not independent ($\pi_\uparrow$ is not injective). We can solve this problem by replacing an appropriate subset of the Tanner $X$ generators $X_{b^\uparrow}$ with a product of $X_{b^\uparrow}$ and some independent $Z$ generator in the center $Z\left(\mathcal{O}_{\text{Tanner}}(v)\right)_Z$, which will allow us to preserve the action of $\pi_\uparrow$ on the $X$ sector while rendering the entire set independent. The number of such $Z$ generators in the center of the Tanner group that we need for this purpose is
\begin{align}
   \#\mathcal{O}_{\text{shrunk}}&\left(X,v\right)-
   \#\mathcal{O}_{\text{Tanner}}\left(X,v\right) \nonumber \\
   &=\dim C^x\left(\Delta_v,\mathcal{F} \right)-  \sum_{j=-1}^{x}(-1)^{x-j}  \binom{D-2-j}{x-j}  \dim C^{j}\left(\Delta_v,\mathcal{F} \right) \\
   &= -\sum_{j=-1}^{x-1}(-1)^{x-j} \left(\binom{D-1-j}{x-j} - \binom{D-2-j}{x-1-j}  \right) \dim C^{j}\left(\Delta_v,\mathcal{F} \right)\\
  &=\#Z\left(\mathcal{O}_{\text{shrunk}}\left(v\right)\right)-\#Z\left(\mathcal{O}_{\text{Tanner}}\left(v\right)\right)_X \\
  &= \#Z\left(\mathcal{O}_{\text{Tanner}}\left(v\right)\right)_Z - A
\end{align}
If $A$ is nonnegative, then we have sufficiently many $Z$ generators in the center of the Tanner code to accomplish our task with these alone; if $A$ is negative, then we have added the auxiliary qubits and their trivial $Z$ stabilizers to the Tanner code, so we can use these additional $Z$ generators in the center to finish the job. 

To construct the $X$ sector pairing, we perform the following procedure. First, we pick an arbitrary set of $\dim \ker \pi_\uparrow = \left(\#Z\left(\mathcal{O}_{\text{Tanner}}\left(v\right)\right)_Z - A\right)$ independent generators of $Z\left(\mathcal{O}_{\text{Tanner}}\left(v\right)\right)_Z$ (plus the auxiliary qubit $Z$ stabilizers as appropriate). We also pick an arbitrary set of independent generators for the full space $\ker \pi_\uparrow$. We can represent the support of each generator as a column vector in its respective space and concatenate these into matrices $M_{\text{Tanner}}$ and $M_{\text{shrunk}}$ each of $\dim \ker \pi_\uparrow$ columns. Then, we can use the pseudo-inverse $M_{\text{Tanner}}^+:= \left(M_{\text{Tanner}}^\top M_{\text{Tanner}}\right)^{-1}M_{\text{Tanner}}^\top$ to obtain a matrix representation of a bijection from $\ker \pi_\uparrow$ to $Z\left(\mathcal{O}_{\text{Tanner}}\left(v\right)\right)_Z$
\begin{align}
\left( M_{\text{shrunk}} M_{\text{Tanner}}^+ \right)M_{\text{Tanner}} = M_{\text{shrunk}}   
\end{align}
The matrix $R:= M_{\text{shrunk}} M_{\text{Tanner}}^+$ is a $\dim C^{D-1}\left(\Delta_v,\mathcal{F}\right)$ by $\dim C^{x}\left(\Delta_v,\mathcal{F}\right)$ matrix whose columns are elements of $Z\left(\mathcal{O}_{\text{Tanner}}\left(v\right)\right)_Z$. We can index the columns $R[* ,b]$ by the basis vectors 
\begin{align}
\left\{b\middle\bracevert v \in \sigma \in \Delta(x+1),\, b \in B_\sigma\right\}
\end{align}
so that we can finalize our pairing as 
\begin{align}
      \forall v \in \sigma \in \Delta(x+1), \,  \forall b \in B_\sigma: X_{\{b\}} \to X_{b^\uparrow}Z_{R[*,b]}
\end{align}
The generators on the right must be independent: let $y$ be a $\dim C^{x}\left(\Delta_v,\mathcal{F}\right)$-length indicator vector corresponding to a subset of generators whose product has the $X$ part cancel to the identity. That means that $y \in \ker \pi_\uparrow$, so the image $Ry \in Z\left(\mathcal{O}_{\text{Tanner}}\left(v\right)\right)_Z \neq 0$ is nonzero in the span of the generators we picked for the (subset of the) center---the $Z$ part of the product of these generators must be nontrivial. This completes the construction for the $X$ pairing. 

Next we consider the pairing of any remaining $Z$ generators in the center of each code. If $A$ is negative, then we have already paired off all such generators in the construction of the $X$ pairing. If $A$ is nonnegative then we have $A$ trivial $Z$ generators from the auxiliary qubits in the center of the shrunk code that we map to the remaining $\left(\#Z\left(\mathcal{O}_{\text{Tanner}}\left(v\right)\right)_Z - A\right)$ $Z$ generators in $Z\left(\mathcal{O}_{\text{Tanner}}\left(v\right)\right)_Z$ that we did not use above for the $X$ pairing. The exact pairing between these subsets can be done arbitrarily.

Finally, we want to pair off the remaining 
\begin{align}
    \#\mathcal{O}_{\text{shrunk}}\left(Z,v\right)= \#\mathcal{O}_{\text{Tanner}}\left(Z,v\right) -  \#Z\left(\mathcal{O}_{\text{Tanner}}\left(v\right)\right)_Z
\end{align} $Z$ generators not in the center of either group in such a way that these pairs have the same commutation relations with any respective pairs of $X$ generators above. This is most efficiently done by considering the chain map between complexes similar to \ref{ApdxUnitary} and \ref{TrivialShrunkCode} but with $x$ and $z$ swapped.
\[
\begin{tikzcd}[column sep=large, row sep = large]
    C^{z-1}\left(\Delta_{v}, \overline{\mathcal{F}}\right)  \arrow{r}{\overline{\pi}_{v,\uparrow}} 
    & C^{D-1}\left(\Delta_{v}, \overline{\mathcal{F}}\right) \arrow{r}{\pi_{v,\uparrow}^\top} 
    & C_x\left(\Delta_{v}, \mathcal{F}\right)\\
     C^{z-1}\left(\Delta_{v,T^c},\overline{\mathcal{F}}\right)  \arrow{r}{\iota \circ \delta^{z-1}_{T^c}} \arrow{u}{\iota} 
    &  C^{z}\left(\Delta_{v},\overline{\mathcal{F}}\right) \arrow{r}{ \pi_{v,\uparrow}^\top \circ \overline{\pi}_{v,\uparrow}} \arrow{u}{\overline{\pi}_{v,\uparrow}} 
    & C_{x}\left(\Delta_{v}, \mathcal{F}\right) \arrow[swap]{u}{\text{Id}} \\
    C^{z-1}\left(\Delta_{v,T^c},\overline{\mathcal{F}}\right)  \arrow{r}{\delta^{z-1}_{T^c}} \arrow{u}{\text{Id}} 
    &  C^{z}\left(\Delta_{v,T^c},\overline{\mathcal{F}}\right) \arrow{r}{\text{res}_{T} \circ \pi_{v,\uparrow}^\top \circ \overline{\pi}_{v,\uparrow} \circ \iota} \arrow{u}{\iota} 
    & C_{x}\left(\Delta_{v, T}, \mathcal{F}\right) \arrow[swap]{u}{\iota} 
\end{tikzcd}
\]
where we have suppressed the cartesian products $\prod_{\substack{T \subset \mathds{Z}_{D+1} \setminus \{0\}\\ |T| = x+1 } }$ everywhere that involves a color type. The bottom two rows are connected by a chain map that is an isomorphism (because $|T^c|=z+1$ and $|T|=x+1$), so they can be treated essentially equivalently; we introduced the middle row precisely so that we can simplify the discussion by dropping the split into the different color types. 

Reading off the diagram, we start by picking a basis of independent generators for the set 
\begin{align}
C^{z}\left( \Delta_{v}, \overline{\mathcal{F}} \right) / \text{Im}\left(\prod_{\substack{T \subset \mathds{Z}_{D+1} \setminus \{0\}\\ |T| = x+1 }} \iota \circ \delta^{z-1}_{T^c} \right)
\end{align}
and we collect a representative from each equivalence class into a set $\left\{ c_j \right\}_j$ of generators with $c_j \in C^{z}\left( \Delta_{v}, \overline{\mathcal{F}} \right)$. Then our chosen $Z$ pairing is given by 
\begin{align}
    1 \leq j \leq \#\mathcal{O}_{\text{shrunk}}\left(Z,v\right), \quad
&Z_{ \pi_{v,\uparrow}^\top \overline{\pi}_{v,\uparrow} c_j} \to Z_{\overline{\pi}_{v,\uparrow} c_j}
\end{align}
The set $\left\{\pi_{v,\uparrow}^\top \overline{\pi}_{v,\uparrow} c_j\right\}_j$ must be independent because we know each complex corresponding to a row in the diagram above (in particular the middle row) is acyclic with $B^1=Z^1$, so we conclude that $\pi_{v,\uparrow}^\top \overline{\pi}_{v,\uparrow}$ is injective on the span of these generators. Given this, it is immediate that $\left\{\overline{\pi}_{v,\uparrow} c_j\right\}$ must also be independent.

This pairing preserves the commutation relations essentially by definition: consider some $X$ generator $X_{b^\uparrow}Z_{R[*,b]}$ from above, where $v \in\tau \in \Delta\left(x+1\right), \, b \in  \mathcal{B}_\tau$. This generator anti-commutes with $Z_{\overline{\pi}_{v,\uparrow} c_j}$  whenever $b^\uparrow$ and $c_j^\uparrow$ have odd overlap, and the set of such $b$ is precisely given by the support of $\pi_{v,\uparrow}^\top \overline{\pi}_{v,\uparrow} c_j$ (using the isomorphism $\Delta\left(x+1\right)\cong \Delta_v(x)$). Meanwhile, the paired $Z$ operator $Z_{ \pi_{v,\uparrow}^\top \overline{\pi}_{v,\uparrow} c_j}$ anti-commutes with all of the single-qubit $X$ generators $X_{\{b\}}$ where $b$ is in the support of the $Z$ operator, and this is the same set: $Z_{\overline{\pi}_{v,\uparrow} c_j}$ anti-commutes with $X_{b^\uparrow}Z_{R[*,b]}$ if and only if $Z_{ \pi_{v,\uparrow}^\top \overline{\pi}_{v,\uparrow} c_j}$ anti-commutes with $X_{\{b\}}$. 

To conclude, we have constructed a pairing of independent generators of our two overlap groups that respects the commutation relations, so from the discussion in \autocite{Unfolding}, there exists some Clifford unitary $U_v$ that performs the map between the pairs. Furthermore, our pairing ensures that the constant-depth unitary 
\begin{align}
    U= \bigotimes_{ v \in \Delta_{\{0\}}\left(0\right)} U_v 
\end{align} 
enacts the desired map $\pi_\uparrow \iota$ on any $X$ logical (up to the possible application of some $Z$ stabilizers, which are irrelevant).
\end{proof}

\end{document}